\renewcommand{\qed}{\hfill{\tiny \ensuremath{\blacksquare} }}%
\newcommand{\mS}{\mathcal{S}}
\newcommand{\mF}{\mathcal{F}}
\newcommand{\mG}{\mathcal{G}}
\newcommand{\mT}{\mathcal{T}}
\newcommand{\supp}{\mathrm{supp}}
\newcommand{\bG}{\mathbb{G}}
\newcommand{\G}{{\Lambda}}
\renewcommand{\qed}{\hfill {\tiny {\ensuremath{\blacksquare}}}}
\newtheorem{theorem}{Theorem}[section]
\newtheorem{corollary}{Corollary}[section]
\newtheorem{lemma}{Lemma}[section]
\newtheorem{assumption}{Assumption}[section]
\newtheorem{definition}{Definition}[section]
\theoremstyle{definition}
\newtheorem{algorithm}{Algorithm}
\newtheorem{remark}{Comment}[section]
\numberwithin{remark}{section}
\numberwithin{equation}{section}
\numberwithin{theorem}{section}
\newcommand{\Gn}{\mathbb{G}_n}
\newcommand{\Pn}{\mathbb{P}_n}
\newcommand{\F}{\mathcal{F}}
\newcommand{\Ep}{{\mathrm{E}}}
\renewcommand{\Pr}{{\mathrm{P}}}
\renewcommand{\hat}{\widehat}
\newcommand{\En}{{\mathbb{E}_n}}
\renewcommand{\Pr}{{\mathrm{P}}}
\newcommand{\RR}{\mathbb{R}}
\newcommand{\pp}{{\tilde p}}
\newcommand{\uu}{{u}}
\newcommand{\UU}{\mathcal{U}}
\newcommand{\UUU}{\widetilde{\mathcal{U}}}
\newcommand{\tu}{{\tilde u}}
\newcommand{\ceil}[1]{\left\lceil #1 \right\rceil}
\newcommand{\semin}[1]{\phi_{{\rm min}}(#1)}
\newcommand{\semax}[1]{\phi_{{\rm max}}(#1)}
\renewcommand{\hat}{\widehat}
\renewcommand{\leq}{\leqslant}
\renewcommand{\geq}{\geqslant}
\newcommand{\diag}{{\rm diag}}
\renewcommand{\[}{\left[}
\renewcommand{\]}{\right]}
\begin{document}
\begin{frontmatter}

\title{Uniformly Valid Post-Regularization Confidence Regions for Many Functional Parameters in Z-Estimation Framework\thanksref{T1}}
\runtitle{Post-Regularization Confidence Regions}
\thankstext{T1}{First ArXiv version: December 23, 2015. Revised version: \today.}

\begin{aug}

\author{\fnms{Alexandre} \snm{Belloni}\thanksref{m1}\ead[label=e1]{abn5@duke.edu}}
\author{\fnms{Victor} \snm{Chernozhukov}\thanksref{m2}\ead[label=e2]{vchern@mit.edu}},\\
\author{\fnms{Denis} \snm{Chetverikov}\thanksref{m3}\ead[label=e3]{chetverikov@econ.ucla.edu}}
\and
\author{\fnms{Ying} \snm{Wei}\thanksref{m4}\ead[label=e4]{yw2148@cumc.columbia.edu}}

\runauthor{Belloni Chernozhukov Chetverikov Wei}

\affiliation{Duke University\thanksmark{m1}, MIT\thanksmark{m2}, UCLA\thanksmark{m3}, and Columbia University\thanksmark{m4}}

\address{Fuqua School of Business\\
Duke University\\
100 Fuqua Drive\\
Durham, NC 27708, USA.\\
\printead{e1}}

\address{Department of Economics and\\
Operations Research Center, MIT \\
50 Memorial Drive \\
Cambridge, MA 02142, USA.\\
\printead{e2}}

\address{Department of Economics, UCLA\\
Bunche Hall, Rm 8283 \\
315 Portola Plaza \\
Los Angeles, CA 90095, USA.\\
\printead{e3}}

\address{Department of Biostatistics\\
Columbia University\\
722 West 168th St, Rm 633 \\
New York, NY 10032, USA.\\
\printead{e4}}

\end{aug}

\begin{abstract}
In this paper we develop procedures to construct simultaneous confidence bands for $\pp$ potentially infinite-dimensional parameters after model selection for general moment condition models where $\pp$ is potentially much larger than the sample size of available data, $n$. This allows us to cover settings with functional response data where each of the $\pp$ parameters is a function. The procedure is based on the construction of score functions that satisfy certain orthogonality condition.
The proposed simultaneous confidence bands rely on uniform central limit theorems for high-dimensional vectors (and not on Donsker arguments as we allow for $\pp \gg n$). To construct the bands, we employ a multiplier bootstrap procedure which is computationally efficient as it only involves resampling the estimated score functions (and does not require resolving the high-dimensional optimization problems). We formally apply the general theory to inference on regression coefficient process in the distribution regression model with a logistic link, where two implementations are analyzed in detail. Simulations and an application to real data are provided to help illustrate the applicability of the results.
\end{abstract}

\begin{keyword}
\kwd{inference after model selection}
\kwd{moment condition models with a continuum of target parameters}
\kwd{Lasso and Post-Lasso with functional response data}
\end{keyword}

\end{frontmatter}

\section{Introduction}
High-dimensional models have become increasingly popular in the last two decades. Much research has been conducted on estimation of these models. However, inference about parameters in these models is much less understood, although the literature on inference is growing quickly; see the list of references below. In particular, despite its practical relevance, there are few papers trying to solve the problem of construction of simultaneous confidence bands on many target parameters in these models. (One exception is \cite{BCK-LAD}. Simultaneous confidence bands have been constructed by Bonferroni adjustments but the resulting confidence bands are often conservative, see, e.g., simulation results in \cite{vandeGeerBuhlmannRitov2013}.) In this paper we provide a solution to this problem by constructing simultaneous confidence bands for parameters in a very general framework of moment condition models, allowing for many functional parameters, where each parameter itself can be an infinite-dimensional object, and the number of parameters can be much larger than the sample size of available data.

As a substantive application, we apply our general results to provide simultaneous confidence bands for parameters in a logistic regression model with functional response data
 \begin{equation}\label{Def:FunctionalLogistic}
 \Ep_P[ Y_u \mid D, X] = \G(D'\theta_u+X'\beta_u), \ \ u\in \UU,
 \end{equation}
 where $D = (D_1,\dots,D_{\pp})'$ is a $\pp$-vector of covariates whose effects are of interest, $X = (X_1,\dots,X_p)'$ is a $p$-vector of controls, $\Lambda\colon\mathbb R\to\mathbb R$ is the logistic link function, $\mathcal U = [0,1]$ is a set of indices, and for each $u\in \mathcal U$, $Y_u = 1\{Y\leq (1-u) \underline y + u\bar y\}$ for some constants $\underline y\leq \bar y$ and the response variable $Y$,  $\theta_u = (\theta_{u1},\dots,\theta_{u\pp})'$ is a vector of target parameters, and $\beta_u = (\beta_{u 1},\dots,\beta_{u p})'$ is a vector of nuisance parameters. Here, both $\pp$ and $p$ are allowed to be potentially much larger than the sample size $n$, and we have $\pp$ functional target parameters $(\theta_{u j})_{u\in\mathcal U}$ and $p$ functional nuisance parameters $(\beta_{u j})_{u\in\mathcal U}$. This example is important because it demonstrates that our methods can be used for inference about the whole distribution of the response variable $Y$ given $D$ and $X$ in a high-dimensional setting, and not only about some particular features of it such as mean or median. This model is called a  distribution regression model in \cite{CFM13} and a conditional transformation model in \cite{HKB14}, who argue that the model provides a rich class of models for conditional distributions, and offers a useful generalization of traditional proportional hazard models as well as a useful alternative to quantile regression. We develop inference methods to construct simultaneous confidence bands for many functional parameters of this model in Section \ref{Sec:Application}. Toward this goal, our contributions include to effectively estimate a continuum of high-dimensional nuisance parameters, allow for approximately sparse models, control sparse eigenvalues of a continuum of random matrices, and establish the validity of a multiplier bootstrap for the construction of confidence bands for the many functional parameters of interest. In particular, these contributions go much beyond \cite{BCK-LAD}, which considers the special case of many scalar parameters.

Our general results refer to the problem of estimating the set of parameters $(\theta_{u j})_{u\in\UU,j\in[\pp]}$ in the moment condition model,
\begin{equation}\label{Def:Moment1}
 \Ep_P[ \psi_{u j}(W,\theta_{u j},\eta_{u j}) ] = 0, \quad \uu \in \UU, \ j \in [\pp],
\end{equation}
where $W$ is a random element that takes values in a measurable space $(\mathcal{W},\mathcal{A}_{\mathcal W})$ according to a probability measure $P$, $\UU\subset \mathbb R^{d_u}$ and $[\pp]:=\{1,\dots,\pp\}$ are sets of indices, and for each $u\in\UU$ and $j\in[\pp]$, $\psi_{u j}$ is a known score function, $\theta_{u j}$ is a scalar parameter of interest, and $\eta_{uj}$ is a potentially high-dimensional (or infinite-dimensional) nuisance parameter. Assuming that a random sample of size $n$, $(W_i)_{i=1}^n$, from the distribution of $W$ is available together with suitable estimators $\widehat\eta_{u j}$ of $\eta_{u j}$, we aim to construct simultaneous confidence bands for $(\theta_{u j})_{u\in\UU,j\in[\pp]}$ that are valid uniformly over a large class of probability measures $P$, say $\mathcal P_n$. Specifically, for each $u\in\UU$ and $j\in[\pp]$, we construct an appropriate estimator $\check\theta_{u j}$ of $\theta_{u j}$ along with an estimator of the standard deviation of $\sqrt{n}(\check\theta_{u j} - \theta_{u j})$, $\widehat \sigma_{uj}$, such that

\begin{equation}\label{UnifCoverage}
\Pr_P\left ( \check\theta_{\uu j} -  \frac{c_\alpha\hat\sigma_{\uu j}}{\sqrt n} \leq \theta_{uj} \leq \check\theta_{\uu j} + \frac{c_\alpha\hat\sigma_{\uu j}}{\sqrt n}, \forall \uu \in \UU,  j\in[\pp] \right) \to 1-\alpha,
\end{equation}
uniformly over $P \in \mathcal{P}_n$, where $\alpha\in(0,1)$ and $c_\alpha$ is an appropriate critical value, which we choose to construct using a multiplier bootstrap method. The left- and the right-hand sides of the inequalities inside the probability statement \eqref{UnifCoverage} then can be used as bounds in simultaneous confidence bands for $\theta_{u j}$'s.
In this paper, we are particularly interested in the case when $\pp$ is potentially much larger than $n$ and $\UU$ is an uncountable subset of $\mathbb R^{d_u}$, so that for each $j\in[\pp]$, $(\theta_{uj})_{u\in\UU}$ is an infinite-dimensional (that is, functional) parameter.

In the presence of high-dimensional nuisance parameters, construction of valid confidence bands is delicate. Dealing with high-dimensional parameters requires relying upon regularization that leads to lack of asymptotic linearization of the estimators of target parameters since regularized estimators of nuisance parameters suffer from a substantial bias and this bias spreads into the estimators of the target parameters. This lack of asymptotic linearization in turn typically translates into severe distortions in coverage probability of the confidence bands constructed by traditional techniques that are based on perfect model selection; see \cite{leeb:potscher:pms}, \cite{leeb:potscher:review}, \cite{leeb:potscher:hodges},  \cite{potscher:leeb:dpe}.
To deal with this problem, we assume that the score functions $\psi_{u j}$ are constructed to satisfy a near-orthogonality condition that makes them immune to first-order changes in the value of the nuisance parameter, namely
\begin{equation}\label{DefOrtho}
\left.\partial_r\bigg\{\Ep_P\Big[\psi_{\uu j}(W,\theta_{\uu j},\eta_{uj}+r\tilde \eta\})\Big]\bigg\}\right|_{r=0}\approx 0, \quad \ u \in \UU, \ j\in[\pp],
\end{equation}
for all $\tilde\eta$ in an appropriate set where $\partial_r$ denotes the derivative with respect to $r$.  We shall often refer to  this condition as \textit{Neyman orthogonality}, since in low-dimensional parametric settings the orthogonality property originates in the work of Neyman on the $C(\alpha)$ test in the 50s. In Section \ref{sec: general} below, we describe a few general methods for constructing the score functions $\psi_{u j}$ obeying the Neyman orthogonality condition.

The Neyman orthogonality condition \eqref{DefOrtho} is important because it helps to make sure that the bias from the estimators of the high-dimensional nuisance parameters does not spread into the estimators of the target parameters. In particular, under \eqref{DefOrtho}, it follows that
$$
\Ep_{P,W}\Big[\psi_{u j}(W,\theta_{u j},\widehat\eta_{u j})\Big]\approx 0,\quad u\in\UU, \ j\in[\pp],
$$
at least up to the first order, where the index $W$ in $\Ep_{P,W}[\cdot]$ means that the expectation is taken over $W$ only. This makes the estimators of the target parameters $\theta_{u j}$ immune to the bias in the estimators $\widehat \eta_{u j}$, which in turn improves their statistical properties and opens up the possibilities for valid inference.

As the framework \eqref{Def:Moment1} covers a broad variety of applications, it is instructive to revisit the logistic regression model with functional response data (\ref{Def:FunctionalLogistic}). To construct score functions $\psi_{u j}$ that satisfy both the moment conditions \eqref{Def:Moment1} and the Neyman orthogonality condition \eqref{DefOrtho} in this example, for $u\in\UU$ and $j\in[\pp]$, define a $(\pp + p - 1)$-vector of additional nuisance parameters
\begin{equation}\label{eq: gamma definition}
\gamma_u^j = \arg\min_{\gamma \in \mathbb R^{\pp + p - 1}} \Ep_P[f_u^2\{D_j - X^j\gamma\}^2],
\end{equation}
where $X^j = (D_{[\pp]\setminus j}', X')$, $D_{[\pp]\setminus j} = (D_1,\dots,D_{j - 1}, D_{j + 1}, \dots, D_{\pp})'$, and
\begin{equation}\label{eq: logit weight function}
f_u^2 = f_u^2(D,X) = {\rm Var}_P(Y_u\mid D,X).
\end{equation}
Then, denoting $W = (Y,D,X)$ and splitting $\theta_u$ into $\theta_{u j}$ and $\theta_{u[\pp]\setminus j} = (\theta_{u 1},\dots,\theta_{u j-1},\theta_{u j+1},\dots,\theta_{u [\pp]})'$, we set
$$
\psi_{u j}(W,\theta_{u j},\eta_{u j})=\Big\{ Y_u - \Lambda(D_j\theta_{u j} +X^j \beta_{u}^j) \Big\} ( D_j - X^j\gamma_{u}^j),
$$
where $\eta_{u j} = (\beta_{u}^j, \gamma_{u}^j)$ and $\beta_u^j = (\theta_{u[\pp]\setminus j}', \beta_u')'$.
It is straightforward to see that these score functions $\psi_{u j}$ satisfy the moment conditions \eqref{Def:Moment1} and to see that they also satisfy the Neyman orthogonality condition \eqref{DefOrtho}, observe that
{\small
\begin{align*}
&\left.\partial_\beta\Big\{\Ep_P[\psi_{u j}(W,\theta_{u j},\beta,\gamma_u^j)]\Big\}\right|_{\beta=\beta_u^j}=-\Ep_P\Big[f^2_u\{ D_j - X^j\gamma_u^j\}(X^j)'\Big] = 0,\\
&\left.\partial_\gamma\Big\{\Ep_P[\psi_{u j}(W,\theta_{u j},\beta_u^j,\gamma)]\Big\}\right|_{\gamma=\gamma_u^j}=-\Ep_P\Big[\{Y_u - \Lambda(D'\theta_u+X'\beta_u) \}(X^j)'\Big] = 0,
\end{align*}}\!where the first line holds by definition of $f_u^2$ and $\gamma_u^j$ since ${\rm Var}_P(Y_u\mid D,X) = \Lambda'(D'\theta_u + X'\beta_u)$, and the second by (\ref{Def:FunctionalLogistic}). Because of this orthogonality condition, we can exploit the moment conditions \eqref{Def:Moment1} to construct regular, $\sqrt{n}$-consistent, estimators of $\theta_{u j}$ even if non-regular, regularized or post-regularized, estimators of $\eta_{u j} = (\beta_u^j, \gamma_u^j)$ are used to cope with high-dimensionality. Using these regular estimators of $\theta_{u j}$, we then can construct valid confidence bands \eqref{UnifCoverage}.




Our general approach to construct simultaneous confidence bands, which is developed in Section \ref{sec: general}, can be described as follows. First, we construct the moment conditions \eqref{Def:Moment1} that satisfy the Neyman orthogonality condition \eqref{DefOrtho}, and use these moment conditions to construct estimators $\check\theta_{u j}$ of $\theta_{u j}$ for all $u\in\UU$ and $j\in[\pp]$. Second, under appropriate regularity conditions, we establish a Bahadur representation for $\check\theta_{u j}$'s. Third, employing the Bahadur representation, we are able to derive a suitable Gaussian approximation for the distribution of $\check\theta_{u j}$'s. Importantly, the Gaussian approximation is possible even if both $\pp$ and the dimension of the index set $\UU$, $d_u$, are allowed to grow with $n$, and $\pp$ asymptotically remains much larger than $n$. Finally, from the Gaussian approximation, we construct simultaneous confidence bands using a multiplier bootstrap method. Here, the Gaussian and bootstrap approximations are constructed by applying the results on high-dimensional central limit and bootstrap theorems established in \cite{chernozhukov2013gaussian}, \cite{chernozhukov2014clt}, \cite{chernozhukov2012gaussian}, \cite{chernozhukov2012comparison}, and \cite{chernozhukov2015noncenteredprocesses} by verifying the conditions there.

Although regularity conditions underlying our approach can be verified for many models defined by moment conditions, for illustration purposes, we explicitly verify these conditions for the logistic regression model with functional response data (\ref{Def:FunctionalLogistic}) in Section \ref{Sec:Application}. We also note that the regularity conditions, in particular those related to the entropy of the nuisance parameter estimators, can be substantially relaxed if we use sample splitting, so that the nuisance parameters and parameters of interest are estimated on separate samples; see \cite{CCDDHNR16}. In addition, we examine the performance of the proposed procedures in a Monte Carlo simulation study and provide an example based on real data in Section \ref{sec: simulations}. Moreover, in the Supplementary Material, we discuss the construction of simultaneous confidence bands based on a double-selection estimator. This estimator does not require to explicitly construct the score functions satisfying the Neyman orthogonality condition but nonetheless is first-order equivalent to the estimator based on such functions.


We also develop new results for $\ell_1$-penalized $M$-estimators in Section \ref{FunctionalLassoSection} to handle functional data and criterion functions that depend on nuisance functions for which only estimates are available (for brevity of the paper, generic results are deferred to Appendix \ref{sec: generic results} of the Supplementary Material, and Section \ref{FunctionalLassoSection} only contains results that are relevant for the logistic regression model studied in Section \ref{Sec:Application}). Specifically, we develop a method to select penalty parameters for these estimators and extend the existing theory to cover functional data to achieve rates of convergence and sparsity guarantees that hold uniformly over $u\in \UU$. The ability to allow both for functional data and for nuisance functions is crucial in the implementation and in theoretical analysis of the methods proposed in this paper.

Orthogonality conditions like that in \eqref{DefOrtho} have played an important role in statistics and econometrics.  In low-dimensional settings, a similar condition was used by Neyman in \cite{N59} and \cite{Neyman1979} while in semiparametric models the orthogonality conditions were used in \cite{newey90}, \cite{andrews94}, \cite{newey94}, \cite{robins:dr} and \cite{linton96}. In high-dimensional settings, \cite{BellChernHans:Gauss} and \cite{BellChenChernHans:nonGauss} were the first to use the orthogonality condition (\ref{DefOrtho}) in a linear instrumental variables model with many instruments. Related ideas have also been used in the literature to construct confidence bands in high-dimensional linear models, generalized linear models, and other non-linear models; see \cite{BCH2011:InferenceGauss}, \cite{c.h.zhang:s.zhang}, \cite{BelloniChernozhukovHansen2011}, \cite{vandeGeerBuhlmannRitov2013}, \cite{BCW-rootLasso}, \cite{javanmard2014confidence}, \cite{javanmard2013confidence}, \cite{BCK-LAD}, \cite{BCK-SparseQRinference}, \cite{BCFH2013program}, \cite{ZhaoKolarLiuQR2014}, and \cite{NingLiuGT2014}. We contribute to this quickly growing literature by providing procedures to construct {\em simultaneous} confidence bands for {\em many infinite-dimensional} parameters identified by moment conditions.

Throughout the paper, we use the standard notation from the empirical process theory. In particular, we use $\mathbb E_n$ to denote the expectation with respect to the empirical measure associated with the data $(W_i)_{i=1}^n$, and we use $\mathbb G_n$ to denote the empirical process $\sqrt n(\mathbb E_n - \Ep_P)$. More details about the notation are given in Appendix \ref{subsec:notation} of the Supplementary Material.

 \section{Confidence Regions for Function-Valued Parameters Based on Moment Conditions}\label{sec: general}

\subsection{Generic Construction of Confidence Regions}
 In this section, we state our results under high-level conditions. In the next section, we will apply these results to construct simultaneous confidence bands for many infinite-dimensional parameters in the logistic regression model with functional response data.

Recall that we are interested in constructing simultaneous confidence bands for a set of target parameters $(\theta_{u j})_{u\in\UU, j\in[\pp]}$ where for each $u\in\UU\subset \mathbb{R}^{d_u}$ and $j\in[\pp]=\{1,\dots,\pp\}$, the parameter $\theta_{u j}$ satisfies the moment condition \eqref{Def:Moment1} with $\eta_{u j}$ being a potentially high-dimensional (or infinite-dimensional) nuisance parameter. Assume that $\theta_{\uu j} \in \Theta_{\uu j}$, a finite or infinite interval in $\mathbb{R}$, and that $\eta_{uj} \in T_{uj}$, a convex set in a normed space equipped with a norm $\|\cdot\|_e$. We allow $\UU$ to be a possibly uncountable set of indices, and $\pp$ to be potentially large.


We assume that a random sample $(W_i)_{i=1}^n$ from the distribution of $W$ is available for constructing the confidence bands. We also assume that for each $u\in\UU$ and $j\in[\pp]$, the nuisance parameter $\eta_{u j}$ can be estimated by $\hat\eta_{uj}$ using the same data $(W_i)_{i=1}^n$. In the next section, we discuss examples where $\widehat \eta_{u j}$'s are based on Lasso or Post-Lasso methods (although other modern regularization and post-regularization methods can be applied). Our confidence bands will be based on the estimators $\check \theta_{\uu j}$ of $\theta_{\uu j}$ that are for each $u\in\UU$ and $j\in[\pp]$ defined as approximate $\epsilon_n$-solutions in $\Theta_{uj}$ to sample analogs of the moment conditions (\ref{Def:Moment1}), that is,
\begin{equation}\label{eq:analog}
\sup_{\uu \in \UU, j\in[\pp]}  \left\{ \Big|\En[ \psi_{uj}(W, \check \theta_{uj}, \hat \eta_{uj} ) ] \Big| - \inf_{\theta \in \Theta_{uj}}\Big|\En[ \psi_{\uu j}(W, \theta, \hat \eta_{uj} ) ] \Big| \right\} \leq \epsilon_n,
\end{equation}
where $\epsilon_n = o(\delta_n n^{-1/2})$ for all $n\geq 1$ and some sequence $(\delta_n)_{n\geq 1}$ of positive constants converging to zero.

To motivate the construction of the confidence bands based on the estimators $\check \theta_{\uu j}$, we first study distributional properties of these estimators. To do that, we will employ the following regularity conditions. Let $C_0$ be a strictly positive (and finite) constant, and for each $u\in\UU$ and $j\in[\pp]$, let $\mT_{u j}$ be some subset of $T_{u j}$, whose properties are specified below in assumptions. In particular, we will choose the sets $\mT_{u j}$ so that, on the one hand, their complexity does not grow too fast with $n$ but, on the other hand,  for each $u\in\UU$ and $j\in [\pp]$, the estimator $\hat \eta_{u j}$ takes values in $\mT_{u j}$ with high probability. As discussed before, we rely on the following near-orthogonality condition:
\begin{definition}[Near-orthogonality condition] For each $u \in \mathcal{U}$ and $j\in[\pp]$, we say that $\psi_{uj}$ obeys the near-orthogonality condition with respect to $\mT_{uj}\subset T_{u j}$ if the following conditions hold: The Gateaux derivative map
 $$
  \mathrm{D}_{\uu,j,\bar r}[ \eta - \eta_{uj}]:=  \left. \partial_r  \bigg\{\Ep_P \Big [  \psi_{\uu j} (W, \theta_{\uu j}, \eta_{uj}+ r ( \eta - \eta_{uj}))   \Big ]\bigg\}\right|_{r=\bar r}
  $$
  exists for all $\bar r \in [0,1)$ and $ \eta \in \mT_{uj}$ and (nearly) vanishes at $\bar r=0$, namely,
 \begin{equation}\label{eq:cont}
 \Big|\mathrm{D}_{\uu,j,0}[\eta - \eta_{uj}]\Big|\leq C_0\delta_n n^{-1/2},
\end{equation}
for all $\eta \in \mT_{u j}$.\qed
\end{definition}

At the end of this section, we describe several methods to obtain score functions $\psi_{u j}$ that obey the near-orthogonality condition. Together these methods cover a wide variety of applications.

Let $\omega$ and $c_0$ be some strictly positive (and finite) constants, and let $n_0\geq 3$ be some positive integer. Also, let $(B_{1n})_{n\geq 1}$ and $(B_{2n})_{n\geq 1}$ be some sequences of positive constants, possibly growing to infinity,  where $B_{1 n}\geq 1$ for all $n\geq 1$. In addition, denote
\begin{equation}\label{eq: un and juj}
\begin{array}{cc}
&\mS_n:= \Ep_P\left[\sup_{\uu \in \UU, j\in [\pp]}\Big|\sqrt{n}\En[\psi_{uj}(W, \theta_{uj}, \eta_{uj} )]\Big|\right],\\
& J_{\uu j} :=  \left.\partial_\theta\Big\{ \Ep_P[ \psi_{\uu j} (W, \theta, \eta_{uj})]\Big\}\right|_{\theta=\theta_{uj}}.
\end{array}
\end{equation}
The quantity $\mS_n$ measures how rich the process $\{\psi_{u j}(\cdot,\theta_{u j},\eta_{u j})\colon u\in\UU, j\in[\pp]\}$ is. 
The quantity $J_{u j}$ measures the degree of identifiability of $\theta_{u j}$ by the moment condition \eqref{Def:Moment1}. In many applications, it is bounded in absolute value from above and away from zero. Finally, let $\mathcal P_n$ be a set of probability measures $P$ of possible distributions of $W$ on the measurable space $(\mathcal{W},\mathcal{A}_{\mathcal W})$.

We collect our main conditions on the score functions $\psi_{u j}$ and the true values of the target parameters $\theta_{u j}$ in the following assumption.

\begin{assumption}[Moment condition problem]\label{ass: S1}
For all $n \geq n_0$, $P \in \mathcal{P}_n$, $u\in\UU$, and $j\in[\pp]$, the following conditions hold: (i) The true parameter value $\theta_{\uu j}$ obeys (\ref{Def:Moment1}), and $\Theta_{\uu j}$ contains a ball of radius $C_0 n^{-1/2}\mS_n \log n $ centered at $\theta_{\uu j}$.   (ii) The map $ (\theta,\eta)  \mapsto \Ep_P[\psi_{uj}(W, \theta,\eta )]$ is twice continuously Gateaux-differentiable on $\Theta_{uj} \times \mT_{uj}$. (iii) The score function $\psi_{uj}$ obeys the near-orthogonality condition given in Definition 2.1 for the set $\mT_{uj}\subset T_{uj}$. (iv) For all $\theta \in \Theta_{uj}$, $|\Ep_P[\psi_{\uu j}(W, \theta, \eta_{uj})]| \geq 2^{-1}|J_{uj} (\theta- \theta_{u j})| \wedge c_0$, where $J_{\uu j}$ satisfies $c_0\leq |J_{uj}| \leq C_0 $. (v) For all $r\in[0,1)$, $\theta\in\Theta_{u j}$, and $\eta \in \mT_{uj}$,

\begin{itemize}
\item[(a)]
$\Ep_P[  ( \psi_{uj}(W, \theta,\eta) - \psi_{uj}(W, \theta_{u j},\eta_{u j}))^2]\leq C_0  ( |\theta-\theta_{u j}|\vee \| \eta - \eta_{u j}\|_e)^{\omega}$,

\item [(b)]
$| \partial_{r} \Ep_P \left [ \psi_{uj}(W, \theta,\eta_{uj}+r(\eta-\eta_{uj})) \right ]| \leq B_{1n} \|\eta-\eta_{uj}\|_e $,

\item[(c)]
$ |\partial_{r}^2 \Ep_P[\psi_{uj}(W, \theta_{uj}+r(\theta-\theta_{uj}),\eta_{uj}+r(\eta - \eta_{uj}))]| \leq   B_{2n} (|\theta-\theta_{uj}|^2 \vee \|\eta-\eta_{uj}\|_{e}^2)$.
\end{itemize}

\end{assumption}

Assumption \ref{ass: S1} is mild and standard in moment condition problems. Assumption \ref{ass: S1}(i) requires $\theta_{u j}$ to be sufficiently separated from the boundary of $\Theta_{u j}$. Assumption \ref{ass: S1}(ii) requires that the functions $(\theta,\eta)\mapsto \Ep_P[\psi_{u j}(W,\theta,\eta)]$ are smooth. It is a mild condition because it does not require smoothness of the functions $(\theta,\eta)\mapsto \psi_{u j}(W,\theta,\eta)$. Assumption \ref{ass: S1}(iii) is our key condition and is discussed above. Assumption \ref{ass: S1}(iv) implies sufficient identifiability of $\theta_{u j}$.  In particular, it implies that the equation $\Ep_P[\psi_{u j}(W,\theta,\eta_{u j})] = 0$  has only one solution $\theta = \theta_{u j}$. If this equation has multiple solutions, Assumption \ref{ass: S1}(iv) implies that the set $\Theta_{u j}$ is restricted enough so that there is only one solution in $\Theta_{u j}$. Assumption \ref{ass: S1}(v-a) means that the functions $(\theta,\eta)\mapsto \psi_{u j}(W, \theta,\eta)$ mapping $\Theta_{u j}\times T_{u j}$ into $L^2(P)$ are Lipschitz-continuous at $(\theta,\eta) = (\theta_{u j},\eta_{u j})$ with Lipschitz order $\omega/2$. In most applications, we can set $\omega = 2$. Assumptions \ref{ass: S1}(v-b,v-c) impose smoothness bounds on the functions $(\theta,\eta)\mapsto \Ep_P[\psi_{u j}(W,\theta,\eta)]$. 

Next, we state our conditions related to the estimators $\hat{\eta}_{u j}$. Let $(\Delta_n)_{n\geq 1}$ and $(\tau_n)_{n\geq 1}$ be some sequences of positive constants converging to zero. Also, let $(a_n)_{n\geq 1}$, $(v_n)_{n\geq 1}$, and $(K_n)_{n\geq 1}$ be some sequences of positive constants, possibly growing to infinity, where $a_n\geq n\vee K_n$ and $v_n\geq 1$ for all $n\geq 1$. Finally, let $q\geq 2$ be some constant.

\begin{assumption}[Estimation of nuisance parameters]\label{ass: AS}
For all $n \geq n_0$ and $P \in  \mathcal{P}_n$, the following conditions hold: (i) With probability at least $1- \Delta_n$, we have $\hat \eta_{uj} \in \mT_{uj}$ for all $u\in\UU$ and $j\in[\pp]$. (ii) For all $u\in\UU$, $j\in[\pp]$, and $\eta\in \mT_{u j}$,
$
\| \eta - \eta_{uj}\|_{e} \leq \tau_n.
$
(iii) For all $u\in\UU$ and $j\in[\pp]$, we have $\eta_{u j}\in \mT_{u j}$. (iv) The function class
$\mathcal{F}_1 = \{  \psi_{uj}(\cdot, \theta, \eta)\colon  u \in \mathcal{U}, j  \in [\pp],  \theta \in \Theta_{uj}, \eta \in \mT_{uj} \}$
is suitably measurable and its uniform entropy numbers obey
\begin{equation}\label{eq: F1 entropy bound}
\sup_Q  \log N(\epsilon \|F_1\|_{Q,2}, \mathcal{F}_1, \| \cdot \|_{Q,2}) \leq v_{n}  \log (a_n/\epsilon), \quad \text{for all } 0<\epsilon\leq 1
\end{equation}
where $F_1$ is a measurable envelope for $\mathcal{F}_1$ that satisfies $\|F_1\|_{P,q}\leq K_{n}$. (v) For all $f\in\mathcal{F}_1$, we have $c_0\leq \|f\|_{P,2}\leq C_0$. (vi) The complexity characteristics $a_n$ and $v_{n}$ satisfy
\begin{itemize}
\item[(a)]
$(v_{n} \log a_n/n)^{1/2} \leq C_0\tau_n$,

\item[(b)]
$(B_{1n}\tau_n + \mS_n\log n/\sqrt n)^{\omega/2} (v_{n} \log a_n)^{1/2}  + n^{-1/2+1/q} v_{n} K_n\log a_n \leq C_0\delta_n$,

\item[(c)]
$n^{1/2} B_{1n}^2 B_{2n}  \tau_n^2 \leq C_0\delta_n$.
\end{itemize}
\end{assumption}

Assumption \ref{ass: AS} provides sufficient conditions for the estimation of the nuisance parameters $(\eta_{u j})_{u\in\UU,j\in[\pp]}$. Assumption \ref{ass: AS}(i) requires that the set $\mT_{u j}$ is large enough so that $\hat\eta_{u j}\in\mT_{u j}$ with high probability. Assumptions \ref{ass: AS}(i,ii) together require that the estimator $\hat\eta_{u j}$ converges to $\eta_{u j}$ with the rate $\tau_n$. This rate should be fast enough so that Assumptions \ref{ass: AS}(vi-b,vi-c) are satisfied. Assumption \ref{ass: AS}(iv) gives a bound on the complexity of the set $\mT_{u j}$ expressed via uniform entropy numbers, and Assumptions \ref{ass: AS}(vi-a,vi-b) require that the set $\mT_{u j}$ is small enough so that its complexity does not grow too fast. Assumption \ref{ass: AS}(v) requires that the functions $(\theta,\eta)\mapsto \psi_{u j}(W,\theta,\eta)$ are scaled properly. Suitable measurability of $\mathcal{F}_1$, required in Assumption \ref{ass: AS}(iv), is a mild condition that is satisfied in most practical cases; see Supplementary Material \ref{subsec:notation} for clarifications. Overall, Assumption \ref{ass: AS} shows the trade-off in the choice of the sets $\mathcal T_{u j}$: setting $\mathcal T_{u j}$ large, on the one hand, makes it easy to satisfy Assumption \ref{ass: AS}(i) but, on the other hand, yields large values of $a_n$ and $v_n$ in Assumption \ref{ass: AS}(iv) making it difficult to satisfy Assumption \ref{ass: AS}(vi).


We stress that the class $\mathcal{F}_1$ does not need to be Donsker because its uniform entropy numbers are allowed to increase with $n$. This is important because allowing for non-Donsker classes is necessary to deal with high-dimensional nuisance parameters. Note also that our conditions are very different from the conditions imposed in various settings with nonparametrically estimated nuisance functions; see, e.g., \cite{vdV-W}, \cite{vdV}, and \cite{kosorok:book}.

In addition, we emphasize that the conditions stated in Assumption \ref{ass: AS} are sufficient for our results for the general model \eqref{Def:Moment1} but can often be relaxed if the structure of the functions $\psi_{u j}(W, \theta, \eta)$ is known. For example, it is possible to relax Assumption \ref{ass: AS}(vi) if the functions $\psi_{u j}(W, \theta, \eta)$ are linear in $\theta$, which happens in the linear regression model with $\theta$ being the coefficient on the covariate of interest; see \cite{BelloniChernozhukovHansen2011}. Moreover, it is possible to relax the entropy condition \eqref{eq: F1 entropy bound} of Assumption \ref{ass: AS} by relying upon sample splitting, where a part of the data is used to estimate $\eta_{u j}$, and the other part is used to estimate $\theta_{u j}$ given an estimate $\hat\eta_{u j}$ of $\eta_{u j}$; see \cite{BellChenChernHans:nonGauss}.

The following theorem is our first main result in this paper:

\begin{theorem}[{Uniform Bahadur representation}] \label{theorem:semiparametric} Under  Assumptions \ref{ass: S1} and \ref{ass: AS}, for an estimator $(\check \theta_{uj})_{u \in \mathcal{U},j\in[\pp]}$ that obeys (\ref{eq:analog}), we have
\begin{equation}\label{eq: bahadur representation main}
\sqrt{n}\sigma_{uj}^{-1}(\check\theta_{uj} - \theta_{uj}) =   \Gn    \bar \psi_{uj}   + O_P(\delta_n) \text{ in } \ell^\infty(\mathcal{U}\times[\pp])
\end{equation}
uniformly over $P \in  \mathcal{P}_n$, where  $\bar \psi_{uj}(\cdot):= - \sigma_{uj}^{-1}J^{-1}_{uj}  \psi_{uj}(\cdot, \theta_{uj}, \eta_{uj})$ and  $\sigma_{uj}^2 :=J^{-2}_{uj} \Ep_P[  \psi_{uj}^2(W, \theta_{uj}, \eta_{uj})]$.
\end{theorem}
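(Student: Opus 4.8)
The plan is to derive the Bahadur representation by a one-step expansion of the empirical moment condition around the true value $\theta_{uj}$, controlling the three sources of error: the empirical-process fluctuation, the plug-in error from $\hat\eta_{uj}$, and the second-order (curvature) remainder. Throughout, the estimate $\hat\eta_{uj}$ lives in $\mathcal T_{uj}$ with probability at least $1-\Delta_n$ (Assumption \ref{ass: AS}(i)), so I would work on that event and absorb the complement into the $O_P(\delta_n)$ term.

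First I would establish the stochastic equicontinuity of the empirical process indexed by $\mathcal F_1$. Writing $\mathbb{G}_n f := \sqrt n(\En - \Ep_P)f$, I would use the uniform entropy bound \eqref{eq: F1 entropy bound} together with the envelope condition $\|F_1\|_{P,q}\leq K_n$ and a maximal inequality (of Chernozhukov–Chetverikov–Kato type) to bound, uniformly over $u\in\UU$, $j\in[\pp]$, $\theta\in\Theta_{uj}$ and $\eta\in\mathcal T_{uj}$ with $|\theta-\theta_{uj}|\vee\|\eta-\eta_{uj}\|_e$ small, the increment $|\mathbb{G}_n\{\psi_{uj}(\cdot,\theta,\eta)-\psi_{uj}(\cdot,\theta_{uj},\eta_{uj})\}|$. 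Combining the $L^2$-continuity modulus from Assumption \ref{ass: S1}(v-a) with the rate $\tau_n$ from Assumption \ref{ass: AS}(ii), this increment is of order $(\tau_n\vee |\theta-\theta_{uj}|)^{\omega/2}(v_n\log a_n)^{1/2}$ plus a lower-order term $n^{-1/2+1/q}v_nK_n\log a_n$; by Assumption \ref{ass: AS}(vi-b) this is $O_P(\delta_n)$ once I also know $|\check\theta_{uj}-\theta_{uj}|$ is of the right order. To get that preliminary rate for $\check\theta_{uj}$, I would combine the defining inequality \eqref{eq:analog} with the identifiability bound Assumption \ref{ass: S1}(iv): the sample moment at $\check\theta_{uj}$ is small by \eqref{eq:analog}, hence the population moment $\Ep_P[\psi_{uj}(W,\theta,\eta_{uj})]|_{\theta=\check\theta_{uj}}$ is small (up to the equicontinuity term, the plug-in bias, and $\epsilon_n$), and then Assumption \ref{ass: S1}(iv) converts smallness of the population moment into smallness of $|J_{uj}(\check\theta_{uj}-\theta_{uj})|$, giving $\sup_{u,j}|\check\theta_{uj}-\theta_{uj}| = O_P(n^{-1/2}\mathbb{S}_n\log n)$ or better.

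Next I would handle the bias from plugging in $\hat\eta_{uj}$. Expanding $r\mapsto\Ep_P[\psi_{uj}(W,\theta_{uj},\eta_{uj}+r(\hat\eta_{uj}-\eta_{uj}))]$ to second order in $r$ and evaluating at $r=1$: the zeroth-order term vanishes by \eqref{Def:Moment1}, the first-order term is $\mathrm{D}_{uj,0}[\hat\eta_{uj}-\eta_{uj}]$, which is $O(\delta_n n^{-1/2})$ by the near-orthogonality condition \eqref{eq:cont} (valid since $\hat\eta_{uj}\in\mathcal T_{uj}$), and the second-order remainder is $O(B_{2n}\|\hat\eta_{uj}-\eta_{uj}\|_e^2) = O(B_{2n}\tau_n^2)$ by Assumption \ref{ass: S1}(v-c); after multiplication by $\sqrt n$ and incorporating the $B_{1n}^2$ factor that arises in the more careful version of this step (when the expansion is done jointly in $\theta$ and $\eta$), Assumption \ref{ass: AS}(vi-c) makes the whole thing $O_P(\delta_n)$. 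Then I would assemble: \eqref{eq:analog} gives $\En[\psi_{uj}(W,\check\theta_{uj},\hat\eta_{uj})] = O_P(\epsilon_n + n^{-1/2}\mathbb{S}_n)$ with the $\inf$ term controlled, decompose $\sqrt n\En[\psi_{uj}(W,\check\theta_{uj},\hat\eta_{uj})]$ into $\mathbb{G}_n\psi_{uj}(\cdot,\theta_{uj},\eta_{uj})$ $+$ (equicontinuity remainder) $+$ $\sqrt n\,\Ep_P[\psi_{uj}(W,\check\theta_{uj},\hat\eta_{uj})]$, and Taylor-expand the last population term in $\theta$ about $\theta_{uj}$ at fixed $\hat\eta_{uj}$: the linear term is $J_{uj}\sqrt n(\check\theta_{uj}-\theta_{uj})$ up to the $\theta$-derivative discrepancy between $\hat\eta_{uj}$ and $\eta_{uj}$ (controlled by Assumption \ref{ass: S1}(v-b,v-c) and the rate $\tau_n$), the quadratic term is $O(B_{2n}(\check\theta_{uj}-\theta_{uj})^2)$ and hence lower order. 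Solving for $\check\theta_{uj}-\theta_{uj}$, dividing by $\sigma_{uj}$, and using $c_0\leq|J_{uj}|\leq C_0$ and $c_0\leq\|\psi_{uj}(\cdot,\theta_{uj},\eta_{uj})\|_{P,2}\leq C_0$ (so $\sigma_{uj}$ is bounded and bounded away from zero) yields \eqref{eq: bahadur representation main}, with all remainders collected into a single $O_P(\delta_n)$ term that is uniform in $\ell^\infty(\UU\times[\pp])$ and, by construction, uniform over $P\in\mathcal P_n$.

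The main obstacle is the empirical-process step: making the equicontinuity bound genuinely \emph{uniform} over the uncountable index set $\UU\times[\pp]\times\Theta_{uj}\times\mathcal T_{uj}$ while $\pp\gg n$ and the class $\mathcal F_1$ is non-Donsker (its entropy grows with $n$). This is where the entropy bound \eqref{eq: F1 entropy bound}, the envelope moment $K_n$, the side conditions in Assumption \ref{ass: AS}(vi), and a non-asymptotic maximal inequality all have to be combined with care; the $n^{-1/2+1/q}v_nK_n\log a_n$ term in Assumption \ref{ass: AS}(vi-b) is precisely the price of this uniformity. Everything else — the two Taylor expansions, the inversion via Assumption \ref{ass: S1}(iv), and the bias cancellation via Neyman orthogonality — is routine bookkeeping once the right event ($\hat\eta_{uj}\in\mathcal T_{uj}$ for all $u,j$) has been fixed and the preliminary rate for $\check\theta_{uj}$ is in hand.
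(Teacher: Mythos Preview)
Your overall architecture matches the paper's: preliminary consistency rate, empirical-process equicontinuity over $\mathcal F_1$, joint Taylor linearization of the population moment, and near-orthogonality to kill the first-order bias from $\hat\eta_{uj}$. Identifying the non-Donsker empirical-process step as the main technical hurdle is also correct.

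There is, however, one genuine gap. You write that \eqref{eq:analog} gives $\En[\psi_{uj}(W,\check\theta_{uj},\hat\eta_{uj})] = O_P(\epsilon_n + n^{-1/2}\mS_n)$, ``with the $\inf$ term controlled.'' But after multiplying by $\sqrt n$ this is $O_P(\mS_n)$, and in general $\mS_n$ is \emph{not} $O(\delta_n)$ (typically $\mS_n \asymp \sqrt{\log a_n}$ while $\delta_n \to 0$). Bounding the infimum by evaluating at $\theta_{uj}$ leaves the leading piece $\sqrt n\,\En[\psi_{uj}(W,\theta_{uj},\eta_{uj})]$, which is exactly of size $\mS_n$, uncancelled in your assembled expansion. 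The paper closes this gap (its Step 5) by evaluating the infimum not at $\theta_{uj}$ but at the one-step point
\[
\bar\theta_{uj} := \theta_{uj} - J_{uj}^{-1}\En[\psi_{uj}(W,\theta_{uj},\eta_{uj})],
\]
which lies in $\Theta_{uj}$ with high probability by Assumption \ref{ass: S1}(i). Running the same linearization at $\bar\theta_{uj}$, the combination $\En[\psi_{uj}(W,\theta_{uj},\eta_{uj})] + J_{uj}(\bar\theta_{uj}-\theta_{uj})$ vanishes \emph{exactly} by construction, so only the near-orthogonality term, the second-order remainder, and the equicontinuity increment survive, all of which are $O_P(\delta_n)$. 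Hence $\inf_{\theta\in\Theta_{uj}} \sqrt n|\En[\psi_{uj}(W,\theta,\hat\eta_{uj})]| = O_P(\delta_n)$ as required. This is precisely why $\mS_n$ appears in Assumption \ref{ass: AS}(vi-b): the rate $|\bar\theta_{uj}-\theta_{uj}| = O_P(\mS_n/\sqrt n)$ must be fed into the equicontinuity modulus alongside $B_{1n}\tau_n$.

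A smaller point: the preliminary rate the paper derives (its Steps 1--2) is $\sup_{u,j}|\check\theta_{uj}-\theta_{uj}| = O_P(B_{1n}\tau_n)$, not $O_P(n^{-1/2}\mS_n\log n)$; this is what you get by bounding $\sup_\theta|\En[\psi_{uj}(W,\theta,\hat\eta_{uj})] - \Ep_P[\psi_{uj}(W,\theta,\eta_{uj})]|$ via the maximal inequality and Assumption \ref{ass: S1}(v-b), and it is this $B_{1n}\tau_n$ rate that plugs into the localized-class bound for $II_2$ via Assumption \ref{ass: AS}(vi-b).
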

\begin{remark}[On the proof of Theorem \ref{theorem:semiparametric}]
To prove this theorem, we use the following identity:
{\small
\begin{align}
&\sqrt n \Ep_{P,W}[\psi_{u j}(W,\check\theta_{u j},\hat\eta_{u j}) - \psi_{u j}(W,\theta_{u j},\eta_{u j})]  = - \sqrt n\En[\psi_{u j}(W,\theta_{u j},\eta_{u j})] \label{eq: identity line 1}\\
&\qquad + \sqrt n\En[\psi_{u j}(W,\check\theta_{u j},\hat\eta_{u j})] + \mathbb G_n \psi_{u j}(W,\theta_{u j},\eta_{u j}) - \mathbb G_n \psi_{u j}(W,\check\theta_{u j},\hat\eta_{u j}).\label{eq: identity line 2}
\end{align}}\!Here, the term on the right-hand side of \eqref{eq: identity line 1} is the main term on the right-hand side of \eqref{eq: bahadur representation main}, up to a normalization $(\sigma_{u j} J_{u j})^{-1}$. Also, we show that the first term in \eqref{eq: identity line 2} is $O_P(\delta_n)$ since $\check \theta_{u j}$ satisfies \eqref{eq:analog}. Moreover, using a rather standard theory of $Z$-estimators, we show that $\check\theta_{u j} - \theta_{u j} = O_P(B_{1n}\tau_n)$. This in turn allows us to show with the help of empirical process arguments that the difference of the last two terms in \eqref{eq: identity line 2} is $O_P(\delta_n)$ as well. (In \cite{CCDDHNR16}, we also point out that this difference is $O_P(\delta_n)$ under much weaker entropy conditions than those in Assumption \ref{ass: AS} if $\hat\eta_{u j}$ and $\check\theta_{u j}$ are obtained using separate samples.) Thus, it remains to show that the left-hand side of \eqref{eq: identity line 1} is equal to the left-hand side of \eqref{eq: bahadur representation main} up to an approximation error $O_P(\delta_n)$ and up to a normalization $(\sigma_{u j} J_{u j})^{-1}$. To do so, we use second-order Taylor's expansion of the function
$$
f(r) = \sqrt n \Ep_{P,W}[\psi_{u j}(W,\theta_{u j} + r(\check\theta_{u j} - \theta_{u j}),\eta_{u j} + r(\hat\eta_{u j} - \eta_{u j}))]
$$
at $r = 1$ around $r = 0$. This gives
\begin{align*}
&\sqrt n \Ep_{P,W}[\psi_{u j}(W,\check\theta_{u j},\hat\eta_{u j}) - \psi_{u j}(W,\theta_{u j},\eta_{u j})] = f(1) - f(0)\\
&\qquad = \sqrt n J_{u j}(\check \theta_{u j} - \theta_{u j}) + \sqrt nD_{u,j,0}[\hat\eta_{u j} - \eta_{u j}] + \sqrt n f''(\bar r)/2
\end{align*}
for some $\bar r \in (0,1)$. Here, $\sqrt nf''(\bar r) = O_P(\delta_n)$ follows from Assumptions \ref{ass: S1} and \ref{ass: AS} and the key near-orthogonality condition also allows us to show that $\sqrt n D_{u,j,0}[\hat\eta_{u j} - \eta_{u j}] = O_P(\delta_n)$. Without this condition, the term $\sqrt n D_{u ,j,0}[\hat\eta_{u j} - \eta_{u j}]$ would give first-order bias and lead to slower-than-$\sqrt n$ rate of convergence of the estimator $\check\theta_{u j}$. Finally, again using the empirical process arguments, we can show that all the bounds including the term $O_P(\delta_n)$ hold uniformly over $u\in\UU$ and $j\in[\pp]$.
\qed
\end{remark}

\begin{remark}[On uniformity in $u$ in Theorem \ref{theorem:semiparametric}]
When the functions $u\mapsto \sqrt n \sigma_{u j}^{-1}(\check\theta_{u j} - \theta_{u j}) - \mathbb G_n\bar\psi_{u j}$ are Lipschitz-continuous, one can use a simple discretization argument to conclude that the approximation in \eqref{eq: bahadur representation main} holds uniformly over $(u,j)\in\mathcal U\times[\pp]$ as long as we can show that it holds for each $(u,j)\in\mathcal U\times[\pp]$. However, in many applications, including the logistic regression model with functional response data discussed in Section \ref{Sec:Application}, this function is actually not continuous, and the location of jumps depends on the data. Therefore, we have to rely on a more complicated argument to establish uniformity in $u$ in the bound \eqref{eq: bahadur representation main}.
\qed
\end{remark}


The uniform Bahadur representation derived in Theorem \ref{theorem:semiparametric} is useful for the construction of simultaneous confidence bands for $(\theta_{uj})_{u\in\UU, j\in[\pp]}$ as in (\ref{UnifCoverage}). For this purpose, we apply new high-dimensional central limit and bootstrap theorems that have been recently developed in a sequence of papers \cite{chernozhukov2013gaussian}, \cite{chernozhukov2014clt}, \cite{chernozhukov2012gaussian}, \cite{chernozhukov2012comparison}, and \cite{chernozhukov2015noncenteredprocesses}. To apply these theorems, we make use of the following regularity condition.

Let $(\bar \delta_n)_{n\geq 1}$ be a sequence of positive constants converging to zero. Also, let $(\varrho_n)_{n\geq 1}$, $(\bar \varrho_n)_{n\geq 1}$, $(A_n)_{n\geq 1}$, $(\bar A_n)_{n\geq 1}$, and $(L_n)_{n\geq 1}$ be some sequences of positive constants, possibly growing to infinity, where $\varrho_n\geq 1$, $A_n\geq n$, and $\bar A_n\geq n$ for all $n\geq 1$. In addition, from now on, we assume that $q> 4$. Denote by  $\hat \psi_{uj}(\cdot):= - \hat\sigma_{uj}^{-1}\hat J_{uj}^{-1} \psi_{uj}(\cdot, \check\theta_{uj}, \hat \eta_{uj})$ an estimator of $\bar \psi_{uj}(\cdot)$, with $\hat J_{uj}$ and $\hat\sigma_{uj}$ being suitable estimators of $J_{uj}$ and $\sigma_{uj}$.

\begin{assumption}[Additional score regularity]\label{ass: OSR}
For all $n \geq n_0$ and $P \in  \mathcal{P}_n$, the following conditions hold: (i) The function class $\mathcal{F}_0 = \{ \bar \psi_{uj}(\cdot)\colon u \in \mathcal{U},  j  \in [\pp]  \}$
is suitably measurable and its uniform entropy numbers obey
 $$
\sup_Q  \log N(\epsilon \|F_0\|_{Q,2}, \mathcal{F}_0, \| \cdot \|_{Q,2}) \leq \varrho_{n}  \log (A_n/\epsilon),\quad \text{for all }0<\epsilon\leq 1,
$$
where $F_0$ is a measurable envelope for $\mathcal{F}_0$ that satisfies $\|F_0\|_{P,q}\leq L_{n}$. (ii) For all $f\in\mathcal{F}_0$ and $k=3,4$, we have $\Ep_P[ |f(W)|^k] \leq C_0L_n^{k-2}$. (iii) The function class $\widehat{\mathcal{F}}_0 = \{  \bar \psi_{uj}(\cdot)-\hat\psi_{uj}(\cdot)\colon u \in \mathcal{U}, j  \in [\pp] \}$ satisfies with probability $1-\Delta_n$:
$
\log N(\epsilon, \widehat{\mathcal{F}}_0, \|\cdot\|_{\Pn,2}) \leq \bar \varrho_{n}  \log (\bar A_n/\epsilon)$ for all $0<\epsilon\leq 1$ and $\|f\|_{\mathbb{P}_n,2} \leq \bar\delta_n$ for all $f\in\widehat{\mathcal{F}}_0$.
\end{assumption}


This assumption is technical, and its verification in applications is rather standard. For the Gaussian approximation result below, we actually only need the first and the second part of this assumption. The third part will be needed for establishing validity of the simultaneous confidence bands based on the multiplier bootstrap procedure. As a side note, observe that Assumption \ref{ass: OSR} allows to bound $\mathcal S_n$ defined in \eqref{eq: un and juj} and used in Assumptions \ref{ass: S1} and \ref{ass: AS}; see Appendix \ref{sec:BoundSn} of the Supplementary Material.

Next, let $(\mathcal{N}_{u j})_{u\in\mathcal{U},j\in[\tilde{p}]}$ denote a tight zero-mean Gaussian process indexed by $\UU \times [\pp]$ with covariance operator given by $\Ep_P[ \bar\psi_{uj}(W)\bar\psi_{u'j'}(W)]$ for $u,u'\in\UU$ and $j,j' \in [\pp]$. 
We have the following corollary of Theorem \ref{theorem:semiparametric}, which is our second main result in this paper.

\begin{corollary}[Gaussian approximation]\label{cor: general CLT}
Suppose that  Assumptions \ref{ass: S1}, \ref{ass: AS}, and  \ref{ass: OSR}(i,ii) hold. In addition, suppose that the following growth conditions are satisfied: $\delta_n^2  \varrho_n\log A_n = o(1)$, $L_n^{2/7} \varrho_n \log A_n =o(n^{1/7})$, and $L_n^{2/3} \varrho_n\log A_n =o(n^{1/3-2/(3q)})$. Then
{\small
$$
\sup_{t\in \mathbb{R}} \left| \Pr_P\left(  \sup_{u\in\UU,j\in[\tilde{p}]}|\sqrt n\sigma_{uj}^{-1}(\check\theta_{uj}-\theta_{uj})| \leq t \right) - \Pr_P\left(\sup_{u\in\mathcal{U},j\in[\tilde{p}]}|\mathcal{N}_{u j}| \leq t\right) \right| = o(1)
$$}
uniformly over $P\in\mathcal{P}_n$.
\end{corollary}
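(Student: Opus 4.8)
The plan is to combine the uniform Bahadur representation of Theorem~\ref{theorem:semiparametric} with a non-Donsker central limit theorem for suprema of empirical processes and an anti-concentration bound for Gaussian suprema, so that the law of $\sup_{u,j}|\sqrt n\sigma_{uj}^{-1}(\check\theta_{uj}-\theta_{uj})|$ is pinned down by that of $\sup_{u,j}|\Gn\bar\psi_{uj}|$, which in turn is approximated by $\sup_{u,j}|\mathcal N_{uj}|$. First I would invoke Theorem~\ref{theorem:semiparametric}: uniformly over $P\in\mathcal P_n$, by the (reverse) triangle inequality applied to the $\ell^\infty$-remainder,
$$
\sup_{u\in\UU,j\in[\pp]}\big|\sqrt n\sigma_{uj}^{-1}(\check\theta_{uj}-\theta_{uj})\big| = \sup_{u\in\UU,j\in[\pp]}\big|\Gn\bar\psi_{uj}\big| + O_P(\delta_n),
$$
so it remains to control the leading empirical-process term and to show that the $O_P(\delta_n)$ remainder is asymptotically negligible at the level of distribution functions.

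Next, I would apply the Gaussian approximation theorems for suprema of empirical processes of \cite{chernozhukov2014clt} and \cite{chernozhukov2015noncenteredprocesses} (see also \cite{chernozhukov2013gaussian}) to the class $\mathcal F_0 = \{\bar\psi_{uj}\colon u\in\UU,j\in[\pp]\}$. Their hypotheses are supplied by Assumption~\ref{ass: OSR}(i) (uniform entropy with parameters $\varrho_n,A_n$ and envelope bound $\|F_0\|_{P,q}\le L_n$, recalling $q>4$), Assumption~\ref{ass: OSR}(ii) (third- and fourth-moment control $\Ep_P[|f|^k]\le C_0 L_n^{k-2}$), the nondegeneracy $c_0\le\|f\|_{P,2}\le C_0$ inherited from Assumption~\ref{ass: AS}(v) together with $c_0\le|J_{uj}|\le C_0$ in Assumption~\ref{ass: S1}(iv), and the centering $\Ep_P[\psi_{uj}(W,\theta_{uj},\eta_{uj})]=0$ from Assumption~\ref{ass: S1}(i). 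The growth conditions $L_n^{2/7}\varrho_n\log A_n = o(n^{1/7})$ and $L_n^{2/3}\varrho_n\log A_n = o(n^{1/3-2/(3q)})$ are precisely the rates under which those results deliver
$$
\sup_{t\in\mathbb R}\Big|\Pr_P\big(\sup_{u,j}|\Gn\bar\psi_{uj}|\le t\big) - \Pr_P\big(\sup_{u,j}|\mathcal N_{uj}|\le t\big)\Big| = o(1)
$$
uniformly over $P\in\mathcal P_n$, with $(\mathcal N_{uj})$ the tight zero-mean Gaussian process having the stated covariance; passing to the supremum of absolute values is handled by doubling the index set with signs $\pm$, which leaves the entropy order unchanged.

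Finally, I would absorb the remainder using anti-concentration. Writing the $O_P(\delta_n)$ term above as $\zeta_n$, for any $\epsilon>0$ choose $M$ with $\sup_{P\in\mathcal P_n}\Pr_P(|\zeta_n|>M\delta_n)\le\epsilon$; on the complementary event the two suprema differ by at most $M\delta_n$, so
$$
\Pr_P\big(\sup_{u,j}|\Gn\bar\psi_{uj}|\le t - M\delta_n\big) - \epsilon \le \Pr_P\big(\sup_{u,j}|\sqrt n\sigma_{uj}^{-1}(\check\theta_{uj}-\theta_{uj})|\le t\big) \le \Pr_P\big(\sup_{u,j}|\Gn\bar\psi_{uj}|\le t + M\delta_n\big) + \epsilon.
$$
Combined with the previous display, it remains to bound $\sup_t\Pr_P\big(t - M\delta_n\le\sup_{u,j}|\mathcal N_{uj}|\le t + M\delta_n\big)$, which by the Gaussian anti-concentration inequality of \cite{chernozhukov2012comparison} is $\lesssim M\delta_n\,(1+\Ep_P[\sup_{u,j}|\mathcal N_{uj}|]) \lesssim M\delta_n\sqrt{\varrho_n\log A_n}$, using Dudley's bound for $\Ep_P[\sup_{u,j}|\mathcal N_{uj}|]$ via Assumption~\ref{ass: OSR}(i) and $\|f\|_{P,2}\le C_0$; this is $o(1)$ by the growth condition $\delta_n^2\varrho_n\log A_n = o(1)$. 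Letting $\epsilon\downarrow 0$ yields the claim, uniformly over $P\in\mathcal P_n$.

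The main obstacle I anticipate lies in the second step: one must verify that Assumptions~\ref{ass: OSR}(i,ii) and the three growth conditions place the problem exactly within the scope of the empirical-process CLTs of \cite{chernozhukov2014clt} and \cite{chernozhukov2015noncenteredprocesses}, and — crucially — that the resulting approximation holds uniformly over the entire family $\mathcal P_n$, which requires every constant entering those theorems to be dimension-free and requires tracking this uniformity through the underlying coupling (Strassen-type) argument. The reduction in the first step and the anti-concentration in the last step are comparatively routine once this rate bookkeeping is in place.
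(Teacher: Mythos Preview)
Your proposal is correct and follows essentially the same route as the paper: Bahadur representation from Theorem~\ref{theorem:semiparametric}, Gaussian approximation for the empirical-process supremum via the CCK results under Assumption~\ref{ass: OSR}(i,ii) and the stated growth conditions, and anti-concentration to absorb the $O_P(\delta_n)$ remainder using $\delta_n\sqrt{\varrho_n\log A_n}=o(1)$. The only cosmetic difference is that the paper first builds an explicit coupling $\widetilde Z_n\stackrel{d}{=}\sup_{u,j}|\mathcal N_{uj}|$ with $|\sup_{u,j}|\Gn\bar\psi_{uj}|-\widetilde Z_n|=O_P(\cdot)$ via Theorem~2.1 of \cite{chernozhukov2015noncenteredprocesses} and then converts coupling closeness to Kolmogorov closeness in one shot through Lemma~2.4 of \cite{chernozhukov2012gaussian} (which packages the anti-concentration step), whereas you invoke the Kolmogorov bound and the anti-concentration separately; both arrangements are equivalent and available in the cited references, and the uniformity over $\mathcal P_n$ that worries you is handled there because every constant in those theorems depends only on the entropy and moment parameters already controlled by the assumptions.
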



Based on Corollary \ref{cor: general CLT}, we are now able to construct simultaneous confidence bands for $\theta_{u j}$'s as in (\ref{UnifCoverage}). In particular, we will use the Gaussian multiplier bootstrap method employing the estimates $\hat\psi_{u j}$ of $\bar\psi_{u j}$. To describe the method, define the process
\begin{equation}\label{eq: process g}
 \widehat{\mathcal{G}} = (\widehat{\mathcal{G}}_{uj})_{u\in\UU, j\in [\pp]} = \left( \frac{1}{\sqrt{n}}\sum_{i=1}^n\xi_i\hat\psi_{uj}(W_i)\right)_{u\in\UU, j\in [\pp]}
\end{equation}
where $(\xi_i)_{i=1}^n$ are independent standard normal random variables which are independent from the data $(W_i)_{i=1}^n$. Then the multiplier bootstrap critical value
$c_\alpha$ is defined as the $(1-\alpha)$ quantile of the conditional distribution of $\sup_{u\in\mathcal{U},j\in[\tilde{p}]}|\widehat{\mathcal{G}}_{u j}|$ given the data $(W_i)_{i=1}^n$. To prove validity of this critical value for the construction of simultaneous confidence bands of the form (\ref{UnifCoverage}), we will impose the following additional assumption. Let $(\varepsilon_n)_{n\geq 1}$ be a sequence of positive constants converging to zero.
\begin{assumption}[Variation estimation]\label{ass: variance}
For all $n\geq n_0$ and $P\in\mathcal{P}_n$,
$$
\Pr_P\left(\sup_{u\in\UU,j\in[\pp]}\left|\frac{\hat{\sigma}_{u j}}{\sigma_{u j}} - 1\right|>\varepsilon_n\right)\leq \Delta_n.
$$
\end{assumption}

The following corollary establishing validity of the multiplier bootstrap critical value $c_\alpha$ for the simultaneous confidence bands construction is our third main result in this paper.

\begin{corollary}[Simultaneous confidence bands]\label{theorem: general bs}
Suppose that Assumptions \ref{ass: S1} -- \ref{ass: variance} hold. In addition, suppose that the growth conditions of Corollary \ref{cor: general CLT} hold. Finally, suppose that $\varepsilon_n  \varrho_n\log A_n = o(1)$, and $\bar{\delta}_n^2\bar{\varrho}_n\varrho_n (\log \bar{A}_n)\cdot (\log A_n) =o(1)$. Then \eqref{UnifCoverage} holds uniformly over $P\in\mathcal{P}_n$.
\end{corollary}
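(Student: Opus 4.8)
The plan is to combine the Gaussian approximation of Corollary \ref{cor: general CLT} with a comparison between the conditional law of $\sup_{u,j}|\widehat{\mathcal G}_{uj}|$ and the law of $\sup_{u,j}|\mathcal N_{uj}|$, so that the multiplier bootstrap critical value $c_\alpha$ is shown to be a consistent estimator of the $(1-\alpha)$ quantile of $\sup_{u,j}|\mathcal N_{uj}|$; then the usual anti-concentration/inversion argument yields \eqref{UnifCoverage}. First I would introduce the \emph{infeasible} multiplier process $\mathcal G^0 = (n^{-1/2}\sum_{i=1}^n\xi_i\bar\psi_{uj}(W_i))_{u,j}$ built from the true scores $\bar\psi_{uj}$, and record that by Assumption \ref{ass: OSR}(i,ii) and the Gaussian multiplier bootstrap theorem of \cite{chernozhukov2014clt} (or \cite{chernozhukov2015noncenteredprocesses}), conditionally on the data and uniformly over $P\in\mathcal P_n$,
$$
\sup_{t\in\mathbb R}\Big|\Pr\big(\sup_{u,j}|\mathcal G^0_{uj}|\leq t\mid (W_i)_{i=1}^n\big) - \Pr_P\big(\sup_{u,j}|\mathcal N_{uj}|\leq t\big)\Big| = o_P(1),
$$
where the $o_P(1)$ is controlled by the growth conditions already inherited from Corollary \ref{cor: general CLT} (in particular $L_n^{2/3}\varrho_n\log A_n = o(n^{1/3-2/(3q)})$).

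Second, I would pass from $\mathcal G^0$ to the feasible $\widehat{\mathcal G}$ by bounding $\sup_{u,j}|\widehat{\mathcal G}_{uj} - \mathcal G^0_{uj}| = \sup_{u,j}|n^{-1/2}\sum_i\xi_i(\hat\psi_{uj}-\bar\psi_{uj})(W_i)|$. Conditionally on the data, this is a Gaussian process in $\xi$ whose sup has mean and tails governed, via a maximal inequality (Dudley's entropy bound applied to $\widehat{\mathcal F}_0$), by $\bar\delta_n$ and the entropy data $\bar\varrho_n,\bar A_n$ of Assumption \ref{ass: OSR}(iii); the resulting bound on the order of $\sup_{u,j}|\widehat{\mathcal G}_{uj}-\mathcal G^0_{uj}|$ is (up to logs) $\bar\delta_n\sqrt{\bar\varrho_n\log\bar A_n}$, which the hypothesis $\bar\delta_n^2\bar\varrho_n\varrho_n(\log\bar A_n)(\log A_n)=o(1)$ makes negligible relative to the resolution $1/\sqrt{\varrho_n\log A_n}$ at which the sup of a Gaussian process concentrates. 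Combining this with an anti-concentration inequality for $\sup_{u,j}|\mathcal N_{uj}|$ (from \cite{chernozhukov2014clt}) upgrades the previous display from $\mathcal G^0$ to $\widehat{\mathcal G}$, and hence shows $c_\alpha$ is, up to $o_P(1)$, the true $(1-\alpha)$ quantile $c_\alpha^0$ of $\sup_{u,j}|\mathcal N_{uj}|$, uniformly in $P$.

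Third, I would close the argument. Write $\widehat\theta_{uj}^\pm := \check\theta_{uj}\pm c_\alpha\hat\sigma_{uj}/\sqrt n$. On the event of Assumption \ref{ass: variance}, $\hat\sigma_{uj}=\sigma_{uj}(1+O(\varepsilon_n))$ uniformly, so the event in \eqref{UnifCoverage} equals $\{\sup_{u,j}|\sqrt n\sigma_{uj}^{-1}(\check\theta_{uj}-\theta_{uj})|\leq c_\alpha(1+O(\varepsilon_n))\}$; by Corollary \ref{cor: general CLT} its probability is within $o(1)$ of $\Pr_P(\sup_{u,j}|\mathcal N_{uj}|\leq c_\alpha(1+O(\varepsilon_n)))$. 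Replacing $c_\alpha$ by $c_\alpha^0+o_P(1)$ and absorbing the $\varepsilon_n$-perturbation — legitimate because $\varepsilon_n\varrho_n\log A_n=o(1)$ controls $\varepsilon_n$ against the anti-concentration modulus of $\sup_{u,j}|\mathcal N_{uj}|$, which is $O(\sqrt{\varrho_n\log A_n})$ — the probability converges to $\Pr_P(\sup_{u,j}|\mathcal N_{uj}|\leq c_\alpha^0)=1-\alpha$, uniformly over $P\in\mathcal P_n$, which is \eqref{UnifCoverage}.

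The main obstacle is the second step: carefully quantifying the effect of replacing the unknown scores $\bar\psi_{uj}$ by their estimates $\hat\psi_{uj}$ inside the bootstrap, i.e. controlling $\sup_{u,j}|\widehat{\mathcal G}_{uj}-\mathcal G^0_{uj}|$ via the entropy of $\widehat{\mathcal F}_0$ and then showing this perturbation is small \emph{relative to} the rate at which the distribution of the Gaussian sup is pinned down. This is exactly where Assumption \ref{ass: OSR}(iii) and the growth condition $\bar\delta_n^2\bar\varrho_n\varrho_n(\log\bar A_n)(\log A_n)=o(1)$ are used, and it is the step that genuinely requires the high-dimensional CLT/bootstrap machinery rather than standard Donsker-type arguments; everything else is bookkeeping with anti-concentration and the variance-estimation assumption.
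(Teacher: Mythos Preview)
Your proposal is correct and follows essentially the same route as the paper's proof: both compare the feasible bootstrap $\widehat{\mathcal G}$ to the infeasible $\mathcal G^0$ via the maximal inequality on $\widehat{\mathcal F}_0$ (this is the paper's Step~3), couple $\mathcal G^0$ to $\sup_{u,j}|\mathcal N_{uj}|$ using the CCK multiplier-bootstrap coupling (the paper's Step~4, via Theorem~2.2 of \cite{chernozhukov2015noncenteredprocesses}), invoke anti-concentration for the Gaussian sup (the paper's Step~2), and then close with Assumption~\ref{ass: variance} and Corollary~\ref{cor: general CLT} (the paper's Step~6). The only presentational difference is that the paper works throughout with explicit couplings and one-sided quantile inequalities (its Steps~1, 2, and~5 show $\Pr_P(c_\alpha(1+\varepsilon_n)>c^0_{\alpha-\vartheta_n})\to 0$ for some $\vartheta_n\to 0$, using Borell's inequality and the anti-concentration bound of \cite{chernozhukov2014honestbands}), whereas you phrase the same comparison in terms of Kolmogorov distances between conditional laws; the ingredients and the places where the growth conditions enter are identical.
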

\begin{remark}[Confidence bands based on other bootstrap schemes]
Results in \cite{DZ17} suggest that the conditions of Corollary \ref{theorem: general bs} can be somewhat relaxed if, instead of using the Gaussian weights in the multiplier bootstrap method, we use Mammen's weights as in \cite{M93} or if we use the empirical bootstrap instead of the multiplier bootstrap. Since the results in \cite{DZ17} apply only to high-dimensional random vectors and do not apply to infinite-dimensional random processes, we leave a formal discussion of the results under these alternative bootstrap schemes to future work.\qed
\end{remark}

\subsection{Construction of score functions satisfying the orthogonality condition}\label{rem: obtaining orthogonality condition}

Here we discuss several methods for generating orthogonal scores in a wide variety of settings,
including the classical Neyman's construction. In what follows,  since the argument applies
to each $u$ and $j$, it is convenient to omit the indices $u$ and $j$ and also to use the subscript $0$ to indicate the true values of the parameters.   For simplicity we also focus
the discussion on the exactly orthogonal case. With these simplifications, we can restate the orthogonality condition as follows:  we say that the score $\psi$ obeys the  Neyman orthogonality condition with respect to $\eta_0 \in \mT$ if the following conditions hold: The Gateaux derivative map
 $$
  \mathrm{D}_{\bar r}[ \eta - \eta_{0}]:=  \left. \partial_r  \bigg\{\Ep_P \Big [  \psi (W, \theta_0, \eta_0+ r ( \eta - \eta_0)   \Big ]\bigg\}\right|_{r=\bar r}
  $$
  exists for all $\bar r \in [0,1)$ and $ \eta \in \mT$ and vanishes at $\bar r=0$, namely,
 \begin{equation}\label{eq:cont2}
\mathrm{D}_{\uu,j,0}[\eta - \eta_{0}] = 0,
\end{equation}
for all $\eta \in \mT$.

\subsection*{1) Orthogonal Scores for Likelihood Problems
with Finite-Dimensional Nuisance Parameters}
 In likelihood settings with finite-dimensional parameters, the construction of orthogonal equations was proposed by Neyman \cite{N59} who used them in construction of his celebrated $C(\alpha)$-statistic.\footnote{Note that the $C(\alpha)$-statistic, or the orthogonal score statistic, had been explicitly used for testing (and also for setting up estimation) in high-dimensional sparse models in \cite{BCK-LAD} and in \cite{NingLiuGT2014}, where it is referred to as the decorrelated score statistic. The discussion of Neyman's construction here draws on \cite{CHS15}. Note also that our results cover other types of orthogonal score statistics as well, which allows us to cover much broader classes of models; see for example the discussion of conditional moment models with infinite-dimensional nuisance parameters below.}

To describe the construction, suppose that the log-likelihood function associated to observation
$W$ is $(\theta,\beta)\mapsto \ell(W, \theta, \beta)$, where $\theta \in \Theta \subset \mathbb{R}^{d}$
is the target parameter and $\beta \in T \subset \mathbb{R}^{p_0}$ is the nuisance parameter. Under regularity conditions, the true parameter values $\theta_0$ and $\beta_0$ obey
\begin{equation}\label{eq: foc lik}\Ep_P [\partial_\theta \ell (W, \theta_0, \beta_0)]=0, \quad \Ep_P[\partial_{\beta} \ell (W, \theta_0, \beta_0)] =0.
\end{equation}
Now consider the new score function
\begin{equation}
\psi(W, \theta, \eta) =  \partial_\theta \ell (W, \theta, \beta) -   \mu \partial_{\beta} \ell (W, \theta, \beta),
\end{equation}
where the nuisance parameter is $$  \eta= (\beta', \textrm{vec}(\mu)')'  \in T \times \mathcal{D} \subset \mathbb{R}^{p}, \quad p=p_0 + dp_0,$$
$\mu$ is the $d \times p_0$ \textit{orthogonalization} parameter matrix whose
true value  $\mu_0$ solves the equation
$$
J_{\theta\beta}  - \mu J_{\beta \beta} =0  \ (\text{i.e., }  \mu_0 =  J_{\theta\beta}J^{-1}_{\beta \beta}),
$$
and
$$
J =
\left(
                                                                                   \begin{array}{cc}
                                                                                     J_{\theta \theta} & J_{\theta \beta} \\
                                                                                     J_{\beta \theta} &  J_{\beta \beta} \\
                                                                                   \end{array}
                                                                                 \right)
 = \partial_{(\theta',\beta')}\Ep_P\Big[\partial_{(\theta',\beta')'}\ell(W,\theta,\beta)\Big]\Big\vert_{\theta = \theta_0; \ \beta = \beta_0}.
$$
Provided that $\mu_0$ is well-defined, we have by (\ref{eq: foc lik}) that
$
\Ep_P [\psi(W, \theta_0, \eta_0)] = 0,
$
where $\eta_0 = (\beta_0',\textrm{vec}(\mu_0)')'$.
Moreover, it is trivial to verify that under standard regularity conditions the score function $\psi$ obeys the near orthogonality condition (\ref{eq:cont}) exactly (i.e., with $C_0 = 0$), i.e.
$$
\partial_{\eta} \Ep_P [\psi(W, \theta_0, \eta)] \Big \vert_{\eta  = \eta_0} =0.
$$
Note that in this example, $\mu_0$ not only creates the necessary orthogonality but also creates
the \textit{efficient score} for inference on the main parameter $\theta$, as emphasized by Neyman.

\subsection*{2) Orthogonal Scores for Likelihood Problems
with Infinite-Dimensional Nuisance Parameters}
The Neyman's construction can be extended to semi-parametric models, where the nuisance parameter $\beta$ is a function. In this case, the original score functions $(\theta,\beta)\mapsto \partial_\theta\ell(W,\theta,\beta)$ corresponding to the log-likelihood function $(\theta,\beta)\mapsto \ell(W,\theta,\beta)$ associated to observation $W$ can be transformed into efficient score functions $\psi$ that obey the exact orthogonality condition (\ref{eq:cont2})  by projecting the original score functions onto the orthocomplement of the tangent space induced by the nuisance parameter $\beta$; see Chapter 25 of \cite{vdV} for a detailed description of this construction.   Note that the projection may create additional nuisance parameters, so that the new nuisance parameter $\eta$
could be of larger dimension than $\beta$. Other relevant references include \cite{vdV-W}, \cite{kosorok:book}, \cite{BelloniChernozhukovHansen2011}, and \cite{BCK-LAD}.  The approach is related to Neyman's construction in the sense that the score $\psi$ arising in this model is actually the Neyman's score arising in a one-dimensional least favorable parametric subfamily, \cite{S56}; see Chapter 25 of \cite{vdV} for details.

\subsection*{3) Orthogonal Scores for Conditional Moment Problems
with Infinite-Dimensional Nuisance Parameters}
Next, consider a conditional moment restrictions framework studied by Chamberlain \cite{C92}. To define the framework, let $W$, $D$, and $V$ be random vectors in $\mathbb R^{d_W}$, $\mathbb R^{d_D}$, and $\mathbb R^{d_V}$, respectively, with $D$ and $V$ being sub-vectors of $W$, so that $d_D + d_V \leq d_W$. Also, let $\theta\in\Theta\subset\mathbb R^{d_\theta}$ be a finite-dimensional parameter whose true value $\theta_0$ is of interest, and let $h\colon \mathbb R^{d_{V}}\to\mathbb R^{d_h}$ be a vector-valued functional nuisance parameter, with the true value being $h_0\colon \mathbb R^{d_V}\to\mathbb R^{d_h}$. The conditional moment restrictions framework assumes that $\theta_0$ and $h_0$ satisfy the following equation:
\begin{equation}\label{eq: conditional moment restrictions}
\Ep_P [ m(W, \theta_{0}, h_{0}(V)) \mid D,V] = 0,
\end{equation}
where $m\colon \mathbb R^{d_W} \times \mathbb R^{d_\theta} \times \mathbb R^{d_h}\to \mathbb R^{d_m}$ is some known function. This framework is of interest because it covers an extremely rich variety of models, without having to explicitly rely on the likelihood formulation. For example, it covers the partial linear model,
\begin{equation}\label{eq: partial linear model}
Y = D\theta_0 + h_0(V) + U,\quad \Ep_P[U\mid D,V] = 0,
\end{equation}
where $Y$ is a scalar dependent random variable, $D$ is a scalar independent treatment random variable, $V$ is a vector of control random variables, and $U$ is a scalar unobservable noise random variable. Indeed, \eqref{eq: partial linear model} implies \eqref{eq: conditional moment restrictions} by setting $W = (Y,D,V')'$ and $m(W,\theta,h) = Y - D\theta - h$.

Here we would like to build a (generalized) score function $(\theta,\eta)\mapsto \psi (W, \theta,  \eta)$ for estimating $\theta_{0}$, the true value of parameter $\theta$, where $\eta$ is a new nuisance parameter with true value $\eta_{0}$, that obeys the near orthogonality condition \eqref{eq:cont}. There are many ways to do so but one particularly useful way is the following. Consider the functional parameters $\Sigma\colon \mathbb R^{d_D + d_V}\to \mathbb R^{d_m\times d_m}$ and $\varphi\colon \mathbb R^{d_D + d_V}\to \mathbb R^{d_\theta\times d_m}$ whose true values are given by
$$
\Sigma_0(D,V) = \Ep_P[m(W,\theta_0,h_0(V))m(W,\theta_0,h_0(V))'\mid D,V],
$$
$$
\varphi_0(D,V) = \Big(A_0(D,V) - \Gamma_0(D,V)G_0(V)\Big)',
$$
where
\begin{align*}
A_0(D,V) &= \partial_{\theta'} \Ep_P[m(W,\theta,h_0(V))\mid D,V]\Big|_{\theta = \theta_0},\\
\Gamma_0(D,V) &= \partial_{h'}\Ep_P[m(W,\theta_0,h)\mid D,V]\Big|_{h = h_0(V)},\\
G_0(V) &= \Big(\Ep_P[\Gamma_0(D,V)'\Sigma_0(D,V)^{-1}\Gamma_0(D,V)\mid V]\Big)^{-1}\\
&\qquad \times  \Ep_P[\Gamma_0(D,V)'\Sigma_0(D,V)^{-1}A_0(D,V)\mid V].
\end{align*}
Then set
$
\eta = (h,\varphi,\Sigma)$ and $\eta_0 = (h_0,\varphi_0,\Sigma_0),
$
and define the score function
$$
\psi (W, \theta, \eta) = \underbracket[.5pt][.5pt]{\varphi(D,V)}_{\text{``instrument"}}      \underbracket[.5pt][.5pt]{ \Sigma(D,V)^{-1}}_{\text{weight}} \underbracket[.5pt][.5pt]{m(W, \theta, h(V))}_{\text{residual}}.
$$
It is rather straightforward to verify that under mild regularity conditions, the score function $\psi$ satisfies the moment condition,
$
\Ep_P[\psi(W,\theta_0,\eta_0)] = 0,
$
and in addition, the orthogonality condition,
$$
\partial_{\eta}\Ep_P[\psi(W,\theta_0,\eta)]\Big|_{\eta = \eta_0} = 0.
$$
Note that this construction gives the efficient score function $\psi$ that yields an estimator of $\theta_0$ achieving the semi-parametric efficiency bound, as calculated by Chamberlain \cite{C92}.

\section{Application to Logistic Regression Model with Functional Response Data}\label{Sec:Application}

In this section we apply our main results to a logistic regression model with functional response data described in the Introduction.

\subsection{Model}

We consider a response variable $Y\in\RR$ that induces a functional response $(Y_{u})_{u\in \UU}$ by $Y_u = 1\{Y\leq (1-u)\underline{y} + u\bar y\}$ for a set of indices $\UU=[0,1]$ and some constants $\underline{y} \leq \bar{y}$. We are interested in the dependence of this functional response on a $\pp$-vector of covariates, $D=(D_1,\ldots,D_\pp)'\in \mathbb{R}^\pp$, controlling for a $p$-vector of additional covariates $X = (X_1,\dots, X_p)'\in\mathbb{R}^p$. We allow both $\pp$ and $p$ to be (much) larger than the sample size of available data, $n$.

For each $u\in \UU$, we assume that $Y_u$ satisfies the generalized linear model with the logistic link function
\begin{equation}\label{Eq:MainLogisticModel}
\Ep_P[ Y_u \mid D,X] = \G(D'\theta_u + X'\beta_u)+r_u,
\end{equation}
where $\theta_u=(\theta_{u 1},\dots,\theta_{u \pp})'$ is a vector of parameters that are of interest, $\beta_u = (\beta_{u 1},\dots,\beta_{u p})'$ is a vector of nuisance parameters, $r_u = r_u(D,X)$ is an approximation error, $\Lambda\colon\mathbb{R}\to\mathbb{R}$ is the logistic link function defined by
$$
\Lambda(t)=\frac{\exp(t)}{1+\exp(t)},\quad t\in\mathbb R,
$$
and $P\in\mathcal P_n$ is the distribution of the triple $W = (Y,D,X)$. As in the previous section, we construct simultaneous confidence bands for the parameters $(\theta_{u j})_{u\in\UU,j\in[\pp]}$ based on a random sample $(W_i)_{i=1}^n = (Y_i,D_i,X_i)_{i=1}^n$ from the distribution of $W = (Y,D,X)$.

\subsection{Orthogonal Score Functions}
Before setting up score functions that satisfy both the moment conditions \eqref{Def:Moment1} and the orthogonality condition \eqref{DefOrtho}, observe that ``naive'' score functions that follow directly from the model \eqref{Eq:MainLogisticModel},
$$
m_{u j}(W,\theta_{u j}, \theta_{u[\pp]\setminus j}, \beta_u, r_u)=\Big\{Y_u-\G\Big(D_j\theta+X^j(\theta_{u[\pp]\setminus j}', \beta'_u)'\Big)-r_u\Big\}D_j,
$$
where $X^j = (D_{[\pp]\setminus j}',X')$, satisfy the moment conditions $\Ep_P[m_{uj}(W,\theta_{u j})] = 0$ but violate the orthogonality condition \eqref{DefOrtho} (with $m_{u j}$ replacing $\psi_{u j}$ and $\eta_{u j} = (\theta_{u[\pp]\setminus j},\beta_{u}, r_u)$). To satisfy the orthogonality condition \eqref{DefOrtho}, we proceed using an approach from Section \ref{rem: obtaining orthogonality condition} as in the Introduction. Specifically, for each $u\in\UU$ and $j\in[\pp]$, define the $(\pp + p - 1)$-vector of additional nuisance parameters $\gamma_u^j$ by (\ref{eq: gamma definition}) where $f_u^2 = f_u^2(D,X)$ is defined in (\ref{eq: logit weight function}). Thus, by the first order condition of (\ref{eq: gamma definition}),  the nuisance parameters $\gamma_u^j$ satisfy
\begin{equation}\label{Eq:Decomposition}
f_{u} D_j = f_{u}X^j\gamma_u^j  + v_{u}^j, \quad \Ep_P[ f_{u} X^jv_{u}^j]=0.
\end{equation}
Also, denote $\beta_u^j = (\theta_{u[\pp]\setminus j}',\beta_u')'$. Then we set
$$
\psi_{uj}(W,\theta_{u j},\eta_{uj}) = \Big\{ Y_{u} - \Lambda\Big(D_j \theta_{u j} + X^j\beta_u^j\Big) - r_u \Big\}(D_j - X^j \gamma_u^j),
$$
where the nuisance parameter is $\eta_{uj}=(r_u,\beta_u^j,\gamma_u^j)$. As we formally demonstrate in the proof of Theorem \ref{theorem:inferenceAlg1} below, this function satisfies the near-orthogonality condition \eqref{DefOrtho}.

\subsection{Estimation Using Orthogonal Score Functions}

Next, we discuss estimation of $\eta_{u j}$'s and $\theta_{u j}$'s. First, we assume that the approximation error $r_u = r_u(D,X)$ is asymptotically negligible, so that it can be estimated by $\mathcal{O} = \mathcal{O}(D,X)$, the identically zero function of $D$ and $X$. Second, for $\gamma_u^j$, we consider an estimator $\widetilde \gamma_u^j$ defined as a post-regularization weighted least squares estimator corresponding to the problem (\ref{eq: gamma definition}). Third, for $\beta_u^j$, we consider a plug-in estimator
$\hat{\beta}_u^j = (\widetilde \theta_{u[\pp]\setminus j}',\widetilde \beta_u')',$
where $\widetilde \theta_u$ and $\widetilde\beta_u$ are suitable estimators of $\theta_u$ and $\beta_u$. In particular, we assume that $\widetilde \theta_u$ and $\widetilde \beta_u$ are post-regularization maximum likelihood estimators corresponding to the log-likelihood function $(\theta,\beta)\mapsto -M_u(W,\theta,\beta)$ where
\begin{align}
M_u(W,\theta,\beta)
&= -\Big(1\{Y_u=1\}\log \G( D'\theta+X'\beta ) \notag \\
&\qquad\qquad  + 1\{Y_u=0\}\log(1-\G( D'\theta+X'\beta ))\Big).\label{eq: log-likelihood function}
\end{align}
The details of the estimators $\widetilde \theta_u$, $\widetilde\beta_u$, and $\widetilde\gamma_u^j$ are given in Algorithm 1 below. The results in this paper can also be easily extended to the case where $\widetilde \theta_u$, $\widetilde\beta_u$, and $\widetilde\gamma_u^j$ are replaced by penalized maximum likelihood estimators $\hat\theta_u$ and $\hat\beta_u$ and penalized weighted least squares estimator $\hat \gamma_u^j$, respectively.

Then our estimator of $\eta_{u j}$ is $\hat\eta_{u j} = (\mathcal{O}, \hat\beta_u^j, \widetilde\gamma_u^j)$. Substituting this estimator into the score function $\psi_{u j}$ gives
\begin{equation}\label{OSpsi}
\psi_{uj}(W,\theta_{u j},\hat\eta_{u j}) = \Big\{ Y_{u} - \Lambda\Big( D_j\theta_{u j}+X^j\hat\beta_u^j \Big)  \Big\}(D_j-X^j\widetilde\gamma_u^j),
\end{equation}
which, using the sample analog (\ref{eq:analog}) of the moment conditions \eqref{Def:Moment1}, gives the following estimator of $\theta_{u j}$:
\begin{equation}\label{OS:Step3}
\check \theta_{uj} \in \arg \inf_{\theta \in \Theta_{uj}}  \Big| \En  \Big[ \psi_{uj}(W,\theta,\hat\eta_{uj})  \Big] \Big|.
\end{equation}
The algorithm is summarized as follows.


\begin{algorithm}
For each $u\in \UU$ and $j\in[\pp]$: \\
\enspace \emph{Step 1}. Run post-$\ell_1$-penalized logistic estimator (\ref{PostL1Logistic})  of $Y_{u}$ on  $D$ and $X$ to compute $(\widetilde\theta_u,\widetilde\beta_u)$.\\
\enspace \emph{Step 2}. Define the weights $\hat f_{u}^2 = \hat f_u^2(D,X) = \G'(D'\widetilde\theta_u+X'\widetilde\beta_u)$.\\
\enspace \emph{Step 3}. Run the post-lasso estimator (\ref{EstPostLasso}) of $\hat f_u D_j$ on $\hat f_u X^j$ to compute $\widetilde\gamma_u^j$.\\
\enspace \emph{Step 4}. Compute $\hat\beta_u^j = (\widetilde \theta_{u[\pp]\setminus j}',\widetilde \beta_u')'$.\\
\enspace \emph{Step 5}. Solve (\ref{OS:Step3}) with $\psi_{u j}(W,\theta,\hat\eta_{u j})$ defined in \eqref{OSpsi} to compute $\check \theta_{uj}$.
\end{algorithm}

\subsection{Regularity Conditions}

Next, we specify our regularity conditions.
For all $u\in\UU$ and $j\in[\pp]$, denote $Z_u^j = D_j - X^j \gamma_u^j$. Also, denote $a_n = p\vee \pp\vee n$. Let $q$, $c_1$, and $C_1$ be some strictly positive (and finite) constants where $q>4$. Moreover, let $(\delta_n)_{n\geq 1}$ and $(\bar \Delta_n)_{n\geq 1}$ be some sequences of positive constants converging to zero. Finally, let $(M_{n,1})_{n\geq 1}$ and $(M_{n,2})_{n\geq 1}$ be some sequences of positive constants, possibly growing to infinity, where $M_{n,1}\geq 1$ and $M_{n,2}\geq 1$ for all $n$.


\begin{assumption}[Parameters]\label{ass: parameters}
For all $u\in\UU$, we have $\|\theta_u\|+\|\beta_u\|+\max_{j\in [\pp]}\|\gamma_u^j\|\leq C_1$ and $\max_{j\in[\pp]}\sup_{\theta\in\Theta_{u j}}|\theta|\leq C_1$. In addition, for all $u_1,u_2\in\UU$, we have $(\|\theta_{u_2} - \theta_{u_1}\| + \|\beta_{u_2} - \beta_{u_1}\|)\leq C_1|u_2 - u_1|$. Finally, for all $u\in\UU$ and $j\in[\pp]$, $\Theta_{u j}$ contains a ball of radius $(\log\log n)(\log a_n)^{3/2}/n^{1/2}$ centered at $\theta_{u j}$.
\end{assumption}

\begin{assumption}[Sparsity]\label{ass: sparsity}
There exist $s=s_n$ and $\bar\gamma_u^j$, $u\in\UU$ and $j\in[\pp]$, such that  for all $u\in\UU$, $\|\beta_u\|_0 + \|\theta_u \|_0 + \max_{j\in[\pp]}\|\bar \gamma_u^j\|_0\leq s_n$ and $\max_{j\in[\pp]}(\|\bar \gamma_u^j-\gamma_u^j\|+s_n^{-1/2}\|\bar \gamma_u^j-\gamma_u^j\|_1)\leq C_1 (s_n\log a_n/n)^{1/2}$.
\end{assumption}

\begin{assumption}[Distribution of $Y$]\label{ass: density}
The conditional pdf of $Y$ given $(D,X)$ is bounded by $C_1$. 
\end{assumption}

Assumptions \ref{ass: parameters}-\ref{ass: density} are mild and standard in the literature. In particular, Assumption \ref{ass: parameters} requires the parameter spaces $\Theta_{u j}$ to be bounded, and also requires that for each $u\in\UU$ and $j\in[\pp]$, the parameter $\theta_{u j}$ to be sufficiently separated from the boundaries of  the parameter space $\Theta_{u j}$. Assumption \ref{ass: sparsity} requires approximate sparsity of the model \eqref{Eq:MainLogisticModel}. Note that in Assumption \ref{ass: sparsity}, given that $\bar\gamma_u^j$'s exist, we can and will assume without loss of generality that $\bar\gamma_u^j = \gamma_{uT}^j$ for some $T\subset\{1,\dots,p+\pp-1\}$ with $|T| \leq s_n$, where $T = T_u^j$ is allowed to depend on $u$ and $j$. Here the $(p + \pp - 1)$-vector $\gamma_{u T}^j$ is defined from $\gamma_u^j$ by keeping all components of $\gamma_u^j$ that are in $T$ and setting all other components to be zero. Assumption \ref{ass: density} can be relaxed at the expense of more technicalities.

\begin{assumption}[Covariates]\label{ass: covariates}
For all $u\in\UU$, the following inequalities hold: (i) $\inf_{\|\xi\|=1}\Ep_P[ f^2_u\{(D',X')\xi\}^2]\geq c_1$, (ii) $\min_{j,k} (\Ep_P[|f_u^2 Z_u^j X_k^j|^2]\wedge \Ep_P[|f_u^2 D_j X_k^j|^2])\geq c_1$, and (iii) $\max_{j,k}\Ep_P[|Z_u^jX^j_k|^3]^{1/3}\log^{1/2} a_n \leq \delta_n n^{1/6}$. In addition, we have that (iv) $\sup_{\|\xi\|=1}\Ep_P[ \{(D',X')\xi\}^4]\leq C_1$, (v) $M_{n,1} \geq \Ep_P[ \sup_{u\in\UU, j\in[\pp]}|Z_u^j|^{2q}]^{1/(2q)}$, (vi) $M_{n,1}^2 s_n\log a_n \leq \delta_n n^{1/2 - 1/q}$, (vii) $M_{n,2} \geq \{\Ep_P[(\|D\|_\infty\vee \|X\|_\infty)^{2q}]\}^{1/(2q)}$, (viii) $M_{n,2}^2 s_n \log^{1/2} a_n \leq \delta_n n^{1/2-1/q}$, and (ix) $M_{n,1}^{2}M_{n,2}^4 s_n \leq \delta_n n^{1 - 3/q}$.
\end{assumption}

This assumption requires that there is no multicollinearity between covariates in vectors $D$ and $X$. In addition, it requires that the constants $\underline y$ and $\bar y$ are chosen so that the probabilities of $Y < \bar y$ and $Y > \underline y$ are both non-vanishing since otherwise we would have $\Ep[f_u^2] = \Ep[\text{Var}_P(Y_u\mid D,X)]$ vanishing either for $u=0$ or $u=1$ violating Assumption \ref{ass: covariates}(i). Intuitively, sending $\underline y$ and $\bar y$ to the left and to the right tails of the distribution of $Y$, respectively, would blow up the variance of the estimators $\check \theta_{u j}$, given by $\sigma^2_{u j}$ in Theorem \ref{theorem:semiparametric}, and leading eventually to the estimators with slower-than-$\sqrt n$ rate of convergence. Although our results could be extended to allow for the case where $\underline y$ and $\bar y$ are sent to the tails of the distribution of $Y$ slowly, we skip this extension for the sake of clarity.
Moreover, Assumption \ref{ass: covariates} imposes constraints on various moments of covariates. Since these constraints might be difficult to grasp, at the end of this section, in Corollary \ref{cor: simple example}, we provide an example for which these constraints simplify into easily interpretable conditions.

\begin{assumption}[Approximation error]\label{ass: approximation error}
For all $u\in\UU$, we have (i) $\sup_{\|\xi\|=1} \Ep_P[r_u^2\{(D',X')\xi\}^2]\leq C_1 \Ep_P[r_u^2]$, (ii)  $\Ep_P[r_u^2]\leq C_1 s_n \log a_n / n$, (iii) $\max_{j\in[\pp]}|\Ep_P[r_uZ_u^j]|\leq \delta_n n^{-1/2}$, and (iv) $|r_{u}(D,X)|\leq f_{u}^2(D,X)/4$ almost surely. In addition, with probability $1-\bar \Delta_n$, (v) $\sup_{u\in \UU, j\in[\pp]}(\En[(r_u Z_u^j/f_u)^2] + \En[r_u^2/f_u^6])\leq C_1 s_n\log a_n/n$.
\end{assumption}

This assumption requires the approximation error $r_u = r_u(D,X)$ to be sufficiently small. Under Assumption \ref{ass: covariates}, the first condition of Assumption \ref{ass: approximation error} holds if the approximation error is such that $r_u^2\leq C\Ep_P[r_u^2]$ almost surely for some constant $C$.

\subsection{Formal Results}

Under specified assumptions, our estimators $\check\theta_{u j}$ satisfy the following uniform Bahadur representation theorem.
\begin{theorem}[Uniform Bahadur representation for logistic model]\label{theorem:inferenceAlg1}  Suppose that Assumptions \ref{ass: parameters} -- \ref{ass: approximation error} hold for all $P\in\mathcal P_n$. In addition, suppose that the following growth condition holds: $\delta_n^2\log a_n = o(1)$. Then for the estimators  $\check \theta_{uj}$ satisfying \eqref{OS:Step3}, we have
\begin{equation}
\sqrt{n}\sigma_{u j}^{-1} (\check \theta_{uj} -  \theta_{uj}) = \Gn    \bar \psi_{uj} + O_P(\delta_n)\text{ in } \ell^\infty(\mathcal{U}\times[\pp]),
\end{equation}
uniformly over $P\in\mathcal P_n$, where $\bar \psi_{uj}(W):= - \sigma_{uj}^{-1}J^{-1}_{uj}  \psi_{uj}(W, \theta_{uj}, \eta_{uj})$, $\sigma_{uj}^2 := \Ep_P[J^{-2}_{uj}  \psi_{uj}^2(W, \theta_{uj}, \eta_{uj})]$, and $J_{u j}$ is defined in \eqref{eq: un and juj}.
\end{theorem}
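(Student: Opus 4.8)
The strategy is to deduce the statement from the generic uniform Bahadur representation of Theorem~\ref{theorem:semiparametric}, by verifying Assumptions~\ref{ass: S1} and~\ref{ass: AS} for the logistic model under Assumptions~\ref{ass: parameters}--\ref{ass: approximation error}. First comes bookkeeping: set $\eta_{u j}=(r_u,\beta_u^j,\gamma_u^j)$; take $\|\eta-\eta_{u j}\|_e$ to be the maximum of the $L^2(P)$-norms of $X^j(\beta-\beta_u^j)$, $X^j(\gamma-\gamma_u^j)$ and $r-r_u$; and choose $\mT_{u j}$ as a neighborhood of $\eta_{u j}$ that also contains the Algorithm~1 estimator $\hat\eta_{u j}=(\mathcal O,\hat\beta_u^j,\widetilde\gamma_u^j)$ --- roughly, $\beta$ and $\gamma$ that are $O(s_n)$-sparse (up to a small $\ell_1$-tail, as in Assumption~\ref{ass: sparsity}) with $L^2(P)$-prediction deviation $O((s_n\log a_n/n)^{1/2})$ from the truth, and $r$ confined to the segment between $\mathcal O$ and $r_u$; keeping $\mT_{u j}$ thin in the $r$-direction is convenient because the near-orthogonality defect contributed by the $r$-component is then only a scalar multiple of $\Ep_P[r_u Z_u^j]$. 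With these choices one has $\omega=2$, $\tau_n\asymp(s_n\log a_n/n)^{1/2}$, $a_n=p\vee\pp\vee n$, $v_n\asymp s_n$, $\mS_n\lesssim(\log a_n)^{1/2}$ (by a maximal inequality, using that $\Ep_P[\psi_{u j}(W,\theta_{u j},\eta_{u j})]=0$ exactly, the entropy bound below, the moment bound $\|Z_u^j\|_{P,q}\leq M_{n,1}$, and Assumption~\ref{ass: covariates}(vi)), and $B_{1n},B_{2n},K_n$ polynomial in $M_{n,1},M_{n,2}$ obtained from the moment conditions in Assumption~\ref{ass: covariates}.

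Verifying Assumption~\ref{ass: S1} is then mostly direct computation. Part~(i) combines Assumption~\ref{ass: parameters} with $\mS_n\lesssim(\log a_n)^{1/2}$ (the ball radius in Assumption~\ref{ass: parameters} dominates $C_0 n^{-1/2}\mS_n\log n$ since $a_n\geq n$); part~(ii) is smoothness of $\Lambda$; parts~(v-a,v-b,v-c) follow, with $\omega=2$, from boundedness of $\Lambda,\Lambda',\Lambda''$, the identity $\Lambda''=\Lambda'(1-2\Lambda)$, and Assumption~\ref{ass: covariates}. The two substantive items are~(iii) and~(iv). For near-orthogonality~(iii) I would compute $\mathrm{D}_{u,j,0}[\eta-\eta_{u j}]=-\Ep_P[\Lambda'(D'\theta_u+X'\beta_u)\,Z_u^j\,X^j(\beta-\beta_u^j)]-\Ep_P[(r-r_u)Z_u^j]$, the $\gamma$-component vanishing exactly because $\Ep_P[Y_u-\Lambda(D'\theta_u+X'\beta_u)-r_u\mid D,X]=0$ by~\eqref{Eq:MainLogisticModel}. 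The first term would vanish if $\Lambda'(D'\theta_u+X'\beta_u)$ coincided with $f_u^2=\mathrm{Var}_P(Y_u\mid D,X)$, by the first-order condition $\Ep_P[f_u^2 Z_u^j X^j]=0$ of~\eqref{Eq:Decomposition}; since $|\Lambda'-f_u^2|\lesssim|r_u|$, Cauchy--Schwarz with Assumptions~\ref{ass: covariates} and~\ref{ass: approximation error}(i,ii) bounds it by $O((s_n\log a_n/n)^{1/2})\cdot\|X^j(\beta-\beta_u^j)\|_{P,2}=O(s_n\log a_n/n)=O(\delta_n n^{-1/2})$ (using Assumption~\ref{ass: covariates}(vi)); the second term is, on the chosen thin set, a scalar multiple of $\Ep_P[r_u Z_u^j]=O(\delta_n n^{-1/2})$ by Assumption~\ref{ass: approximation error}(iii). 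For identification~(iv), $J_{u j}=-\Ep_P[\Lambda'(D'\theta_u+X'\beta_u)D_j Z_u^j]=-\Ep_P[f_u^2(Z_u^j)^2]+O(\Ep_P[|r_u|\,|D_j|\,|Z_u^j|])$ (again using $\Ep_P[f_u^2 Z_u^j X^j]=0$), which lies in $[c_0,C_0]$ in absolute value by Assumptions~\ref{ass: covariates}(i,iv) and~\ref{ass: approximation error}, while the one-sided lower bound on $|\Ep_P[\psi_{u j}(W,\theta,\eta_{u j})]|$ follows from monotonicity of $\Lambda$ and its curvature on the bounded set $\Theta_{u j}$ (Assumption~\ref{ass: parameters}).

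Verifying Assumption~\ref{ass: AS} is where the real work lies. The crucial item is~(i): the post-$\ell_1$-penalized logistic estimator $(\widetilde\theta_u,\widetilde\beta_u)$ of Step~1 and the post-Lasso weighted least squares estimator $\widetilde\gamma_u^j$ of Step~3 must lie in $\mT_{u j}$ \emph{simultaneously over all} $u\in\UU$ and $j\in[\pp]$ with high probability. This is exactly what the functional $\ell_1$-penalized $M$-estimation theory of Section~\ref{FunctionalLassoSection} provides, under the approximate sparsity of Assumption~\ref{ass: sparsity}: prediction-norm rates $O((s_n\log a_n/n)^{1/2})$, $\ell_1$-rates $O((s_n^2\log a_n/n)^{1/2})$, and estimated sparsity $O(s_n)$, all uniform in $(u,j)$; obtaining these requires uniform-in-$u$ control of the (sparse) eigenvalues of the continuum of random Gram matrices built from the estimated weights $\hat f_u^2=\Lambda'(D'\widetilde\theta_u+X'\widetilde\beta_u)$, and treating the replacement of the population weights ($f_u$, $\Lambda'$) by $\hat f_u$ as a lower-order perturbation via Assumptions~\ref{ass: covariates}--\ref{ass: approximation error}. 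Given~(i), parts~(ii),(iii) hold by construction of $\mT_{u j}$ and $\tau_n$; for~(iv),(v), the class $\mathcal F_1$ is a composition of the uniformly Lipschitz map $(t,w,r)\mapsto(y-\Lambda(t)-r)w$ with linear functionals of $(D,X)$ indexed by $O(s_n)$-sparse vectors and the scalar $r$, so that~\eqref{eq: F1 entropy bound} holds with $v_n\asymp s_n$ and $a_n=p\vee\pp\vee n$ (the contribution of the continuum $\UU$ to the entropy being handled via the density bound in Assumption~\ref{ass: density} and the Lipschitz-in-$u$ bound in Assumption~\ref{ass: parameters}), with an envelope satisfying $\|F_1\|_{P,q}\leq K_n$ polynomial in $M_{n,1}M_{n,2}$ (Assumption~\ref{ass: covariates}(v,vii)) and $c_0\leq\|f\|_{P,2}\leq C_0$ from Assumption~\ref{ass: covariates}(i,iv). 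Finally, substituting $\tau_n,v_n,a_n,\mS_n,\omega,B_{1n},B_{2n},K_n$ into the growth conditions~(vi-a,vi-b,vi-c), each reduces to one of the restrictions in Assumption~\ref{ass: covariates}(vi,viii,ix) together with $\delta_n^2\log a_n=o(1)$. Theorem~\ref{theorem:semiparametric} then yields the claimed representation, with $\sigma_{u j}^2=J_{u j}^{-2}\Ep_P[\psi_{u j}^2(W,\theta_{u j},\eta_{u j})]$.

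\textbf{Main obstacle.} The bottleneck is Assumption~\ref{ass: AS}(i): because the second stage is reweighted by the estimated, $u$-dependent weights $\hat f_u$, one needs Lasso and post-Lasso rates and sparsity holding \emph{uniformly} over the continuum $\UU$ and over $j\in[\pp]$ --- i.e.\ uniform control of a continuum of random restricted/sparse eigenvalues and of the propagation of the Step~1 estimation error into the weights. This is precisely the content of the new functional Lasso results of Section~\ref{FunctionalLassoSection}, and it is there that the bulk of the effort lies. A secondary but delicate point is the careful bookkeeping of the approximation error $r_u$ and of the mismatch $f_u^2=\mathrm{Var}_P(Y_u\mid D,X)\neq\Lambda'(D'\theta_u+X'\beta_u)$ through the near-orthogonality computation, so that the residual bias is genuinely $O(\delta_n n^{-1/2})$ rather than merely $o(1)$.
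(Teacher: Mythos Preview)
Your proposal is correct and follows essentially the same route as the paper: verify Assumptions~\ref{ass: S1}--\ref{ass: AS} in the logistic setting (with $\omega=2$, $v_n\asymp s_n$, $\tau_n\asymp(s_n\log a_n/n)^{1/2}$, $B_{1n}=B_{2n}=C$ constants, $K_n\asymp M_{n,1}$) and invoke Theorem~\ref{theorem:semiparametric}, with Assumption~\ref{ass: AS}(i) supplied by Theorems~\ref{Thm:RateEstimatedLassoLogistic} and~\ref{Thm:RateEstimatedLassoLinear}. One technical point the paper handles explicitly that you should not overlook: the Lipschitz-in-$u$ property of $\gamma_u^j$ (namely $\|\gamma_{u_2}^j-\gamma_{u_1}^j\|\lesssim|u_2-u_1|$) is \emph{not} part of Assumption~\ref{ass: parameters} and must be derived from the first-order condition~\eqref{Eq:Decomposition} together with the density bound (Assumption~\ref{ass: density}); this is what delivers the sharp entropy bound $\lesssim\log(a_n/\epsilon)$ (without the factor $s_n$) for the class $\{\psi_{uj}(\cdot,\theta_{uj},\eta_{uj}):u\in\UU,j\in[\pp]\}$, and hence $\mS_n\lesssim(\log a_n)^{1/2}$.
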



This theorem allows us to establish a Gaussian approximation result for the supremum of the process $\{\sqrt n \sigma_{u j}^{-1}(\check \theta_{u j} - \theta_{u j})\colon u\in\UU, j\in[\pp]\}$:

\begin{corollary}[Gaussian approximation for logistic model]\label{cor: logistic clt}
Suppose that Assumptions \ref{ass: parameters} -- \ref{ass: approximation error} hold for all $P\in\mathcal P_n$. In addition, suppose that the following growth conditions hold: $\delta_n^2\log a_n = o(1)$, $M_{n,1}^{2/7}\log a_n = o(n^{1/7})$, and $M_{n,1}^{2/3}\log a_n = o(n^{1/3 - 2/(3q)})$. Then 
{\small
$$
\sup_{t\in \mathbb{R}} \left| \Pr_P\left(  \sup_{u\in\UU,j\in[\tilde{p}]}|\sqrt n\sigma_{uj}^{-1}(\check\theta_{uj}-\theta_{uj})| \leq t \right) - \Pr_P\left(\sup_{u\in\mathcal{U},j\in[\tilde{p}]}|\mathcal{N}_{u j}| \leq t\right) \right| = o(1)
$$}\!uniformly over $P\in\mathcal P_n$, where $(\mathcal N_{u j})_{u\in\UU,j\in[\pp]}$ is a tight zero-mean Gaussian process indexed by $\UU\times[\pp]$ with the covariance given by $\Ep_P[\bar\psi_{u j}(W)\bar\psi_{u'j'}(W)]$ for $u,u'\in\UU$ and $j,j'\in[\pp]$.
\end{corollary}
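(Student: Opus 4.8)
The plan is to deduce Corollary~\ref{cor: logistic clt} from the abstract Gaussian approximation in Corollary~\ref{cor: general CLT}. The proof of Theorem~\ref{theorem:inferenceAlg1} already establishes that, under Assumptions~\ref{ass: parameters}--\ref{ass: approximation error}, the general Assumptions~\ref{ass: S1} and~\ref{ass: AS} hold for the present model, with an explicit identification of the abstract quantities (the sequences $\mS_n$, $B_{1n}$, $B_{2n}$, $\tau_n$, $v_n$, $K_n$ and the abstract $a_n$, $q$, $\delta_n$) in terms of $s_n$, $M_{n,1}$, $M_{n,2}$ and $a_n=p\vee\pp\vee n$; in particular the abstract $\delta_n$ and $q$ may be taken equal to those in Assumptions~\ref{ass: parameters}--\ref{ass: approximation error}. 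Hence it only remains to (a) verify Assumption~\ref{ass: OSR}(i,ii) for the score $\psi_{uj}$ of this model and (b) check that, after pinning down the constants $\varrho_n$, $A_n$, $L_n$, the growth conditions of Corollary~\ref{cor: general CLT} reduce to the displayed conditions $\delta_n^2\log a_n=o(1)$, $M_{n,1}^{2/7}\log a_n=o(n^{1/7})$, $M_{n,1}^{2/3}\log a_n=o(n^{1/3-2/(3q)})$. The conclusion of Corollary~\ref{cor: general CLT} is then precisely the asserted approximation.

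To verify Assumption~\ref{ass: OSR}(i), recall that $\bar\psi_{uj}(W)=-\sigma_{uj}^{-1}J_{uj}^{-1}\{Y_u-\Lambda(D'\theta_u+X'\beta_u)-r_u\}Z_u^j$ with $Z_u^j=D_j-X^j\gamma_u^j$, and that $|\sigma_{uj}^{-1}J_{uj}^{-1}|$ is bounded above and below by positive constants (use Assumption~\ref{ass: covariates}(i) together with $\|\gamma_u^j\|\leq C_1$ to bound $\sigma_{uj}^2=J_{uj}^{-2}\Ep_P[f_u^2(Z_u^j)^2]$ away from zero, and the already-verified Assumption~\ref{ass: S1}(iv) to control $J_{uj}$). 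I would decompose $\mathcal F_0$ as the union over the finite index $j\in[\pp]$ of the slices $\{\bar\psi_{uj}\colon u\in[0,1]\}$, which costs an additive $\log\pp\leq\log a_n$ in the entropy. Each slice is obtained by finitely many bounded arithmetic operations from: the indicator class $\{1\{Y\leq(1-u)\underline y+u\bar y\}\colon u\in[0,1]\}$, a subclass of $\{1\{Y\leq t\}\colon t\in\mathbb R\}$ and hence VC; the maps $u\mapsto\Lambda(D'\theta_u+X'\beta_u)$ and $u\mapsto r_u$, which are Lipschitz in $u$ (a.s.) by Assumption~\ref{ass: parameters} and Assumption~\ref{ass: density} (the latter giving a Lipschitz-in-$u$ modulus for $u\mapsto\Ep_P[Y_u\mid D,X]$); and the map $u\mapsto Z_u^j$, which is Lipschitz in $u$ uniformly over $j$ because $\gamma_u^j=(\Ep_P[f_u^2X^j(X^j)'])^{-1}\Ep_P[f_u^2X^jD_j]$ depends on $u$ only through the Lipschitz weight $f_u^2={\rm Var}_P(Y_u\mid D,X)$ and the defining linear system is uniformly well-conditioned by Assumptions~\ref{ass: covariates}(i,iv). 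The standard product and union bounds for uniform entropy numbers then give $\sup_Q\log N(\epsilon\|F_0\|_{Q,2},\mathcal F_0,\|\cdot\|_{Q,2})\leq\varrho_n\log(A_n/\epsilon)$ with $\varrho_n\asymp1$ and $A_n$ a fixed power of $a_n$, so $\log A_n\asymp\log a_n$.

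For Assumption~\ref{ass: OSR}(ii), the per-function moments turn out to be bounded by a constant: writing $\epsilon_u=Y_u-\Lambda(D'\theta_u+X'\beta_u)-r_u$ and using $\Ep_P[|\epsilon_u|^k\mid D,X]\leq{\rm Var}_P(Y_u\mid D,X)=f_u^2\leq1/4$ for $k=3,4$, one gets $\Ep_P[|\bar\psi_{uj}(W)|^k]\lesssim\Ep_P[f_u^2|Z_u^j|^k]\lesssim\Ep_P[|Z_u^j|^k]\leq C$, where the last bound follows from Assumption~\ref{ass: covariates}(iv) applied to the linear combination $Z_u^j$ of $(D',X')'$ together with $\|\gamma_u^j\|\leq C_1$; thus Assumption~\ref{ass: OSR}(ii) holds for any $L_n$ bounded below by a constant. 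On the other hand, $F_0\lesssim\sup_{u\in\UU,j\in[\pp]}|Z_u^j|$ is a measurable envelope, and $\|F_0\|_{P,q}\leq\|F_0\|_{P,2q}\lesssim M_{n,1}$ by Assumption~\ref{ass: covariates}(v); hence one may take $L_n\asymp M_{n,1}$. Substituting $\varrho_n\asymp1$, $\log A_n\asymp\log a_n$ and $L_n\asymp M_{n,1}$, the growth conditions $\delta_n^2\varrho_n\log A_n=o(1)$, $L_n^{2/7}\varrho_n\log A_n=o(n^{1/7})$ and $L_n^{2/3}\varrho_n\log A_n=o(n^{1/3-2/(3q)})$ of Corollary~\ref{cor: general CLT} become exactly the three displayed conditions, and an application of Corollary~\ref{cor: general CLT} finishes the argument.

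The main obstacle is the uniform entropy bound of Assumption~\ref{ass: OSR}(i): one must establish a Lipschitz-in-$u$ modulus for $u\mapsto\gamma_u^j$ that is uniform over $j\in[\pp]$ (a perturbation bound for the solution of the $f_u^2$-weighted normal equations, using the uniform lower eigenvalue bound of Assumption~\ref{ass: covariates}(i)), and then combine the VC bound for the indicators, the one-dimensional-Lipschitz bounds, and the union over $j$ into a single entropy estimate with $\varrho_n$ bounded even though $p,\pp\gg n$. The remaining steps --- the constant moment bounds in (ii) and the mechanical reduction of the growth conditions --- are routine, and the substantial analytic work (verification of Assumptions~\ref{ass: S1} and~\ref{ass: AS}) is inherited from the proof of Theorem~\ref{theorem:inferenceAlg1}.
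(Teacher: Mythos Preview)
Your proposal is correct and matches the paper's approach: reduce to Corollary~\ref{cor: general CLT}, inherit Assumptions~\ref{ass: S1}--\ref{ass: AS} from the proof of Theorem~\ref{theorem:inferenceAlg1}, verify Assumption~\ref{ass: OSR}(i,ii) with $\varrho_n\asymp 1$, $A_n=a_n$, $L_n\asymp M_{n,1}$, and read off the growth conditions. The only streamlining the paper gains is that the entropy bound for $\mathcal F_0$ is obtained in one line by writing $\mathcal F_0\subset\{\xi f:f\in\widetilde{\mathcal F},\ c\leq|\xi|\leq C\}$ and invoking the entropy bound for $\widetilde{\mathcal F}=\{\psi_{uj}(\cdot,\theta_{uj},\eta_{uj})\}$ already established in the proof of Theorem~\ref{theorem:inferenceAlg1} (where your ``main obstacle''---the Lipschitz-in-$u$ bound for $\gamma_u^j$---is proved as \eqref{eq: gamma-lipshitz2}); you instead re-derive that entropy bound from its constituent pieces, which is more work but arrives at the same place.
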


Based on this corollary, we are now able to construct simultaneous confidence bands for the parameters $\theta_{u j}$. Observe that
$$
J_{u j} = -\Ep_P\Big[\Lambda'\Big(D_j \theta_{u j} + X^j \beta_u^j\Big)D_j(D_j - X^j \gamma_u^j)\Big],\quad u\in\UU, \ j\in[\pp],
$$
and so it can be estimated by
$$
\hat J_{u j} = -\En\Big[\Lambda'\Big(D_j \widetilde\theta_{u j} + X^j \hat\beta_u^j\Big)D_j (D_j - X^j \widetilde\gamma_u^j)\Big],\quad u\in\UU, \ j\in[\pp],
$$
In addition, $\sigma_{u j}^2 = \Ep_P[J_{u j}^{-2}\psi_{u j}^2(W,\theta_{u j},\eta_{u j})]$, and so it can be estimated by
$$
\hat\sigma_{u j}^2 = \En[\hat J_{u j}^{-2}\psi_{u j}^2(W,\widetilde\theta_{u j},\hat\eta_{u j})],\quad u\in\UU, \ j\in[\pp],
$$
Moreover, as in Section \ref{sec: general}, define  $\hat\psi_{u j}(W) = -\hat\sigma_{u j}^{-1}\hat J_{u j}^{-1}\psi_{u j}(W,\check\theta_{u j},\hat\eta_{u j})$, and let $c_\alpha$ be the $(1-\alpha)$ quantile of the conditional distribution of $\sup_{u\in\UU,j\in[\pp]}|\hat{\mathcal G}_{u j}|$ given the data $(W_i)_{i=1}^n$ where the process $\hat{\mathcal G} = (\hat{\mathcal G}_{u j})_{u\in\UU,j\in[\pp]}$ is defined in \eqref{eq: process g}. Then we have

\begin{corollary}[Simultaneous confidence bands for logistic model]\label{cor: logistic bands}
Suppose that Assumptions \ref{ass: parameters} -- \ref{ass: approximation error} hold for all $P\in\mathcal P_n$. In addition, suppose that the following growth conditions hold: $\delta_n^2\log a_n = o(1)$, $M_{n,1}^{2/7}\log a_n = o(n^{1/7})$, $M_{n,1}^{2/3}\log a_n = o(n^{1/3 - 2/(3q)})$, and $s_n\log^3 a_n = o(n)$. Then \eqref{UnifCoverage} holds uniformly over $P\in\mathcal P_n$.
\end{corollary}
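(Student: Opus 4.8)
\emph{Proof plan.} The statement is the logistic-model instance of Corollary~\ref{theorem: general bs}, so the plan is to verify the hypotheses of that corollary with suitable choices of the auxiliary sequences and then invoke it. In proving Theorem~\ref{theorem:inferenceAlg1} and Corollary~\ref{cor: logistic clt} one already checks Assumptions~\ref{ass: S1}, \ref{ass: AS}, and \ref{ass: OSR}(i,ii) together with the growth conditions of Corollary~\ref{cor: general CLT}; in particular one may take $\varrho_n=O(1)$, $\log A_n\lesssim\log a_n$, and $L_n\asymp M_{n,1}$, because the class $\mathcal{F}_0=\{\bar\psi_{uj}\}$ is, for each $j$, Lipschitz in $u$ (by the Lipschitz continuity of $u\mapsto(\theta_u,\beta_u,\gamma_u^j)$ and of $u\mapsto(J_{uj},\sigma_{uj})$), with the index $j\in[\pp]$ contributing only a $\log\pp$ term to the entropy, and its envelope is of order $M_{n,1}$ (bounded $Y_u$ and bounded $\sigma_{uj}^{-1}J_{uj}^{-1}$ times $Z_u^j$). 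It therefore remains to (a) verify Assumption~\ref{ass: variance} for $\hat\sigma_{uj}$; (b) verify Assumption~\ref{ass: OSR}(iii) for $\widehat{\mathcal{F}}_0=\{\bar\psi_{uj}-\hat\psi_{uj}\}$; and (c) check the two additional growth conditions $\varepsilon_n\varrho_n\log A_n=o(1)$ and $\bar\delta_n^2\bar\varrho_n\varrho_n(\log\bar A_n)(\log A_n)=o(1)$.

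\emph{Step (a): variance estimation.} Writing $\hat\sigma_{uj}^2-\sigma_{uj}^2$ as $(\hat J_{uj}^{-2}-J_{uj}^{-2})\,\En[\psi_{uj}^2(W,\widetilde\theta_{uj},\hat\eta_{uj})]$ plus $J_{uj}^{-2}\big(\En[\psi_{uj}^2(W,\widetilde\theta_{uj},\hat\eta_{uj})]-\Ep_P[\psi_{uj}^2(W,\theta_{uj},\eta_{uj})]\big)$, I would bound each piece uniformly over $(u,j)$. For $\hat J_{uj}-J_{uj}$, split into a sampling part $(\En-\Ep_P)$ evaluated at the true parameter values, controlled by a maximal inequality for the $(u,j)$-indexed class $\{\Lambda'(D_j\theta_{uj}+X^j\beta_u^j)D_jZ_u^j\}$ with envelope of order $M_{n,1}M_{n,2}$, and an estimation part obtained by replacing $(\theta_u,\beta_u,\gamma_u^j)$ with $(\widetilde\theta_u,\hat\beta_u^j,\widetilde\gamma_u^j)$, controlled using Lipschitz continuity of $\Lambda'$ and the uniform prediction-norm rates $\sup_{u,j}\big(\|D_j(\widetilde\theta_{uj}-\theta_{uj})\|_{\Pn,2}\vee\|X^j(\hat\beta_u^j-\beta_u^j)\|_{\Pn,2}\vee\|X^j(\widetilde\gamma_u^j-\gamma_u^j)\|_{\Pn,2}\big)=O_P\big((s_n\log a_n/n)^{1/2}\big)$ furnished by Section~\ref{FunctionalLassoSection}. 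The same decomposition, together with Assumptions~\ref{ass: covariates} and \ref{ass: approximation error}, bounds the $\En$-versus-$\Ep_P$ term for the squared score. Since $|J_{uj}|$ is bounded away from zero, this yields $\sup_{u\in\UU,j\in[\pp]}|\hat\sigma_{uj}/\sigma_{uj}-1|\leq\varepsilon_n$ with probability at least $1-\Delta_n$, for $\varepsilon_n$ a fixed power of $M_{n,1}M_{n,2}(s_n\log^2 a_n/n)^{1/2}$, which is $o(1/\log a_n)$ under the maintained growth conditions.

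\emph{Step (b): the class $\widehat{\mathcal{F}}_0$.} On the event of probability at least $1-\Delta_n$ on which $\widetilde\theta_u,\widetilde\beta_u,\widetilde\gamma_u^j$ are $s_n$-sparse (from the sparsity guarantees of Section~\ref{FunctionalLassoSection}) and $\hat J_{uj},\hat\sigma_{uj}$ are bounded and bounded away from zero uniformly in $(u,j)$, each realized map $u\mapsto\bar\psi_{uj}-\hat\psi_{uj}$ has bounded variation in $u$ up to a negligible remainder; hence a chaining argument over $u\in[0,1]$ combined with the union bound over $j\in[\pp]$ gives $\log N(\epsilon,\widehat{\mathcal{F}}_0,\|\cdot\|_{\Pn,2})\leq\bar\varrho_n\log(\bar A_n/\epsilon)$ for $0<\epsilon\leq1$ with $\bar\varrho_n=O(1)$ and $\log\bar A_n\lesssim\log a_n$. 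For the envelope bound, $\|\bar\psi_{uj}-\hat\psi_{uj}\|_{\Pn,2}$ is controlled, via the triangle inequality and Lipschitz continuity of $\Lambda$, by $|\hat\sigma_{uj}/\sigma_{uj}-1|$, $|\hat J_{uj}/J_{uj}-1|$, and the three prediction-norm errors of Step (a), so one may take $\bar\delta_n$ of order $M_{n,1}M_{n,2}(s_n\log a_n/n)^{1/2}$, up to a fixed power.

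\emph{Step (c) and conclusion.} With $\varrho_n=O(1)$, $\log A_n\lesssim\log a_n$, $\bar\varrho_n=O(1)$, $\log\bar A_n\lesssim\log a_n$, the condition $\varepsilon_n\varrho_n\log A_n=o(1)$ follows from the bound on $\varepsilon_n$ and Assumption~\ref{ass: covariates}(vi)--(ix), while $\bar\delta_n^2\bar\varrho_n\varrho_n(\log\bar A_n)(\log A_n)=o(1)$ reduces, up to a fixed power of $M_{n,1}M_{n,2}$ absorbed by Assumption~\ref{ass: covariates}, to $s_n\log^3 a_n/n=o(1)$ --- exactly the new hypothesis $s_n\log^3 a_n=o(n)$. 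All hypotheses of Corollary~\ref{theorem: general bs} are then in force and \eqref{UnifCoverage} follows. The main obstacle is Step (b): showing that, although $\widehat{\mathcal{F}}_0$ is built from $\ell_1$-regularized estimates, its complexity does not grow beyond logarithmic factors and its elements are uniformly small in the empirical $L^2$ norm; both rely crucially on the uniform-in-$u$ sparsity and rate results developed in Section~\ref{FunctionalLassoSection}, and the argument that the realized bootstrap scores remain a low-complexity class in $u$ is the delicate point.
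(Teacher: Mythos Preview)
Your overall plan matches the paper's: invoke Corollary~\ref{theorem: general bs} after verifying Assumption~\ref{ass: variance}, Assumption~\ref{ass: OSR}(iii), and the two extra growth conditions. Your Steps~(a) and~(c) are essentially the paper's Steps~1--3 and~6, and are fine in outline (the paper carries them out via an explicit function class $\mathcal G_6$ and a maximal inequality rather than a direct Lipschitz-plus-rates argument, but the substance is the same, yielding $\hat J_{uj}-J_{uj}=o_P(\log^{-1}a_n)$ and likewise for the squared score; $s_n\log^3 a_n=o(n)$ is used exactly where you say).

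The one place where you and the paper differ meaningfully is the entropy bound for $\widehat{\mathcal F}_0$ in Step~(b), which you flag as ``the delicate point.'' It is actually not delicate at all once you notice the following: the estimated score $\hat\psi_{uj}$ is a deterministic function of the data and of $Y_u=1\{Y\le (1-u)\underline y+u\bar y\}$, and as $u$ ranges over $[0,1]$ the indicator $(Y_{ui})_{i\le n}$ changes only when $u$ crosses one of the $n$ observed thresholds. Hence the realized class $\{\hat\psi_{uj}:u\in\UU,\,j\in[\pp]\}$ is a \emph{finite} set with at most $n\pp$ elements, so its empirical $\epsilon$-entropy is trivially $\le\log(n\pp)\lesssim\log a_n$. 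Combining this with the entropy bound for $\mathcal F_0=\{\bar\psi_{uj}\}$ already established in the proof of Corollary~\ref{cor: logistic clt} gives the required bound for $\widehat{\mathcal F}_0\subset\mathcal F_0-\{\hat\psi_{uj}\}$ with $\bar\varrho_n=O(1)$ and $\log\bar A_n\lesssim\log a_n$. Your bounded-variation/chaining route might be made to work, but it is considerably more involved (you would have to track piecewise-constant estimators with up to $n$ jumps, which in the end encodes exactly the same cardinality information), and the paper sidesteps it entirely with this one-line observation. The $\bar\delta_n$ bound in Step~(b), by contrast, does proceed as you sketch: split $\bar\psi_{uj}-\hat\psi_{uj}$ into a piece driven by $|\hat\sigma_{uj}^{-1}\hat J_{uj}^{-1}-\sigma_{uj}^{-1}J_{uj}^{-1}|$ (controlled via Step~(a)) and a piece $\|\psi_{uj}(W,\check\theta_{uj},\hat\eta_{uj})-\psi_{uj}(W,\theta_{uj},\eta_{uj})\|_{\Pn,2}$ (controlled by embedding it in a low-complexity class $\mathcal G_7\subset\mathcal F_1-\mathcal F_1$ and applying a second-moment maximal inequality).
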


To conclude this section, we provide an example for which conditions of Corollary \ref{cor: logistic bands} are easy to interpret. Recall that $a_n = n\vee p\vee \pp$.
\begin{corollary}[Uniform confidence bands for logistic regression model under simple conditions]\label{cor: simple example}
Suppose that Assumptions \ref{ass: parameters} -- \ref{ass: density}, \ref{ass: covariates}(i,ii,iv), and \ref{ass: approximation error}(i,ii,iv,v) hold for $q>4$ for all $P\in\mathcal P_n$. In addition, suppose that $\{\Ep_P[(\|D\|_\infty\vee \|X\|_\infty)^{2q}]\}^{1/(2q)}\leq C_1$ and $\sup_{u\in \UU, j\in[\pp]}\|\gamma_u^j\|_1\leq C_1$. Moreover, suppose that the following growth conditions hold:
$\log^7 a_n / n = o(1)$, $s_n^2 \log ^3 a_n/n^{1 - 2/q} = o(1)$, and $\sup_{u\in\UU,j\in[\pp]}|\Ep_P[r_u Z_u^j]| =o( (n\log a_n)^{-1/2})$. Then \eqref{UnifCoverage} holds uniformly over $P\in\mathcal P_n$.
\end{corollary}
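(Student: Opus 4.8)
The plan is to deduce Corollary~\ref{cor: simple example} from Corollary~\ref{cor: logistic bands}: I will show that the ``simple conditions'' imply all hypotheses of Corollary~\ref{cor: logistic bands} for a suitable choice of the auxiliary sequences $(M_{n,1})$, $(M_{n,2})$, and $(\delta_n)$ entering Assumptions~\ref{ass: covariates} and \ref{ass: approximation error}. Since Assumptions~\ref{ass: parameters}--\ref{ass: density}, \ref{ass: covariates}(i,ii,iv), and \ref{ass: approximation error}(i,ii,iv,v) are assumed directly, it remains to verify Assumption~\ref{ass: covariates}(iii,v,vi,vii,viii,ix), Assumption~\ref{ass: approximation error}(iii), and the growth conditions of Corollary~\ref{cor: logistic bands}, namely $\delta_n^2\log a_n = o(1)$, $M_{n,1}^{2/7}\log a_n = o(n^{1/7})$, $M_{n,1}^{2/3}\log a_n = o(n^{1/3-2/(3q)})$, and $s_n\log^3 a_n = o(n)$.

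First I would fix the envelope sequences. Writing $Z_u^j = D_j - X^j\gamma_u^j$ and using $\|X^j\|_\infty\leq\|D\|_\infty\vee\|X\|_\infty$ together with $\sup_{u\in\UU,j\in[\pp]}\|\gamma_u^j\|_1\leq C_1$, one has the pointwise bound $\sup_{u\in\UU,j\in[\pp]}|Z_u^j|\leq(1+C_1)(\|D\|_\infty\vee\|X\|_\infty)$; combined with $\{\Ep_P[(\|D\|_\infty\vee\|X\|_\infty)^{2q}]\}^{1/(2q)}\leq C_1$ this lets me take $M_{n,1}$ and $M_{n,2}$ to be constants depending only on $C_1$, so that conditions (v) and (vii) of Assumption~\ref{ass: covariates} hold trivially. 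By Cauchy--Schwarz and $2q\geq 6$, the quantity $\max_{j,k}\Ep_P[|Z_u^jX_k^j|^3]^{1/3}$ is bounded by a constant, so Assumption~\ref{ass: covariates}(iii) reduces to $\log^{1/2}a_n\lesssim\delta_n n^{1/6}$, while (vi), (viii), (ix) reduce, up to constants, to $s_n\log a_n\lesssim\delta_n n^{1/2-1/q}$, $s_n\log^{1/2}a_n\lesssim\delta_n n^{1/2-1/q}$, and $s_n\lesssim\delta_n n^{1-3/q}$. Finally, since $\log a_n\geq 1$ for large $n$, the hypothesis $\sup_{u\in\UU,j\in[\pp]}|\Ep_P[r_uZ_u^j]| = o((n\log a_n)^{-1/2})$ gives $\sup_{u,j}|\Ep_P[r_uZ_u^j]| = o(n^{-1/2})$, so Assumption~\ref{ass: approximation error}(iii) holds provided $\delta_n\gtrsim n^{1/2}\sup_{u,j}|\Ep_P[r_uZ_u^j]|$.

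The decisive step is then to exhibit a single sequence $\delta_n\to 0$ that dominates all of these lower bounds and still obeys $\delta_n^2\log a_n = o(1)$. I would set $\delta_n$ equal to a constant multiple of the maximum of $\log^{1/2}a_n/n^{1/6}$, $s_n\log a_n/n^{1/2-1/q}$, $s_n/n^{1-3/q}$, and $n^{1/2}\sup_{u,j}|\Ep_P[r_uZ_u^j]|$ (the term $s_n\log^{1/2}a_n/n^{1/2-1/q}$ being dominated by the second term), and check term by term that each sequence, and its square times $\log a_n$, is $o(1)$. For the first, $\log^3 a_n = o(n)$ and $\log^6 a_n = o(n)$ both follow from $\log^7 a_n/n = o(1)$. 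For the second, $(s_n\log a_n/n^{1/2-1/q})^2\log a_n\leq s_n^2\log^3 a_n/n^{1-2/q} = o(1)$ is one of the simple conditions, and the sequence itself is $o(1)$ because $s_n\log a_n\leq(s_n^2\log^3 a_n)^{1/2} = o(n^{(1-2/q)/2}) = o(n^{1/2-1/q})$. For the third, $s_n = o(n^{1/2-1/q}) = o(n^{1-3/q})$ since $q>4$ implies $1/2-1/q<1-3/q$, and $(s_n/n^{1-3/q})^2\log a_n = o(n^{-1+4/q}/\log^2 a_n) = o(1)$, again because $q>4$. For the fourth, squaring and multiplying by $\log a_n$ returns $(\sup_{u,j}|\Ep_P[r_uZ_u^j]|(n\log a_n)^{1/2})^2 = o(1)$. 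The remaining growth conditions of Corollary~\ref{cor: logistic bands} are then immediate: with $M_{n,1}$ constant, $M_{n,1}^{2/7}\log a_n = o(n^{1/7})$ and $M_{n,1}^{2/3}\log a_n = o(n^{1/3-2/(3q)})$ because $\log a_n = o(n^{1/7}) = o(n^{1/3-2/(3q)})$ (using $1/7<1/3-2/(3q)$ for $q>4$), and $s_n\log^3 a_n\leq s_n^2\log^3 a_n = o(n^{1-2/q}) = o(n)$ when $s_n\geq 1$, the case $s_n<1$ being trivial. Invoking Corollary~\ref{cor: logistic bands} then gives \eqref{UnifCoverage} uniformly over $P\in\mathcal P_n$.

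The one genuine difficulty here is organizational, not analytic: one must arrange the constraints so that the lower bounds on $\delta_n$ forced by Assumption~\ref{ass: covariates}(iii,vi,viii,ix) and Assumption~\ref{ass: approximation error}(iii) stay compatible with the upper bound $\delta_n^2\log a_n = o(1)$, which is precisely where $q>4$ and $s_n^2\log^3 a_n/n^{1-2/q}=o(1)$ enter. No estimates beyond those already established in the proofs of Theorem~\ref{theorem:inferenceAlg1} and Corollary~\ref{cor: logistic bands} are needed.
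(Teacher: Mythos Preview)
Your proposal is correct and follows essentially the same approach as the paper: reduce to Corollary~\ref{cor: logistic bands} by bounding $|Z_u^j|$ via $\|\gamma_u^j\|_1\leq C_1$ so that $M_{n,1}$ and $M_{n,2}$ can be taken constant, and then verify Assumption~\ref{ass: covariates}(iii,v,vi,vii,viii,ix), Assumption~\ref{ass: approximation error}(iii), and the remaining growth conditions for a suitable $\delta_n$. The only cosmetic difference is that you explicitly set $\delta_n$ equal to the maximum of the forced lower bounds and check $\delta_n^2\log a_n=o(1)$ term by term, whereas the paper checks each condition separately ``for some $\delta_n$ with $\delta_n^2\log a_n=o(1)$'' and then asserts existence of a common $\delta_n$ at the end; the content is the same.
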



\begin{remark}[Estimation of variance] When constructing the confidence bands based on \eqref{UnifCoverage}, we find in simulations that it is beneficial to replace the estimators $\hat\sigma_{u j}^2$ of $\sigma_{u j}^2$ by $\max\{\hat \sigma_{uj}^2, \widehat \Sigma_{uj}^2\}$ where $\widehat\Sigma_{u j}^2 = \En[\hat f_u^2(D-X^j\widetilde\gamma_u^j)^2 ]$ is an alternative consistent estimator of $\sigma_{u j}^2$. \qed
\end{remark}

\begin{remark}[Alternative implementations, double selection]
We note that the theory developed here is applicable for different estimators that construct the new score function with the desired orthogonality condition 
implicitly. For example, the double selection idea yields an implementation of an estimator that is first-order equivalent to the estimator based on the score function. 
The algorithm yielding the double selection estimator is as follows.

\begin{algorithm}
For each $u\in \UU$ and $j\in[\pp]$: \\
\enspace \emph{Step $1'$}. Run post-$\ell_1$-penalized logistic estimator (\ref{PostL1Logistic})  of $Y_{u}$ on  $D$ and $X$ to compute $(\widetilde\theta_u,\widetilde\beta_u)$. \\
\enspace \emph{Step $2'$}. Define the weights $\hat f_{u}^2 = \hat f_u^2(D,X) = \G'(D_i'\widetilde\theta_u+X_i'\widetilde\beta_u)$.\\
\enspace \emph{Step $3'$}. Run  the lasso estimator (\ref{EstLasso}) of $\hat f_u D_j$ on $\hat f_u X$ to compute $\hat\gamma_u^j$. \\
\enspace \emph{Step $4'$}. Run logistic regression of $Y_{u}$ on  $D_j$ and all the selected variables in Steps $1'$ and $3'$ to compute $\check \theta_{u j}$.
\end{algorithm}

As mentioned by a referee, it is surprising that the double selection procedure has uniform validity. The use of the additional variables selected in Step 3', through the first order conditions of the optimization problem, induces the necessary near-orthogonality condition. We refer to the Supplementary Material for a more detailed discussion. \qed
\end{remark}

\begin{remark}[Alternative implementations, one-step correction]
Another implementation for which the theory developed here applies is to replace Step 5 in Algorithm 1 with a one-step procedure. This relates to the debiasing procedure proposed in \cite{vandeGeerBuhlmannRitov2013} to the case when the set $\UU$ is a singleton. In this case instead of minimizing the criterion (\ref{OS:Step3}) in Step 5, the method makes a full Newton step from the initial estimate,

\noindent
{\it Step $5''$}. Compute $ \bar \theta_{uj} = \hat\theta_{uj} - \hat J_{uj}^{-1}\En[\psi_{uj}(W,\hat\theta_{uj},\hat\eta_{uj})].$

\noindent
The theory developed here directly apply to those estimators as well.\qed
\end{remark}

\begin{remark}[Extension to other approximately sparse generalized linear models]
Inspecting the proofs of Theorem \ref{theorem:inferenceAlg1} and Corollaries \ref{cor: logistic clt}--\ref{cor: simple example} reveals that these results can be extended with minor modifications to cover other approximately sparse generalized linear models. For example, the results can be extended to cover the model  \eqref{Eq:MainLogisticModel} where we use the probit link function instead of the logit link function $\Lambda$.\qed
\end{remark}

\section{$\ell_1$-Penalized M-Estimators: Nuisance Functions and Functional Data}\label{FunctionalLassoSection}

In this section, we define the estimators $\widetilde \theta_{u}$, $\widetilde \beta_u$, and $\widetilde \gamma_u^j$, which were used in the previous section, and study their properties. We consider the same setting as that in the previous section. The results in this section rely upon a set of new results for $\ell_1$-penalized $M$-estimators with functional data presented in Appendix \ref{sec: generic results} of the Supplementary Material.

\subsection{$\ell_1$-Penalized Logistic Regression for Functional Response Data: Asymptotic Properties}

Here we consider the generalized linear model with the logistic link function and functional response data \eqref{Eq:MainLogisticModel}. As explained in the previous section, we assume that $\widetilde \theta_u$ and $\widetilde \beta_u$ are post-regularization maximum likelihood estimators of $\theta_u$ and $\beta_u$ corresponding to the log-likelihood function $M_u(W,\theta,\beta) = M_u(Y_u,D,X,\theta,\beta)$ defined in \eqref{eq: log-likelihood function}. To define these estimators, let $\widehat \theta_u$ and $\widehat\beta_u$ be $\ell_1$-penalized maximum likelihood (logistic regression) estimators
 \begin{equation}\label{L1Logistic}
(\hat\theta_u,\hat\beta_u) \in \arg\min_{\theta,\beta}\left( \En[M_u(Y_u,D,X,\theta,\beta)] + \frac{\lambda}{n}\|\hat\Psi_u(\theta',\beta')'\|_1\right)
\end{equation}
where $\lambda$ is a penalty level and $\widehat\Psi_u$ a diagonal matrix of penalty loadings. We choose parameters $\lambda$ and $\widehat\Psi_u$ according to Algorithm 3 described below. Using the $\ell_1$-penalized estimators $\widehat \theta_u$ and $\widehat\beta_u$, we then define post-regularization estimators $\widetilde\theta_u$ and $\widetilde\beta_u$ by
 \begin{equation}\label{PostL1Logistic}
(\widetilde\theta_u, \widetilde\beta_u) \in \arg\min_{\theta} \En[M_u(Y_u,D,X,\theta,\beta)]   : \supp(\theta,\beta)\subseteq \supp (\hat\theta_u,\hat\beta_u).
\end{equation}
We derive the rate of convergence and sparsity properties of $\widetilde\theta_u$ and $\widetilde\beta_u$ as well as of $\widehat\theta_u$ and $\widehat\beta_u$ in Theorem \ref{Thm:RateEstimatedLassoLogistic} below. Recall that $a_n = n\vee p \vee \pp$.

\begin{algorithm}[Penalty Level and Loadings for Logistic Regression]\label{AlgFunc} Choose $\gamma \in [1/n, 1/\log n]$ and $c > 1$ (in practice, we set $c=1.1$ and $\gamma = .1/\log n$). Define $\lambda = c\sqrt n\Phi^{-1}(1-\gamma/(2(p + \pp) N_n))$ with $N_n=n$. To select $\widehat\Psi_u$, choose a constant $\bar m \geq 0$ as an upper bound on the number of loops and proceed as follows: (0) Let $\widetilde X = (D',X')'$, $m=0$, and initialize $\hat l_{uk,0} = \frac{1}{2}\{\En[\widetilde X_k^2]\}^{1/2}$ for $k\in[p+\pp]$. (1)  Compute $(\widehat\theta_u,\widehat\beta_u)$ and $(\widetilde\theta_u,\widetilde\beta_u)$ based on $\widehat\Psi_u = \diag(\{\hat l_{uk,m}, k\in[p+\pp]\})$. (2) Set $\hat l_{uk,m+1} := \{\En[\widetilde X_k^2(Y_{u}- \G(D'\widetilde\theta_u+X'\widetilde \beta_u))^2]\}^{1/2}.$ (3) If $m \geq \bar m$,  report the current value of $\widehat\Psi_u$ and stop; otherwise set $m \leftarrow m+1 $ and go to step (1). \end{algorithm}

\begin{theorem}[Rates and Sparsity for Functional Response under Logistic Link]\label{Thm:RateEstimatedLassoLogistic}
 Suppose that Assumptions \ref{ass: parameters} -- \ref{ass: approximation error} hold for all $P\in\mathcal P_n$. In addition, suppose that the penalty level $\lambda$ and the matrices of penalty loadings $\widehat\Psi_u$ are chosen according to Algorithm \ref{AlgFunc}. Moreover, suppose that the following growth condition holds: $\delta_n^2\log a_n = o(1)$. Then there exists a constant $\bar C$ such that uniformly over all $P \in \mathcal{P}_n$ with probability $1-o(1)$,
\begin{align*}
&\sup_{u\in\mathcal{U}} \Big(\| \hat\theta_u - \theta_{u}\|+\|\hat\beta_u-\beta_u\|\Big) \leq \bar C \sqrt{\frac{s_n \log a_n}{n}},\\
&\sup_{u\in\mathcal{U}}\Big(\|\hat\theta_u-\theta_{u}\|_1 +\|\hat\beta_u-\beta_{u}\|_1 \Big)\leq \bar C   \sqrt{\frac{s_n^2\log a_n}{n}},
\end{align*}
and the estimators $\widehat\theta_u$ and $\widehat\beta_u$ are uniformly sparse: $\sup_{u\in \mathcal{U}}\|\hat \theta_u \|_0  +\|\hat \beta_u \|_0\leq \bar C  s_n$. Also, uniformly over all $P \in \mathcal{P}_n$, with probability $1-o(1)$,
\begin{align*}
&\sup_{u\in \mathcal{U}} \Big(\|\widetilde \theta_u -\theta_u\|+\|\widetilde\beta_u-\beta_u\|\Big) \leq \bar C  \sqrt{\frac{s_n \log a_n}{n}},  \\
& \sup_{u\in\mathcal{U}}\Big(\|\widetilde \theta_u-\theta_{u}\|_1 + \|\widetilde \beta_u-\beta_{u}\|_1\Big) \leq \bar C \sqrt{\frac{s_n^2\log a_n}{n}}.
\end{align*}
 \end{theorem}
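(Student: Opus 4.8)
The plan is to reduce the statement to an application of the generic $\ell_1$-penalized $M$-estimation results for functional data (in Appendix \ref{sec: generic results} of the Supplementary Material), by verifying for the logistic log-likelihood $M_u(W,\theta,\beta)$ the standard ingredients: a restricted strong convexity / restricted eigenvalue condition, validity of the chosen penalty level, and appropriate moment and entropy bounds that hold uniformly over $u\in\UU$. First I would set up notation: write $\widetilde X = (D',X')'$, let $\xi_u = (\theta_u',\beta_u')'$ and $\widehat\xi_u, \widetilde\xi_u$ the corresponding penalized and post-penalized estimators, so that $\|\widetilde X'(\widehat\xi_u - \xi_u)\|$ controls the prediction norm and, combined with the restricted eigenvalue, yields the $\ell_2$ and $\ell_1$ rates. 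The key structural facts about the logistic likelihood are: (a) its gradient at the truth is $\En[\widetilde X(Y_u - \Lambda(\widetilde X'\xi_u) - r_u)]$, which by Assumption \ref{ass: approximation error}(ii) and the bounded-moment Assumption \ref{ass: covariates} concentrates around zero at rate $\sqrt{\log a_n/n}$ componentwise, matching the penalty level $\lambda/n \asymp \sqrt{\log a_n/n}$ up to the chosen quantile factor; (b) $\Lambda$ is smooth with $\Lambda' = \Lambda(1-\Lambda)$ bounded and bounded below on compacts, so the population Hessian $\Ep_P[\Lambda'(\widetilde X'\xi_u)\widetilde X\widetilde X']$ has eigenvalues bounded away from zero on the relevant cone by Assumption \ref{ass: covariates}(i,iv); (c) a self-concordance / nonlinearity control (à la Bach) shows the empirical criterion inherits restricted strong convexity in a neighborhood of $\xi_u$ of radius controlled by $M_{n,2}$ and $s_n$, which is exactly what Assumption \ref{ass: covariates}(vi,viii,ix) guarantees does not shrink too fast.

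The steps, in order, would be: (1) Establish that with probability $1-o(1)$, uniformly over $u\in\UU$, the penalty loadings $\widehat\Psi_u$ produced by Algorithm \ref{AlgFunc} are asymptotically valid, i.e. bounded above and below and dominating the score — here the uniformity over $u$ is obtained by a maximal inequality over the function class indexed by $u$, using the Lipschitz-in-$u$ structure from Assumption \ref{ass: parameters} and the entropy bound $\log a_n$ for $\UU=[0,1]$; this also handles the finitely-many iterations $m\leq\bar m$ of the loadings recursion by induction on $m$. (2) Invoke the generic $\ell_1$-penalized $M$-estimator theorem with functional data to get the rates for $\widehat\theta_u,\widehat\beta_u$: the prediction-norm bound $\lesssim \sqrt{s_n\log a_n/n}$, hence via the restricted eigenvalue the $\ell_2$ rate $\sqrt{s_n\log a_n/n}$ and the $\ell_1$ rate $\sqrt{s_n^2\log a_n/n}$, all uniform over $u$. (3) Derive uniform sparsity $\|\widehat\theta_u\|_0 + \|\widehat\beta_u\|_0 \lesssim s_n$ from the standard post-selection sparsity argument, again applying the generic result: the number of selected coordinates is controlled by the ratio of the empirical prediction error to $(\lambda/n)^2$ times a sparse eigenvalue, and Assumption \ref{ass: covariates} bounds the relevant sparse maximal and minimal eigenvalues of $\En[\widetilde X\widetilde X']$ uniformly. (4) Transfer the rates to the post-penalized estimators $\widetilde\theta_u,\widetilde\beta_u$: since they minimize the same criterion over a model of size $\lesssim s_n$ containing (approximately) the support of $\xi_u$ up to the sparse approximation $\bar\gamma$-type error in Assumption \ref{ass: sparsity}, the oracle inequality for least-favorable submodels gives the same $\sqrt{s_n\log a_n/n}$ and $\sqrt{s_n^2\log a_n/n}$ bounds.

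The main obstacle I expect is Step (3) together with the uniform-in-$u$ control of sparse eigenvalues of the \emph{random} Gram matrices $\En[\hat f_u^2 \widetilde X\widetilde X']$ and $\En[\widetilde X\widetilde X']$: one needs that, simultaneously for all $u\in\UU$ and all sparse subsets of size $O(s_n)$, these empirical eigenvalues are bounded and bounded away from zero, which is a uniform law of large numbers over a continuum of matrices of growing dimension. This is where the moment conditions in Assumption \ref{ass: covariates}(v)--(ix) and the growth condition $\delta_n^2\log a_n = o(1)$ do the real work — they are calibrated precisely so that a combination of Rudelson-type sparse-matrix concentration and a chaining argument over $u$ (using the Lipschitz dependence of $\hat f_u^2$ on $u$ through $\widetilde\theta_u,\widetilde\beta_u$, which in turn requires the Step (2) rates) closes the loop. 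A secondary subtlety is that the weights $\hat f_u^2$ depend on the first-stage estimates, so Steps (2)--(3) must be organized so that the logistic rates are established \emph{before} they are used to control the weighted design, avoiding circularity; I would handle this by first proving everything for the unweighted logistic problem and only afterwards feeding the resulting uniform bounds into the weighted least-squares stage (which is the content of the $\widetilde\gamma_u^j$ analysis, treated separately).
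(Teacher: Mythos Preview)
Your proposal is correct and follows essentially the same architecture as the paper's proof: reduce to the generic $\ell_1$-penalized $M$-estimation lemmas (Lemmas \ref{Lemma:LassoMRateRaw}--\ref{Lemma:PostLassoMRateRaw}) by verifying Condition WL and Assumption \ref{ass: M}, with the uniform-in-$u$ sparse eigenvalue control (Lemma \ref{lemma:FSE}, via a Rudelson--Vershynin argument) supplying exactly the obstacle you anticipate. The one place the paper is more explicit than your sketch is the treatment of the approximation error in the quadratic-margin condition: it encodes $r_u$ as a nuisance offset $a_u$ defined by $\Lambda(\widetilde X'\xi_u+a_u)=\Lambda(\widetilde X'\xi_u)+r_u$ and uses Assumption \ref{ass: approximation error}(iv), $|r_u|\le f_u^2/4$, to get $|a_u|\lesssim r_u/f_u^2$ and hence $\|a_u/f_u\|_{\Pn,2}\lesssim (s_n\log a_n/n)^{1/2}$, which is what makes the $C_n$ term in Assumption \ref{ass: M}(a,c) of the right order.
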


\subsection{Lasso with Estimated Weights: Asymptotic Properties}\label{FunctionalestimatedLassoSection}

Here we consider the weighted linear model \eqref{Eq:Decomposition} for $u\in\UU$ and $j\in[\pp]$. Using the parameter $\bar\gamma_u^j$ appearing in Assumption \ref{ass: sparsity}, it will be convenient to rewrite this model as
\begin{equation}\label{Def:ModelEstLasso}
f_{u}D_j = f_{u} X^j\bar \gamma_u^j + f_{u}\bar r_{u j} + v_{u}^j, \quad \Ep_P[ f_{u}X^jv_{u}^j ] = 0
\end{equation}
where $\bar r_{u j} = X^j(\gamma_u^j - \bar\gamma_u^j)$ is an approximation error, which is asymptotically negligible under Assumption \ref{ass: sparsity}. As explained in the previous section, we assume that $\widetilde\gamma_u^j$ is a post-regularization weighted least squares estimator of $\gamma_u^j$ (or $\bar\gamma_u^j$). To define this estimator, let $\widehat\gamma_u^j$ be an $\ell_1$-penalized (weighted Lasso) estimator
\begin{equation}\label{EstLasso}
\hat \gamma_u^j \in \arg \min_{\gamma}\left( \mbox{$\frac{1}{2}$}\En[ \hat f_{u}^2 ( D_{j} - X^j\gamma)^2] + \frac{\lambda}{n}\|\hat \Psi_{u j}\gamma\|_1 \right)
\end{equation}
where $\lambda$ and $\hat \Psi_{u j}$ are the associated penalty level and the diagonal matrix of penalty loadings specified below in Algorithm 4 and where $\widehat f_u^2$'s are estimated weights. As in Algorithm 1 in the previous section, we set $\widehat f_u^2 = \widehat f_u^2(D,X) = \Lambda'(D'\widetilde \theta_u + X'\widetilde\beta_u)$. Using $\widehat\gamma_u^j$, we define a post-regularized weighted least squares estimator
\begin{equation}\label{EstPostLasso}
\widetilde \gamma_u^j \in \arg \min_{\gamma} \mbox{$\frac{1}{2}$}\En[ \hat f_{u}^2 ( D_{j} - X^j\gamma)^2]  :  \supp(\gamma)\subseteq \supp(\hat \gamma_u^j).
\end{equation}
 We derive the rate of convergence and sparsity properties of $\widetilde\gamma_u^j$ as well as of $\widehat\gamma_u^j$ in Theorem \ref{Thm:RateEstimatedLassoLinear} below. 

\begin{algorithm}[Penalty Level and Loadings for Weighted Lasso]\label{AlgFunc2} Choose $\gamma \in [1/n, 1/\log n]$ and $c > 1$ (in practice, we set $c=1.1$ and $\gamma = .1/\log n$). Define $\lambda = c\sqrt n\Phi^{-1}(1-\gamma/(2(p + \pp) N_n))$ with $N_n = p\pp^2 n^2$. To select $\widehat\Psi_{u j}$, choose a constant $\bar m \geq 1$ as an upper bound on the number of loops and proceed as follows: (0) Set $m=0$ and $\hat l_{ujk,0} = \max_{1\leq i\leq n}\|\hat f_{ui}X_i^j\|_\infty\{\En[\hat f_u^2D_j^2]\}^{1/2}$. (1)  Compute $\hat \gamma_u^j$ and $\widetilde \gamma_u^j$ based on $\widehat\Psi_{u j} = \diag(\{\hat l_{ujk,m}, k\in[p+\pp-1]\})$. (2) Set $\hat l_{ujk,m+1} := \{\En[\hat f_u^4(D_j- X^j\widetilde\gamma_u^j)^2(X_k^j)^2]\}^{1/2}.$ (3) If $m\geq \bar m$, report the current value of $\widehat\Psi_u^j$ and stop; otherwise set $m \leftarrow m+1 $ and go to step (1). \end{algorithm}


\begin{theorem}[Rates and Sparsity for Lasso with Estimated Weights]\label{Thm:RateEstimatedLassoLinear}
 Suppose that Assumptions \ref{ass: parameters} -- \ref{ass: approximation error} hold for all $P\in\mathcal P_n$. In addition, suppose that the penalty level $\lambda$ and the matrices of penalty loadings $\widehat\Psi_{u j}$ are chosen according to Algorithm \ref{AlgFunc2}. Moreover, suppose that the following growth condition holds: $\delta_n^2\log a_n = o(1)$. Then there exists a constant $\bar C$ such that uniformly over all $P \in \mathcal{P}_n$ with probability $1-o(1)$,
\begin{align*}
&\max_{j\in[\pp]}\sup_{u\in\mathcal{U}} \| \hat\gamma_u^j - \bar\gamma_{u}^j\| \leq \bar C \sqrt{\frac{s_n\log a_n}{n}},\ \  \max_{j\in[\pp]} \sup_{u\in\mathcal{U}}\|\hat\gamma_u^j-\bar\gamma_{u}^j\|_1  \leq \bar C   \sqrt{\frac{s_n^2 \log a_n}{n}},
 \end{align*}
and the estimator $\hat\gamma_u^j$ is uniformly sparse, $\max_{j\in[\pp]}\sup_{u\in \mathcal{U}}\|\hat \gamma_u^j \|_0  \leq \bar C  s_n$.
Also, uniformly over all $P \in \mathcal{P}_n$, with probability $1-o(1)$,
\begin{align*}
&\max_{j\in[\pp]}\sup_{u\in \mathcal{U}} \| \widetilde \gamma_u^j -\bar\gamma_u^j\| \leq \bar C  \sqrt{\frac{s_n \log a_n}{n}},\ \ \max_{j\in[\pp]}\sup_{u\in\mathcal{U}}\|\widetilde \gamma_u^j-\bar\gamma_{u}^j\|_1  \leq \bar C \sqrt{\frac{s_n^2\log a_n}{n}}.
\end{align*}
 \end{theorem}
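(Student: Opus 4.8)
The plan is to verify the hypotheses of the generic $\ell_1$-penalized $M$-estimator theory (Appendix \ref{sec: generic results}) for the weighted least squares problem \eqref{EstLasso}--\eqref{EstPostLasso}, with the criterion $M(Z,\gamma) = \tfrac12 \widehat f_u^2(D_j - X^j\gamma)^2$ regarded as a function indexed by $(u,j)\in\UU\times[\pp]$, and then read off the claimed rates. The generic theory requires four ingredients, each of which must be made uniform over $(u,j)$: (i) a \emph{restricted eigenvalue / sparse eigenvalue} condition for the design matrix $\En[\widehat f_u^2 X^j (X^j)']$; (ii) a bound on the \emph{empirical score} $\En[\widehat f_u^2 v_u^j X^j]$ at the true parameter, so that the chosen $\lambda$ dominates it with high probability; (iii) \emph{regularity of the penalty loadings}, i.e.\ the loadings $\hat l_{ujk,m}$ produced by Algorithm \ref{AlgFunc2} are, after a fixed number of loops, sandwiched between constant multiples of the ``ideal'' loadings $\{\Ep_P[\widehat f_u^4 (v_u^j)^2 (X_k^j)^2]\}^{1/2}$; and (iv) control of the \emph{approximation error} $f_u \bar r_{uj} = f_u X^j(\gamma_u^j - \bar\gamma_u^j)$, which is already supplied by Assumption \ref{ass: sparsity}. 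The post-$\ell_1$ refinement \eqref{EstPostLasso} then inherits the rate by the standard argument (the support selected has cardinality $O_P(s_n)$, so one applies the sparse-eigenvalue bound on that support and uses that $\hat\gamma_u^j$ is already within the desired ball).

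The first step is to replace the estimated weights $\widehat f_u^2 = \Lambda'(D'\widetilde\theta_u + X'\widetilde\beta_u)$ by the true weights $f_u^2 = \Lambda'(D'\theta_u + X'\beta_u)$. Using Theorem \ref{Thm:RateEstimatedLassoLogistic}, $\sup_{u}(\|\widetilde\theta_u-\theta_u\| + \|\widetilde\beta_u - \beta_u\|) \lesssim \sqrt{s_n\log a_n/n}$ with probability $1-o(1)$, together with the Lipschitz property of $\Lambda'$ and the moment bound $M_{n,2}$ on $\|D\|_\infty\vee\|X\|_\infty$ from Assumption \ref{ass: covariates}(vii), one shows $\sup_{u,j}\|\widehat f_u^2 - f_u^2\|_{\Pn,?}$ is small enough (specifically, $\max_i|\widehat f_{ui}^2 - f_{ui}^2| \lesssim M_{n,2}\sqrt{s_n\log a_n/n}$), so that the design Gram matrices, the scores, and the loadings computed with $\widehat f_u^2$ differ from their $f_u^2$-counterparts by a factor $1+o(1)$ on the relevant sparse cones; here Assumption \ref{ass: covariates}(ix), $M_{n,1}^2 M_{n,2}^4 s_n \leq \delta_n n^{1-3/q}$, is exactly what is needed to absorb the error from the estimated weights into the leading term. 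Second, with true weights in place, the restricted/sparse eigenvalue condition follows from Assumption \ref{ass: covariates}(i) (population minimal eigenvalue bounded below) plus a standard empirical-process concentration of sparse eigenvalues of $\En[f_u^2 X^j(X^j)']$ around their population versions, made uniform over $(u,j)$ by a maximal inequality controlled by the entropy of $\UU=[0,1]$ (a bounded interval, with the Lipschitz-in-$u$ structure of Assumption \ref{ass: parameters}) and by $M_{n,2}$; the growth condition $\delta_n^2 \log a_n = o(1)$ and Assumption \ref{ass: covariates}(viii) guarantee the deviation is $o(1)$.

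Third, for the score bound one writes $\En[f_u^2 v_u^j X_k^j]$ and notes $\Ep_P[f_u^2 v_u^j X_k^j] = 0$ by the orthogonality in \eqref{Eq:Decomposition}, so it is a centered average; a union bound over $k\in[p+\pp-1]$ and a uniform-in-$(u,j)$ maximal inequality (using the envelope from $M_{n,1}$, $M_{n,2}$ and the $q$-th moments, and the growth conditions $M_{n,1}^2 s_n\log a_n\leq\delta_n n^{1/2-1/q}$, $M_{n,2}^2 s_n\log^{1/2}a_n\leq\delta_n n^{1/2-1/q}$) shows $\sup_{u,j}\|\En[f_u^2 v_u^j X^j]\|_\infty \lesssim \sqrt{\log a_n/n}$ with high probability, which is dominated by $\lambda/n \asymp \sqrt{\log((p+\pp)N_n)/n} \asymp \sqrt{\log a_n/n}$ since $N_n = p\pp^2 n^2$ contributes only an extra constant factor to $\log$. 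The loading-regularity claim (iii) is then an induction on the loop counter $m$ of Algorithm \ref{AlgFunc2}: the crude initial loading $\hat l_{ujk,0}$ is shown to be an upper bound of the right order, and each update $\hat l_{ujk,m+1} = \{\En[\widehat f_u^4(D_j - X^j\widetilde\gamma_u^j)^2(X_k^j)^2]\}^{1/2}$ is shown to converge to the ideal loading using the rate for $\widetilde\gamma_u^j$ from the previous loop (so after $\bar m\geq 1$ loops the loadings are of the correct order), again uniformly in $(u,j)$; Assumption \ref{ass: covariates}(ii) keeps the ideal loadings bounded away from zero.

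The main obstacle I expect is making \emph{everything uniform over $(u,j)$ simultaneously} while the estimated weights $\widehat f_u^2$ — which themselves depend on the preliminary Lasso estimators $(\widetilde\theta_u,\widetilde\beta_u)$ — enter the design, the score, and the penalty loadings all at once. The delicate point is that the empirical-process maximal inequalities must be applied to function classes that are \emph{random} (through $\widehat f_u^2$) and indexed by both the continuum $\UU$ and the large index set $[\pp]$, so one has to either (a) condition on $(\widetilde\theta_u,\widetilde\beta_u)$ and work on the good event from Theorem \ref{Thm:RateEstimatedLassoLogistic}, bounding the extra entropy contributed by a $\sqrt{s_n\log a_n/n}$-ball of preliminary estimators, or (b) sandwich $\widehat f_u^2$ deterministically between $c f_u^2$ and $C f_u^2$ pointwise on that event and thereby reduce to fixed (true-weight) function classes. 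Approach (b) is cleaner and is the one I would pursue; the price is the strong moment/growth conditions in Assumption \ref{ass: covariates}(v)--(ix), and verifying that these are precisely strong enough to close the argument is the real work. Once the four ingredients are in hand, the rates $\|\hat\gamma_u^j - \bar\gamma_u^j\| \lesssim \sqrt{s_n\log a_n/n}$, $\|\hat\gamma_u^j - \bar\gamma_u^j\|_1 \lesssim \sqrt{s_n^2\log a_n/n}$, and the sparsity bound $\|\hat\gamma_u^j\|_0\lesssim s_n$ are immediate consequences of the generic Lasso theorem, and the analogous bounds for the post-estimator $\widetilde\gamma_u^j$ follow from the generic post-selection theorem applied on the (uniformly $O_P(s_n)$-sparse) selected support.
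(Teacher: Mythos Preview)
Your proposal is correct and matches the paper's approach: the paper casts the problem in the generic $\ell_1$-penalized $M$-estimator framework of Appendix \ref{sec: generic results} over the enlarged index set $\UUU=\UU\times[\pp]$, verifies the score condition (there called Condition WL) and the structural Assumption \ref{ass: M} (parts (a)--(c)), and then reads off the rates from Lemmas \ref{Lemma:LassoMRateRaw}, \ref{Lemma:LassoMSparsity}, \ref{Lemma:PostLassoMRateRaw}; your preferred approach (b)---the multiplicative sandwich $|\hat f_{ui}^2-f_{ui}^2|\leq f_{ui}^2/2$ on the good event from Theorem \ref{Thm:RateEstimatedLassoLogistic}---is exactly what the paper does (Lemma \ref{ControlExtra}), and the induction on the loop counter for the loadings and the uniform sparse-eigenvalue control (Lemma \ref{lemma:FSE}) are also as you describe.
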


\section{Monte Carlo Simulations}\label{sec: simulations}\label{sec: monte carlo}

In this section we provide a simulation study to investigate the finite sample properties of the proposed estimators and the associated confidence regions. We report only the performance of the estimator based on the double selection procedure due to space constraints and note that it is very similar to the performance of the estimator based on score functions with near-orthogonality property. We will compare the proposed procedure with the traditional estimator that refits the model selected by the corresponding $\ell_1$-penalized M-estimator (naive post-selection estimator).

We consider a logistic regression model with functional response data where the response $Y_u = 1\{ y \leq u\}$ for $u\in \mathcal{U}$ a compact set. We specify two different designs: (1) a location model, $y = x'\beta_0 + \xi$, where $\xi$ is distributed as a logistic random variable, the first component of $x$ is the intercept and the other $p-1$ components are distributed as $N(0,\Sigma)$ with $\Sigma_{k,j}=|0.5|^{|k-j|}$; (2) a location-shift model, $y = \{ (x'\beta_0 + \xi ) / x'\vartheta_0 \}^3$, where $\xi$ is distributed as a logistic random variable, $x_j=|w_j|$ where $w$ is a $p$-vector distributed as $N(0,\Sigma)$ with $\Sigma_{k,j}=|0.5|^{|k-j|}$, and $\vartheta_0$ has non-negative components. Such specification implies that for each $u\in \mathcal{U}$
$$ \mbox{Design 1:} \ \ \theta_u = u(1,0,\ldots,0)'-\beta_0 \ \ \ \mbox{and} \ \ \ \mbox{Design 2:} \ \ \theta_u = u^{1/3}\vartheta_0 - \beta_0.$$
In our simulations we will consider $n=500$ and $p=2000$. For the location model (Design 1) we will consider two different choices for $\beta_0$: (i) $\beta_{0j}^{(i)}=2/j^2$ for $j=1,\ldots,p$, and (ii) $\beta_{0j}^{(ii)} = (1/2)/\{ j-3.5\}^2$ for $j>1$ with the intercept coefficient $\beta_{01}^{(ii)}=-10$. (These choices ensure  $\max_{j>1}|\beta_{0j}|=2$ and that $y$ is around zero in Design 2(ii).) We set $\vartheta_0=\frac{1}{8}(1,1,1,1,0,0,\ldots,0,0,1,1,1,1)'$. For Design 1 we have $\mathcal{U} = [1,2.5]$ and for Design 2 we have $\mathcal{U}=[-.5,.5]$. The results are based on 500 replications (the bootstrap procedure is performed 5000 times for each replication).

We report the (empirical) rejection frequencies for confidence regions with 95\% nominal coverage. That is, the fraction of simulations the confidence regions of a particular method did not cover the true value (thus 0.05 rejection frequency is the ideal performance). We report the rejection frequencies for the proposed estimator and the post-naive selection estimator.

Table 1 presents the performance of the methods when applied to construct a confidence interval for a single parameter ($\tilde p =1$ and $\UU$ is a singleton). Since the setting is not symmetric we investigate the performance for different components. Specifically, we consider $\{u\}\times \{j\}$ for $j=1,\ldots,5$. First consider the location model (Design 1). The difference between the performance of the naive estimator for Design 1(i) and 1(ii) highlights its fragile performance which is highly dependent on the unknown parameters. We can see from Table 1 that in Design 1(i) the Naive method achieve (pointwise) rejection frequencies up to 0.162 when the nominal level is 0.05. In Design 1(ii) it can be as high as 0.886. We also note that it is important to look at the performance of each component and avoid averaging across components (large $j$ components are essentially not in the model, indeed for $j>50$ we obtain rejection frequencies very close to 0.05 regardless of the model selection procedure). In contrast the proposed estimator exhibits a much more robust behavior. For Design 1(i) the rejection frequencies are between 0.040 and 0.062 while for Design 1(ii) the rejection frequencies of the proposed estimator are between 0.040 and 0.056.

{\tiny \begin{table}
\begin{tabular}{c|c|cccccc}
\hline
 \multicolumn{2}{c}{$p=2000, n=500$ } & \multicolumn{5}{c}{Rejection Frequencies  for $j \in \{1,\ldots,5\}$ } \\
\hline
Design  & Method  &  $j=1$ & $j=2$  &  $j=3$ & $j=4$  &  $j=5$ \\
\hline
{1}{(i)}  & Proposed    & 0.042 & 0.040 & 0.062 & 0.050 & 0.044   \\
          & Naive & 0.100 & 0.098 & 0.108 & 0.108 & 0.162  \\
\hline
{1}{(ii)} & Proposed     & 0.044  & 0.040 & 0.054 & 0.056 & 0.056  \\
          & Naive   & 0.038 & 0.030 & 0.070 & 0.886 & 0.698  \\
\hline
{2}{(i)}  & Proposed & 0.046 & 0.054 & 0.044 & 0.052 & 0.054 \\
          & Naive & 0.046 & 0.050 & 0.038 & 0.070 & 0.054  \\
\hline
{2}{(ii)} & Proposed & 0.092 & 0.074 & 0.034 & 0.088 & 0.082  \\
          & Naive & 0.034 & 0.972 & 0.182 & 0.312 & 0.916 \\
\hline
\end{tabular}\caption{We report the pointwise rejection frequencies of each method for (pointwise) confidence intervals for each $j\in\{1,\ldots,5\}$. For Design 1 we used $\UU=\{1\}$ and for Design 2 we used $\UU=\{.5\}$. The results are based on 500 replications. }\label{Tab:MC1}
\end{table}}

Table 2 presents the performance for simultaneous confidence bands of the form
$\{ [ \tilde \theta_{uj} - {\rm cv} \tilde\sigma_{uj}, \tilde \theta_{uj} + {\rm cv} \tilde\sigma_{uj}] $ for $u\in \UU\times[\tilde p]\}$ where $\tilde \theta_{uj}$ is a point estimate, $\tilde \sigma_{uj}$ is an estimate of the pointwise standard deviation, and ${\rm cv}$ is a critical value that accounts for the uniform estimation. For the point estimate we consider the proposed estimator and the post-naive selection estimator which have estimates of standard deviation. We consider two critical values: from the multiplier bootstrap (MB) procedure and the Bonferroni (BF) correction (which we expect to be conservative). For each of the four different designs (1(i), 1(ii), 2(i) and 2(ii) described above), we consider four different choices of $\UU\times[\pp]$. Table 2 displays rejection frequencies for confidence regions with 95\% nominal coverage (and again $.05$ would be the ideal performance). The simulation results confirms the differences between the performance of the methods and overall the proposed procedure is closer to the nominal value of $.05$ . The proposed estimator performed within a factor of two to the nominal value in 10 out of the 16 designs considered (and 13 out 16 within a factor of three). The post-naive selection estimator performed within a factor of two only in 3 out of the 16 designs  when using the multiplier bootstrap as critical value (7 out of 16 within a factor of three) and similarly with the Bonferroni correction as the critical value.

{\tiny \begin{table}
\begin{tabular}{l|c|c|c|cccc}
\hline
 \multicolumn{2}{c}{$p=2000, n=500$} & \multicolumn{3}{c}{Uniform over $\UU\times [\tilde p]$} \\
\hline
 Design & Method  &    $[1,2.5]\times \{1\}$ & $\{1\}\times [10] $ & $[1,2.5]\times [10]$ & $\{1\} \times[1000]$\\
\hline
          & Proposed         & 0.054  & 0.036 & 0.048 & 0.040 \\
{1}(i)  & Naive (MB)   & 0.126 & 0.136 & 0.172 & 0.032 \\
        & Naive (BF)   & 0.014 & 0.124 & 0.026 & 0.032 \\
\hline
          & Propose         & 0.270 & 0.036 & 0.032 & 0.142\\
{1}(ii)  & Naive (MB) & 0.014   & 0.802 & 0.934 & 0.404 \\
        & Naive (BF)   & 0.000 & 0.802 & 0.718 & 0.376\\
\hline
  Design & Method  &   $[-.5,.5]\times \{1\} $ & $\{.5\}\times [10]$ & $[-.5,.5]\times [10]$ & $\{.5\}\times[1000]$ \\
\hline
          & Proposed         & 0.364 & 0.038 & 0.052  & 0.062\\
{2}(i)  & Naive (MB)   & 0.116 & 0.040 & 0.022 & 0.048\\
        & Naive (BF)   & 0.018 & 0.038 & 0.000 & 0.046 \\
\hline
          & Proposed         & 0.140  & 0.090 & 0.408 & 0.084 \\
{2}(ii)  & Naive (MB)  & 0.002 & 0.946 & 0.996 & 0.362 \\
        & Naive (BF)   & 0.000 & 0.946 &  0.944  & 0.298 \\
\hline
\end{tabular}\caption{We report the rejection frequencies of each method for the (uniform) confidence bands for $\mathcal{U}\times[\pp]$. The proposed estimator computes the critical value based on the multiplier bootstrap procedure. For the naive  post-selection estimator we report the results for two choices of critical values, one choice based on the multiplier bootstrap (MB), and  another based on Bonferroni (BF) correction. The results are based on 500 replications. }\label{Tab:MC2}
\end{table}
}

\appendix

\section{Proofs for Section 2}

In this appendix, we use $C$ to denote a strictly positive constant that is independent of $n$ and $P\in\mathcal P_n$. The value of $C$ may change at each appearance. Also, the notation $a_n \lesssim b_n$ means that $a_n\leq C b_n$ for all $n$ and some $C$. The notation $a_n\gtrsim b_n$ means that $b_n\lesssim a_n$. Moreover, the notation $a_n = o(1)$ means that there exists a sequence $(b_n)_{n\geq 1}$ of positive numbers such that (i) $|a_n|\leq b_n$ for all $n$, (ii) $b_n$ is independent of $P\in\mathcal P_n$ for all $n$, and (iii) $b_n\to 0$ as $n\to\infty$. Finally, the notation $a_n = O_P(b_n)$ means that for all $\epsilon>0$, there exists $C$ such that $\Pr_P(a_n > C b_n)\leq 1-\epsilon$ for all $n$. Using this notation allows us to avoid repeating ``uniformly over $P\in\mathcal P_n$'' many times in the proofs of Theorem \ref{theorem:semiparametric} and Corollaries \ref{cor: general CLT} and \ref{theorem: general bs}. Throughout this appendix, we assume that $n\geq n_0$.

\subsection*{Proof of Theorem \ref{theorem:semiparametric}}
We split the proof into five steps.

{\bf Step 1.} (Preliminary Rate Result). We claim that  with probability $1- o(1)$, $\sup_{\uu \in \UU, j\in [\pp]}| \check \theta_{uj} - \theta_{uj}| \lesssim B_{1n} \tau_n.$
To show that, note that by definition of $\check\theta_{u j}$, we have for each $\uu \in \UU$ and $j\in[\pp]$,
$$
\Big| \En [\psi_{uj}(W, \check \theta_{u j}, \hat \eta_{uj})]\Big|
\leq
\inf_{\theta \in \Theta_{uj}}\Big| \En[ \psi_{u j}(W, \theta, \hat \eta_{uj})] \Big| + \epsilon_n,
$$ which implies
via the triangle inequality that uniformly over $u \in \mathcal{U}$ and $j\in[\pp]$, with probability $1-o(1)$,
\begin{equation}\label{eq: rate proof}
\Big | \left.  \Ep_P [\psi_{\uu j}(W, \theta, \eta_{uj} )]  \right|_{\theta=\check \theta_{u j}}\Big |\leq \epsilon_n + 2 I_1+ 2 I_2 \lesssim  B_{1 n}\tau_n, \ \ \mbox{where}
\end{equation}
\begin{align*}
I_1 & :=  \sup_{u \in \mathcal{U},j\in[\pp],\theta \in \Theta_{uj}} \Big | \En [\psi_{\uu j}(W, \theta, \hat \eta_{uj}) ]- \En[ \psi_{uj}(W, \theta, \eta_{uj} ) ]\Big | \lesssim  B_{1n}\tau_n , \\
I_2 & :=    \sup_{u \in \mathcal{U},j\in[\pp],\theta \in \Theta_{uj}}  \Big | \En[ \psi_{\uu j}(W, \theta, \eta_{uj})] - \Ep_P[ \psi_{\uu j}(W, \theta, \eta_{uj} ) ]\Big | \lesssim  \tau_n.
\end{align*}
and the bounds on $I_1$ and $I_2$ are derived in Step 2 (note also that $\epsilon_n = o(\tau_n)$ by construction of the estimator and Assumption \ref{ass: AS}(vi)).
 Since by Assumption \ref{ass: S1}(iv), $2^{-1}  | J_{\uu j}(\check \theta_{\uu j}- \theta_{\uu j})| \wedge c_0$ does not exceed the left-hand side of  (\ref{eq: rate proof}), $\inf_{u \in \mathcal{U}, j\in [\pp]}|J_{\uu j}|\gtrsim 1$, and by Assumption \ref{ass: AS}(vi), $B_{1 n}\tau_n = o(1)$, we conclude that
\begin{equation}\label{PreRate}
\sup_{u \in \mathcal{U}, j\in[\pp]}| \check \theta_{uj} - \theta_{uj}|  \lesssim \left(\inf_{u \in \mathcal{U}, j\in[\pp]} |J_{\uu j}|\right)^{-1}  B_{1 n} \tau_n \lesssim B_{1 n} \tau_n,
\end{equation}
with probability $1-o(1)$ yielding the claim of this step.

{\bf Step 2.} (Bounds on $I_1$ and $I_2$)  We claim that with probability $1- o(1)$,
$I_1 \lesssim  B_{1 n} \tau_n$ and $I_2  \lesssim   \tau_n$.
To show these relations, observe that with probability $1-o(1)$, we have $I_1 \leq 2I_{1a} + I_{1b}$ and $I_2 \leq I_{1a}$, where
\begin{align*}
I_{1a} & :=  \sup_{u \in \mathcal{U}, j\in [\pp],  \theta \in \Theta_{uj}, \eta \in \mT_{uj}}  \Big | \En [\psi_{\uu j}(W, \theta, \eta)] - \Ep_P[ \psi_{uj}(W, \theta, \eta )]  \Big |, \\
I_{1b} & :=    \sup_{u \in \mathcal{U}, j \in [\pp], \theta \in \Theta_{uj}, \eta \in \mT_{uj} }  \Big | \Ep_P [\psi_{u j}(W, \theta, \eta)] - \Ep_P[ \psi_{uj}(W, \theta, \eta_{uj} )]  \Big |.
\end{align*}
To bound $I_{1b}$, we employ Taylor's expansion:
\begin{align*}
I_{1b} & \leq   \sup_{ \uu \in \UU, j \in [\pp], \theta \in \Theta_{uj}, \eta \in \mT_{uj}, r \in [0,1) }   \partial_r  \Ep_P \Big [  \psi_{\uu j} ( W, \theta, \eta_{uj}+ r ( \eta - \eta_{uj}) )  \Big ]\\
& \leq B_{1n}  \sup_{\uu \in \UU, j\in [\pp], \eta \in \mT_{uj}} \| \eta - \eta_{uj}\|_e\leq B_{1 n}\tau_n,
\end{align*}
by  Assumptions \ref{ass: S1}(v) and \ref{ass: AS}(ii).

To bound $I_{1a}$, we apply the maximal inequality of Lemma \ref{lemma:CCK}
to the class $\mathcal{F}_1$ defined in Assumption \ref{ass: AS} to conclude that with probability $1-o(1)$,
\begin{equation}\label{eq: thm2.1-step2}
I_{1a}  \lesssim   n^{-1/2}  \Big (    \sqrt{ v_n \log a_n } + n^{-1/2+1/q} v_n K_n  \log a_n  \Big).
\end{equation}
Here we used: $\log \sup_{Q} N(\epsilon \| F_1 \|_{Q,2}, \mF_1,  \| \cdot \|_{Q,2}) \leq v_n \log (a_n/\epsilon)$ for all $0<\epsilon\leq 1$ with $\|F_1\|_{P, q} \leq K_n$ by Assumption \ref{ass: AS}(iv); $\sup_{f \in \mathcal{F}_1} \| f\|^2_{P,2} \leq C_0$ by Assumption \ref{ass: AS}(v); $a_n \geq n\vee K_n$ and $v_n\geq 1$ by the choice of $a_n$ and $v_n$. In turn, the right-hand side of (\ref{eq: thm2.1-step2}) is bounded from above by $O(\tau_n)$ by Assumption \ref{ass: AS}(vi) since $(v_n\log a_n/n)^{1/2}\lesssim \tau_n$ and
$$
n^{-1/2} n^{-1/2+1/q}v_nK_n\log a_n\lesssim n^{-1/2}\delta_n \lesssim n^{-1/2}\lesssim \tau_n.
$$
Combining presented bounds gives the claim of this step.

{\bf Step 3.} (Linearization) Here we prove the claim of the theorem. Fix $u\in\UU$ and $j\in[\pp]$. By definition of $\check\theta_{u j}$, we have
\begin{equation}\label{eq: thm 2.1 step 3 null}
\sqrt{n} \Big|  \En [\psi_{u j}(W, \check \theta_{uj}, \hat \eta_{uj} ) ]\Big| \leq \inf_{\theta \in \Theta_{uj}} \sqrt{n} \Big|  \En [\psi_{u j}(W, \theta, \hat \eta_{uj} ) ]\Big|+ \epsilon_n\sqrt n.
\end{equation}
Also, for any $\theta\in\Theta_{u j}$ and $\eta\in\mathcal T_{u j}$, we have
{\small
\begin{align}
&\sqrt n \En[\psi_{u j}(W,\theta,\eta)] = \sqrt n\En[\psi_{u j}(W,\theta_{u j},\eta_{u j})] - \mathbb G_n \psi_{u j}(W,\theta_{u j},\eta_{u j})\notag\\
&\qquad  - \sqrt n\Big(\Ep_P[\psi_{u j}(W,\theta_{u j},\eta_{u j})] - \Ep_P[\psi_{u j}(W,\theta,\eta)]\Big) + \mathbb G_n\psi_{u j}(W,\theta,\eta).\label{eq: thm 2.1 step 3 first}
\end{align}}\!Moreover, by Taylor's expansion of the function $r\mapsto \Ep_P[\psi_{u j}(W,\theta_{u j} + r(\theta - \theta_{u j}),\eta_{u j} + r(\eta - \eta_{u j}))]$,
{\small
\begin{align}
&\Ep_P[\psi_{u j}(W,\theta,\eta)] - \Ep_P[\psi_{u j}(W,\theta_{u j},\eta_{u j})]  = J_{u j}(\theta - \theta_{u j}) + \mathrm{D}_{u,j,0}[\eta - \eta_{u j}]\notag  \\
&\qquad \quad + \left.2^{-1}\partial_r^2 \Ep_P[W,\theta_{u j} + r(\theta - \theta_{u j}),\eta_{u j} + r(\eta - \eta_{u j})]\right|_{r = \bar r}\label{eq: thm 2.1 step 3 second}
\end{align}}\!for some $\bar r \in(0,1)$. Substituting this equality into \eqref{eq: thm 2.1 step 3 first}, taking $\theta = \check\theta_{u j}$ and $\eta = \hat\eta_{u j}$, and using \eqref{eq: thm 2.1 step 3 null} gives
{\small
\begin{align}
&\sqrt{n}\Big |  \En [\psi_{uj}(W, \theta_{uj}, \eta_{uj}) ]+ J_{uj}  (\check \theta_{u j} - \theta_{u j}) + \mathrm{D}_{u,j,0}[\hat \eta_{uj} -\eta_{uj} ]\Big | \notag \\
 &\qquad \leq  \epsilon_n \sqrt{n}  +   \inf_{\theta \in \Theta_{uj}} \sqrt{n} |  \En [\psi_{uj}(W, \theta, \hat \eta_{uj} )] | + |II_1(u,j)| + |II_2(u,j)|,\label{eq: theorem 2.1 second line}
\end{align}}\!where
\begin{align*}
 II_{1} (u,j) &:=     \sqrt{n} \sup_{r\in[0,1)}\left| \partial_{r}^2 \Ep_P \Big[ \psi_{uj}(W, \theta_{uj}+r(\theta-\theta_{uj}),  \eta_{uj} + r\{\eta - \eta_{uj} \}) \Big]\right|,\\
 II_{2} (u,j) &:= \Gn\Big(    \psi_{uj}(W, \theta, \eta)- \psi_{uj}(W, \theta_{uj}, \eta_{uj}) \Big)
\end{align*}
with $\theta=\check\theta_{uj}$ and $\eta=\hat\eta_{uj}$.
It will be shown in Step 4 that
\begin{equation}\label{eq: bound on II_1 and II_2}
\sup_{u\in\UU, j\in[\pp]}\Big( |II_1(u,j)| + |II_2(u,j)|\Big) = O_P(\delta_n).
\end{equation}
In addition, it will be shown in Step 5 that
\begin{equation}\label{eq: bound step 5 theorem 2.1}
\sup_{u \in \mathcal {U},j\in[\pp]}  \inf_{\theta \in \Theta_{uj}} \sqrt{n} |  \En [\psi_{uj}(W, \theta, \hat \eta_{uj} )] | = O_P(\delta_n).
\end{equation}
Moreover, $\epsilon_n \sqrt{n} = o(\delta_n)$ by construction of the estimator. Therefore, the expression in \eqref{eq: theorem 2.1 second line} is $O_P(\delta_n)$.
Also,
$
\sup_{u \in \mathcal {U}, j\in[\pp]}|\mathrm{D}_{u,j,0}[\hat \eta_{uj} -\eta_{uj} ]|=  O_P(\delta_n n^{-1/2})
$
by the near-orthogonality condition since $\hat{\eta}_{u j}\in \mT_{u j}$ for all $u\in\UU$ and $j\in[\pp]$ with probability $1-o(1)$ by Assumption \ref{ass: AS}(i). Therefore,
Assumption \ref{ass: S1}(iv) gives
$$
\sup_{u \in \mathcal{U}, j\in[\pp]}\Big | J_{uj}^{-1}  \sqrt{n}  \En [\psi_{uj}(W, \theta_{uj}, \eta_{uj}) ]+  \sqrt{n} (\check \theta_{uj} - \theta_{uj}) \Big | = O_P(\delta_n).
$$
The asserted claim now follows by dividing both parts of the display above by $\sigma_{u j}$ (under the supremum on the left-hand side) and noting that $\sigma_{u j}$ is bounded below from zero uniformly over $u\in\UU$ and $j\in[\pp]$ by Assumptions \ref{ass: AS}(iii) and \ref{ass: AS}(v).

{\bf Step 4.} (Bounds on $II_1(u,j)$ and $II_2(u,j)$). Here we prove \eqref{eq: bound on II_1 and II_2}.
First, with probability $1-o(1)$,
\begin{align*}
 \sup_{u \in \mathcal{U}, j \in [\pp]}|II_{1}(u,j)| & \leq   \sqrt{n}  B_{2n}  \sup_{u \in \mathcal{U}, j \in [\pp]} |\check\theta_{uj}-\theta_{uj}|^2 \vee \|\hat\eta_{uj} - \eta_{uj}\|_e^2\\
 &  \lesssim  \sqrt{n}  B_{1 n}^2 B_{2n}  \tau_n^2 \lesssim \delta_n,
\end{align*}
where the first inequality follows from Assumptions  \ref{ass: S1}(v) and \ref{ass: AS}(i), the second from Step 1 and Assumptions \ref{ass: AS}(ii) and \ref{ass: AS}(vi), and the third from Assumption \ref{ass: AS}(vi).

Second, we have with probability $1-o(1)$ that
$
\sup_{u \in \mathcal{U}, j \in [\pp]}|II_{2}(u,j)|  \lesssim  \sup_{f \in \mathcal{F}_2} | \Gn(f)|,
$
where
{\small
$$
\mathcal{F}_2 =  \Big \{ \psi_{uj}(\cdot, \theta, \eta) - \psi_{uj}(\cdot, \theta_{uj}, \eta_{uj})\colon
 u \in \mathcal{U}, j \in [\pp],
 \eta \in \mT_{uj}, |\theta-\theta_{uj}| \leq C B_{1 n} \tau_n  \Big \}
 $$}\!for sufficiently large constant $C$. To bound $\sup_{f \in \mathcal{F}_2} | \Gn(f)|$, we apply Lemma \ref{lemma:CCK}. Observe that
{\small
\begin{align*}
\sup_{f \in \mathcal{F}_2} \| f\|^2_{P,2} & \leq    \sup_{u \in \mathcal{U}, j \in [\pp], |\theta-\theta_{uj}|\leq C B_{1 n}\tau_n, \eta \in  \mT_{uj}}   \Ep_P \left[   ( \psi_{uj}(W,\theta,\eta) - \psi_{uj}(W, \theta_{uj},\eta_{uj}))^2  \right] \\
& \leq  \sup_{u \in \mathcal{U}, j \in [\pp],  |\theta-\theta_{uj}|\leq C B_{1 n}\tau_n, \eta \in  \mT_{uj}}   C_0 (|\theta-\theta_{uj}| \vee \|\eta-\eta_{uj}\|_e)^\omega  \lesssim ( B_{1 n} \tau_n)^{\omega},
\end{align*}}\!where we used Assumption \ref{ass: S1}(v) and  Assumption \ref{ass: AS}(ii). Also, observe that $( B_{1 n}\tau_n)^{\omega/2}\geq n^{-\omega/4}$ by Assumption \ref{ass: AS}(vi) since $ B_{1 n}\geq 1$. Therefore, an application of Lemma \ref{lemma:CCK} with an envelope $F_2 =  2F_1$ and $\sigma = (C B_{1 n}\tau_n)^{\omega/2}$ for sufficiently large constant $C$ gives with probability $1-o(1)$,
\begin{equation}\label{eq: thm2.1eqx}
\sup_{f \in \mathcal{F}_2} | \Gn(f)| \lesssim  ( B_{1 n} \tau_n)^{\omega/2} \sqrt{ v_n \log a_n}    +  n^{-1/2+1/q} v_n K_n \log a_n ,
\end{equation}
since $ \sup_{f \in \mF_2} |f| \leq 2 \sup_{f \in \mF_1} |f| \leq 2  F_1$ and $\|F_1\|_{P,q} \leq K_n$ by Assumption \ref{ass: AS}(iv) and
$$
\log \sup_Q   N( \epsilon \|F_2\|_{Q,2},  \mathcal{F}_{2}, \|\cdot\|_{Q,2}) \lesssim v_n \log (a_n/\epsilon), \quad \text{for all }0<\epsilon\leq 1
$$
by Lemma \ref{lemma: andrews} because $\mF_2 \subset \mF_1 - \mF_1$ for $\mF_1$ defined in Assumption \ref{ass: AS}(iv). The claim of this step now follows from an application of Assumption \ref{ass: AS}(vi) to bound the right-hand side of (\ref{eq: thm2.1eqx}).

{\bf Step 5.} Here we prove \eqref{eq: bound step 5 theorem 2.1}. For all $u\in\UU$ and $j\in[\pp]$, let $\bar  \theta_{uj} = \theta_{uj} - J^{-1}_{uj} \En[ \psi_{uj}(W, \theta_{uj}, \eta_{uj})]$. Then $\sup_{u \in \mathcal{U}, j\in[\pp]} | \bar  \theta_{u j} - \theta_{u j} | =O_P( \mS_n/\sqrt{n})$ since $\mS_n=\Ep_P[\sup_{u\in\UU,j\in[\pp]}|\sqrt{n}\En[\psi_{uj}(W_{u j},\theta_{uj},\eta_{uj})]|]$ and $J_{u j}$ is bounded in absolute value below from zero uniformly over $u\in\UU$ and $j\in[\pp]$ by Assumption \ref{ass: S1}(iv). Therefore, $\bar \theta_{u j} \in\Theta_{u j}$ for all $u\in\UU$ and $j\in[\pp]$ with probability $1-o(1)$ by Assumption \ref{ass: S1}(i). Hence, with the same probability, for all $u\in\UU$ and $j\in[\pp]$,
$$
\inf_{\theta \in \Theta_{uj}} \sqrt{n} \Big|  \En[ \psi_{uj}(W, \theta, \hat \eta_{uj} )] \Big| \leq \sqrt{n} \Big|  \En [\psi_{uj}(W, \bar  \theta_{uj}, \hat \eta_{uj} )] \Big|,
$$
and so it suffices to show that
\begin{equation}\label{eq: thm 2.1 step 5 first}
\sup_{u \in \mathcal {U},j\in[\pp]} \sqrt{n} \Big|  \En [\psi_{uj}(W, \bar\theta_{u j}, \hat \eta_{uj} )] \Big| = O_P(\delta_n).
\end{equation}
To prove \eqref{eq: thm 2.1 step 5 first}, for given $u\in\UU$ and $j\in[\pp]$, substitute $\theta = \bar\theta_{u j}$ and $\eta = \widehat\eta_{u j}$ into \eqref{eq: thm 2.1 step 3 first} and use Taylor's expansion in \eqref{eq: thm 2.1 step 3 second}. This gives
\begin{align*}
&\sqrt{n} \Big|  \En [\psi_{uj}(W, \bar  \theta_{uj}, \hat \eta_{uj} )] \Big|  \leq |\widetilde{II}_1(u,j)| + |\widetilde{II}_{2}(u,j)|\\
&\qquad  + \sqrt{n} \Big|  \En [\psi_{uj}(W, \theta_{uj}, \eta_{uj} ) ]+ J_{uj} (\bar  \theta_{uj} - \theta_{uj}) + {\mathrm{D}_{u,j,0}[\hat \eta_{uj} - \eta_{uj}]} \Big|,
\end{align*}
where $\widetilde{II}_1(u,j)$ and $\widetilde{II}_2(u,j)$ are defined as $II_1(u,j)$ and $II_2(u,j)$ in Step 3 but with $\check\theta_{u j}$ replaced by $\bar\theta_{u j}$. Then, given that $\sup_{u \in \mathcal{U}, j\in[\pp]} | \bar  \theta_{u j} - \theta_{u j} | \lesssim \mS_n\log n/\sqrt n$ with probability $1-o(1)$, the argument in Step 4 shows that
$$
\sup_{u\in\UU,j\in[\pp]} \Big( |\widetilde{II}_1(u,j)| + |\widetilde{II}_2(u,j)|\Big) = O_P(\delta_n).
$$
In addition,
$
 \En [\psi_{uj}(W, \theta_{uj}, \eta_{uj} ) ]+ J_{uj} (\bar  \theta_{uj} - \theta_{uj}) = 0
$
by the definition of $\bar  \theta_{uj}$ and  $\sup_{u\in\UU, j\in[\pp]}|\mathrm{D}_{u,j,0}[\hat \eta_{uj} - \eta_{uj}]| = O_P(\delta_n n^{-1/2})$ by the near-orthogonality condition. Combining these bounds gives \eqref{eq: thm 2.1 step 5 first}, so that the claim of this step follows, and completes the proof of the theorem.
\qed

\section{Remaining Proofs for Section 2}
See Supplementary Material.

\section{Proofs for Sections 3 and 4}

See Supplementary Material.

\bibliographystyle{plain}

\pagebreak

\begin{center}{Supplementary Material}
\end{center}

\section{Notation}\label{subsec:notation}

Throughout the paper, the symbols $\Pr$ and $\Ep$ denote probability and expectation operators with respect to a generic probability measure. If we need to signify the dependence on a probability measure  $P$, we use $P$ as a subscript in $\Pr_P$ and $\Ep_P$. Note also that we use capital letters such as $W$ to denote random elements and use the corresponding lower case letters such as $w$ to denote fixed values that these random elements can take.  For a positive integer $k$, $[k]$ denotes the set $\{1,\ldots, k\}.$

We denote by $\mathbb{P}_{n}$ the (random) empirical probability measure that assigns probability $n^{-1}$ to each $W_{i} \in (W_{i})_{i=1}^n$. $\En$ denotes the expectation with respect to the empirical measure, and
$\mathbb G_n = \mathbb{G}_{n,P}$ denotes the empirical process $\sqrt{n}(\En - \Ep_P)$, that is,
\begin{align*}
&\mathbb{G}_{n,P}(f) = \mathbb{G}_{n,P}(f(W))  = n^{
-1/2} \sum_{i=1}^n \{f(W_i) - \Ep_P[f(W)]\},  \\
&\Ep_P [f (W)] := \int f(w) dP(w),
\end{align*}
indexed by a class of measurable functions $\mathcal{F}\colon \mathcal{W} \to \mathbb{R}$; see \cite[chap. 2.3]{vdV-W}. In what follows, we use $\|\cdot\|_{P,q}$ to denote the $L^q(P)$ norm; for example, we use $\|f(W)\|_{P,q} = (\int |f(w)|^q d P(w))^{1/q}$ and $\| f(W)\|_{\Pn,q} = (n^{-1} \sum_{i=1}^n |f(W_i)|^q)^{1/q}$. For a vector $v = (v_1,\ldots,v_p)'\in \mathbb{R}^p$, $\|v\|_0$ denotes the $\ell_0$-``norm" of $v$, that is, the number of non-zero components  of $v$,  $\|v\|_1$ denotes the $\ell_1$-norm of $v$, that is, $\|v\|_1 = |v_1| + \cdots + |v_p| $,  and $\|v\|$ denotes the Euclidean norm of $v$, that is, $\|v\| = \sqrt{v'v}$.

We say that a class of functions $ \mathcal{F}= \{f(\cdot,t)\colon t \in T\}$, where $f\colon \mathcal{W} \times T \to \Bbb{R}$, is suitably measurable if it is an image admissible Suslin class, as defined in \cite{Dudley99}, p 186. In particular, $\mathcal{F}$ is suitably measurable if $f\colon \mathcal{W}\times T \to \Bbb{R}$ is measurable and $T$ is a Polish space equipped with its Borel $\sigma$-field, see \cite{Dudley99}, p 186.

\section{Empirical Application: US Presidential Approval Ratings}

In this section we illustrate the applicability of the tools proposed in this work with data on US presidential approval ratings used in \cite{berlemann2014unraveling}. There several economic and political factors that impact presidential approval ratings. In this illustration, we are interested on the impact of unemployment rates and on the impact of time in office on the approval rate of a sitting president. However, the impact of such factors might not be homogeneous and in fact depend on current ratings. For example, a sitting president is likely to have a fraction of voters who would support him regardless of economic factors. Thus, a low unemployment rate might not have an effect when approval ratings are low and have a significant effect when approval ratings are high. This would imply a different effect on different parts of the conditional distribution of the approval rating.

To study the distributional effect of these factors we use a logistic regression model with functional data as described in Section \ref{Sec:Application}. Letting $Y$ denote the approval rating, we define $Y_u = 1\{ Y \leq u \}$ to be the binary variable that indicates if the approval rating is below the threshold $u\in \mathcal{U}=[.45,.65]$. For each level of approval rating $u\in \mathcal{U}$ we estimate the model
$$ \Ep[ Y_u \mid D_{unemp}, D_{time}, X ] = \Lambda( \theta_{u,unemp}D_{unemp} + \theta_{u,time}D_{time}+X'\beta_u) + r_u $$
where $D_{unemp}$ denotes the unemployment rate, $D_{time}$ the number of months the president has been in office, $r_u$ a (small) approximation error, and $X$ denotes several additional control variables. In addition to the linear terms for the variables\footnote{Those include dummy variables for each president, Watergate scandal, causalities in different wars, political shocks, and other variables, see  \cite{berlemann2014unraveling} for a complete description.} used in \cite{berlemann2014unraveling}, we also consider interactions among controls. Therefore, for each $u\in \mathcal{U}$ we have a generalized linear model using logistic link function with 160 variables and 603 observations.

We construct simultaneous confidence bands for both coefficients ($\tilde p=2$) uniformly over $u\in\mathcal{U}$. Although $p<n$ for every $u\in\mathcal{U}$, the full model (applying logistic regression with all regressors) led to numerical instabilities and other numerical failures. We proceed to construct (asymptotically) valid confidence regions based on the double selection procedure for functional logistic regression.

Figure \ref{Fig:EmpFirst} displays the estimation results. Specifically, the figure displays point estimates of each coefficient for every $u\in\mathcal{U}$ (solid line), pointwise confidence intervals (dotted line), and uniform confidence bands (dot-dash lines). Point estimates account for model selection mistakes and are computed based the double selection procedure. For $95\%$ coverage, the pointwise critical value is taken to be $\Phi^{-1}(.975)\approx 1.96$ from the normal approximation and the critical value for uniform confidence bands (uniformly over both coefficients and over $u\in \mathcal{U}$) was calculated to be $3.003$ based on 5000 repetitions of the multiplier bootstrap.

At 95\% confidence level, the uniform confidence band rule out ``no effect" over $\UU$ for both variables. Indeed, the straight line at $0$ is not contained in the uniform confidence bands for either variable. Next consider the process of the unemployment coefficient. The analysis suggests that the unemployment rate has an overall negative effect on the approval rate of a sitting president (as the coefficient is positive increasing the probability to be below a threshold). Regarding time-in-office, the effect also seems to be predominantly negative. However, the impact is not homogeneous across $u\in\UU$. Indeed, the lowest value of the upper confidence band (0.007, $u=0.584$) is smaller than the largest value of the lower confidence band ($0.013$, $u=0.647$), see the circles in the plot of the process of the time-in-office coefficient. The impact seems to be greater for the lower and higher values of $u \in \UU$ while the effect of seems negligible in the range of $u\in[.575,.625]$. In particular, at 95\% level, no effect is ruled out for large values  of $u$.

\begin{figure}
\includegraphics[width=0.49\textwidth]{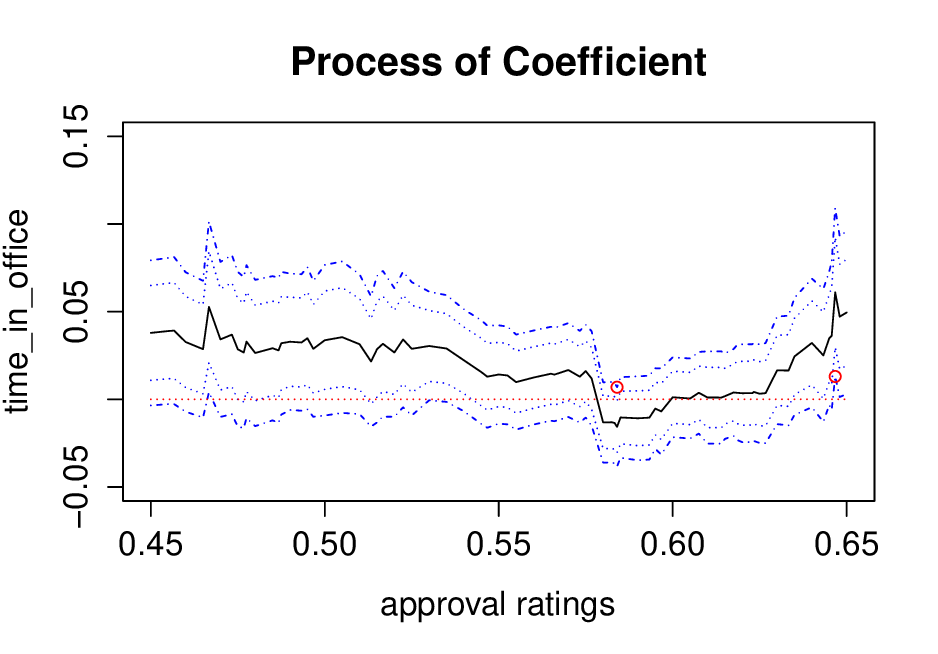}
\includegraphics[width=0.49\textwidth]{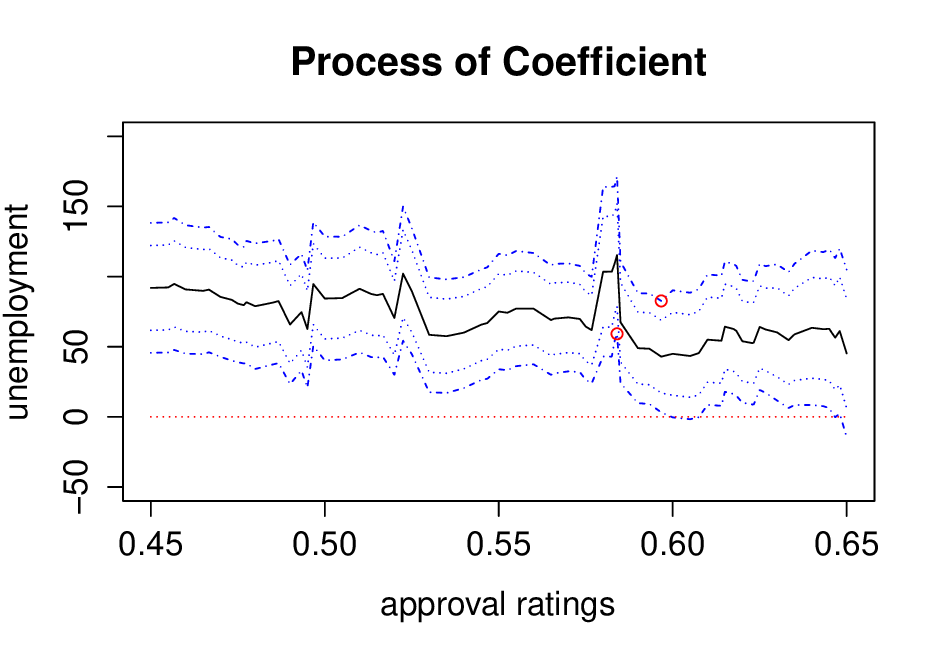}
\caption{\footnotesize The panels display pointwise (dotted) and uniform (dotdash) confidence bands for the unemployment coefficient and the time-in-office coefficient. These confidence regions are set to have 95\% coverage and were constructed based on the double selection algorithm. The critical value for the uniform confidence bands was set to $2.93$ based on the multiplier bootstrap procedure with 5000 repetitions. For both coefficients, the lowest value of the upper confidence band is smaller than the largest value of the lower confidence band.
}\label{Fig:EmpFirst}
\end{figure}


\section{Bound on $\mS_n$ via Assumption 2.3}\label{sec:BoundSn}

By definition of $\bar\psi_{u j}(\cdot)$ in Theorem \ref{theorem:semiparametric},
$$
\psi_{u j}(\cdot,\theta_{u j},\eta_{u j}) = \bar\psi_{u j}(\cdot)\sqrt{\Ep_P[\psi_{u j}^2(W,\theta_{u j},\eta_{u j})]},\quad u\in\mathcal U, j\in[\pp].
$$
Hence, by Assumption \ref{ass: AS}(v),
\begin{align*}
\mS_n
& \leq C_0 \Ep_P\Big[ \sup_{u\in\mathcal U, j\in[\pp]}|\sqrt n\En[\bar\psi_{u j}(W)]| \Big] = C_0\Ep_P\Big[ \sup_{f\in\mathcal F_0}|\mathbb G_n f| \Big]\\
&\leq C\Big( \sqrt{\varrho_{n}\log(A_n L_n)} + \frac{\varrho_n L_n}{n^{1/2 - 1/q}}\log(A_n L_n) \Big)
\end{align*}
for some constant $C$ depending on $C_0$ only, where the second line follows from Assumption \ref{ass: OSR} and Lemma \ref{lemma:CCK}.

\section{Remaining Proofs for Section 2}

\subsection*{Proof of Corollary \ref{cor: general CLT}}
To prove the asserted claim, we will apply Lemma 2.4 in \cite{chernozhukov2012gaussian}. Denote
$$
Z_n = \sup_{u\in\mathcal{U},j\in[\tilde{p}]} \Big|n^{1/2} \sigma_{u j}^{-1}(\check{\theta}_{u j} - \theta_{u j})\Big|.
$$
Under our assumptions, $L_n^{2/7}\varrho_n\log A_n=o(n^{1/7})$, and so given that $\varrho_n \geq 1$ and $A_n\geq n$, it follows that $\log L_n \lesssim \log n$. Hence, since $ \Ep_P[\bar{\psi}_{u j}^2(W)] = 1$, Assumption \ref{ass: OSR}(i) and Corollary 2.2.8 in \cite{vdV-W} imply that
\begin{equation}\label{eq: cor2.1-0}
\Ep_P\left[\sup_{u\in\mathcal{U},j\in[\tilde{p}]}|\mathcal{N}_{u j}|\right] \lesssim \sqrt{\varrho_n \log (A_n L_n)} \lesssim \sqrt{\varrho_n \log A_n}.
\end{equation}
Further, Theorem \ref{theorem:semiparametric} shows that
\begin{equation}\label{eq: cor2.1-1}
\left|Z_n - \sup_{u\in\mathcal{U},j\in[\tilde{p}]}|\mathbb{G}_n \bar{\psi}_{u j}|\right| = O_P(\delta_n),
\end{equation}
and Theorem 2.1 in \cite{chernozhukov2015noncenteredprocesses}, together with Assumptions \ref{ass: OSR}(i,ii), shows that one can construct a version $\widetilde{Z}_n$ of $\sup_{u\in\mathcal{U},j\in[\tilde{p}]}|\mathcal{N}_{u j}|$ such that
{\small
\begin{equation}\label{eq: cor2.1-2}
\left|\sup_{u\in\mathcal{U},j\in[\tilde{p}]}|\mathbb{G}_n \bar{\psi}_{u j}| - \widetilde{Z}_n  \right| = O_P\left(\frac{L_n \varrho_n \log A_n }{n^{1/2 - 1/q}} + \frac{L_n^{1/3}(\varrho_n \log A_n)^{2/3}}{n^{1/6}}\right).
\end{equation}}\!Combining (\ref{eq: cor2.1-1}) and (\ref{eq: cor2.1-2}) gives
\begin{equation}\label{eq: cor2.1-3}
|Z_n - \widetilde{Z}_n | = O_P\left(\delta_n + \frac{L_n \varrho_n \log A_n}{n^{1/2 - 1/q}} + \frac{L_n^{1/3}(\varrho_n \log A_n)^{2/3}}{n^{1/6}}\right).
\end{equation}
Therefore, it follows from Lemma 2.4 in \cite{chernozhukov2012gaussian} that (\ref{eq: cor2.1-0}) and (\ref{eq: cor2.1-3}) imply
\begin{equation}\label{eq: cor2.1-4}
\sup_{t\in\mathbb{R}}\Big|\Pr_P(Z_n \leq t) - \Pr_P(\widetilde{Z}_n \leq t) \Big| = o(1)
\end{equation}
under our growth conditions $\delta_n^2  \rho_n\log A_n = o(1)$, $L_n^{2/7} \rho_n \log A_n =o(n^{1/7})$, and $L_n^{2/3} \rho_n\log A_n =o(n^{1/3-2/(3q)})$; note that formally their Lemma 2.4 requires $Z_n$ to be the supremum of an empirical process but this requirement is not used in the proof. The asserted claim now follows by substituting the definitions of $Z_n$ and $\widetilde{Z}_n$.
\qed

\subsection*{Proof of Corollary \ref{theorem: general bs}}
Denote
$
\widetilde{Z}_n^* = \sup_{u\in\mathcal{U},j\in[\tilde{p}]} |\hat{\mathcal{G}}_{u j}|.
$
For all $\vartheta\in(0,1)$, let $c^0_\vartheta$ be the $(1-\vartheta)$ quantile of $\sup_{u\in\mathcal{U},j\in[\tilde{p}]} |\mathcal{N}_{u j}|$.
We proceed in several steps.

{\bf Step 1.} Here we show that $c^0_\vartheta$ satisfies the bound
$$
c^0_\vartheta \leq \Ep_P\left[\sup_{u\in\mathcal{U},j\in[\tilde{p}]} |\mathcal{N}_{u j}|\right] + \sqrt{2\log(1/\vartheta)}
$$
for all $\vartheta\in(0,1)$. Indeed, recall that $\Ep_P[\mathcal{N}_{u j}^2] = 1$ for all $u\in\mathcal{U}$ and $j\in[\tilde{p}]$. Therefore, this bound follows from Borell's inequality; see Proposition A.2.1 in \cite{vdV-W}.

{\bf Step 2.} Here we show that for any $\vartheta\in(0,1)$ and $\beta\in(0,\vartheta)$,
$$
c^0_{\vartheta - \beta} - c^0_\vartheta \geq c\beta \Big. \Big/ \Ep_P\left[\sup_{u\in\mathcal{U},j\in[\tilde{p}]} |\mathcal{N}_{u j}|\right]
$$
for some absolute constant $c>0$. Indeed, given that $\Ep_P[\mathcal{N}_{u j}^2] = 1$ for all $u\in\mathcal{U}$ and $j\in[\tilde{p}]$, this bound follows from Corollary 2.1 in \cite{chernozhukov2014honestbands}.

{\bf Step 3.} Here we show that
\begin{equation}\label{eq: cor2.2-1}
\left| \widetilde{Z}_n^* - \sup_{u\in\mathcal{U},j\in[\tilde{p}]}\Big|\frac{1}{\sqrt{n}}\sum_{i=1}^n \xi_i\bar{\psi}_{u j}(W_i)\Big| \right| = O_P\left( \bar{\delta}_n  \sqrt{\bar{\varrho}_n\log\bar{A}_n} \right).
\end{equation}
Indeed, the left-hand side of (\ref{eq: cor2.2-1}) is bounded from above by
{\small
$$
\sup_{u\in\mathcal{U},j\in[\tilde{p}]}\left| \widehat{\mathcal{G}}_{u j} - \frac{1}{\sqrt{n}}\sum_{i=1}^n \xi_i \bar{\psi}_{u j}(W_i) \right| = \sup_{u\in\mathcal{U},j\in[\tilde{p}]}\left| \frac{1}{\sqrt{n}}\sum_{i=1}^n \xi_i (\hat{\psi}_{u j}(W_i) - \bar{\psi}_{u j}(W_i)) \right|.
$$}\!Conditional on $(W_i)_{i=1}^n$, $n^{-1/2}\sum_{i=1}^n \xi_i (\hat{\psi}_{u j}(W_i) - \bar{\psi}_{u j}(W_i))$ is zero-mean Gaussian with variance $\En[(\hat{\psi}_{u j}(W_i) - \bar{\psi}_{u j}(W_i))^2]\leq \bar{\delta}_n^2$ with probability at least $1-\Delta_n$ by Assumption \ref{ass: OSR}(iii). Thus, with the same probability,
$$
\Ep_P\left[\sup_{u\in\UU, j\in[\pp]}\left| \frac{1}{\sqrt{n}}\sum_{i=1}^n \xi_i (\hat{\psi}_{u j}(W_i) - \bar{\psi}_{u j}(W_i)) \right|\mid (W_i)_{i=1}^n \right] \lesssim  \bar{\delta}_n  \sqrt{\bar{\varrho}_n\log \bar{A}_n}
$$
by Assumption \ref{ass: OSR}(iii) and Corollary 2.2.8 in \cite{vdV-W}. Since $\Delta_n \to 0$, (\ref{eq: cor2.2-1}) follows.

{\bf Step 4.} Here we construct a version $\widetilde{Z}_n$ of $\sup_{u\in\mathcal{U},j\in[\tilde{p}]}|\mathcal{N}_{u j}|$ such that
{\small
$$
\left|\sup_{u\in\mathcal{U},j\in[\tilde{p}]} \Big| \frac{1}{\sqrt{n}} \sum_{i=1}^n \xi_i \bar{\psi}_{u j}(W_i)
\Big| - \widetilde{Z}_n  \right| = O_P\left(\frac{L_n \varrho_n \log A_n}{n^{1/2 - 1/q}} + \frac{L_n^{1/2}(\varrho_n \log A_n )^{3/4}}{n^{1/4}}\right).
$$}\!Indeed, as was discussed in the proof of Corollary \ref{cor: general CLT}, we have $\log L_n  \lesssim \log n$ and $\Ep_P[\bar{\psi}_{u j}^2(W_i)] = 1$. Therefore, the claim follows from Theorem 2.2 in \cite{chernozhukov2015noncenteredprocesses} combined with Assumption \ref{ass: OSR}(i,ii).

{\bf Step 5.} Here we show that there exists a sequence of positive constants $(\vartheta_n)_{n\geq 1}$ such that $\vartheta_n \to 0$ and $
\Pr_P\Big(c_\alpha(1+\varepsilon_n) > c^0_{\alpha - \vartheta_n}\Big) \to 0.
$
Indeed, Steps 3 and 4 imply that
$
| \widetilde{Z}_n^* - \widetilde{Z}_n  | = O_P(r_n)
$
where
$$
r_n \lesssim  \bar{\delta}_n  \sqrt{\bar{\varrho}_n \log \bar{A}} + \frac{L_n \varrho_n \log  A_n}{n^{1/2 - 1/q}} + \frac{L_n^{1/2}(\varrho_n \log A_n )^{3/4}}{n^{1/4}}.
$$
Also, as in the proof of Corollary \ref{cor: general CLT}, $\Ep_P[\widetilde{Z}_n] \lesssim (\varrho_n\log A_n)^{1/2}$. Hence, $r_n \Ep_P[\widetilde{Z}_n] = o(1)$ under our conditions, and so there exists a sequence of positive constants $(\chi_n)_{n\geq 1}$ such that $\chi_n \to \infty$ but $\chi_n r_n \Ep_P[\widetilde{Z}_n] =o(1)$. Further, let
$
\beta_n =\{ \Pr_P( | \widetilde{Z}_n^* - \widetilde{Z}_n | > \chi_n r_n)\}^{1/2}.
$
Observe that $\beta_n = o(1)$ and
$$
\Pr_P\bigg(\Pr_P\Big( | \widetilde{Z}_n^* - \widetilde{Z}_n | > \chi_n r_n \mid (W_i)_{i=1}^n\Big) > \beta_n \bigg) \leq \beta_n.
$$
The last display implies that with probability at least $1-\beta_n$,
$
c_\alpha \leq c^0_{\alpha - \beta_n} + \chi_n r_n.
$
Hence, with the same probability, using the bounds in Steps 1 and 2, we obtain for some sequence of positive constants $(\vartheta_n)_{n\geq 1}$ such that $\vartheta_n = o(1)$,
$$
c_\alpha(1+\varepsilon_n)\leq (c^0_{\alpha - \beta_n} + \chi_n r_n)(1+\varepsilon_n)\leq c^0_{\alpha -\vartheta_n}
$$
since $\chi_n r_n \Ep_P[\widetilde{Z}_n] = o(1)$ and $\varepsilon_n(\Ep_P[\widetilde{Z}_n])^2 = o(1)$ by assumption. The claim of this step follows.

{\bf Step 6.} Here we complete the proof. We have
\begin{align*}
 &\Pr_P\Big( \sup_{u\in\mathcal{U}, j\in[\tilde{p}]} |\sqrt{n} \hat{\sigma}_{u j} ^{-1} (\check{\theta}_{u j} - \theta_{u j})| \leq c_\alpha \Big)\\
 &\qquad = \Pr_P\Big( |\sqrt{n} \sigma_{u j} ^{-1} (\check{\theta}_{u j} - \theta_{u j})| \leq c_\alpha \hat{\sigma}_{u j} / \sigma_{u j} ,\ \forall u\in\UU,j\in[\pp]\Big)\\
& \qquad \leq \Pr_P\Big( \sup_{u\in\mathcal{U}, j\in[\tilde{p}]} |\sqrt{n} \sigma_{u j} ^{-1} (\check{\theta}_{u j} - \theta_{u j})| \leq c_\alpha (1 + \varepsilon_n) \Big) + o(1)\\
& \qquad \leq \Pr_P\Big( \sup_{u\in\mathcal{U}, j\in[\tilde{p}]} |\sqrt{n} \sigma_{u j} ^{-1} (\check{\theta}_{u j} - \theta_{u j})| \leq c^0_{\alpha  - \vartheta_n}  \Big) + o(1)\\
& \qquad = 1- \alpha + \vartheta_n + o(1) = 1-\alpha + o(1),
\end{align*}
where the third line follows by Assumption \ref{ass: variance}, the fourth by Step 5, and the fifth by Corollary \ref{cor: general CLT}. Similar arguments also give the same bound from the other side. Therefore,
\begin{equation}\label{eq: cor2.2-2}
\Pr_P\left( \sup_{u\in\mathcal{U}, j\in[\tilde{p}]} |\sqrt{n} \hat{\sigma}_{u j} ^{-1} (\check{\theta}_{u j} - \theta_{u j})| \leq c_\alpha \right) = 1-\alpha + o(1).
\end{equation}
This completes the proof.
\qed

\section{Proofs for Section 3}
In this appendix, we use $c$ and $C$ to denote strictly positive constants that depend only on $c_1$ and $C_1$ (but do not depend on $n$, $u$, $j$, or $P\in\mathcal P_n$). The values of $c$ and $C$ may change at each appearance. Also, the notation $a_n \lesssim b_n$ means that $a_n\leq C b_n$ for all $n$ and some $C$. The notation $a_n\gtrsim b_n$ means that $b_n\lesssim a_n$. Moreover, the notation $a_n = o(1)$ means that there exists a sequence $(b_n)_{n\geq 1}$ of positive numbers such that (i) $|a_n|\leq b_n$ for all $n$, (ii) $b_n$ is independent of $P\in\mathcal P_n$ for all $n$, and (iii) $b_n\to 0$ as $n\to\infty$. Finally, the notation $a_n = o_P(b_n)$ means that for any $C$, we have $\Pr_P(a_n > C b_n) = o(1)$. Using this notation allows us to avoid repeating ``uniformly over $P\in\mathcal P_n$'' and ``uniformly over $u\in\UU$ and $j\in[\pp]$'' many times in the proofs of Theorem \ref{theorem:inferenceAlg1} and Corollaries \ref{cor: logistic clt} -- \ref{cor: simple example}. 

\subsection*{Proof of Theorem \ref{theorem:inferenceAlg1}}

Observe that for all $u\in\UU$ and $j\in[\pp]$, we have $\Ep_P[|Z_u^j|^2]\lesssim 1$ by Assumptions \ref{ass: parameters} and \ref{ass: covariates}. We use this fact several times in the proof without further notice.

For $u\in\UU$ and $j\in[\pp]$, define
{\small
\begin{align*}
&T_{u j} = \Big\{\eta = (\eta^{(1)},\eta^{(2)},\eta^{(3)})\colon \eta^{(1)}\in\ell^\infty(\mathbb{R}^{\pp+p}), \eta^{(2)}\in\mathbb{R}^{\pp -1 + p}, \eta^{(3)} \in \mathbb{R}^{\pp - 1 + p} \Big\},
\end{align*}}
so that $\eta_{u j} = (r_u, \beta_u^j, \gamma_u^j) \in T_{u j}$, and $T_{u j}$ is convex. Endow $T_{u j}$ with a norm $\|\cdot\|_e$ defined by
$$
\|\eta\|_e = \sqrt{\Ep_P[\eta^{(1)}(D,X)^2]} \vee \|\eta^{(2)}\| \vee \|\eta^{(3)}\|,\quad \eta = (\eta^{(1)},\eta^{(2)},\eta^{(3)}) \in T_{u j}.
$$
Further, recall that $a_n = p\vee \pp\vee n$ and define $\tau_n=C(s_n\log a_n/n)^{1/2}$ and
{\small
\begin{multline*}
\mathcal{T}_{u j} = \{\eta_{u j}\} \cup \Big\{\eta =  (\eta^{(1)},\eta^{(2)},\eta^{(3)}) \in T_{u j}\colon \eta^{(1)} = \mathcal{O}, \|\eta^{(2)}\|_0 \vee \|\eta^{(3)}\|_0\leq Cs_n,\\
 \|\eta^{(2)} - \beta_{u}^j\| \vee \|\eta^{(3)} - \gamma_u^j\| \leq \tau_n, \|\eta^{(3)} - \gamma_u^j\|_1\leq C\sqrt{s_n}\tau_n\Big\}
\end{multline*}}
for sufficiently large $C$.

First, we verify Assumption \ref{ass: S1}(i). To bound $\mS_n$, below we establish the following inequality: 
\begin{equation}\label{eq: gamma-lipshitz}
\|\gamma_{u_2}^j - \gamma_{u_1}^j\|_1\lesssim \sqrt{p+\pp}|u_2 - u_1|.
\end{equation}
Recall that
$$
f_u^2=f_u^2(D,X) = \text{Var}_P(Y_u\mid D,X)=\Ep_P[Y_u\mid D,X](1-\Ep_P[Y_u\mid D,X]),
$$
and so
\begin{equation}\label{eq: f-lipshitz}
|f_{u_2}^2 - f_{u_1}^2|\leq \Big|\Ep_P[Y_{u_2}\mid D,X] - \Ep_P[Y_{u_1}\mid D,X]\Big|\lesssim |u_2 - u_1|
\end{equation}
by Assumption \ref{ass: density}. In addition,
\begin{align}
&\Ep_P\Big[f_{u_2}^2\Big(X^j(\gamma_{u_1}^j-\gamma_{u_2}^j)\Big)^2\Big]\notag\\
& \qquad =\Ep_P\Big[f_{u_2}^2\Big(X^j(\gamma_{u_1}^j - \gamma_{u_2}^j)\Big)\Big(D_j-X^j\gamma_{u_2}^j-(D_j-X^j\gamma_{u_1}^j)\Big)\Big]\notag\\
& \qquad = -\Ep_P\Big[f_{u_2}^2\Big(X^j(\gamma_{u_1}^j - \gamma_{u_2}^j)\Big)(D_j-X^j\gamma_{u_1}^j)\Big] \notag\\
& \qquad = -\Ep_P\Big[(f_{u_2}^2-f_{u_1}^2)\Big(X^j(\gamma_{u_1}^j - \gamma_{u_2}^j)\Big)(D_j-X^j\gamma_{u_1}^j)\Big]\notag \\
& \qquad =-\Ep_P\Big[(f_{u_2}^2-f_{u_1}^2)\Big(X^j(\gamma_{u_1}^j - \gamma_{u_2}^j)\Big)Z_{u_1}^j\Big],\label{eq: gamma-lipshitz-derivation}
\end{align}
where the first line follows from adding and subtracting $D_j$, the second from the equality $\Ep_P[f_{u_2}^2(D_j-X^j\gamma_{u_2}^j)X^j]=0$, the third from the equality $\Ep_P[f_{u_1}^2(D_j-X^j\gamma_{u_1}^j)X^j]=0$, and the fourth from $D_j - X^j\gamma_{u_1}^j = Z_{u_1}^j$. Now, by the Cauchy-Schwarz inequality, the expression in (\ref{eq: gamma-lipshitz-derivation}) is bounded in absolute value by
\begin{align*}
&\Big(\Ep_P\Big[\Big(X^j(\gamma_{u_1}^j-\gamma_{u_2}^j)\Big)^2\Big]\cdot\Ep_P\Big[(f_{u_1}^2-f_{u_2}^2)^2(Z_{u_1}^j)^2\Big]\Big)^{1/2}\\
 &\qquad \lesssim \Big(\Ep_P\Big[f_{u_2}^2\Big(X^j(\gamma_{u_1}^j-\gamma_{u_2}^j)\Big)^2\Big]\cdot\Ep_P\Big[(f_{u_1}^2-f_{u_2}^2)^2(Z_{u_1}^j)^2\Big]\Big)^{1/2}\\
&\qquad \lesssim |u_2-u_1|\Big(\Ep_P\Big[f_{u_2}^2\Big(X^j(\gamma_{u_1}^j-\gamma_{u_2}^j)\Big)^2\Big]\Big)^{1/2},
\end{align*}
where the second line follows from $\Ep_P[(X^j(\gamma_{u_1}^j-\gamma_{u_2}^j))^2]\lesssim \|\gamma_{u_1}^j - \gamma_{u_2}^j\|^2\lesssim \Ep_P[f_{u_2}^2(X^j(\gamma_{u_1}^j-\gamma_{u_2}^j))^2]$, which holds by Assumption \ref{ass: covariates}, and the third line follows from (\ref{eq: f-lipshitz}). Hence,
$
(\Ep_P[f_{u_2}^2(X^j(\gamma_{u_1}^j-\gamma_{u_2}^j))^2])^{1/2}\lesssim |u_2-u_1|,
$
and so
\begin{equation}\label{eq: gamma-lipshitz2}
\|\gamma_{u_2}^j - \gamma_{u_1}^j\| \lesssim \Big(\Ep_P\Big[f_{u_2}^2\Big(X^j(\gamma_{u_1}^j-\gamma_{u_2}^j)\Big)^2\Big]\Big)^{1/2}\lesssim |u_2 - u_1|.
\end{equation}
Therefore,
$$
\|\gamma_{u_2}^j - \gamma_{u_1}^j\|_1 \leq \sqrt{p+\pp}\|\gamma_{u_2}^j - \gamma_{u_1}^j\|\lesssim \sqrt{p+\pp}|u_2-u_1|,
$$
and so (\ref{eq: gamma-lipshitz}) follows.



Next, let
\begin{align*}
&\mG_1 = \Big\{(Y,D,X)\mapsto 1\{ Y\leq u\bar y + (1-u)\underline{y}\}\colon u\in \UU\Big\},\\
& \mG_2 = \Big\{(Y,D,X)\mapsto \Ep_P[g(Y,D,X) \mid D,X]\colon g\in \mG_1\Big\},\\
&\mG_{3,j} = \Big\{(Y,D,X)\mapsto D_j - X^j\gamma_u^j\colon u\in \UU\Big\},\quad j\in[\pp].
\end{align*}
Then the function class $\widetilde{\mathcal{F}} = \{\psi_{u j}(\cdot,\theta_{u j},\eta_{u j})\colon u\in\UU, j\in[\pp]\}$ satisfies
$$
\widetilde{\mathcal F}\subset (\mG_1 - \mG_2)\cdot (\cup_{j\in[\pp]}\mG_{3j}).
$$
Observe that $\mG_1$ is a VC-subgraph class with index bounded by $C$, and so by Theorem 2.6.7 in \cite{vdV-W}, its uniform entropy numbers obey
\begin{equation}\label{eq: G1 entropy bound}
\sup_Q  \log N(\epsilon \|\widetilde F_1\|_{Q,2}, \mG_1, \| \cdot \|_{Q,2}) \leq C\log (C/\epsilon), \quad \text{for all } 0<\epsilon\leq 1,
\end{equation}
where $\widetilde F_1 \equiv 1$ is its envelope. In addition, Lemma \ref{Lemma:PartialOutCovering} implies that the uniform entropy numbers of $\mG_2$ obey the same inequalities with the same envelope $\widetilde F_1$ (but possibly different constant $C$). Moreover, for any $u\in\UU$ and $j\in[\pp]$, by Assumptions \ref{ass: parameters} and \ref{ass: sparsity} and the triangle inequality,
\begin{align}
\|\gamma_u^j\|_1& \lesssim \|\bar \gamma_u^j\|_1 + s_n\sqrt{\log a_n/n}\lesssim \sqrt{s_n}\|\bar \gamma_u^j\| +  s_n\sqrt{\log a_n/n} \notag\\
& \lesssim \sqrt{s_n}\|\gamma_u^j\| + s_n\sqrt{\log a_n/n}\lesssim \sqrt{s_n},\label{eq: thm 3.1 gamma bound}
\end{align}
because by Assumption \ref{ass: covariates}, $s_n\log a_n/n = o(1)$.
Therefore, (\ref{eq: gamma-lipshitz}) and Lemma \ref{lem: linear classes} with $k=1$ imply that for all $j\in[\pp]$, the uniform entropy numbers of $\mG_{3,j}$ obey
$$
\sup_Q  \log N(\epsilon \|\widetilde F_3\|_{Q,2}, \mG_{3,j}, \| \cdot \|_{Q,2}) \leq C\log (a_n/\epsilon), \quad \text{for all } 0<\epsilon\leq 1,
$$
where $\widetilde F_3(Y,D,X) = \sup_{u\in\UU,j\in[\pp]}|Z_u^j|+ M_{n,2}^{-1}(\|D\|_\infty\vee \|X\|_\infty)$ is its envelope, and so Lemma \ref{lemma: andrews} gives that the uniform entropy numbers of $\cup_{j\in[\pp]}\mG_{3,j}$ obey the same inequalities with the same envelope $\widetilde F_3$.
Hence, Lemma \ref{lemma: andrews} also shows that the uniform entropy numbers of $\widetilde {\mathcal F}$ obey
\begin{equation}\label{eq: uniform entropy numbers f tilde}
\sup_Q  \log N(\epsilon \|\widetilde F\|_{Q,2}, \widetilde{\mathcal F}, \| \cdot \|_{Q,2}) \leq C\log (a_n/\epsilon), \quad \text{for all } 0<\epsilon\leq 1,
\end{equation}
where $\widetilde F(Y,D,X) = C\{\sup_{u\in\UU,j\in[\pp]}|Z_u^j|+ M_{n,2}^{-1}(\|D\|_\infty \vee \|X\|_\infty)\}$ is its envelope. Now observe that $\|\widetilde F\|_{P,q}\lesssim M_{n,1}$ and that $\|f\|_{P,2}\lesssim 1$ uniformly over $f\in\widetilde{\mathcal F}$ by Assumption \ref{ass: covariates}. Therefore, it follows from Lemma \ref{lemma:CCK} that
\begin{align*}
\mS_n&= \Ep_P\Big[\sup_{\uu \in \UU, j\in [\pp]}\Big|\sqrt{n}\En[\psi_{uj}(W, \theta_{uj}, \eta_{uj} )]\Big|\Big]\\
&\lesssim \log^{1/2}(a_n M_{n,1}) + n^{-1/2+1/q}M_{n,1} \log(a_n M_{n,1})\\
&\lesssim \log^{1/2}(a_n M_{n,1})(1+\delta_n)\lesssim \sqrt{\log a_n},
\end{align*}
where the last two inequalities follow from Assumption \ref{ass: covariates} and the facts that $\delta_n = o(1)$ and that $\log M_{n,1}\lesssim \log n$, which is another consequence of Assumption \ref{ass: covariates}. Hence, Assumption \ref{ass: parameters} implies that for all $u\in\UU$ and $j\in [\pp]$, $\Theta_{u j}$ contains a ball of radius $C_0 n^{-1/2}\mS_n \log n$ centered at $\theta_{u j}$ for all sufficiently large $n$ for any constant $C_0$. Therefore, Assumption \ref{ass: S1}(i) holds.

Next, Assumption \ref{ass: S1}(ii) follows from the observation that for all $u\in\UU$ and $j\in[\pp]$, the map $(\theta,\eta)\mapsto \psi_{u j}(W,\theta,\eta)$ is twice continuously Gateaux-differentiable on $\Theta_{u j}\times T_{u j}$,  and so is the map $(\theta,\eta)\mapsto \Ep_P[\psi_{u j}(W,\theta,\eta)]$.

To verify the orthogonality condition in Assumption \ref{ass: S1}(iii), note that for all $u\in\UU$, $j\in[\pp]$, and $\eta = (\eta^{(1)},\eta^{(2)},\eta^{(3)})\in\mT_{u j}$ with $\eta\neq \eta_{u j}$, we have
$$
\mathrm{D}_{\uu,j,0}[\eta - \eta_{uj}] = \Ep_P\Big[r_{u}Z_u^j - \Lambda'(D_j \theta_{u j} + X^j\beta_u^j)Z_u^jX^j(\eta^{(2)} - \beta_u^j)\Big],
$$
where we used the equality $\Ep_P[\{Y_u-\G(D_j \theta_{u j}+X^j\beta_u^j)-r_u\}X^j]=0$. In addition,
$|\Ep_P[r_u Z_u^j]|\leq \delta_n n^{-1/2}$ by Assumption \ref{ass: approximation error}. Further, recall that $\Ep_P[f_u^2Z_u^j X^j]=0$ and observe that
\begin{align*}
f_u^2
&= f_u^2(D,X) = \text{Var}(Y_u\mid D,X) = \Ep_P[Y_u\mid D,X](1 - \Ep_P[Y_u\mid D,X])\\
&=(\Lambda(D'\theta_u + X'\beta_u) + r_u) (1 - \Lambda(D'\theta_u + X'\beta_u) - r_u)\\
&=\Lambda'(D'\theta_u + X'\beta_u) + r_u - r_u^2 - 2r_u\Lambda(D'\theta_u + X'\beta_u),
\end{align*}
where we used the equality $\Lambda'(t) = \Lambda(t) - \Lambda^2(t)$, which holds for all $t\in \mathbb R$. Hence,
\begin{align*}
&\Big|\Ep_P\Big[\Lambda'(D_j \theta_{u j} + X^j\beta_u^j)Z_u^jX^j(\eta^{(2)} - \beta_u^j)\Big]\Big|\\
&\qquad \lesssim \Big(\Ep_P\Big[(r_u Z_u^j)^2\Big]\cdot \Ep_P\Big[\Big(X^j(\eta^{(2)} - \beta_u^j)\Big)^2\Big]\Big)^{1/2}\\
&\qquad \lesssim \Big(\Ep_P[r_u^2]\Big)^{1/2}\cdot \|\eta^{(2)} - \beta_u^j\| \lesssim s_n\log a_n/n\lesssim \delta_n n^{-1/2},
\end{align*}
where the second line follows from the Cauchy-Schwarz inequality and the observations that $|r_u|\leq 1$ and that $|\Lambda(t)|\leq 1$ for all $t\in\mathbb R$, and the third line from Assumptions \ref{ass: covariates} and \ref{ass: approximation error} (the last inequality holds because $s_n^2\log^2 a_n \leq \delta_n^2 n$ by Assumption \ref{ass: covariates}). Also, when $\eta = \eta_{u j}$, we have $|\mathrm{D}_{\uu,j,0}[\eta - \eta_{uj}]| = 0$, and so Assumption \ref{ass: S1}(iii) holds.

Next, we verify Assumption \ref{ass: S1}(iv). Fix $u\in \UU$ and $j\in[\pp]$. Observe that
\begin{align*}
J_{u j}
& = -\Ep_P\Big[\Lambda'(D_j \theta_{u j} + X^j \beta_u^j)D_j Z_u^j\Big]\\
& = -\Ep_P\Big[f_u^2 D_j Z_u^j \Big] + \Ep_P\Big[(r_u -r_u^2 - 2r_u \Lambda(D'\theta_u + X'\beta_u)) D_j Z_u^j\Big]\\
&= -\Ep_P\Big[f_u^2 |Z_u^j|^2 \Big] + \Ep_P\Big[(r_u -r_u^2 - 2r_u \Lambda(D'\theta_u + X'\beta_u)) D_j Z_u^j\Big].
\end{align*}
Hence, by Assumptions \ref{ass: parameters}, \ref{ass: covariates} and \ref{ass: approximation error} and the Cauchy-Schwarz inequality,
$$
|J_{u j}|\geq c_1 - 4\Big(\Ep_P[r_u^2]\Ep_P[|D_j Z_u^j|^2]\Big)^{1/2}=c_1 + o(1),
$$
 and also $|J_{u j}|\lesssim 1$ uniformly over $u\in\UU$ and $j\in[\pp]$. In addition,
$$ \Ep_P[\psi_{uj}(W,\theta,\eta_{uj})] = J_{uj}(\theta-\theta_{uj})+\frac{1}{2}\partial_\theta^2\Big.\Big\{\Ep_P[\psi_{uj}(W,\theta,\eta_{uj})]\Big\}\Big|_{\theta = \bar\theta}(\theta-\theta_{uj})^2 $$
for some $\bar\theta\in\Theta_{u j}$. Moreover, for all $\theta\in\Theta_{u j}$, we have $|\partial_\theta^2\Ep_P[\psi_{uj}(W,\theta,\eta_{uj})]|\leq \Ep_P[|D_j^2 Z_u^j|] \lesssim 1$ by Assumptions \ref{ass: parameters} and \ref{ass: covariates} since $|\Lambda''(t)|\leq 1$ for all $t\in\mathbb R$. These inequalities together imply Assumption \ref{ass: S1}(iv).

Next, we verify Assumption \ref{ass: S1}(v) with $\omega = 2$ and $ B_{1 n} =  B_{2 n} = C$ for sufficiently large $C$. Fix $u\in\UU$, $j\in[\pp]$, $r\in(0,1]$, $\theta\in\Theta_{u j}$, and $\eta=(\eta^{(1)},\eta^{(2)},\eta^{(3)})\in \mathcal{T}_{u j}$. We consider the case $\eta\neq \eta_{u j}$, and the other case is similar. Denote
\begin{align*}
&I_{1,1} = 2|X^j(\eta^{(3)} - \gamma_u^j)| + |r_u Z_u^j|,\\
&I_{1,2} =  \Big| D_j(\theta - \theta_{u j}) +  X^j (\eta^{(2)} - \beta_u^j)\Big| \cdot |Z_u^j|.
\end{align*}
Then
$$
\Big|\psi_{u j}(W,\theta,\eta) - \psi_{u j}(W,\theta_{u j},\eta_{u j})\Big| \leq I_{1,1} + I_{1,2},
$$
since $|\Lambda'(t)|\leq 1$ for all $t\in\mathbb R$.
In addition,
\begin{align*}
&\Ep_P[(r_u Z_u^j)^2] \lesssim \Ep_P[r_u^2]\leq \|\eta - \eta_{u j}\|_e^2,\\
& \Ep_P[(X^j(\eta^{(3)} - \gamma_u^j))^2]\lesssim \|\eta^{(3)} - \gamma_u^j\|^2\leq \|\eta - \eta_{u j}\|_e^2
\end{align*}
by Assumptions \ref{ass: approximation error} and \ref{ass: covariates}, respectively. Thus, $\Ep_P[I_{1,1}^2]\lesssim \|\eta - \eta_{u j}\|_e^2$. Also,
\begin{align*}
\Ep_P[I_{1,2}^2]
&\leq \Ep_P\Big[(Z_u^j)^2\Big] \cdot \Ep_P \Big[( D_j(\theta - \theta_{u j})  +  X^j (\eta^{(2)} - \beta_u^j))^2\Big]\\
&\lesssim \Ep_P \Big[( D_j(\theta - \theta_{u j})  +  X^j (\eta^{(2)} - \beta_u^j))^2\Big]\\
&\lesssim  |\theta - \theta_{u j}|^2 + \|\eta^{(2)} - \beta_u^j\|^2 \lesssim |\theta - \theta_{u j}|^2 + \|\eta - \eta_{u j}\|_e^2,
\end{align*}
where the first line follows from the Cauchy-Schwarz inequality, the second from $\Ep_P[(Z_u^j)^2]\lesssim 1$, and the third from Assumption \ref{ass: covariates}. Therefore, Assumption \ref{ass: S1}(v-a) holds.

To verify Assumption \ref{ass: S1}(v-b), observe that under our conditions,
$$
\partial_r \Ep_P\Big[\psi_{u j}(W,\theta,\eta_{u j} + r (\eta - \eta_{u j}))\Big] = \Ep_P\Big[\partial_r\psi_{u j}(W,\theta,\eta_{u j} + r (\eta - \eta_{u j}))\Big].
$$
Further, denote
\begin{align*}
x_r &= D_j \theta + X^j \beta_u^j + r X^j(\eta^{(2)} - \beta_u^j),\\
I_{2,1} & = -X^j (\eta^{(3)} - \gamma_u^j)(Y_u - \Lambda(x_r) - (1-r)r_u),\\
I_{2,2} & = r_u(Z_u^j -r X^j(\eta^{(3)} - \gamma_u^j)),\\
I_{2,3} & = - \Lambda'(x_r)(Z_u^j - r X^j(\eta^{(3)} - \gamma_u^j))X^j(\eta^{(2)} - \beta_u^j).
\end{align*}
Then
$
\partial_r \psi_{u j}(W,\theta,\eta_{u j} + r (\eta - \eta_{u j})) = I_{2,1} + I_{2,2} + I_{2,3},
$
and so
$$
\Ep_P\Big[\partial_r\psi_{u j}(W,\theta,\eta_{u j} + r (\eta - \eta_{u j}))\Big] = \Ep_P[I_{2,1}] + \Ep_P[I_{2,2}] + \Ep_P[I_{2,3}].
$$
Now, observe that
\begin{align*}
\Ep_P[|I_{2,1}|]
&\lesssim \Ep_P\Big[|X^j(\eta^{(3)} - \gamma_u^j)|\Big] \\
& \lesssim \Big(\Ep_P\Big[|X^j(\eta^{(3)} - \gamma_u^j)|^2\Big]\Big)^{1/2} \lesssim \|\eta - \eta_{u j}\|_e,
\end{align*}
where the first inequality holds since $|r_u|\leq 1$, the second by Jensen's inequality, and the third by Assumption \ref{ass: covariates}. Also, by the Cauchy-Schwarz inequality,
$$
\Ep_P[|I_{2,2}|]\leq \Big(\Ep_P[r_u^2]\cdot \Ep_P\Big[(Z_{u j} - r X^j(\eta^{(3)} - \gamma_u^j))^2\Big]\Big)^{1/2}\lesssim \|\eta - \eta_{u j}\|_e,
$$
where the second inequality follows from $(\Ep_P[r_u^2])^{1/2}\leq \|\eta - \eta_{u j}\|_e$. Moreover, since $|\Lambda'(t)|\leq 1$ for all $t\in \mathbb{R}$, the Cauchy-Schwarz inequality gives
\begin{align*}
\Ep_P[|I_{2,3}|]
&\lesssim \Big(\Ep_P\Big[(Z_u^j - r X^j(\eta^{(3)} - \gamma_u^j))^2\Big] \\
&\qquad \times \Ep_P\Big[(X^j(\eta^{(2)} - \beta_u^j) )^2\Big]\Big)^{1/2}\lesssim \|\eta - \eta_{u j}\|_e.
\end{align*}
 Therefore, Assumption \ref{ass: S1}(v-b) holds.

To verify Assumption \ref{ass: S1}(v-c), denote
\begin{align*}
I_{3,1} & = -r_u X^j(\eta^{(3)} - \gamma_u^j) + \Lambda'(x_r) X^j(\eta^{(3)} - \gamma_u^j)X^j(\eta^{(2)} - \beta_u^j),\\
I_{3,2} & = -r_u X^j(\eta^{(3)} - \gamma_u^j),\\
I_{3,3} & = -\Lambda''(x_r)(Z_u^j - r X^j(\eta^{(3)} - \gamma_u^j))(X^j(\eta^{(2)} - \beta_u^j))^2\\
&\quad +  \Lambda'(x_r) X^j(\eta^{(3)} - \gamma_u^j)X^j(\eta^{(2)} - \beta_u^j),
\end{align*}
so that $\partial_r I_{2,1} = I_{3,1}$, $\partial_r I_{2,2} = I_{3,2}$, and $\partial_r I_{2,3} = I_{3,3}$. Now, observe that since $|\Lambda'(t)|\leq 1$ for all $t\in\mathbb{R}$,
$$
\Ep_P[|I_{3,1}|]\lesssim \sqrt{\Ep_P[r_u^2]}\|\eta^{(3)} - \gamma_u^j\| + \|\eta^{(3)} - \gamma_u^j\|\|\eta^{(2)} - \beta_u^j\| \lesssim \|\eta - \eta_{u j}\|_e^2
$$
by the Cauchy-Schwarz inequality, the triangle inequality, and Assumptions \ref{ass: parameters} and \ref{ass: covariates}. Similarly, $\Ep_P[|I_{3,2}|]\lesssim \|\eta - \eta_{u j}\|_e^2$. In addition, since $|\Lambda''(t)|\leq 1$ for all $t\in\mathbb R$,
\begin{align*}
\Ep_P[|I_{3,3}|]&\lesssim \|\eta - \eta_{u j}\|_e^2 + \Big(\Ep_P\Big[(X^j(\eta^{(2)} - \beta_u^j))^4\Big]\Big)^{1/2}  \lesssim \|\eta - \eta_{u j}\|_e^2
\end{align*}
by the arguments used above, the Cauchy-Schwarz inequality, and Assumption \ref{ass: covariates}. Also, the terms in $\Ep_P[\partial_r^2\psi_{u j}(W,\theta_{u j} + r(\theta - \theta_{u j}),\eta_{u j} + r(\eta - \eta_{u j}))]$ arising from differentiation of $\theta_{u j} + r(\theta - \theta_{u j})$ can be bounded similarly. Therefore, Assumption \ref{ass: S1}(v-c) holds.

Next, we verify Assumption \ref{ass: AS}(i). Observe that by Theorems \ref{Thm:RateEstimatedLassoLogistic} and \ref{Thm:RateEstimatedLassoLinear}, with probability $1-o(1)$,
$$ \sup_{u\in\UU} \Big(\|\widetilde \theta_{u} - \theta_{u}\| + \|\widetilde \beta_u - \beta_u\|\Big) \lesssim \sqrt{s_n\log a_n/n},  \ \  \sup_{u\in\UU} \Big(\|\widetilde \theta_{u}\|_0 + \|\widetilde \beta_u\|_0\Big) \lesssim s_n, $$
$$\sup_{u\in\UU, j\in[\pp]} \|\widetilde \gamma_{u}^j - \gamma_{u}^j\| \lesssim \sqrt{s_n\log a_n/n}, \ \ \mbox{and} \ \  \sup_{u\in\UU, j\in[\pp]} \|\widetilde \gamma_{u}^j\|_0 \lesssim s_n. $$
In addition, $\hat{\beta}_u^j = (\widetilde \theta_{u[\pp]\setminus j}',\widetilde \beta_u')'$, and so uniformly over $u\in\UU$ and $j\in[\pp]$, with probability $1-o(1)$,
$$
 \|\hat \beta^j_u-\beta^j_u\| \lesssim \sqrt{s_n \log a_n/n}\ \ \mbox{and} \ \ \|\hat\beta_u^j\|_0 \lesssim s_n.
$$
Moreover, uniformly over $u\in\UU$ and $j\in[\pp]$, with probability $1-o(1)$,
\begin{align*}
\|\widetilde \gamma_u^j - \gamma_u^j\|_1
&\leq \|\widetilde\gamma_u^j - \bar\gamma_u^j\|_1 + \|\bar\gamma_u^j - \gamma_u^j\|_1 \lesssim \|\widetilde\gamma_u^j - \bar\gamma_u^j\|_1 + s_n\sqrt{\log a_n /n}\\
&\lesssim \sqrt{s_n}\|\widetilde\gamma_u^j - \bar\gamma_u^j\| + s_n\sqrt{\log a_n /n}\lesssim s_n\sqrt{\log a_n /n}
\end{align*}
by Assumption \ref{ass: sparsity} and the triangle and the Cauchy-Schwarz inequalities.
Therefore, Assumption \ref{ass: AS}(i) holds. In addition, Assumption \ref{ass: AS}(ii) holds by construction of $\mathcal{T}_{u j}$ and since $\Ep_P[r_u^2]\leq C_1 s_n\log a_n /n$, which in turn follows from Assumption \ref{ass: approximation error}. Also, Assumption \ref{ass: AS}(iii) holds by construction of $\mathcal{T}_{u j}$.

Next, we establish the entropy bound of Assumption \ref{ass: AS}(iv) with $v_n = Cs_n$ and $K_n = CM_{n,1}$ for sufficiently large constant $C>0$ (recall that $a_n = p\vee \pp\vee n$). Let
\begin{align*}
&\mG_4 = \Big\{(Y,D,X)\mapsto (D',X')\xi\colon \xi\in \mathbb{R}^{\pp + p}, \|\xi\|_0 \leq Cs_n, \|\xi\|\leq C \Big\},\\
&\mG_{5,j} = \Big\{(Y,D,X)\mapsto \xi D_j + X^j\beta_u^j\colon u\in \UU, |\xi|\leq C\Big\},\quad j\in[\pp],
\end{align*}
for sufficiently large $C$. Moreover, recall that $W=(Y,D,X)$ and let
\begin{align*}
&\mF_{1,1} = \Big\{W\mapsto \psi_{u j}(W,\theta,\eta)\colon u\in\UU, j\in[\pp],\theta\in\Theta_{u j},\eta\in\mathcal{T}_{u j}\setminus \eta_{u j}\Big\},\\
&\mF_{1,2} = \Big\{W\mapsto \psi_{u j}(W,\theta,\eta_{u j})\colon u\in\UU, j\in[\pp],\theta\in\Theta_{u j}\Big\}.
\end{align*}
Then $\mF_1 = \mF_{1,1}\cup \mF_{1,2}$ and
\begin{align*}
&\mF_{1,1}\subset ( \mG_1 - \G(\mG_4) )\cdot \mG_4,\\
&\mF_{1,2}\subset (\mG_1 - \mG_2 + \Lambda(\cup_{j\in[\pp]}\mG_{5,j}) - \Lambda(\cup_{j\in[\pp]}\mG_{5,j}))\cdot (\cup_{j\in[\pp]}\mG_{3,j}),
\end{align*}
where $\mG_1$, $\mG_2$, and $\mG_{3,j}$, $j\in[\pp]$, are defined above, because $\psi_{u j}(W,\theta,\eta_{u j})=\{ 1\{ Y \leq u \bar y + (1-u)\underline y\} - r_u - \Lambda( D_j\theta+X^j\beta_u^j)\}Z_u^j$. A bound for the uniform entropy numbers of $\mG_1$ is established above in (\ref{eq: G1 entropy bound}). Also, $\mG_4$ is a union over ${{p + \pp} \choose Cs_n}$  VC-subgraph classes with indices $O(s_n)$, and so is $\Lambda(\mG_4)$. Hence, by Lemma \ref{lemma: andrews},
$$
\sup_Q  \log N(\epsilon \|\widetilde F_{1,1}\|_{Q,2}, \mF_{1,1}, \| \cdot \|_{Q,2}) \leq Cs_n\log (a_n/\epsilon), \quad \text{for all } 0<\epsilon\leq 1,
$$
where
$$
\widetilde F_{1,1}(W) = \sup_{u\in\UU, j\in [\pp]} \sup_{\gamma\in \mathbb{R}^{\pp - 1 +p}\colon \|\gamma - \gamma_u^j\|_1 \leq C\sqrt{s_n}\tau_n}\Big(2|D_j - X^j\gamma|\Big)
$$
is an envelope of $\mF_{1,1}$. Observe that $|D_j - X^j\gamma|\leq |D_j - X^j\gamma_u^j|+|X^j(\gamma-\gamma_u^j)|\leq \sup_{u\in\UU, j\in [\pp]} |Z_u^j| + \|X\|_\infty C\sqrt{s_n}\tau_n$, and so
$$
\|\widetilde F_{1,1}\|_{P,q}\lesssim M_{n,1}+\sqrt{s_n}\tau_n M_{n,2}\lesssim M_{n,1}
$$
by Assumption \ref{ass: covariates} (observe that $\sqrt{s_n}\tau_n M_{n,2} \lesssim 1$ and $M_{n,1}\geq 1$).

Next we turn to $\mF_{1,2}$. Bounds for the uniform entropy numbers of $\mG_2$ and $\cup_{j\in[\pp]}\mG_{3,j}$ are established above. Consider $\mG_{5,j}$ for $j\in[\pp]$. Note that for all $u_1,u_2\in\mathcal U$,
\begin{align*}
&\|\gamma_{u_2}^j - \gamma_{u_1}^j\|_1\lesssim \sqrt{p + \pp}|u_2 - u_1|,\quad \|\gamma_{u_1}^j\|_1\lesssim \sqrt{s_n},\\
&\|\beta_{u_2}^j - \beta_{u_1}^j\|_1
\lesssim \Big(\|\theta_{u_2} - \theta_{u_1}\|_1+\|\beta_{u_2} - \beta_{u_1}\|_1 \Big)\lesssim \sqrt{p+\pp}|u_2 - u_1|
\end{align*}
by \eqref{eq: gamma-lipshitz}, \eqref{eq: thm 3.1 gamma bound}, and  Assumption \ref{ass: parameters}. Therefore, for all $\xi_1,\xi_2\in\mathbb R$ such that $|\xi_1|\leq C$ and $|\xi_2|\leq C$, and all $u_1,u_2\in\mathcal U$,
\begin{align*}
\Big\|(\xi_2, \beta_{u_2}^j ) - (\xi_1, \beta_{u_1}^j)\Big\|_1
&\leq |\xi_2 - \xi_1| + \|\beta_{u_2}^j - \beta_{u_1}^j\|_1 \\
&\lesssim |\xi_2 - \xi_1| + \sqrt{p + \pp}|u_2 - u_1|.
\end{align*}
Hence, Lemma \ref{lem: linear classes} implies that for all $j\in[\pp]$, the uniform entropy numbers of $\Lambda(\mG_{5,j})$ obey
$$
\sup_Q  \log N(\epsilon \|\widetilde F_5\|_{Q,2}, \Lambda(\mG_{5,j}), \| \cdot \|_{Q,2}) \leq C\log (a_n/\epsilon), \quad \text{for all } 0<\epsilon\leq 1,
$$
where $\widetilde F_5(Y,D,X) = 1 + M_{n,2}^{-1}(\|D\|_\infty\vee\|X\|_\infty)$ is its envelope. Hence, by Lemma \ref{lemma: andrews},
$$
\sup_Q  \log N(\epsilon \|\widetilde F_{1,2}\|_{Q,2}, \mF_{1,2}, \| \cdot \|_{Q,2}) \leq C\log (a_n/\epsilon), \quad \text{for all } 0<\epsilon\leq 1,
$$
where
\begin{align*}
\widetilde F_{1,2}(W)
&= C(1 + M_{n,2}^{-1}(\|D\|_\infty\vee\|X\|_\infty))\\
&\qquad \times\Big(\sup_{u\in\UU j\in[\pp]}|Z_u^j|+M_{n,2}^{-1}(\|D\|_\infty \vee \|X\|_\infty)\Big)
\end{align*}
is an envelope of $\mF_{1,2}$ that satisfies $\|\widetilde F_{1,2}\|_{P,q}\lesssim M_{n,1}$ by Assumption \ref{ass: covariates} and the Cauchy-Schwarz inequality.
Applying Lemma \ref{lemma: andrews} one more time finally shows that the uniform entropy numbers of $\mF_1$ obey (\ref{eq: F1 entropy bound}) with constants specified above and with an envelope $F_1 =  \widetilde F_{1,1} \vee \widetilde F_{1,2}$ satisfying $\|F_1\|_{P,q} \lesssim M_{n,1}$.

Next, we verify Assumption \ref{ass: AS}(v). Fix $u\in\UU$, $j\in[\pp]$, $\theta\in\Theta_{u j}$, and $\eta=(\eta^{(1)},\eta^{(2)},\eta^{(3)})\in\mathcal{T}_{u j}$. Then
\begin{align*}
&\Ep_P[\psi_{u j}(W,\theta,\eta)^2]\\
&\quad=\Ep_P\Big[\Ep_P\Big[\Big(Y_u - \Lambda\Big(D_j \theta +X^j\eta^{(2)}\Big) - \eta^{(1)}\Big)^2 \mid D,X\Big] (D_j - X^j\eta^{(3)})^2\Big]\\
&\quad\geq  \Ep_P[f_u^2(D_j - X^j\eta^{(3)})^2]\geq c,
\end{align*}
where the first inequality follows from the fact that for any random variable $\xi$, the function $x\mapsto \Ep[(\xi - x)^2]$ is minimized at $x = \Ep[\xi]$ and in this case $f_u^2 =  \text{Var}(Y_u\mid D,X)$, and the second inequality follows from Assumption \ref{ass: covariates}. In addition,
$$
\Ep_P[\psi_{u j}(W,\theta,\eta)^2]\leq \Ep_P[(D_j - X^j\eta^{(3)})^2] \lesssim 1
$$
by Assumptions \ref{ass: parameters} and \ref{ass: covariates}.
Therefore, Assumption \ref{ass: AS}(v) holds.

Finally, we verify Assumption \ref{ass: AS}(vi). The condition (a) holds by construction of $\tau_n$ and $v_n$. To verify the condition (b) observe that
\begin{align*}
&(B_{1n}\tau_n)^{\omega/2} (v_{n} \log a_n)^{1/2}  + n^{-1/2+1/q} v_{n} K_n\log a_n\\
&\qquad \lesssim n^{-1/2}s_n\log a_n + n^{-1/2+1/q}s_n M_{n,1}\log a_n \lesssim \delta_n
\end{align*}
by Assumption \ref{ass: covariates}. In addition,
$$
(\mS_n\log n/\sqrt n)^{\omega/2}(v_n\log a_n)^{1/2}\lesssim \sqrt{s_n}(\log a_n)\cdot(\log n)/\sqrt n\lesssim \delta_n,
$$
since $\mS_n\lesssim (\log a_n)^{1/2}$, which is established above, and $s_n\log a_n\leq \delta_n n^{1/2 - 1/q}$, which holds by Assumption \ref{ass: covariates}. The condition (b) follows.
The condition (c) holds because
$$
n^{1/2}B_{1 n}^2 B_{2 n}\tau_n^2 \lesssim n^{-1/2}s_n \log a_n \lesssim \delta_n
$$
as in the verification of the condition (b). This completes the verification of Assumptions \ref{ass: S1} and \ref{ass: AS} and thus completes the proof of the theorem.
\qed

\subsection*{Proof of Corollary \ref{cor: logistic clt}}
The asserted claim will follow from Corollary \ref{cor: general CLT} as long as we can verify its conditions. Assumptions \ref{ass: S1} and \ref{ass: AS} were verified in the proof of Theorem \ref{theorem:inferenceAlg1}. Therefore, it suffices to verify Assumption \ref{ass: OSR}(i,ii) and the growth conditions of Corollary \ref{cor: general CLT}. 

First, we verify Assumption \ref{ass: OSR}(i). Recall the function class
$$
\widetilde{\mathcal{F}} = \{\psi_{u j}(\cdot,\theta_{u j},\eta_{u j})\colon u\in\UU, j\in[\pp]\}
$$
defined in the proof of Theorem \ref{theorem:inferenceAlg1}, where it is also proven that its uniform entropy numbers obey \eqref{eq: uniform entropy numbers f tilde} with an envelope $\widetilde F$ satisfying $\|\widetilde F\|_{P,q}\lesssim M_{n,1}$. Also, note that Assumption \ref{ass: S1}(iv) gives $1\lesssim |J_{u j}|\lesssim 1$ for all $u\in\UU$ and $j\in[\pp]$, and that Assumption \ref{ass: AS}(v) gives $1\lesssim \Ep_P[\psi^2_{u j}(W,\theta_{u j},\eta_{u j})]\lesssim 1$ for all $u\in\UU$ and $j\in[\pp]$. Hence,
$$
\mF_0\subset \Big\{\xi\cdot f\colon f\in\widetilde\mF,\xi\in\mathbb R, c\leq |\xi|\leq C\Big\},
$$
and so Lemma \ref{lemma: andrews} implies that the uniform entropy numbers of $\mF_0$ obey
\begin{equation}\label{eq: entropy numbers F0}
\sup_Q  \log N(\epsilon \|F_0\|_{Q,2}, \mF_0, \| \cdot \|_{Q,2}) \leq C\log (a_n/\epsilon), \quad \text{for all } 0<\epsilon\leq 1,
\end{equation}
where its envelope $F_0$ satisfies $\|F_0\|_{P,q}\lesssim M_{n,1}$. Thus, Assumption \ref{ass: OSR}(i) holds with $\varrho_n = C$, $A_n = a_n = p\vee \pp \vee n$, and $L_n = CM_{n,1}$.

Next, we verify Assumption \ref{ass: OSR}(ii). For $k=3,4$, $u\in\UU$, and $j\in[\pp]$, we have
$$
\Ep_P\Big[ |\bar \psi_{uj}(W,\theta_{u j},\eta_{u j})|^k\Big]\lesssim \Ep_P\Big[|D_j - X^j\gamma_u^j|^k\Big]\lesssim 1
$$
by Assumptions \ref{ass: parameters} and \ref{ass: covariates} since $|Y_u|\leq 1$, $|\Lambda(t)|\leq 1$ for all $t\in\mathbb R$, and $|r_u|\leq 1$. Thus, Assumption \ref{ass: OSR}(ii) holds with the same $L_n = C M_{n,1}$ since $M_{n,1}\geq 1$.

Finally, with our choice of $A_n$ and $\varrho_n$, the growth conditions of Corollary \ref{cor: general CLT} hold by assumption. This completes the proof.
\qed

\subsection*{Proof of Corollary \ref{cor: logistic bands}}
The asserted claim will follow from Corollary \ref{theorem: general bs} as long as we can verify its conditions. Assumptions \ref{ass: S1} and \ref{ass: AS} were verified in the proof of Theorem \ref{theorem:inferenceAlg1}. Assumption \ref{ass: OSR}(i,ii) was verified in the proof of Corollary \ref{cor: logistic clt}. Therefore, it suffices to verify Assumptions \ref{ass: OSR}(iii) and \ref{ass: variance} and the growth conditions of Corollary \ref{theorem: general bs}. 

We split the proof into six steps. In Steps 1-3, we verify Assumption \ref{ass: variance}. In Steps 4 and 5, we verify Assumption \ref{ass: OSR}(iii). In Step 6, we verify the growth conditions of Corollary \ref{theorem: general bs}.

{\bf Step 1.} Here we show that
$$
\hat J_{u j} - J_{u j} = o_P(\log^{-1} a_n)
$$
uniformly over $u\in\UU$ and $j\in[\pp]$. For $\theta\in\mathbb R$, $\beta\in\mathbb R^{\pp-1+p}$, $\gamma\in\mathbb R^{\pp-1+p}$, and $j\in[\pp]$, define
\begin{align*}
&\widetilde\psi_j(W,\theta,\beta,\gamma) = -\Lambda'\Big(D_j\theta + X^j\beta\Big)D_j(D_j - X^j\gamma),\\
&\widetilde m_j(\theta,\beta,\gamma) = \Ep_P[\widetilde \psi_j(W,\theta,\beta,\gamma)].
\end{align*}
Then $\hat J_{u j} = \En[\widetilde \psi_j (W,\widetilde\theta_{u j},\hat\beta_u^j,\widetilde\gamma_u^j)]$ and $J_{u j} = \widetilde m_j(\theta_{u j},\beta_u^j,\gamma_u^j)$ for all $u\in\UU$ and $j\in[\pp]$. Therefore, by the triangle inequality,
$$
|\hat J_{u j} - J_{u j}|\leq \Big|\hat J_{u j} - \widetilde m_j(\widetilde\theta_{u j},\hat\beta_u^j,\widetilde\gamma_u^j)\Big| + \Big|\widetilde m_j(\widetilde\theta_{u j},\hat\beta_u^j,\widetilde\gamma_u^j) - \widetilde m_j(\theta_{u j},\beta_u^j,\gamma_u^j)\Big|.
$$
Define
\begin{multline*}
\mG_6 = \Big\{(Y,D,X)\mapsto -\Lambda'\Big(D_j \theta + X^j\eta^{(2)}\Big)D_j(D_j - X^j\eta^{(3)})\colon \\
u\in\UU, j\in[\pp],\theta\in\Theta_{u j},\eta = (\eta^{(1)},\eta^{(2)},\eta^{(3)})\in\mathcal{T}_{u j}\setminus \eta_{u j}\Big\}.
\end{multline*}
Then by Assumption \ref{ass: AS}(i), with probability $1-o(1)$,
$$
\Big|\hat J_{u j} - \widetilde m_j(\widetilde\theta_{u j},\hat\beta_u^j,\widetilde\gamma_u^j)\Big| \leq \sup_{f\in\mG_6}\Big|\En[f(W)] - \Ep_P[f(W)]\Big|
$$
for all $u\in\UU$ and $j\in[\pp]$. In addition, $\mG_6\subset -\Lambda'(\mG_4)\cdot\mG_4^2$, where the function class $\mG_4$ is introduced in the proof of Theorem \ref{theorem:inferenceAlg1}. Moreover,
{\small
\begin{align*}
\Big|D_j \theta + X^j\eta^{(2)}\Big|
&\lesssim \Big(\|D\|_\infty\vee \|X\|_\infty\Big)\|\eta^{(2)}\|_1\\
&\lesssim \sqrt{p+\pp}\Big(\|D\|_\infty\vee \|X\|_\infty\Big)\|\eta^{(2)}\|\lesssim \sqrt{p+\pp}\Big(\|D\|_\infty\vee \|X\|_\infty\Big)
\end{align*}}\!uniformly over $u\in\UU$, $j\in[\pp]$, $\theta\in\Theta_{u j}$, and $\eta = (\eta^{(1)},\eta^{(2)},\eta^{(3)}) = \mathcal{T}_{u j}\setminus\eta_{u j}$ by Assumptions \ref{ass: parameters}. Hence, applying Lemmas \ref{lemma: andrews} and \ref{lem: bounded lipschitz classes} with $K = M_{n,2} n^{2/q} (p+\pp)^{1/2}$ shows that the uniform entropy numbers of $\mG_6$ obey
$$
\sup_Q\log N(\epsilon \|F_6\|_{Q,2},\mG_6,\|\cdot\|_{Q,2})\leq Cs_n \log(a_n/\epsilon),\quad\text{for all }0<\epsilon\leq 1
$$
where $F_6(W) = (1 + (\|D\|_\infty\vee\|X\|_\infty)/(M_{n,2} n^{2/q}))\|D\|_{\infty}\widetilde F_{1,1}(W)$ is its envelope, and $\widetilde F_{1,1}$ is defined in the proof of Theorem \ref{theorem:inferenceAlg1}.
Also, recall that $\|\widetilde F_{1,1}\|_{P,q}\lesssim M_{n,1}$, which is established in the proof of Theorem \ref{theorem:inferenceAlg1}, and so
{\small
\begin{align*}
&\Big(\Ep_P\Big[\max_{1\leq i\leq n}|F_6(W_i)|^{q/4}\Big]\Big)^{4/q}\\
&\leq \Big(\Ep_P\Big[\max_{1\leq i\leq n}|\|D_i\|_{\infty}^{q/4}\widetilde F_{1,1}(W_i)|^{q/4}\Big]\Big)^{4/q} \\
&\qquad + \Big(\Ep_P\Big[\max_{1\leq i\leq n}\frac{(\|D_i\|_\infty\vee\|X_i\|_\infty)^{q/2}}{(M_{n,2} n^{2/q})^{q/4}}|\widetilde F_{1,1}(W_i)|^{q/4}\Big]\Big)^{4/q}\\
& \lesssim n^{2/q}M_{n,1}^2 + n^{4/q}\Big(\Ep_P\Big[\frac{(\|D\|_\infty\vee \|X\|_\infty)^q}{(M_{n,2} n^{2/q})^{q/2}}\Big]\cdot \Ep_P\Big[|\widetilde F_{1,1}(W)|^{q/2}\Big]\Big)^{2/q}\lesssim n^{2/q}M_{n,1}^2,
\end{align*}}
where the first line follows from the triangle inequality, and the second from the Cauchy-Schwarz inequality and Assumption \ref{ass: covariates}. In addition, $\|f\|_{P,2}\lesssim 1$ for all $f\in\mG_6$. Hence, Lemma \ref{lemma:CCK} implies that
$$
\sup_{f\in\mG_6}\Big|\En[f(W)] - \Ep_P[f(W)]\Big|\lesssim \sqrt{\frac{s_n\log a_n}{n}} + \frac{M_{n,1}^2s_n\log a_n}{n^{1-2/q}} = o(\log^{-1}a_n)
$$
with probability $1-o(1)$ by Assumption \ref{ass: covariates} and the growth condition $s_n\log^3 a_n/n=o(1)$.

Next, using the same arguments as those used to verify Assumption \ref{ass: S1}(v-a) in Theorem \ref{theorem:inferenceAlg1} shows that
$$
\Big|\widetilde m_j(\widetilde\theta_{u j},\hat\beta_u^j,\widetilde\gamma_u^j) - \widetilde m_j(\theta_{u j},\beta_u^j,\gamma_u^j)\Big|\lesssim \|\widetilde\theta_{u j} - \theta_{u j}\| + \|\hat\beta_u^j - \beta_u^j\| + \|\widetilde\gamma_u^j - \gamma_u^j\|
$$
and the right-hand of this inequality is bounded above by $(Cs_n\log a_n/n)^{1/2}$ uniformly over $u\in\UU$ and $j\in[\pp]$ with probability $1-o(1)$, as demostrated in the proof of Theorem \ref{theorem:inferenceAlg1}. In turns, $(Cs_n\log a_n/n)^{1/2} = o(\log^{-1}a_n)$ by assumption. Combining presented bounds gives the claim of this step.

{\bf Step 2.} Here we show that
$$
\En[\psi_{u j}^2(W,\widetilde\theta_{u j},\hat\eta_{u j})] - \Ep_P[\psi_{u j}^2(W,\theta_{u j},\eta_{u j})] = o_P(\log^{-1} a_n)
$$
uniformly over $u\in\UU$ and $j\in[\pp]$. The proof of this claim is similar to that in Step 1, where the main difference is that instead of $-\Lambda'(\cdot)$ in the function class $\mG_6$, we set $Y_u^2 - 2Y_u\Lambda(\cdot) + \Lambda^2(\cdot)$, with the resulting function class having the same envelope and its uniform entropy numbers obeying the same bounds as those derived fo $\mG_6$ (up to possibly different constants).

{\bf Step 3.} Here we finish the verification of Assumption \ref{ass: variance}. Observe that $1\lesssim J_{u j}\lesssim 1$ and $1\lesssim \Ep_P[\psi_{u j}^2(W,\theta_{u j},\eta_{u j})]\lesssim 1$ for all $u\in\UU$ and $j\in[\pp]$ by Assumptions \ref{ass: S1}(iv) and \ref{ass: AS}(v). Hence, $1\lesssim \sigma_{u j}^2\lesssim 1$, and so
{\small
\begin{align*}
&\Big|\frac{\hat\sigma_{u j}}{\sigma_{u j}} - 1\Big|
\leq \Big|\frac{\hat\sigma^2_{u j}}{\sigma^2_{u j}} - 1\Big|\lesssim \Big|\hat\sigma_{u j}^2 - \sigma_{u j}^2\Big|\\
&\quad \leq \Big|\hat J^{-2}_{u j} - J^{-2}_{u j}\Big|\En[\psi_{u j}^2(W,\widetilde\theta_{u j},\hat\eta_{u j})]\\
&\quad \quad +J_{u j}^{-2}\Big|\En[\psi_{u j}^2(W,\widetilde\theta_{u j},\hat\eta_{u j})] - \Ep_P[\psi_{u j}^2(W,\theta_{u j},\eta_{u j})]\Big|\\
&\quad\lesssim \Big|\hat J_{u j} - J_{u j}\Big| + \Big|\En[\psi_{u j}^2(W,\widetilde\theta_{u j},\hat\eta_{u j})] - \Ep_P[\psi_{u j}^2(W,\theta_{u j},\eta_{u j})]\Big| = o_P(\log^{-1}a_n)
\end{align*}}
uniformly over $u\in\UU$ and $j\in[\pp]$ by Steps 1 and 2. Therefore, Assumption \ref{ass: variance} holds for some $\varepsilon_n$ and $\Delta_n$ satisfying $\varepsilon_n\log a_n = o(1)$ and $\Delta_n=o(1)$.

{\bf Step 4.} Here we show that the inequality concerning the entropy numbers of $\widehat\mF_0$ in Assumption \ref{ass: OSR}(iii) holds with $\bar\varrho_n =C$, $\bar A_n = a_n = p\vee \pp\vee n$, and $\Delta_n = o(1)$. By construction of $\hat\psi_{u j}$, the function class $\{\hat\psi_{u j}(\cdot)\colon u\in\UU, j\in[\pp]\}$ contains at most $n\pp$ functions (as $u$ varies, new functions appear only as $u$ crosses one of the observations $(Y_i)_{i=1}^n$). Also, it follows from (\ref{eq: entropy numbers F0}) in the proof of Corollary \ref{cor: logistic clt} that the entropy numbers of $\mF_0 = \{\bar\psi_{u j}(\cdot)\colon u\in\UU,j\in[\pp]\}$ obey
$$
N(\epsilon,\mF_0,\|\cdot\|_{\mathbb P_{n,2}})\leq C\log(a_n\|F_0\|_{\mathbb P_{n,2}}/\epsilon)\leq C\log(a_n/\epsilon),\quad\text{for all }0<\epsilon\leq 1,
$$
with probability $1-o(1)$. Hence,
$$
\log N(\epsilon,\hat\mF_0,\|\cdot\|_{\mathbb P_{n,2}})\leq C\log(a_n/\epsilon),\quad\text{for all }0<\epsilon\leq 1
$$
with probability $1-o(1)$. The claim of this step follows.

{\bf Step 5.} Here we show that the second part of Assumption \ref{ass: OSR}(iii), that is, that with probability $1-\Delta_n$, we have $\|f\|_{\mathbb P_{n,2}}\leq \bar\delta_n$ for all $f\in\widehat\mF_0$, holds for some $\bar\delta_n$ and $\Delta_n$ satisfying $\bar\delta_n = o(\log^{-1}a_n)$ and $\Delta_n = o(1)$. By the triangle inequality,
\begin{align*}
&\Big\|\hat\sigma_{u j}^{-1} \hat J_{u j}^{-1}\psi_{u j}(W,\check\theta_{u j},\hat\eta_{u j}) - \sigma_{u j}^{-1}J_{u j}^{-1}\psi_{u j}(W,\theta_{u j},\eta_{u j})\Big\|_{\mathbb P_{n,2}}\\
&\quad \leq |\hat\sigma_{u j}^{-1}\hat J_{u j}^{-1} - \sigma_{u j}^{-1} J_{u j}^{-1}|\cdot\Big\|\psi_{u j}(W,\theta_{u j},\eta_{u j})\Big\|_{\mathbb P_{n,2}} \\
&\qquad + \hat\sigma_{u j}^{-1}\hat J_{u j}^{-1}\Big\|\psi_{u j}(W,\check\theta_{u j},\hat\eta_{u j}) - \psi_{u j}(W,\theta_{u j},\eta_{u j})\Big\|_{\mathbb P_{n,2}}.
\end{align*}
We bound two terms on the right-hand side of this inequality in turn. To bound the first term, observe that
$$
|\hat\sigma_{u j}^{-1}\hat J_{u j}^{-1} - \sigma_{u j}^{-1} J_{u j}^{-1}|=o_P(\log^{-1}a_n)
$$
uniformly over $u\in\UU$ and $j\in[\pp]$ by Steps 1 and 3 and since $1\lesssim J_{u j}\lesssim 1$ and $1\lesssim \sigma_{u j}\lesssim 1$, which is discussed in Step 3. Also, as established in the proof of Theorem \ref{theorem:inferenceAlg1}, the uniform entropy numbers of the function class $\widetilde\mF = \{\psi_{u j}(\cdot,\theta_{u j},\eta_{u j})\colon u\in\UU,j\in[\pp]\}$ obey \eqref{eq: uniform entropy numbers f tilde} with an envelope $\widetilde F$ satisfying $\|\widetilde F\|_{P,q}\lesssim M_{n,1}$. Moveover, $\Ep_P[f^2(W)]\lesssim 1$ uniformly over $f\in\widetilde \mF$ by Assumption \ref{ass: AS}(v). Therefore, Lemma \ref{lem: maximal inequality 2} shows that
\begin{align*}
&\Ep_P\Big[\sup_{u\in\UU,j\in[\pp]}\En[\psi_{u j}^2(W,\theta_{u j},\eta_{u j})]\Big]\\
&\quad\lesssim 1 + n^{-1/2+1/q}M_{n,1}\Big(\sqrt{\log a_n} + n^{-1/2+1/q}M_{n,1} \log a_n\Big)\\
&\quad\lesssim 1 + n^{-1+2/q}M_{n,1}^2\log a_n\lesssim 1,
\end{align*}
where the second inequality follows from Assumption \ref{ass: covariates}. Hence,
$$
|\hat\sigma_{u j}^{-1}\hat J_{u j}^{-1} - \sigma_{u j}^{-1} J_{u j}^{-1}|\cdot\|\psi_{u j}(W,\theta_{u j},\eta_{u j})\|_{\mathbb P_{n,2}} = o_P(\log^{-1}a_n)
$$
uniformly over $u\in\UU$ and $j\in[\pp]$.

To bound the second term, define
\begin{multline*}
\mG_7 = \Big\{\psi_{u j}(\cdot,\theta,\eta) - \psi_{u j}(\cdot,\theta_{u j},\eta_{u j})\colon u\in\UU,j\in[\pp], \theta\in\Theta_{u j},\\
|\theta - \theta_{u j}|\leq \sqrt{Cs_n\log a_n/n}, \eta\in\mathcal T_{u j}\Big\}
\end{multline*}
for sufficiently large constant $C>0$ and $\mathcal T_{u j}$ appearing in Assumption \ref{ass: AS}. Then $\mG_7\subset \mF_1 - \mF_1$, and so Lemma \ref{lemma: andrews} together with the bound for the uniform entropy numbers of $\mF_1$ established in the proof of Theorem \ref{theorem:inferenceAlg1} imply that the uniform entropy numbers of $\mG_7$ obey
$$
\sup_Q \log N(\epsilon\|F\|_{Q,2},\mG_7,\|\cdot\|_{Q,2})\leq Cs_n\log (a_n/\epsilon),\quad\text{for all }0<\epsilon\leq 1,
$$
where $F_7$ is its envelope satisfying $\|F_7\|_{P,q}\lesssim M_{n,1}$. In addition, Assumption \ref{ass: AS}(i) together with Step 1 in the proof of Theorem \ref{theorem:semiparametric} imply that with probability $1-o(1)$,
$$
\psi_{u j}(\cdot,\check \theta_{u j},\hat\eta_{u j}) - \psi_{u j}(\cdot,\theta_{u j},\eta_{u j})\in\mG_7
$$
for all $u\in\UU$ and $j\in[\pp]$ (recall that in the proof of Theorem \ref{theorem:inferenceAlg1}, we set $B_{1 n} = C$ and $\tau_n = (Cs_n\log a_n/n)^{1/2}$). Also, Assumptions \ref{ass: S1}(v-a) and \ref{ass: AS}(ii) show that $\Ep_P[f^2(W)]\lesssim s_n\log a_n/n$ uniformly over $f\in\mG_7$. Hence, it follows from Lemma \ref{lem: maximal inequality 2} that
\begin{align*}
&\Ep_P\Big[\sup_{f\in\mG_7}\En[f^2(W)]\Big]\\
&\lesssim s_n\log a_n/n + n^{-1/2+1/q}M_{n,1}\Big(\sqrt{s_n\log a_n} + n^{-1/2+1/q}M_{n,1}s_n\log a_n\Big)\\
&\lesssim n^{-1+2/q}M_{n,1}^2s_n\log a_n = o(\log^{-1} a_n).
\end{align*}
Hence,
$$
\hat\sigma_{u j}^{-1}\hat J_{u j}^{-1}\|\psi_{u j}(W,\check\theta_{u j},\hat\eta_{u j}) - \psi_{u j}(W,\theta_{u j},\eta_{u j})\|_{\mathbb P_{n,2}} = o_P(\log^{-1}a_n)
$$
uniformly over $u\in\UU$ and $j\in[\pp]$ by Assumption \ref{ass: covariates}. Combining presented bounds gives the asserted claim and completes the verification of Assumption \ref{ass: OSR}.

{\bf Step 6.} Recall that the growth conditions of Corollary \ref{cor: general CLT} were verified in the proof of Corollary \ref{cor: logistic clt}, where we set $\varrho_n = C$ and $A_n = a_n$. The other growth conditions of Corollary \ref{theorem: general bs}, $\varepsilon_n\varrho_n\log A_n = o(1)$ and $\bar\delta_n^2 \bar\varrho_n \varrho_n(\log \bar A_n)\cdot(\log A_n)=o(1)$ hold because we have $\varepsilon_n = o(\log^{-1} a_n)$, $\bar\varrho_n = C$, $\bar A_n = a_n$, and $\bar\delta_n = o(\log^{-1} a_n)$. This completes the proof of the corollary.
\qed

\subsection*{Proof of Corollary \ref{cor: simple example}}
To prove the asserted claim, we will apply Corollary \ref{cor: logistic bands}. Below we will verify Assumptions \ref{ass: covariates}(iii,v,vi,vii,viii,ix), \ref{ass: approximation error}(iii), and the growth conditions of Corollary \ref{cor: logistic bands}.

Set
$$
\bar M_n = \Big(\Ep_P\Big[(\|D\|_\infty \vee \|X\|_\infty)^{2q}\Big]\Big)^{1/(2q)} \ \ \ \mbox{and} \ \ \ \bar C_n := 1+\sup_{u\in \UU, j\in [\pp]} \|\gamma_u^j\|_1
$$
so that $\bar M_n \leq C_1$ and $\bar C_n \leq 1+C_1$ by assumption. Observe that for all $u\in\UU$ and $j\in[\pp]$,
$$
|Z_u^j| = |D_j - X^j\gamma_u^j|\leq (\|D\|_\infty\vee \|X\|_\infty)\cdot (1+\|\gamma_u^j\|_1)\lesssim \bar C_n(\|D\|_\infty\vee \|X\|_\infty)
$$
Then
$$
\max_{j,k}\Big(\Ep_P[|Z_u^j X_k^j|^3]\Big)^{1/3} \lesssim \bar C_n\Big(\Ep_P\Big[(\|D\|_\infty\vee \|X\|_\infty)^6\Big]\Big)^{1/3}\lesssim \bar C_n\bar M_n^2 \lesssim 1.
$$
Therefore, given that $\log^6 a_n = o(n)$ by assumption, it follows that Assumption \ref{ass: covariates}(iii) holds for some $\delta_n$ satisfying $\delta_n^2\log a_n = o(1)$.

Also,
$$
\left(\Ep_P\left[\sup_{u\in\UU,j\in[\pp]}|Z_u^j|^{2q}\right]\right)^{1/(2q)}\lesssim \bar C_n\bar M_n \lesssim 1,
$$
and so Assumption \ref{ass: covariates}(v) holds with $M_{n,1} = C$ for sufficiently large constant $C$. In addition, since $s_n^2\log^3 a_n = o(n^{1 - 2/q})$, Assumption \ref{ass: covariates}(vi) holds for some $\delta_n$ satisfying $\delta_n^2\log a_n = o(1)$.

Further, Assumption \ref{ass: covariates}(vii) holds with $M_{n,2} = \bar M_n$ by definition of $\bar M_n$. In addition, since $s_n^2\log^2 a_n = o(n^{1-2/q})$, Assumption \ref{ass: covariates}(viii) holds for some $\delta_n$ satisfying $\delta_n^2\log a_n = o(1)$. Also, Assumption \ref{ass: covariates}(ix) holds for some $\delta_n$ satisfying $\delta_n^2\log a_n = o(1)$ since $M_{n,2} = \bar M_n\leq C_1$, $M_{n,1} \leq C$, $s_n \leq \delta_n n^{1/2 - 1/q}$, and $q>4$.

Moreover, since $\sup_{u\in\UU,j\in[\pp]}|\Ep_P[r_u Z_u^j]| = o((n\log a_n)^{-1/2})$, Assumption \ref{ass: approximation error}(iii) holds for some $\delta_n$ satisfying $\delta_n^2\log a_n = o(1)$. Finally, the growth conditions $M_{n,1}^{2/7}\log a_n = o(n^{1/7})$ and $M_{n,1}^{2/3}\log a_n = o(n^{1/3 - 2/(3q)})$ hold because $M_{n,1}\lesssim \bar C_n\bar M_n\lesssim C$, $\log^7 a_n = o(n)$, and $\log^3 a_n = o(n^{1-2/q})$.

Thus, there exists $\delta_n$ such that Assumptions \ref{ass: covariates}(iii,v,vi,vii,viii,ix) and \ref{ass: approximation error}(iii) as well as all growth conditions of Corollary \ref{cor: logistic bands} are satisfied. Since all other conditions of Corollary \ref{cor: logistic bands} are assumed, the asserted claim follows from that in Corollary \ref{cor: logistic bands}.
\qed

\section{Proofs for Section 4 }\label{sec: proofs for section 4}
In this appendix, we use $C$ to denote a strictly positive constant that is independent of $n$ and $P\in\mathcal P_n$. The value of $C$ may change at each appearance. Also, the notation $a_n \lesssim b_n$ means that $a_n\leq C b_n$ for all $n$ and some $C$. The notation $a_n\gtrsim b_n$ means that $b_n\lesssim a_n$. Moreover, the notation $a_n = o(1)$ means that there exists a sequence $(b_n)_{n\geq 1}$ of positive numbers such that (i) $|a_n|\leq b_n$ for all $n$, (ii) $b_n$ is independent of $P\in\mathcal P_n$ for all $n$, and (iii) $b_n\to 0$ as $n\to\infty$. Finally, the notation $a_n \lesssim_P b_n$ means that for any $\epsilon>0$, there exists $C$ such that $\Pr_P(a_n > C b_n) \leq \epsilon$ for all $n$, and the notation $a_n\gtrsim_P b_n$ means that $b_n\lesssim_P a_n$. Using this notation allows us to avoid repeating ``uniformly over $P\in\mathcal P_n$'' many times in the proofs of Theorems \ref{Thm:RateEstimatedLassoLogistic} and \ref{Thm:RateEstimatedLassoLinear}.

\subsection*{Proof of Theorem \ref{Thm:RateEstimatedLassoLogistic}}

In this proof, we will rely upon results in Appendix \ref{sec: generic results}. In particular, the asserted claims will follow from an application of Lemmas \ref{Lemma:LassoMRateRaw}, \ref{Lemma:LassoMSparsity}, and \ref{Lemma:PostLassoMRateRaw} (with some extra work). To follow the notation in Appendix \ref{sec: generic results}, define $X_u = (D',X')'$ and $w_u = f_u^2$ and redefine $\theta_u = (\theta_u',\beta_u')'$, $\widehat\theta_u = (\widehat\theta_u',\widehat\beta_u')'$, and $p = \pp + p$. Also, define $a_u = a_u(X_u)$ as a solution to the following equation:
\begin{equation}\label{eq: au def thm 4.1}
\Lambda(X_u'\theta_u) + r_u = \Lambda(X_u'\theta_u + a_u).
\end{equation}
Since $\Lambda$ is increasing, for each value of $X_u$, $a_u$ is uniquely defined. Then $\theta_u$ satisfies \eqref{A:EqMainFunc} with
\begin{align*}
M_u(Y_u,X_u,\theta,a) & =-\Big(1\{Y_u=1\}\log \Big(\G( X_u'\theta + a(X_u))\Big) \\
&\qquad + 1\{Y_u=0\}\log\Big(1-\G( X_u'\theta + a(X_u))\Big)\Big)
\end{align*}
for $\theta$ being a vector in $\mathbb R^p$ and $a$ being a function of $X_u$. Similarly, $\widehat\theta_u$ satisfies \eqref{Adef:LassoFunc} where $M_u(Y_u,X_u,\theta) = M_u(Y_u,X_u,\theta,\mathcal O)$ and $\mathcal O = \mathcal O(X_u)$, the identically zero function of $X_u$.

To apply Lemmas \ref{Lemma:LassoMRateRaw}, \ref{Lemma:LassoMSparsity}, and \ref{Lemma:PostLassoMRateRaw}, we need to verify Assumption \ref{ass: M}. In addition, one of the conditions in these lemmas is that \eqref{Eq:reg} holds with  probability $1-o(1)$. Verification of this condition will be done with the help of Lemma \ref{Thm:ChoiceLambda}, which in turn relies upon Condition WL. Therefore, below we also verify this condition.

We first verify Condition WL with $\epsilon_n = 1/n$ and $N_n = n$. Observe that since $\UU = [0,1]$, we have for any $\epsilon\in(0,1]$ that $N(\epsilon,\UU,d_{\UU}) \leq 1/\epsilon$, and so $\epsilon_n$ and $N_n$ satisfy the inequality $N_n\geq N(\epsilon_n,\UU,d_{\UU})$, which is the first requirement of Condition WL. Further, as in front of Condition WL in Appendix \ref{sec: generic results}, let
{\small
\begin{align*}
S_u &= \partial_\theta M_u(Y_u,X_u,\theta,a_u)|_{\theta = \theta_u} \\
&= -\Big(1\{Y_u = 1\}\frac{\Lambda'(X_u'\theta_u + a_u(X_u))}{\Lambda(X_u'\theta_u + a_u(X_u))} - 1\{Y_u = 0\}\frac{\Lambda'(X_u'\theta_u + a_u(X_u))}{1 - \Lambda(X_u'\theta_u + a_u(X_u))}\Big)\cdot X_u\\
& = -\Big(1\{Y_u = 1\}(1 - \Lambda(X_u'\theta_u + a_u(X_u)) - 1\{Y_u = 0\}(\Lambda(X_u'\theta_u + a_u(X_u))\Big)\cdot X_u\\
& = -\Big(Y_u - \Ep_P[Y_u\mid X_u]\Big)\cdot X_u.
\end{align*}}
Then $|S_{u k}| \leq |X_{u k}|$. In addition, since $\gamma\geq 1/n$, we have
$$
\Phi^{-1}(1 - \gamma/(2p N_n))\lesssim \sqrt{\log(p n)}\lesssim \sqrt{\log a_n}.
$$
Also, since $f_u\leq 1$, Assumption \ref{ass: covariates}(ii,iii) yields $\log^{1/2}a_n\lesssim \delta_n n^{1/6}$. Moreover, Assumption \ref{ass: covariates}(iv) yields $(\Ep_P[|S_{u k}|^3])^{1/3}\lesssim1$ uniformly over $u\in\UU$ and $k\in[p]$. Therefore, Condition WL(i) holds for some $\varphi_n$ satisfying $\varphi_n \lesssim \delta_n$. Assumption \ref{ass: covariates}(i,iv) also implies that uniformly over $u\in\UU$ and $k\in[p]$,
\begin{equation}\label{ULboundS2}
\Ep_P[|S_{u k}|^2] \leq \Ep_P[|X_{u k}|^2]\lesssim 1\text{ and }\Ep_P[|S_{u k}|^2] = \Ep_P[|f_u X_{u k}|^2]\gtrsim 1,
\end{equation}
and so Condition WL(ii) holds for some $\underline C$ and $\bar C$ depending only on the constants in Assumption \ref{ass: covariates}.

To verify Condition WL(iii), we apply Lemma \ref{PrimitiveWL}. Observe that $Y_u = 1\{Y\leq (1-u)\underline y + u\bar y\}$ and the class of functions $\{H(\cdot,u)\colon u\in\UU\}$ with $H(y,u) = 1\{y\leq (1-u)\underline y + u\bar y\}$ is VC-subgraph with index bounded by some $C$. Also, $X_u$ does not depend on $u$, and by Assumption \ref{ass: covariates}(iv,vii,viii), $\Ep_P[|X_{u k}|^4]\lesssim 1$ uniformly over $k\in[p]$ and
$$
(\Ep_P[\|X_u\|_\infty^{2q}])^{1/(2q)}\lesssim (\delta_n n^{1/2 - 1/q})^{1/2}.
$$
Moreover, by Assumption \ref{ass: density},
$
\Ep_P[|Y_u-Y_{u'}|^4] = |u - u'|
$
uniformly over $u,u'\in\UU$. Therefore, Lemma \ref{PrimitiveWL} with $2q$ replacing $q$ implies that Condition WL(iii) holds with $\Delta_n =(\log n)^{-1}$ and some $\varphi_n$ satisfying
$$
\varphi_n \lesssim \frac{\delta_n^{1/2}\log a_n}{n^{1/4}}\vee \frac{\log^{1/2}a_n}{n^{1/4}} = o(1),
$$
where the last assertion follows from $\log^{1/2}a_n \lesssim \delta_n n^{1/6}$, established above, and $\delta_n^2\log^4a_n = o(n)$, which holds by $\delta_n^2\log a_n = o(1)$.


Next we verify Assumption \ref{ass: M}. It is well-known that the function $\theta\mapsto M_u(Y_u,X_u,\theta)$ is convex almost surely, which is the first requirement of Assumption \ref{ass: M}. Further, let us verify Assumption \ref{ass: M}(b). By Condition WL(iii), which was verified above, we have with probability $1-o(1)$ that $|(\En-\Ep_P)[S_{uk}^2]| =  o(1)$ uniformly over $u\in \UU$ and $k\in[p]$. So, it follows from (\ref{ULboundS2}) that with the same probability we have $\En[S_{uk}^2] = (1-o(1))\Ep_P[S_{uk}^2]$ uniformly over $u\in \UU$ and $k\in[p]$, and so Assumption \ref{ass: M}(b) holds for some $\Delta_n$, $\ell$, and $L$ satisfying $\Delta_n = o(1)$, $\ell = 1-o(1)$, and $L\lesssim 1$ for any $\hat \Psi_u$ such that
\begin{equation}\label{Eq:LoadReq}
\begin{array}{cc}
&(1-o(1))\Ep_P[S_{uk}^2] \leq \hat \Psi_{u kk}^2 \lesssim 1\\
&\text{ with probability $1-o(1)$ uniformly over $u\in\UU$ and $k\in[p]$}.
\end{array}
\end{equation}
Thus, it suffices to verify \eqref{Eq:LoadReq}. In the case $\bar m = 0$, we have by Lemma \ref{lemma:CCK} and Assumption \ref{ass: covariates}(i,iv,vii,viii) that $\En[X_{u k}^2] = (1-o(1))\Ep[X_{u k}^2]$ with probability $1-o(1)$ uniformly over $u\in\UU$ and $k\in[p]$. Thus, \eqref{Eq:LoadReq} holds since in this case,
$$
\hat \Psi_{ukk}^2=\frac{1}{4}\En[X_{uk}^2]=\frac{1-o(1)}{4}\Ep_P[X_{uk}^2]\lesssim 1
$$
and $4^{-1}\Ep[X_{u k}^2] \geq \Ep[f_u^2 X_{u k}^2] = \Ep[S_{u k}^2]$ (recall that $f_u^2\leq 1/4$) with probability $1-o(1)$ uniformly over $u\in\UU$ and $k\in[p]$.

To establish (\ref{Eq:LoadReq}) for $\bar m\geq 1$, we proceed by induction. Assuming that \eqref{Eq:LoadReq} holds when the number of loops in Algorithm \ref{AlgFunc} is $\bar m - 1$, we can complete the proof of the theorem to show that $\|X_u'(\widetilde \theta_u-\theta_u)\|_{\Pn,2} \lesssim (s_n\log a_n/n)^{1/2}$ with probability $1-o(1)$ uniformly over $u\in\UU$ for $m = \bar m - 1$. Then for $m = \bar m$, we have by the triangle inequality that
{\small
\begin{align*}
 |\hat l_{uk,m} - l_{u0k}| & \leq \Big( \En[ X_{u k}^2 \{ \Lambda(X_u'\theta_u)+r_u - \Lambda(X_u'\widetilde \theta_u)\}^2   ]  \Big)^{1/2} \\
 & \leq \Big( \|\Lambda(X_u'\theta_u)-\Lambda(X_u'\widetilde \theta_u)\|_{\Pn,2} +  \|r_u\|_{\Pn,2}\Big)\cdot \max_{1\leq i\leq n}\| X_{ui}\|_\infty  \lesssim_P \delta_n
\end{align*}}\!uniformly over $u\in\UU$ and $k\in[p]$ since  $\max_{1\leq i\leq n}\|X_{ui}\|_\infty \lesssim_P n^{1/(2q)}M_{n,2}$ by  Assumption \ref{ass: covariates}(vii), $\|r_u\|_{\Pn,2}\lesssim_P (s_n\log a_n/n)^{1/2}$ by Assumption \ref{ass: approximation error}(v), $n^{1/(2q)}M_{n,2} (s_n\log a_n/n)^{1/2} \leq \delta_n$ by Assumption \ref{ass: covariates}(viii), and the fact that $\Lambda$ is $1$-Lipschitz (observe that $M_{n,2}\geq 1$, and so $M_{n,2}\leq M_{n,2}^2$). Thus, (\ref{Eq:LoadReq}) holds with the number of loops in Algorithm \ref{AlgFunc} being $\bar m$. This completes verification of Assumption \ref{ass: M}(b).

To verify Assumption \ref{ass: M}(a), note that for any $\delta\in\mathbb R^p$,
\begin{align*}
&\{\partial_\theta M_u(Y_u,X_u,\theta_u)-\partial_\theta M_u(Y_u,X_u,\theta_u,a_u)\}'\delta \\
&\qquad = \{\Lambda(X_u'\theta_u) - \Lambda(X_u'\theta_u + a_u(X_u))\} X_u'\delta = -r_u X_u'\delta,
\end{align*}
and so
\begin{align*}
& \En[\partial_\theta M_u(Y_u,X_u,\theta_u)-\partial_\theta M_u(Y_u,X_u,\theta_u,a_u)]'\delta \\
&\qquad  \leq \|r_u/\sqrt{w_u}\|_{\Pn,2}\|\sqrt{w_u}X_u'\delta\|_{\Pn,2}  \leq C_n \|\sqrt{w_u}X_u'\delta\|_{\Pn,2}
\end{align*}
where the first inequality follows from the Cauchy-Schwarz inequality, and the second holds with probability $1-\bar\Delta_n$ for some $C_n$ satisfying $C_n \lesssim (s_n\log a_n/n)^{1/2}$ by Assumption \ref{ass: approximation error}. Thus, Assumption \ref{ass: M}(a) follows for given $C_n$ and $\Delta_n = \bar \Delta_n = o(1)$.

To verify Assumption \ref{ass: M}(c), note that Lemma O.2 in \cite{BCFH2013program} imply that for any $u\in\UU$, $A_u\subset\mathbb R^p$, and $\delta\in A_u$,
{\small
\begin{align*}
&\En[M_u(Y_u,X_u,\theta_u+\delta)] - \En[M_u(Y_u,X_u,\theta_u)] - \En[\partial_\theta M_u(Y_u,X_u,\theta_u)]'\delta \\
&\qquad + 2\|a_u/\sqrt{w_u}\|_{\Pn,2}\|\sqrt{w_u}X_u'\delta\|_{\Pn,2} \geq \Big(\|\sqrt{w_u}X_u'\delta\|_{\Pn,2}^2\Big)\wedge\Big(\bar q_{A_u}\|\sqrt{w_u}X_u'\delta\|_{\Pn,2}\Big),
\end{align*}}\!where
\begin{equation}\label{eq: thm 4.1 def q}
\bar q_{A_u} = \inf_{\delta\in A_u}\frac{(\En[w_u|X_u'\delta|^2])^{3/2}}{\En[w_u|X_u'\delta|^3]}.
\end{equation}
Next, we bound $\|a_u/\sqrt{w_u}\|_{\Pn,2}$. Fix some arbitrary value of $X_u$. Consider the case that $r_u = r_u(X_u)\geq 0$. Then $a_u = a_u(X_u)\geq 0$, and so combining the mean-value theorem and \eqref{eq: au def thm 4.1} shows that for some $t\in(0,a_u)$,
$$
r_u = a_u \Lambda'(X_u'\theta_u + t).
$$
Now, since the function $\Lambda'$ is unimodal,
$$
\Lambda'(X_u'\theta_u + t)\geq \Lambda'(X_u'\theta_u) \wedge \Lambda'(X_u'\theta_u + a_u).
$$
Further, observe that $\Lambda'(X_u'\theta_u + a_u) = f_u^2$ and
\begin{align*}
\Lambda'(X_u'\theta_u)
&= \Lambda(X_u'\theta_u)\cdot (1-\Lambda(X_u'\theta_u)) \\
&= (\Lambda(X_u'\theta_u) + r_u - r_u)\cdot(1-\Lambda(X_u'\theta_u) - r_u + r_u)\\
&= f_u^2 - r_u(1-\Lambda(X_u'\theta_u)) + r_u(\Lambda(X_u'\theta_u) + r_u)\\
&\geq f_u^2 - r_u + 2r_u\Lambda(X_u'\theta_u)\geq f_u^2 - r_u.
\end{align*}
In addition, by Assumption \ref{ass: approximation error}, $|r_u|\leq f_u^2/4$, so that $\Lambda'(X_u'\theta_u)\geq 3f_u^2/4$. Thus,
$$
|a_u|\leq 4r_u/(3f_u^2).
$$
Similarly, the same inequality can be obtained in the case that $r_u = r_u(X_u)<0$. Conclude that
$$
\|a_u/\sqrt{w_u}\|_{\Pn,2} \lesssim \|r_u/f_u^3\|_{\Pn,2}\lesssim \sqrt{s_n\log a_n/n}
$$
with probability at least $1-\bar\Delta_n$ uniformly over $u\in\UU$. Therefore, Assumption \ref{ass: M}(c) holds for any $A_u\subset \mathbb R^p$ with $\Delta_n = \bar\Delta_n$, $C_n \lesssim (s_n\log a_n/n)^{1/2}$ and $\bar q_{A_u}$ defined in \eqref{eq: thm 4.1 def q}.


Next, we apply Lemma \ref{Lemma:LassoMRateRaw}. We have to verify the condition on $\bar q_{A_u}$ required in the lemma. To do so, recall that $A_u = A_{u,1}\cup A_{u,2}$ where
\begin{align*}
&A_{u,1} = \{\delta\colon \|\delta^c_{T_u}\|_1\leq 2\tilde c\|\delta_{T_u}\|_1\},\\
&A_{u,2} =  \Big\{ \delta\colon \|\delta\|_1 \leq \frac{3n}{\lambda}\frac{c\|\widehat\Psi_{u0}^{-1}\|_\infty}{\ell c - 1}C_n\|\sqrt{w_{u}} X_u'\delta\|_{\Pn,2}\Big\}.
\end{align*}
Then $\bar q_{A_u}$ defined in \eqref{eq: thm 4.1 def q} equals $\bar q_{A_{u,1}}\wedge \bar q_{A_{u,2}}$ where $\bar q_{A_{u,1}}$ and $\bar q_{A_{u,2}}$ are defined similarly. To bound $\bar q_{A_{u,1}}$, we have
\begin{align*}
\bar q_{A_{u,1}}
&\geq \inf_{ \delta \in A_{u,1}} \frac{\En\[w_{u}| X_u'\delta|^2\]^{1/2}}{\max_{1\leq i\leq n}\| X_{u i}\|_\infty \|\delta\|_1 } \gtrsim_P \inf_{ \delta \in A_{u,1}} \frac{\En\[w_{u}| X_u'\delta|^2\]^{1/2}}{n^{1/(2q)}M_{n,2} \|\delta\|_1}\\
&\geq \inf_{ \delta \in A_{u,1}} \frac{\En\[w_{u}| X_u'\delta|^2\]^{1/2}}{n^{1/(2q)}M_{n,2} (1+2\tilde c)\sqrt {s_n} \|\delta_{T_u}\|}
\geq \frac{\bar \kappa_{2\tilde c}}{n^{1/(2q)}M_{n,2} (1+2\tilde c)\sqrt {s_n}}
\end{align*}
uniformly over $u\in\UU$ by Assumption \ref{ass: covariates}(viii) and definition of $\bar\kappa_{2\tilde c}$. By Lemma \ref{lemma:FSE}, sparse eigenvalues of order $\ell_n s_n$, for some sequence $\ell_n\to\infty$, are bounded away from zero and from above so that $\bar\kappa_{2\tilde c}$ is bounded away from zero with probability $1-o(1)$. Conclude that
$$
\bar q_{A_{u,1}} \gtrsim_P \frac{1}{n^{1/(2q)}M_{n,2} (1+2\tilde c)\sqrt {s_n}}\geq \frac{1}{\delta_n^{1/2}n^{1/4}}\gtrsim \left(\frac{s_n\log a_n}{\delta_n n}\right)^{1/2}
$$
uniformly over $u\in\mathcal U$ where the second inequality holds by Assumption \ref{ass: covariates}(viii) and the third by Assumption \ref{ass: covariates}(vi) (when we apply Assumption \ref{ass: covariates}(vi), we use the fact that $M_{n,1}\gtrsim 1$, which in turn follows from Assumption \ref{ass: covariates}(i)). Next, to bound $\bar q_{A_{u,2}}$, we have
\begin{align*}
\bar q_{A_{u,2}}
&\geq \inf_{ \delta \in A_{u,2}} \frac{\En\[w_{u}| X_u'\delta|^2\]^{1/2}}{\max_{1\leq i\leq n}\| X_{u i}\|_\infty \|\delta\|_1 } \gtrsim_P \inf_{ \delta \in A_{u,2}} \frac{\En\[w_{u}| X_u'\delta|^2\]^{1/2}}{n^{1/(2q)}M_{n,2} \|\delta\|_1}\\
&\geq \frac{\lambda}{3n C_n}\frac{\ell c - 1}{c}\frac{\|\widehat\Psi_{u0}^{-1}\|_\infty^{-1}}{n^{1/(2q)}M_{n,2}}\gtrsim_P \frac{\lambda}{C_n n^{1+1/(2q)}M_{n,2}}
\end{align*}
uniformly over $u\in\UU$ since $\sup_{u\in \UU}\|\widehat\Psi_{u0}^{-1}\|_\infty\lesssim 1$ with probability $1-o(1)$. Substituting $\lambda = c\sqrt n\Phi^{-1}(1 - \gamma/(2p N_n))$ and $C_n \lesssim (s_n \log a_n /n)^{1/2}$ gives
$$
\bar q_{A_{u,2}}\gtrsim_P \frac{1}{n^{1/(2q)}M_{n,2}\sqrt{s_n}}\geq \frac{1}{\delta_n^{1/2}n^{1/4}}\gtrsim\left(\frac{s_n\log a_n}{\delta_n n}\right)^{1/2}
$$
uniformly over $u\in\UU$. Moreover,
$$
\left(L+\frac{1}{c}\right)\|\widehat\Psi_{u0}\|_\infty\frac{\lambda\sqrt{s_n}}{n\bar\kappa_{2\tilde c}}+ 6\tilde c C_n \lesssim \left(\frac{s_n\log a_n}{n}\right)^{1/2}
$$
since $\sup_{u\in \UU}\|\widehat\Psi_{u0}\|_\infty\lesssim 1$ with probability $1-o(1)$. Hence, since $\delta_n = o(1)$, the condition on $\bar q_{A_u}$ required in Lemma \ref{Lemma:LassoMRateRaw} is satisfied with probability $1-o(1)$. In addition, note that \eqref{Eq:reg} holds with probability $1-o(1)$ by Lemma \ref{Thm:ChoiceLambda}. Therefore, applying Lemma \ref{Lemma:LassoMRateRaw} gives
\begin{equation}\label{eq: first claim thm 4.1}
\|\sqrt{w_u}X_u'(\hat\theta_u-\theta_u)\|_{\Pn,2} \lesssim (s_n\log a_n /n)^{1/2} \ \ \mbox{and} \ \ \|\hat\theta_u - \theta_u\|_1 \lesssim (s_n^2\log a_n /n)^{1/2}
\end{equation}
with probability $1-o(1)$ uniformly over $u\in\UU$.

The second inequality in \eqref{eq: first claim thm 4.1} gives the second inequality in the first asserted claim of the theorem. To transform the first inequality in \eqref{eq: first claim thm 4.1} into the first inequality in the first asserted claim of the theorem (and also to prove other claims), we apply Lemma \ref{Lemma:LassoMSparsity}. We have to verify \eqref{eq: def Ln}. To do so, note that
\begin{align*}
\sup_{u\in \UU}\max_{1\leq i\leq n}|X_{u i}'(\hat\theta_u-\theta_u)|
& \lesssim_P n^{1/(2q)}M_{n,2}\|\hat\theta_u-\theta_u\|_1 \\
& \lesssim \{n^{-1+1/q}M_{n,2}^2s_n^2\log a_n\}^{1/2} \lesssim \delta_n = o(1)
\end{align*}
by Assumption \ref{ass: covariates}(vii, viii) and since $M_{n,2}\geq 1$. Also, note that uniformly over $t$ and $\Delta t$ in $\RR$ with $|\Delta t|\leq 1$, we have
\begin{align}
|\Lambda(t + \Delta t) - \Lambda(t)|
& = \Big|\frac{e^{t+\Delta t}}{1 + e^{t + \Delta t}} - \frac{e^t}{1+e^t}\Big| = \frac{|e^{t+\Delta t} - e^t|}{(1+e^{t+\Delta t})(1+e^t)}\notag\\
&  = \frac{e^t|e^{\Delta t} - 1|}{(1+e^{t+\Delta t})(1+e^t)} \lesssim \frac{e^t|e^{\Delta t} - 1|}{(1+e^t)^2}\lesssim \Lambda'(t)\Delta t.\label{eq: l lipshitz derivation}
\end{align}
Thus, with probability $1-o(1)$,
\begin{align*}
&|[\partial_\theta M_u(Y_{u i},X_{u i},\hat\theta_u)-\partial_\theta M_u(Y_{u i},X_{u i},\theta_u)]'\delta|
\\
& \qquad = |\Lambda(X_{u i}'\widehat\theta_u) - \Lambda(X_{u i}'\theta_u)|\cdot|X_{u i}'\delta| \lesssim \Lambda'(X_{u i}'\theta_u)\cdot|X_{u i}'(\hat\theta_u-\theta_u)|\cdot |X_{u i}'\delta|
\end{align*}
uniformly over $i=1,\dots,n$ and $u\in\mathcal U$. Also, since $|r_u|\leq w_u/4$ by Assumption \ref{ass: approximation error} and $w_u = f_u^2\leq 1$,
$$
\Lambda'(X_{u i}'\theta_u) = f_{u i}^2 - r_{u i} + 2r_{u i}\Lambda(X_{u i}'\theta_u) + r_{u i}^2 \leq w_{u i} + 3|r_{u i}| + w_{u i}^2 \leq 3w_{u i} \leq 3\sqrt{w_{u i}};
$$
see the expression for $\Lambda'(X_{u i}'\theta_u)$ above in this proof. Therefore, for some constant $C$,
\begin{align*}
&|\En[\partial_\theta M_u(Y_u,X_u,\hat\theta_u)-\partial_\theta M_u(Y_u,X_u,\theta_u)]'\delta| \\
&\qquad \leq C\|\sqrt{w_u}X_u'(\hat\theta_u-\theta_u)\|_{\Pn,2}\|X_u'\delta\|_{\Pn,2} \leq L_n \|X_u'\delta\|_{\Pn,2}
\end{align*}
with probability $1 - o(1)$ uniformly over $u\in\UU$ for some $L_n$ satisfying $L_n \lesssim (s_n\log a_n/n)^{1/2}$. Hence, since $\sup_{u\in\UU}\semax{\ell_n s_n,u} \lesssim 1$ with probability $1-o(1)$ for some $\ell_n\to\infty$ sufficiently slowly by Lemma \ref{lemma:FSE}, Lemma \ref{Lemma:LassoMSparsity} implies that $\sup_{u\in\UU} \|\hat\theta_u\|_0 \lesssim s_n$ with probability $1-o(1)$, which is the second asserted claim of the theorem.


In turn, since $\sup_{u\in\UU} \|\hat\theta_u\|_0 \lesssim s_n$ with probability $1-o(1)$, Lemma \ref{lemma:FSE} also establishes that
$$
\|\sqrt{w_u}X_u'(\hat\theta_u-\theta_u)\|_{\Pn,2}\gtrsim \|\sqrt{w_u}X_u'(\hat\theta_u-\theta_u)\|_{\Pr_P,2} \gtrsim \|\hat\theta_u-\theta_u\|
$$
with probability $1-o(1)$ uniformly over $u\in\UU$, where the inequality follows from Assumption \ref{ass: covariates}(i). Combining these inequalities with \eqref{eq: first claim thm 4.1} gives the first inequality in the first asserted claim of the theorem.

It remains to prove the claim about the estimators $\widetilde\theta_u$. We apply Lemma \ref{Lemma:PostLassoMRateRaw}. We have to verify the condition (\ref{qreq: PostSelection}) on $\bar q_{A_u}$ required in the lemma. To do so, we first bound $\bar q_{A_u}$ from below for $A_u=\{ \delta\in\RR^p \colon \|\delta\|_0\leq C s\}$ where $C$ is a constant such that $\widehat s_u +s_n \leq C s_n$ with probability $1-o(1)$ uniformly over $u \in \UU$. We have
\begin{align*}
\bar q_{A_u}&  = {\displaystyle \inf_{ \delta \in A_u}} \frac{\En\[w_{u}| X_u'\delta|^2\]^{3/2}}{\En\[w_{u}| X_u'\delta|^3\]} \geq {\displaystyle \inf_{ \delta \in A_u}} \frac{\En\[w_{u}| X_u'\delta|^2\]^{1/2}}{{\displaystyle \max_{1\leq i\leq n}}\| X_{u i}\|_\infty \|\delta\|_1 }\\
&\geq {\displaystyle \inf_{ \|\delta\|_0\leq Cs_n} }\frac{\En\[w_{u}| X_u'\delta|^2\]^{1/2}}{{\displaystyle \max_{1\leq i\leq n}}\| X_{u i}\|_\infty \sqrt{Cs_n}\|\delta\|_2 } \gtrsim_P\frac{\sqrt{\semin{Cs_n,u}}}{\sqrt{s_n}n^{1/(2q)}M_{n,2}}\gtrsim \frac{\log^{1/4}a_n}{\delta_n^{1/2} n^{1/4}}
\end{align*}
uniformly over $u\in\UU$, where the inequality preceding the last one follows from Assumption \ref{ass: covariates}(vii) and the definition of $\semin{Cs_n,u}$, and the last one follows from Assumption \ref{ass: covariates}(viii) and the observation that by Lemma \ref{lemma:FSE}, $\inf_{u\in\UU}\semin{Cs_n,u}$ is bounded away from zero with probability $1-o(1)$ uniformly over $u\in\UU$.

Next we bound from above the right-hand side of (\ref{qreq: PostSelection}). It follows by  (\ref{AuxMuUpper}) that uniformly over $u\in\UU$ with probability $1-o(1)$,
$$ \En[M_u(Y_u,X_u,\widetilde \theta_u)]-  \En[M_u(Y_u,X_u,\theta_u)] \lesssim s_n \log a_n/n$$
since $\lambda/n \lesssim (\log a_n/n)^{1/2}$, $\|\hat\theta_u-\theta_u\|_1\lesssim (s_n^2\log a_n/n)^{1/2}$, and  $\sup_{u\in\UU} \|\widehat\Psi_{u0}\|_\infty\lesssim 1$ with probability $1-o(1)$.
Furthermore, $C_n \lesssim (s_n\log a_n/n)^{1/2}$ and
$$
\sup_{u\in\UU}\|\En[S_u]\|_\infty \leq \sup_{u\in\UU}\|\widehat \Psi_{u0}\|_\infty\|\widehat \Psi_{u0}^{-1}\En[S_u]\|_\infty \lesssim \lambda/n
$$
with probability $1-o(1)$ by the choice of $\lambda$; see Lemma \ref{Thm:ChoiceLambda}. Hence, it follows that the right-hand side of (\ref{qreq: PostSelection}) is bounded up to a constant by $(s_n\log a_n/n)^{1/2}$ with probability $1-o(1)$ uniformly over $u\in\UU$. Also, since $s_n^2\log a_n/n\lesssim 1$ (see Assumption \ref{ass: covariates}(viii) and recall that $M_{n,2}\geq 1$) and $\delta_n = o(1)$, the condition (\ref{qreq: PostSelection})  on $\bar q_{A_u}$ required in Lemma \ref{Lemma:PostLassoMRateRaw} holds with probability $1-o(1)$ uniformly over $u\in\UU$.
Hence, Lemma \ref{Lemma:PostLassoMRateRaw} implies that
$$ \| \sqrt{w_u}X_u(\widetilde\theta_u-\theta_u)\|_{\Pn,2} \lesssim (s_n\log a_n/n)^{1/2}$$
with probability $1-o(1)$ uniformly over $u\in\UU$. Finally, as in the case of $\widehat\theta_u$'s, we also have $\| \sqrt{w_u}X_u(\widetilde\theta_u-\theta_u)\|_{\Pn,2}\gtrsim\|\widetilde\theta_u-\theta_u\|$ with probability $1-o(1)$ uniformly over $u\in\UU$, which gives the last asserted claim and completes the proof of the theorem. \qed

\subsection*{Proof of Theorem \ref{Thm:RateEstimatedLassoLinear}}
The strategy of this proof is similar to that of Theorem \ref{Thm:RateEstimatedLassoLogistic}. In particular, we will rely upon results in Appendix \ref{sec: generic results} with $\UU$ and $p$ replaced by $\UUU = \UU \times [\pp]$ and $\bar p = p + \pp - 1$, respectively, where for $\tu = (u,j)\in\UUU$, we set $Y_\tu = D_j$, $X_\tu=(X^j)'=(D_{[\pp]\setminus j}',X')'$, $\theta_\tu=\bar \gamma_u^j$, $\widehat\theta_\tu = \widehat\gamma_u^j$, $\widetilde \theta_\tu = \widetilde \gamma_u^j$, $a_\tu = (f_u,\bar r_{uj})$, $\bar r_\tu= \bar r_{u j}= X^j(\gamma_u^j-\bar\gamma_u^j)$,  and $w_\tu =\hat f_u^2$. Note that for all $\tu\in\UUU$, we have that $\theta_\tu$ satisfies \eqref{A:EqMainFunc} where
$$
M_\tu(Y_\tu,X_\tu,\theta,a) = 2^{-1}f^2(D_j - X^j\theta - r)^2
$$
for $\theta$ being a vector in $\mathbb R^{\bar p}$ and $a$ being a pair $(f,r)$ of functions of $D$ and $X$. Similarly, $\widehat\theta_\tu$ satisfies \eqref{Adef:LassoFunc} where $M_\tu(Y_\tu,X_\tu,\theta) = M_\tu(Y_\tu,X_\tu,\theta,(\widehat f_u,\mathcal O))$ and $\mathcal O = \mathcal O(D,X)$, the identically zero function of $D$ and $X$.

We first verify Condition WL with
$$
\epsilon_n=\frac{\delta_n^2}{n^{1/2+1/q}(p + \pp)^{1/2}(M_{n,1}^2\vee M_{n,2}^2)},
$$
$N_n = p\pp^2 n^2$, and the following semi-metric $d_{\UUU}$ on $\UUU$: for all $\tu = (u,j)$ and $\tu'=(u',j')$ in $\UUU$, $d_{\UUU}(\tu,\tu') = |u - u'|$ if $j = j'$ and $d_{\UUU}(\tu,\tu') = 1$ otherwise. Observe that $N(\epsilon,\UUU,d_{\UUU})\leq \pp/\epsilon$ for all $\epsilon>0$. Also, note that $M_{n,1}^2\vee M_{n,2}^2 \leq \delta_n n^{1/2-1/q}$ by Assumption \ref{ass: covariates}(vi,viii), and so
$$
1/\epsilon_n \leq n(p + \pp)^{1/2}/\delta_n\leq n^2(p + \pp)^{1/2}
$$
since $\delta_n\geq 1/n$ by Assumption \ref{ass: covariates}(i,vii,viii). Thus, $\epsilon_n$ and $N_n$ satisfy the inequality $N_n\geq N(\epsilon_n,\UUU,d_{\UUU})$, which is the first requirement of Condition WL.

Next, we verify Condition WL(i).
As in front of Condition WL in Appendix \ref{sec: generic results}, for $\tu = (u,j)$, let
\begin{align*}
S_{\tu} & = S_{uj} = \partial_\theta M_\tu(Y_\tu,X_\tu,\theta,a_\tu)|_{\theta = \theta_\tu}\\
& = - f_u^2(D_j-X^j\gamma_u^j)(X^j)'= - f_u^2Z_u^j(X^j)'.
\end{align*}
Then the inequality $\Phi^{-1}(1-t)\lesssim \sqrt{\log(1/t)}$, which holds uniformly over $t\in(0,1/2)$, implies that
$$
(\Ep_P[|S_{\tu k}|^3])^{1/3}\Phi^{-1}(1-\gamma/2pN_n) \lesssim (\Ep_P[ |Z_u^jX^j_k|^3])^{1/3}\log^{1/2} a_n\leq \delta_n n^{1/6}
$$
uniformly over $\tu\in\UUU$ and $k\in[\bar p]$, where the second inequality holds by Assumption \ref{ass: covariates}(iii). Hence, Condition WL(i) holds for some $\varphi_n$ satisfying $\varphi_n\lesssim \delta_n$.

To verify Condition WL(ii), note that by Assumption \ref{ass: covariates}(ii), we have $\Ep_P[S_{\tu k}^2]\gtrsim 1$ and $$
\Ep_P[S_{\tu k}^2] \leq \Ep_P[ |Z_u^jX^j_k|^2] \leq \Ep_P[ |Z_u^j|^4 + |X^j_k|^4] \lesssim 1
$$
uniformly over $\tu = (u,j) \in\UUU$ and $k\in[\bar p]$ by Assumption \ref{ass: covariates}(iv).

To verify Condition WL(iii), we use the decomposition
$$
S_{\tu k}-S_{\tu' k}= - (f^2_u-f^2_{u'})Z_u^jX_k^j + f_{u'}^2X^j(\gamma_u^j-\gamma_{u'}^j)X_k^j
$$
for $\tu=(u,j)$ and $\tu'=(u',j)$ in $\UUU$. By (\ref{eq: f-lipshitz}) and (\ref{eq: gamma-lipshitz}) we have
 \begin{equation}\label{Lipsf}
 |f_u^2-f_{u'}^2| \lesssim |u-u'| \ \ \ \mbox{and} \ \ \ \|\gamma_u^j-\gamma_{u'}^j\|_1\lesssim  \sqrt{p + \pp}|u-u'|
 \end{equation}
 uniformly over $u,u'\in\UU$ and $j\in[\pp]$. Therefore, uniformly over $\tu=(u,j)$ and $\tu'=(u',j)$ in $\UUU$ such that $|u-u'|\leq \epsilon_n$, we have
\begin{align}
|S_{\tu k}-S_{\tu' k}|& \lesssim \Big( |Z_u^j| \|X^j\|_\infty+\|X^j\|_\infty^2\sqrt{p + \pp}\Big)\cdot|u-u'|\nonumber\\
& \lesssim \Big( |Z_u^j|^2 +\|X^j\|_\infty^2\Big)\cdot \sqrt{p + \pp}\epsilon_n.\label{eq: ExtraS}
\end{align}
Since
$$
\Ep_P[\max_{1\leq i\leq n,j\in[\pp]}\sup_{u\in\UU} |Z_{ui}^j|^2] \leq n^{1/q}M_{n,1}^2
$$
and
$$
\Ep_P[\max_{1\leq i\leq n, j\in[\pp]}\|X^j_i\|_\infty^2] \leq n^{1/q}M_{n,2}^2
$$
by Assumption \ref{ass: covariates}(v,vii), we have by Markov's inequality that
with probability $1-o(1)$,
$$
\sup_{d_{\UUU}(\tu,\tu')\leq \epsilon_n}\max_{k\in[\bar p]} |\En[S_{\tu k}-S_{\tu' k}]| \lesssim  \delta_n n^{-1/2}.
$$
In addition, uniformly over $\tu,\tu'\in\UUU$ with $d_{\UUU}(\tu,\tu')\leq \epsilon_n$ and $k\in[\bar p]$,
\begin{align*}
|\Ep_P[S_{\tu k}^2-S_{\tu' k}^2]|
&\leq \Big(\Ep_P[(S_{\tu k}-S_{\tu' k})^2]\Big)^{1/2}\cdot \Big(\Ep_P[(S_{\tu k}+S_{\tu' k})^2]\Big)^{1/2}\\
& \lesssim \Big( (M_{n,1}^2 + M_{n,2}^2)(p+\pp)^{1/2}\epsilon_n\Big)^{1/2}\lesssim \delta_n.
\end{align*}
Further, let $\UU^\epsilon$ denote a minimal $\epsilon_n$-net for $\UU$. Using (\ref{Lipsf}) and (\ref{eq: ExtraS}), we obtain that with probability $1-o(1)$,
$$
\sup_{u\in \UU} \max_{j\in[\pp], k \in[\bar p]} |(\En-\Ep_P)[S_{u j k}^2]|  \lesssim  \sup_{u\in \UU^\epsilon} \max_{j\in [\pp], k \in[\bar p]} |(\En-\Ep_P)[S_{u j k}^2]| + \delta_n.
$$
To bound the first term on the right-hand side of this inequality, we apply Lemma \ref{thm:RV34} with $X_{u i}$ and $p$ replaced by $S_{\tu i}$ and $\bar p$, respectively, and $k=1$, where $S_{\tu i} = (S_{\tu 1 i},\dots,S_{\tu [\bar p] i})' = - f_{u i}^2 Z_{u i}^j (X_i^j)'$ for $\tu = (u,j)\in\UUU$ and $i=1,\dots,n$. With
\begin{align*}
K^2 & = \Ep_P\left[ \max_{1\leq i\leq n} \sup_{u\in \UU} \max_{j\in[\pp],k \in[\bar p]} S_{u j k i}^2 \right]\\
& \leq \Ep_P\left[ \max_{1\leq i\leq n} \sup_{u\in\UU,j\in[\pp]} |Z_{ui}^j|^2\|X_i^j\|_\infty^2\right] \lesssim n^{2/q}(M_{n,1}^4 + M_{n,2}^4),
\end{align*}
which holds by Assumption \ref{ass: covariates}(v,vii), the lemma yields
\begin{align*}
\displaystyle  \max_{u\in \UU^\epsilon} \max_{j\in [\pp], k \in[\bar p]} \Big|(\En-\Ep_P)[S_{u j k}^2]\Big|
& \lesssim_P n^{-1/2}n^{1/q}(M_{n,1}^2 + M_{n,2}^2) \log a_n \\
&\lesssim \delta_n \log^{1/2}a_n = o(1)
\end{align*}
by Assumption \ref{ass: covariates}(vi) and (viii). Thus, Condition WL(iii) holds with some $\Delta_n$ and $\varphi_n$ satisfying $\Delta_n = o(1)$ and $\varphi_n = o(1)$.

Next, we verify Assumption \ref{ass: M}. The function $\theta\mapsto M_\tu(Y_\tu,X_\tu,\theta)$ is convex almost surely, which is the first requirement of Assumption \ref{ass: M}. Further, to verify Assumption \ref{ass: M}(a), note that
{\small
$$
[\partial_\theta M_u(Y_u,X,\theta_u)-\partial_\theta M_u(Y_u,X,\theta_u,a_u)]'\delta=  - \Big(\hat f_u^2 \bar r_{uj} +(\hat f_u^2 - f_u^2)Z_u^j\Big)\cdot X^j\delta,
$$}\!so that by the Cauchy-Schwarz and triangle inequalities, since $w_u = \hat f_u^2$, we have
\begin{align*}
&|\En[\partial_\theta M_u(Y_u,X,\theta_u)-\partial_\theta M_u(Y_u,X,\theta_u,a_u)]'\delta| \\
&\qquad \leq \Big(\|\hat f_u\bar r_{uj}\|_{\Pn,2}+\|(\hat f_u^2 - f_u^2)Z_u^j/\hat f_u\|_{\Pn,2}\Big)\cdot \|\sqrt{w_u} X^j\delta\|_{\Pn,2}. 
\end{align*}
To bound $\sup_{u\in \UU, j\in[\pp]}\|\hat f_u\bar r_{uj}\|_{\Pn,2}$, note that $\hat f_u \leq 1$, and so Lemma \ref{ControlRUJ} shows that with probability $1-o(1)$,
$$
\sup_{u\in\UU, j\in[\pp]}\|\hat f_u\bar r_{uj}\|_{\Pn,2} \leq \sup_{u\in\UU, j\in[\pp]}\|\bar r_{uj}\|_{\Pn,2}\lesssim (s_n\log a_n/n)^{1/2}.
$$
Also, by Lemma \ref{ControlExtra}, with probability $1-o(1)$,
\begin{equation}\label{eq: aaa thm 4.2}
\sup_{u\in \UU, j\in[\pp]} \|(\hat f_u^2 - f_u^2)Z_u^j/\hat f_u\|_{\Pn,2} \lesssim (s_n\log a_n/n)^{1/2}.
\end{equation}
Hence, Assumption \ref{ass: M}(a) holds for some $\Delta_n$ and $C_n$ satisfying $\Delta_n = o(1)$ and $C_n \lesssim (s_n\log a_n/n)^{1/2}$.

To prove Assumption \ref{ass: M}(b), as in Appendix \ref{sec: generic results}, for $\tu = (u,j) \in \UUU$, let $\widehat\Psi_{\tu 0} = \widehat\Psi_{u j 0} = \diag(\{l_{\tu 0 k}, k\in[\bar p]\})$ where $l_{\tu 0 k} = l_{u j 0 k} = (\En[S_{\tu k}^2])^{1/2}$, $k\in[\bar p]$. Note that by Condition WL(ii,iii), which is verified above,
$$
1\lesssim \widehat\Psi_{\tu 0 k}\lesssim 1
$$
with probability $1-o(1)$ uniformly over $\tu\in\UUU$ and $k\in[\bar p]$. Now, suppose that $\bar m = 0$ (even though Algorithm \ref{AlgFunc2} requires $\bar m\geq 1$). Then uniformly over $u\in\UU$, $j\in[\pp]$, and $k\in[\bar p]$ with probability $1-o(1)$,
\begin{align*}
\hat l_{u j k,0}\gtrsim \Big(\En[\hat f_u^4(D_j X^j_k)^2]\Big)^{1/2}\gtrsim \Big(\En[f_u^4(D_j X_k^j)^2]\Big)^{1/2}\gtrsim 1
\end{align*}
where the second inequality follows from the observation that $|\hat f_{u i}^2 - f_{u i}^2|\leq f_{u i}^2$ with probability $1- o(1)$ uniformly over $i=1,\dots,n$ and $u\in\UU$ (see \eqref{Multiplicative} in the proof of Lemma \ref{ControlExtra}), and the third from the same derivations as those used to obtain Condition WL(ii,iii). Also, uniformly over $u\in\UU$, $j\in[\pp]$, and $k\in[\bar p]$,
$$
\hat l_{u j k,0}\leq \max_{1\leq i\leq n}\|X^j_i\|_\infty(\En[D_j^2])^{1/2}\lesssim_P n^{1/(2q)}M_{n,2}
$$
by Assumption \ref{ass: covariates}(vii) since $\hat f_u\leq 1$. Therefore, Assumption \ref{ass: M}(b) holds with some $\Delta_n$, $L$, and $\ell$ satisfying $\Delta_n = o(1)$, $L\lesssim n^{1/(2q)}M_{n,2}\log^{1/2}a_n$, and $\ell\gtrsim 1$.

To establish Assumption \ref{ass: M}(b) for $\bar m\geq 1$, which is required by Algorithm \ref{AlgFunc2}, we proceed by induction. Assuming that Assumption \ref{ass: M}(b) holds with some $\Delta_n$, $\ell$, and $L$ satisfying $\Delta_n = o(1)$, $\ell\gtrsim 1$, and $L\lesssim n^{1/(2q)}M_{n,2}\log^{1/2}a_n$ when the number of loops in Algorithm \ref{AlgFunc2} is $\bar m - 1$, we can complete the proof of the theorem to show that $\|\hat f_u X^j(\widetilde \gamma_u^j-\bar\gamma_u^j)\|_{\Pn,2} \lesssim (L+1)(s_n\log a_n/n)^{1/2}$ with probability $1-o(1)$ uniformly over $u\in\UU$ and $j\in[\pp]$ for $m = \bar m - 1$. Thus, by the triangle inequality,
\begin{equation}\label{eq: thm 4.2 bbb}
\|\hat f_u X^j(\widetilde\gamma_u^j - \gamma_u^j)\|_{\Pn,2} \lesssim (L+1)(s_n^2\log a_n/n)^{1/2}
\end{equation}
with probability $1-o(1)$ uniformly over $u\in\UU$ and $j\in[\pp]$ since
\begin{align*}
\|\hat f_u X^j(\bar \gamma_u^j - \gamma_u^j)\|_{\Pn,2}
&\leq \|X^j(\bar\gamma_u^j - \gamma_u^j)\|_{\Pn,2}\\
&\leq \max_{1\leq i\leq n}\|X_i^j\|\cdot \|\bar\gamma_u^j - \gamma_u^j\|_1 \\
&\lesssim_P n^{1/(2q)}M_{n,2}(s_n^2\log a_n/n)^{1/2}
\end{align*}
uniformly over $u\in\UU$ and $j\in[\pp]$. Then for $m = \bar m$, we have uniformly over $u\in\UU$, $j\in[\pp]$, and $k\in[\bar p]$,
{\small
\begin{align*}
 |\hat l_{u j k,m} - l_{u j 0 k}| & = \Big| \Big(\En[ \hat f_u^4 (X_k^j)^2 (D^j-X^j\widetilde\gamma^j_u)^2   ]  \Big)^{1/2} \\
 &\quad - \Big( \En[  f_u^4 (X_k^j)^2 (D^j-X^j\gamma^j_u)^2   ]  \Big)^{1/2} \Big|\\
 & \leq \Big| \Big(\En[ \hat f_u^4 (X_k^j)^2 (X^j(\widetilde\gamma^j_u-\gamma^j_u))^2   ]  \Big)^{1/2} \\
 &\quad - \Big( \En[ (\hat f_u^2 - f_u^2)^2 (X_k^j)^2 (D^j-X^j\gamma^j_u)^2 ]  \Big)^{1/2} \Big|\\
 & \lesssim \Big(  \|\hat f_u X^j(\widetilde \gamma_u^j - \gamma_u^j)\|_{\Pn,2} +  \|(\hat f_u^2 - f_u^2)Z_u^j/\hat f_u\|_{\Pn,2} \Big)\cdot \max_{1\leq i\leq n}\| X^j\|_\infty  \\
 & \lesssim_P n^{1/(2q)}M_{n,2}(s_n^2\log a_n^2/n)^{1/2} n^{1/(2q)}M_{n,2}\\
 & = n^{-1/2 + 1/q} M_{n,2}^2 s_n\log a_n  \lesssim \delta_n \log^{1/2} a_n = o(1),
 \end{align*}}\!where the first inequality follows from the triangle inequality and the observation that $(\hat f_u^2 - f_u^2)^2\leq f_u^4 - \hat f_u^4$, the second from $\hat f_i\leq 1$ and $f_u\leq 1$, the third from \eqref{eq: aaa thm 4.2} and \eqref{eq: thm 4.2 bbb}, and the last from Assumption \ref{ass: covariates}(viii). Thus, for $\bar m\geq 1$, Assumption \ref{ass: M}(b) holds for some $\Delta_n$, $\ell$, and $L$ satisfying $\Delta_n = o(1)$, $\ell\gtrsim 1$, and $L\lesssim 1$.

Further, Assumption \ref{ass: M}(c) holds with $\Delta_n = 0$ and $\bar q_{A_\tu} = \infty$ for any $A_\tu$ since for any $\tu = (u,j)\in\UUU$ and $\delta\in\mathbb R^{\bar p}$, we have
\begin{align*}
&\En[M_{\tu}(Y_{\tu},X_\tu,\theta_\tu + \delta)] - \En[M_\tu(Y_\tu,X_\tu,\theta_\tu)] \\
&\qquad = - \En[\hat f_u^2(D_j - X^j\bar \gamma_u^j)X^j\delta] + 2^{-1}\En[(\hat f_u X^j\delta)^2]
\end{align*}
and
$$
2^{-1}\En[(\hat f_u X^j\delta)^2]\geq \En[\hat f_u (X^j\delta)^2] = \|\sqrt{w_u}X_\tu'\delta\|_{\Pn,2}
$$
where we used $\hat f_u \leq 1/2$.

We are now ready to apply Lemma \ref{Lemma:LassoMRateRaw}. Observe that by Lemma \ref{Thm:ChoiceLambda}, $\lambda$ satisfies \eqref{Eq:reg} with probability $1-o(1)$. Also, as established in \eqref{Multiplicative} in the proof of Lemma \ref{ControlExtra}, $|\hat f_{ui}^2 - f_{ui}^2|\leq f_{ui}^2/2$ with probability $1-o(1)$ uniformly over $i=1,\dots,n$ and $u\in\UU$. Therefore, since Lemma \ref{lemma:FSE} implies that for some $\ell_n$ satisfying $\ell_n\to\infty$,
$$
1\lesssim \min_{\|\delta\|_0\leq \ell_n s_n}\frac{\|f_u X_\tu'\delta\|_{\Pn,2}^2}{\|\delta\|^2}\leq \max_{\|\delta\|_0\leq \ell_n s_n}\frac{\|X_\tu'\delta\|_{\Pn,2}^2}{\|\delta\|^2}\lesssim 1
$$
with probability $1-o(1)$ uniformly over $\tu\in\UUU$, it follows that $\bar\kappa_{2\tilde c}\gtrsim 1$ with probability $1-o(1)$. In addition, $\sup_{\tu\in\UUU}\|\widehat\Psi_{\tu 0}\|_\infty\lesssim 1$ and $\sup_{\tu\in\UUU}\|\widehat\Psi_{\tu 0}^{-1}\|_\infty\lesssim 1$ with probability $1-o(1)$.
Therefore, applying Lemma \ref{Lemma:LassoMRateRaw} gives
{\small
\begin{equation}\label{eq: first claim thm 4.2}
\|\hat f_uX^j(\hat\gamma_u^j-\bar \gamma_u^j)\|_{\Pn,2} \lesssim (s_n\log a_n/n)^{1/2} \mbox{ and } \|\hat\gamma_u^j-\bar \gamma_u^j\|_1 \lesssim (s_n^2\log a_n/n)^{1/2}
\end{equation}}\!with probability $1-o(1)$ uniformly over $u\in\UU$ and $j\in[\pp]$.

The second inequality in \eqref{eq: first claim thm 4.2} gives the second inequality in the first asserted claim of the theorem. To transform the first inequality in \eqref{eq: first claim thm 4.2} into the first inequality in the first asserted claim of the theorem (and also to prove other claims), we apply Lemma \ref{Lemma:LassoMSparsity}. We have to verify \eqref{eq: def Ln}. To do so, note that for $\tu = (u,j)\in\UUU$, we have
$$
|[\partial_\theta M_\tu(Y_\tu,X_\tu,\hat\theta_\tu)-\partial_\theta M_\tu(Y_\tu,X_\tu,\theta_\tu)]'\delta|\leq |\hat f_u X^j(\hat\theta_\tu-\theta_\tu)|\cdot |\hat f_uX^j\delta|.
$$
Therefore, by the Cauchy-Schwarz inequality and since $\hat f_u \leq 1$,
\begin{align*}
&|\En[\partial_\theta M_\tu(Y_\tu,X_\tu,\hat\theta_\tu)-\partial_\theta M_\tu(Y_\tu,X_\tu,\theta_\tu)]'\delta|\\
&\qquad \leq \|\hat f_u X^j(\hat\gamma_u^j-\bar\gamma_u^j)\|_{\Pn,2}\|\hat f_uX^j\delta\|_{\Pn,2} \leq L_n \|X^j\delta\|_{\Pn,2}
\end{align*}
with probability $1-o(1)$ uniformly over $\tu = (u,j)\in\UUU$ for some $L_n$ satisfying $L_n \lesssim (s_n\log a_n/n)^{1/2}$. Thus, since $\sup_{\tu\in\UUU}\semax{\ell_n s_n,\tu} \lesssim 1$ for some $\ell_n\to\infty$ with probability $1-o(1)$ by Lemma \ref{lemma:FSE}, it follows from Lemma  \ref{Lemma:LassoMSparsity} that $\sup_{u\in\UU} \|\hat\gamma_u^j\|_0 \lesssim s_n$ with probability $1-o(1)$ uniformly over $u\in\UU$ and $j\in[\pp]$, which is the second asserted claim of the theorem.

In turn, with probability $1-o(1)$, uniformly over $u\in \UU$ and $j\in[\pp]$, we have
$$
\|\hat f_uX^j(\hat\gamma_u^j-\bar\gamma_u^j)\|_{\Pn,2} \gtrsim \| f_uX^j(\hat\gamma_u^j-\bar\gamma_u^j)\|_{\Pn,2} \gtrsim \|\hat\gamma_u^j-\bar\gamma_u^j\|.
$$
Combining these inequalities with \eqref{eq: first claim thm 4.2} gives the first inequality in the first asserted claim of the theorem.

It remains to prove the claim about the estimators $\widetilde\gamma_u^j$. We apply Lemma \ref{Lemma:PostLassoMRateRaw}. The condition (\ref{qreq: PostSelection}) on $\bar q_{A_\tu}$ required in the lemma holds almost surely since $\bar q_{A_\tu}=\infty$. Also, it follows from (\ref{AuxMuUpper}) that uniformly over $u\in\UU$ and $j\in[\pp]$ with probability $1-o(1)$,
$$
\En[M_\tu(Y_\tu,X_\tu,\widetilde \theta_\tu)] - \En[M_\tu(Y_\tu,X_\tu,\theta_\tu)]  \lesssim s_n \log a_n/n
$$
since $\lambda/n \lesssim (\log a_n/n)^{1/2}$, $\sup_{u\in\UU,j\in[\pp]}\|\hat\gamma_u^j-\bar\gamma_u^j\|_1\lesssim (s_n^2\log a_n/n)^{1/2}$, and $\sup_{u\in\UU,j\in[\pp]} \|\widehat\Psi_{uj0}\|_\infty\lesssim 1$ with probability $1-o(1)$.
Furthermore, $C_n \lesssim (s_n\log a_n/n)^{1/2}$ and
$$
\sup_{\tu\in\UUU}\|\En[S_{\tu}]\|_\infty \leq \sup_{\tu\in\UUU}\|\widehat \Psi_{\tu 0}\|_\infty\|\widehat \Psi_{\tu 0}^{-1}\En[S_{\tu}]\|_\infty \lesssim \lambda/n
$$
with probability $1-o(1)$ by the choice of $\lambda$; see Lemma \ref{Thm:ChoiceLambda}. In addition, uniformly over $\tu\in\UUU$ with probability $1-o(1)$, we have $\widehat s_\tu + s_n \lesssim s_n$ and $\semin{C s_n,\tu}\gtrsim 1$ for arbitrarily large $C$. Hence, by Lemma \ref{Lemma:PostLassoMRateRaw},
$$
\| \sqrt{w_\tu} X^j(\widetilde\gamma_u^j-\bar\gamma_u^j)\|_{\Pn,2} \lesssim (s_n\log a_n/n)^{1/2}
$$
with probability $1-o(1)$ uniformly over $u\in\UU$ and $j\in[\pp]$. Finally, as in the case of $\widehat\gamma_u^j$'s, we also have $ \| \sqrt{w_\tu} X^j(\widetilde\gamma_u^j-\bar\gamma_u^j)\|_{\Pn,2}\gtrsim \|\widetilde\gamma_u^j-\bar\gamma_u^j\|$ with probability $1-o(1)$ uniformly over $u\in\UU$ and $j\in[\pp]$, which gives the last asserted claim and completes the proof of the theorem. \qed

\section{Auxiliary Lemmas for Proofs of Theorems 4.1 and 4.2}

\begin{lemma}[Control of Approximation Error]\label{ControlRUJ}
Suppose that Assumptions \ref{ass: parameters} -- \ref{ass: approximation error} hold for all $P\in\mathcal P_n$. Then for $\bar r_{uj}=X^j(\gamma_u^j - \bar\gamma_u^j)$,  we have with probability $1-o(1)$ that
$$
\sup_{u\in\UU,j\in[\pp]}\En[\bar r_{uj}^2] \lesssim s_n\log a_n/n
$$
uniformly over $P\in\mathcal P_n$.
\end{lemma}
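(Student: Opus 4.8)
The plan is to decompose $\En[\bar r_{uj}^2]$ into its population value and a stochastic fluctuation, and to exploit the fact that the dependence on $u\in\UU$ enters only through the deterministic vector $\delta_{uj}:=\gamma_u^j-\bar\gamma_u^j$, which Assumption \ref{ass: sparsity} controls uniformly in $(u,j)$; thus no empirical process over $\UU$ is needed and the only stochastic content is a single maximum over a fixed finite family of covariate products. For the population part, since $X^j=(D_{[\pp]\setminus j}',X')$ is a sub-vector of $(D',X')$, Assumption \ref{ass: covariates}(iv) and Jensen's inequality give $\Ep_P[\bar r_{uj}^2]=\Ep_P[(X^j\delta_{uj})^2]\le\sup_{\|\xi\|=1}\Ep_P[((D',X')\xi)^2]\,\|\delta_{uj}\|^2\le C_1^{1/2}\|\delta_{uj}\|^2$, and Assumption \ref{ass: sparsity} yields $\|\delta_{uj}\|\lesssim(s_n\log a_n/n)^{1/2}$ and $\|\delta_{uj}\|_1\lesssim s_n^{1/2}(s_n\log a_n/n)^{1/2}$ uniformly over $u\in\UU$ and $j\in[\pp]$, so $\sup_{u,j}\Ep_P[\bar r_{uj}^2]\lesssim s_n\log a_n/n$.

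For the fluctuation I would write $\bar r_{uj}^2=\delta_{uj}'\big((X^j)'X^j\big)\delta_{uj}$ and apply the elementary bound $|v'Av|\le\|v\|_1^2\max_{k,l}|A_{kl}|$ to obtain
$$\sup_{u,j}\big|\En[\bar r_{uj}^2]-\Ep_P[\bar r_{uj}^2]\big|\le\Big(\sup_{u,j}\|\delta_{uj}\|_1^2\Big)\,\max_{a,b}\big|(\En-\Ep_P)[\zeta_a\zeta_b]\big|,$$
where $\{\zeta_a\}$ runs over the $p+\pp$ coordinates of $(D',X')$ (every entry of every $X^j$ is one of them), so the inner maximum is over at most $(p+\pp)^2\lesssim a_n^2$ terms. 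Each product $\zeta_a\zeta_b$ has $\Ep_P[(\zeta_a\zeta_b)^2]\le C_1$ by Assumption \ref{ass: covariates}(iv) and envelope $(\|D\|_\infty\vee\|X\|_\infty)^2$ with $L^q(P)$-norm at most $M_{n,2}^2$ by Assumption \ref{ass: covariates}(vii); invoking a maximal inequality for sums of independent variables (e.g.\ Lemma \ref{thm:RV34}, or the truncation-based Lemma \ref{lemma:CCK} applied to this finite class) would then give $\max_{a,b}|(\En-\Ep_P)[\zeta_a\zeta_b]|\lesssim_P\sqrt{\log a_n/n}+M_{n,2}^2\log a_n/n^{1-2/q}$.

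Combining the two estimates, and using $\sup_{u,j}\|\delta_{uj}\|_1^2\lesssim s_n^2\log a_n/n$, the fluctuation is of order $s_n^2n^{-1}(\log a_n)\big(\sqrt{\log a_n/n}+M_{n,2}^2\log a_n/n^{1-2/q}\big)$, and I would finish by checking this is $o_P(s_n\log a_n/n)$: it reduces to $s_n(\log a_n)^{1/2}n^{-1/2}\to0$ and $M_{n,2}^2 s_n(\log a_n)n^{-(1-2/q)}\to0$, both of which follow from Assumption \ref{ass: covariates}(viii) (which gives $M_{n,2}^2 s_n\log^{1/2}a_n\le\delta_n n^{1/2-1/q}$, using $M_{n,2}\ge1$ and $q>4$ for the first, and, for the second, $\log^{1/2}a_n\lesssim\delta_n n^{1/6}$ from Assumption \ref{ass: covariates}(ii,iii) together with $1/2-1/q>1/6$). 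Adding the population bound then yields $\sup_{u,j}\En[\bar r_{uj}^2]\lesssim s_n\log a_n/n$ with probability $1-o(1)$, uniformly over $P\in\mathcal P_n$. I expect no genuine difficulty: the only thing to get right is arranging the fluctuation bound so that the $u$-uniformity is absorbed entirely into the deterministic norm bounds of Assumption \ref{ass: sparsity}---reducing everything to one finite-dimensional maximal inequality---and then tracking the resulting rate against the growth conditions in Assumption \ref{ass: covariates}.
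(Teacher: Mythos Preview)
Your proof is correct and takes a genuinely simpler route than the paper's. The paper's argument treats $\UU$ as a continuous index set and controls the fluctuation via a covering-number calculation: it embeds $\bar r_{uj}^2$ into a class $\mG\subset\cup_{j,\,|T|\le s_n}\mG_{j,T}$, invokes the Lipschitz property $\|\gamma_u^j-\gamma_{u'}^j\|\lesssim|u-u'|$ (established elsewhere in the paper, equation~(\ref{eq: gamma-lipshitz2})) to bound the entropy of each $\mG_{j,T}$, and then applies the maximal inequality of Lemma~\ref{lemma:CCK} with $v_n\asymp s_n$. Your key observation---that the $u$-dependence enters only through the deterministic vector $\delta_{uj}$, whose $\ell_1$-norm is controlled uniformly by Assumption~\ref{ass: sparsity}---collapses the empirical process over $\UU\times[\pp]$ to a single finite maximum over the $(p+\pp)^2$ products $\zeta_a\zeta_b$. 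This avoids both the union over sparsity patterns and the Lipschitz-in-$u$ input entirely. The paper's route is more in the spirit of its general machinery and would extend to settings where the covariate vector itself varies with $u$; your route is cleaner for the lemma as stated and uses strictly fewer structural ingredients. The rate accounting at the end (checking the fluctuation is $o_P(s_n\log a_n/n)$ via Assumption~\ref{ass: covariates}(viii) and the bound $\log^{1/2}a_n\lesssim\delta_n n^{1/6}$) is correct.
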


\begin{lemma}[Control of Estimated Weights and Score]\label{ControlExtra}
Suppose that Assumptions \ref{ass: parameters} -- \ref{ass: approximation error} hold for all $P\in\mathcal P_n$. Then with probability $1-o(1)$, we have
$$
\sup_{u\in \UU, j\in[\pp]} \|(\hat f_u^2 - f_u^2)Z_u^j/\hat f_u\|_{\Pn,2} \lesssim (s_n\log a_n/n)^{1/2}
$$
uniformly over $P\in\mathcal P_n$.
\end{lemma}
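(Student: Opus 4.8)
The plan is to pass through a uniform multiplicative comparison of $\hat f_u^2$ with $f_u^2$, then decompose $(\hat f_u^2-f_u^2)/\hat f_u$ into a small ``bias'' part (absorbed by Assumption~\ref{ass: approximation error}) and a ``cross'' part controlled by a weighted sparse‑eigenvalue bound. For the first step, set $t_u=D'\theta_u+X'\beta_u$, $\hat t_u=D'\widetilde\theta_u+X'\widetilde\beta_u$, and $\Delta_u=\hat t_u-t_u$, so that $\hat f_u^2=\Lambda'(\hat t_u)$. Theorem~\ref{Thm:RateEstimatedLassoLogistic} gives, with probability $1-o(1)$ and uniformly in $u$, that $\widetilde\theta_u,\widetilde\beta_u$ have supports of size $\lesssim s_n$ and $\|\widetilde\theta_u-\theta_u\|_1+\|\widetilde\beta_u-\beta_u\|_1\lesssim(s_n^2\log a_n/n)^{1/2}$; combined with $\max_{i\le n}(\|D_i\|_\infty\vee\|X_i\|_\infty)\lesssim_P n^{1/(2q)}M_{n,2}$ and Assumption~\ref{ass: covariates}(vii,viii) this yields $\max_{i\le n}|\Delta_{ui}|\lesssim_P\delta_n n^{-1/(2q)}=o(1)$, uniformly in $u$. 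Since $\tfrac{d}{dt}\log\Lambda'(t)=1-2\Lambda(t)\in[-1,1]$, we get $e^{-|\Delta_u|}\le\hat f_u^2/\Lambda'(t_u)\le e^{|\Delta_u|}$ pointwise, and Assumption~\ref{ass: approximation error}(iv) forces $\Lambda'(t_u)\asymp f_u^2$ with multiplicative constants close to $1$ (as in the proof of Theorem~\ref{theorem:inferenceAlg1}); hence, for $n$ large, with probability $1-o(1)$, $|\hat f_{ui}^2-f_{ui}^2|\le f_{ui}^2/2$ for all $i\le n$ uniformly in $u$ — the display (Multiplicative) also invoked in the proof of Theorem~\ref{Thm:RateEstimatedLassoLinear}.

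Next, a mean‑value expansion gives $|\hat f_u^2-f_u^2|\le|\Lambda'(\hat t_u)-\Lambda'(t_u)|+|\Lambda'(t_u)-f_u^2|\le\Lambda'(\bar t_u)|\Delta_u|+C|r_u|$ for some $\bar t_u$ between $t_u$ and $\hat t_u$; using $|\Lambda''|\le\Lambda'$, $\Lambda'(\bar t_u)\le e^{|\Delta_u|}\Lambda'(t_u)\lesssim f_u^2$, and $\hat f_u^2\ge f_u^2/2$ from the first step, one obtains $$\Big|\frac{(\hat f_u^2-f_u^2)Z_u^j}{\hat f_u}\Big|\lesssim|\Delta_u|\,|f_uZ_u^j|+\frac{|r_u|}{f_u}\,|Z_u^j|,$$ so that on that event $\En\big[\big((\hat f_u^2-f_u^2)Z_u^j/\hat f_u\big)^2\big]\lesssim\En[\Delta_u^2 f_u^2(Z_u^j)^2]+\En[(r_uZ_u^j/f_u)^2]$ uniformly in $u,j$. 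The second term is $\lesssim s_n\log a_n/n$ directly by Assumption~\ref{ass: approximation error}(v), so the problem reduces to bounding the cross term $\En[\Delta_u^2 f_u^2(Z_u^j)^2]$.

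For the cross term, write $\Delta_u=\widetilde X_u'\zeta_u$ with $\widetilde X_u=(D',X')'$ and $\zeta_u$ the estimation error of $(\theta_u',\beta_u')'$, which by Theorem~\ref{Thm:RateEstimatedLassoLogistic} and Assumption~\ref{ass: sparsity} is $\lesssim s_n$‑sparse with $\|\zeta_u\|\lesssim_P(s_n\log a_n/n)^{1/2}$. Then $$\En[\Delta_u^2 f_u^2(Z_u^j)^2]\le\|\zeta_u\|^2\cdot\sup_{\|\delta\|=1,\ \|\delta\|_0\lesssim s_n}\En\big[f_u^2(Z_u^j)^2(\widetilde X_u'\delta)^2\big],$$ and I would control the supremum — a weighted empirical sparse eigenvalue — uniformly over $u\in\UU$ and $j\in[\pp]$ by applying the sparse‑eigenvalue control of Lemma~\ref{lemma:FSE} to the reweighted design with rows $f_{ui}Z_{ui}^j\widetilde X_{ui}$. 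Its population version is $\le\tfrac14\sup_{\|\delta\|=1}(\Ep_P[(Z_u^j)^4])^{1/2}(\Ep_P[(\widetilde X_u'\delta)^4])^{1/2}\lesssim 1$ by Assumption~\ref{ass: covariates}(iv) (using $Z_u^j=\widetilde X_u'\xi_u^j$ with $\|\xi_u^j\|\lesssim 1$ by Assumption~\ref{ass: parameters}); the empirical fluctuation is $\lesssim_P(M_{n,1}^2M_{n,2}^2 s_n\log a_n/n)^{1/2}=o(1)$ by Assumption~\ref{ass: covariates}(v,vii,ix); and the extra index $j\in[\pp]$, together with the Lipschitz‑in‑$u$ bounds \eqref{eq: f-lipshitz} and \eqref{eq: gamma-lipshitz}, adds only $\log a_n$ to the relevant entropies. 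Hence $\En[\Delta_u^2 f_u^2(Z_u^j)^2]\lesssim_P s_n\log a_n/n$ uniformly in $u,j$, which combined with the decomposition above gives $\sup_{u,j}\En\big[\big((\hat f_u^2-f_u^2)Z_u^j/\hat f_u\big)^2\big]\lesssim s_n\log a_n/n$ with probability $1-o(1)$, i.e.\ the claim.

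The main obstacle is this cross term. The tempting shortcut — bound $\En[\Delta_u^2 f_u^2(Z_u^j)^2]\le\max_{i\le n}(Z_{ui}^j)^2\cdot\En[f_u^2\Delta_u^2]\lesssim_P n^{1/q}M_{n,1}^2\cdot(s_n\log a_n/n)$ using the weighted rate from Theorem~\ref{Thm:RateEstimatedLassoLogistic} — fails, because $M_{n,1}^2\gtrsim 1$ leaves an uncancelled factor $n^{1/q}M_{n,1}^2$ that the growth conditions do not absorb. One is therefore forced to keep $(Z_u^j)^2$ inside an \emph{averaged} quantity and run the weighted‑sparse‑eigenvalue argument, and the delicate bookkeeping lies in verifying that the reweighted design $f_u|Z_u^j|\widetilde X_u$ still satisfies the moment and growth hypotheses required by Lemma~\ref{lemma:FSE}; this is exactly where Assumption~\ref{ass: covariates}(iv,v,vii,viii,ix) is used. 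By contrast, the $r_u$ bias term is essentially free because Assumption~\ref{ass: approximation error}(v) is stated precisely so as to dominate it, and the multiplicative comparison in the first step is routine.
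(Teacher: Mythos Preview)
Your proposal is correct and follows essentially the same route as the paper: the multiplicative comparison $\hat f_u^2\asymp f_u^2$ (your first step is exactly the paper's display (\ref{Multiplicative})), the decomposition into an $r_u$-bias piece handled by Assumption~\ref{ass: approximation error}(v) and a cross term $\En[\Delta_u^2 f_u^2(Z_u^j)^2]$, and the control of that cross term via a sparse-eigenvalue bound on the reweighted design $Z_u^j(D',X')'$. Two small corrections: the tool you actually need is Lemma~\ref{thm:RV34} applied directly to vectors $Z_u^j(D',X')'$ (which is precisely what the paper does), not Lemma~\ref{lemma:FSE}, whose statement only covers the designs $f_u(D',X')$ and $(D',X')$; and your fluctuation rate is missing an $n^{1/q}$ factor from the envelope $K\lesssim n^{1/q}M_{n,1}M_{n,2}$, though the corrected bound is still $o(1)$ via Assumptions~\ref{ass: covariates}(vi,viii,ix) and $\delta_n^2\log a_n=o(1)$, as in the paper's final chain of inequalities. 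One execution difference worth noting: you decompose pointwise \emph{before} taking the supremum over $u$, so the cross term carries no $f_u^{-1}$ and its discretization over $\UU$ needs only a polynomial-size $\epsilon$-net; the paper instead discretizes the full expression $\En[(\hat f_u^2-f_u^2)^2(Z_u^j/f_u)^2]$ first, which forces it to lower-bound $f_u^2$ (hence bound $\En[f_u^{-2}]$) and to track the piecewise-constant jumps of $\hat f_u$, leading to the exponential-size $\epsilon$-net in its proof. Your order of operations is cleaner here.
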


\begin{lemma}[Functional Sparse Eigenvalues]\label{lemma:FSE}
Suppose that Assumptions \ref{ass: parameters} -- \ref{ass: approximation error} hold for all $P\in\mathcal P_n$. Then for $\ell_n\to\infty$ slowly enough, we have
\begin{align*}
 \displaystyle \sup_{u\in\UU} \sup_{\|\delta\|_0\leq \ell_n s_n, \|\delta\|=1}| 1 - \| f_u (D',X')\delta\|_{\Pn,2}/\| f_u (D',X')\delta\|_{\Pr,2} | = o_P(1) \ \ \mbox{and} \\
  \displaystyle \sup_{\|\delta\|_0\leq \ell_n s_n, \|\delta\|=1}| 1 - \|  (D',X')\delta\|_{\Pn,2}/\|  (D',X')\delta\|_{\Pr,2} | = o_P(1)
\end{align*}
uniformly over $P\in\mathcal P_n$.
\end{lemma}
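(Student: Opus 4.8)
The plan is to recast both claims as a single deviation bound for an empirical process indexed by a VC-type class of sparse quadratic forms, and then to invoke the maximal inequality of Lemma~\ref{lemma:CCK}. Write $\widetilde X=(D',X')'$. For $\delta$ with $\|\delta\|\le1$ and $\|\delta\|_0\le m:=\ell_n s_n$ we have $\|f_u\widetilde X'\delta\|_{\Pn,2}^2=\En[(f_u\widetilde X'\delta)^2]$ and $\|f_u\widetilde X'\delta\|_{\Pr,2}^2=\Ep_P[(f_u\widetilde X'\delta)^2]\ge c_1$, the last inequality by Assumption~\ref{ass: covariates}(i); for the unweighted statement the population square is also bounded below by $c_1$ since $f_u^2\le1$. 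Because $|1-\sqrt{1+x}|\le|x|$ for $x\ge-1/2$, both ratio statements follow once we establish
\begin{equation*}
\sup_{u\in\UU}\ \sup_{\|\delta\|_0\le m,\ \|\delta\|\le1}\Big|\En[(f_u\widetilde X'\delta)^2]-\Ep_P[(f_u\widetilde X'\delta)^2]\Big|=o_P(1),
\end{equation*}
the unweighted case being $f_u\equiv1$.

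First I would control the uniform covering numbers of $\mG=\{(f_u\widetilde X'\delta)^2:u\in\UU,\ \|\delta\|_0\le m,\ \|\delta\|\le1\}$. The class $\{f_u^2:u\in\UU\}$ is bounded by $1/4$ and, by the Lipschitz bound $|f_u^2-f_{u'}^2|\lesssim|u-u'|$ (which holds by Assumption~\ref{ass: density}) on the compact interval $\UU=[0,1]$, has uniform entropy $\lesssim\log(1/\epsilon)$; the class $\{(\widetilde X'\delta)^2:\|\delta\|_0\le m,\ \|\delta\|\le1\}$ is a union of $\binom{p+\pp}{m}$ classes of squared linear functionals on $m$-dimensional coordinate subspaces, hence has uniform entropy $\lesssim m\log(a_n/\epsilon)$. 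By Lemma~\ref{lemma: andrews}, the product class $\mG$ has uniform entropy $\lesssim m\log(a_n/\epsilon)$. An envelope for $\mG$ is $G=m(\|D\|_\infty\vee\|X\|_\infty)^2$ (using $f_u^2\le1$ and $\|\delta\|_1^2\le m$), with $\|G\|_{P,q/2}\lesssim mM_{n,2}^2$ by Assumption~\ref{ass: covariates}(vii); moreover $\sup_{g\in\mG}\|g\|_{P,2}=\sup_{u,\delta}\|f_u\widetilde X'\delta\|_{P,4}^2\le\sup_{\|\xi\|=1}\Ep_P[(\widetilde X'\xi)^4]^{1/2}\lesssim1$ by Assumption~\ref{ass: covariates}(iv).

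Then Lemma~\ref{lemma:CCK} applied to $\mG$ (legitimate since $q/2>2$) gives
\begin{equation*}
\Ep_P\Big[\sup_{g\in\mG}\big|\En g-\Ep_P g\big|\Big]\lesssim\sqrt{\frac{m\log a_n}{n}}+\frac{m^2 M_{n,2}^2\log a_n}{n^{1-2/q}}.
\end{equation*}
Substituting $m=\ell_n s_n$: the first term is $\sqrt{\ell_n s_n\log a_n/n}$, which is $o(1)$ for $\ell_n$ diverging slowly because $s_n\log a_n=o(n)$ follows from Assumption~\ref{ass: covariates}(viii); and, using $M_{n,2}\ge1$ together with the squared form $M_{n,2}^4 s_n^2\log a_n\le\delta_n^2 n^{1-2/q}$ of Assumption~\ref{ass: covariates}(viii), the second term is $\le\ell_n^2\delta_n^2$, which is $o(1)$ for $\ell_n$ diverging slowly enough. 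Choosing $\ell_n\to\infty$ so that $\ell_n^2\delta_n^2\to0$ and $\ell_n s_n\log a_n/n\to0$, Markov's inequality (all constants being uniform over $P\in\mathcal P_n$) yields the displayed $o_P(1)$ bound, and the $|1-\sqrt{1+x}|$ estimate converts it into the two asserted ratio bounds. The main obstacle is the entropy and envelope bookkeeping: keeping the covering-number exponent at $m\log(a_n/\epsilon)$ and the envelope norm at $L^{q/2}$ is exactly what lets the growth conditions of Assumption~\ref{ass: covariates} — in particular the squared version of part~(viii) — absorb the extra diverging factor $\ell_n$; a secondary point is verifying that the $u$-uniformity costs only a constant entropy factor, which is why the Lipschitz-in-$u$ property of $f_u^2$ (or, equivalently, Lemma~\ref{Lemma:PartialOutCovering}) is used.
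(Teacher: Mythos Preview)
Your argument is correct, but it takes a different route than the paper's. The paper does not apply Lemma~\ref{lemma:CCK} to the class $\mG$; instead it discretizes $\UU$ to a finite $\epsilon$-net $\UU^\epsilon$ (using the H\"older bound $|f_u-f_{u'}|\le(C_1|u-u'|)^{1/2}$ to control the discretization error) and then invokes the Rudelson--Vershynin-type bound of Lemma~\ref{thm:RV34} on sparse eigenvalues of many random matrices. That lemma gives a rate of the form $\tilde\delta_n^2+\tilde\delta_n$ with $\tilde\delta_n\asymp n^{-1/2+1/(2q)}M_{n,2}\sqrt{\ell_n s_n}\,(\log^{1/2}a_n)(\log^{3/2}n)$, which is shown to vanish under Assumption~\ref{ass: covariates}(viii). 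Your approach instead treats the continuum $\UU$ through the entropy of $\{f_u^2:u\in\UU\}$ directly and appeals to the general maximal inequality; this is more streamlined (no separate discretization step) but pays a price in the heavy-tail term, where the envelope $G=m\|\widetilde X\|_\infty^2$ contributes an extra factor $m=\ell_n s_n$ so that the second term is $\ell_n^2 s_n^2 M_{n,2}^2\log a_n/n^{1-2/q}$ rather than the paper's $\ell_n s_n M_{n,2}^2(\log a_n)(\log^3 n)/n^{1-1/q}$. As you note, the squared form of Assumption~\ref{ass: covariates}(viii) together with $M_{n,2}\ge1$ still kills your larger term, so the conclusion is the same. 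The trade-off is that the Rudelson--Vershynin lemma gives a sharper dependence on sparsity at the cost of a bespoke argument (Dudley chaining with two covering-number regimes), whereas your route is a direct application of off-the-shelf tools once the entropy bookkeeping is done.
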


\subsection*{Proof of Lemma \ref{ControlRUJ}}
By Assumption \ref{ass: sparsity}, $\sup_{u\in\UU}\max_{j\in[\pp]}\|\bar\gamma_u^j\|_0 \leq s_n$ and
$$
\sup_{u\in\UU}\max_{j\in [\pp]}\Big(\|\bar \gamma_u^j-\gamma_u^j\| + s_n^{-1/2}\|\bar\gamma_u^j-\gamma_u^j\|_1\Big)\leq C_1(s_n\log a_n/n)^{1/2}.
$$
Also, by (\ref{eq: gamma-lipshitz2}), we have $\|\gamma_u^j-\gamma_{u'}^j\|\leq L_\gamma|u-u'|$ for some constant $L_\gamma$ uniformly over $u,u'\in\UU$. By the triangle inequality,
$$ \sup_{u\in\UU,j\in[\pp]} \En[\bar r_{uj}^2] \leq  \sup_{u\in\UU,j\in[\pp]}| (\En-\Ep_P)[\bar r_{uj}^2]| + \sup_{u\in\UU,j\in[\pp]}\Ep_P[\bar r_{uj}^2].$$

Consider the class of functions $\mG = \{ (D,X)\mapsto X^j(\gamma_u^j - \bar\gamma_u^j)\colon u\in \UU, j\in[\pp]\}$ and $\mG_{j,T} = \{(D,X)\mapsto X^j(\gamma_u^j - \gamma_{uT}^j)\colon u\in \UU\}$ for $j\in[\pp]$ and $T\subset [p + \pp -1]$ being a subset of the components of $X^j$ with $|T|\leq s_n$. Since $\bar\gamma_u^j = \gamma_{u T}^j$ for some $T = T_u^j$, it follows that $\mG \subset \cup_{j\in[\pp],|T|\leq s_n}\mG_{j,T}$. Also, we have $\|\gamma_{uT}^j-\gamma_{u'T}^j\|\leq \|\gamma_u^j-\gamma_{u'}^j\|$ for all $u,u'\in\UU$, $j\in[\pp]$, and $T\subset [p + \pp -1]$. Therefore, for fixed $j$ and $T$, we have
\begin{align*}
& \Big|(X^j(\gamma_{uT}^j-\gamma_u^j))^2-(X^j(\gamma_{u'T}^j-\gamma_{u'}^j))^2\Big|  \\
& \qquad = \Big|X^j(\gamma_{uT}^j-\gamma_u^j+
\gamma_{u'T}^j-\gamma_{u'}^j)X^j(\gamma_{uT}^j-\gamma_{u'T}^j+\gamma_{u'}^j-\gamma_u^j)\Big|\\
& \qquad \leq \|(D',X')'\|_\infty^2 \|\gamma_{uT}^j-\gamma_u^j+\gamma_{u'T}^j-\gamma_{u'}^j\|_1\|\gamma_{uT}^j-\gamma_{u'T}^j+\gamma_{u'}^j-\gamma_u^j\|_1\\
&\qquad  \leq  8\|(D',X')'\|_\infty^2\sup_{u\in\UU}\|\gamma_u^j\|_1 \{p+\pp\}^{1/2}\|\gamma_u^j-\gamma_{u'}^j\|\\
&\qquad \leq \Big(M^{-1} L'_\gamma \|(D',X')'\|_\infty^2\Big)M|u-u'|.
\end{align*}
where $L_\gamma'=8\sup_{u\in\UU,j\in[\pp]}\|\gamma_u^j\|_1\{p+\pp\}^{1/2}L_\gamma \lesssim a_n$ and we will set $M=a_n^2$.
Therefore, we have for the envelope $G(D,X) = \|(D',X')'\|_\infty^2(M^{-1}L_\gamma'+\sup_{u\in\UU,j\in[\pp]}\|\bar\gamma_u^j-\gamma_u^j\|_1^2)$ that for all $0<\epsilon\leq 1$ and all finitely-discrete probability measures $Q$,
\begin{align*}
\log N(\epsilon \|G\|_{Q,2}, \mG^2, \|\cdot\|_{Q,2}) & \lesssim s_n\log a_n +  \max_{j\in [\pp],|T|\leq s} \log\left(\epsilon \|G\|_{Q,2}, \mG_{j,T}^2, \|\cdot\|_{Q,2}\right)\\
& \lesssim s_n\log a_n + \log N\left(\epsilon/M,  \UU, d_\UU\right)  \\
& \lesssim  s_n\log a_n +  \log (a_n/\epsilon) \lesssim s_n\log(a_n/\epsilon).
\end{align*}
By Lemma \ref{lemma:CCK}, since
\begin{align*}
\left\|\max_{1\leq i\leq n} G(D_i,X_i) \right\|_{P,2}
& \lesssim n^{1/q}M_{n,2}\left(a_n^{-2}L_\gamma' + \sup_{u\in\UU, j\in [\pp]}\|\bar\gamma_u^j-\gamma_u^j\|_1^2\right) \\
& \lesssim n^{-1+1/q}M_{n,2}s^2_n\log a_n,
\end{align*}
we have with probability $1-o(1)$ that
\begin{align*}
 \sup_{u\in \UU, j\in[\pp]} |(\En-\Ep_P)[\bar r_{uj}^2]| & \lesssim \sqrt{ \frac{s_n\log a_n \sup_{u\in\UU,j\in[\pp]}\Ep_P[\bar r_{uj}^4]}{n}} \\
 &\qquad + \frac{s_n n^{-1+1/q}M_{n,2}s_n^2\log^2 a_n}{n}\\
& \lesssim \sqrt{\frac{s_n\log a_n}{n}}\frac{s_n\log a_n }{n} +\delta_n\frac{s_n\log a_n}{n}\lesssim \frac{s_n\log a_n}{n},
\end{align*}
where we used that $s_n\geq 1$,  $\bar r_{uj}=X^j(\gamma_u^j-\bar \gamma_u^j)$, $\Ep_P[\{(D',X')\xi\}^4] \leq C_1\|\xi\|^4$ by Assumption \ref{ass: covariates}(iv), $\|\bar\gamma_u^j-\gamma_u^j\|^2 \leq C_1^2 s_n\log a_n/n$ by Assumption \ref{ass: sparsity}, $M_{n,2} s_n^2\log a_n\leq \delta_n n^{1-1/q}$ implied by Assumption \ref{ass: covariates}(viii). Finally, the result follows since  $\Ep_P[\bar r_{uj}^2] = \Ep_P[\{X^j(\gamma_u^j-\bar \gamma_u^j)\}^2] \lesssim \|\gamma_u^j-\bar\gamma_u^j\|^2 \lesssim s_n\log a_n/n$ by Assumption \ref{ass: sparsity}.
\qed

\subsection*{Proof of Lemma \ref{ControlExtra}}
Recall that $\widehat f_u^2 = \widehat f_u^2(D,X) = \Lambda'(D'\widetilde \theta_u + X'\widetilde\beta_u)$ and that by Theorem \ref{Thm:RateEstimatedLassoLogistic}, we have
$$
\sup_{u\in\UU} (\|\widetilde \theta_u - \theta_u\| + \|\widetilde \beta_u- \beta_u\|) \lesssim (s_n\log a_n/n)^{1/2}
$$
and $\sup_{u\in \UU}\|(\widetilde \theta'_u,\widetilde \beta'_u)'\|_0 \lesssim s_n$ with probability $1-o(1)$. Also,
\begin{align*}
&\max_{1\leq i\leq n}|(D_i',X'_i)\{ (\widetilde \theta_u',\widetilde \beta_u')' - (\theta_u', \beta_u')'\}| \\ & \qquad \leq \max_{1\leq i\leq n}\|(D_i',X'_i)'\|_\infty\{\|\widetilde\theta_u-\theta_u\|_1+\|\widetilde\beta_u-\beta_u\|_1\} \\
&\qquad \lesssim_P n^{1/(2q)}M_{n,2}(s_n^2\log a_n/n)^{1/2} \leq \delta_n
\end{align*}
by Assumption \ref{ass: covariates}(vii, viii) since $M_{n,2}\geq 1$. Thus, for $\tilde t_{ui}=D'_i\widetilde \theta_u+ X_i'\widetilde \beta_u$ and $t_{ui}=D_i'\theta_u+X_i'\beta_u$, we have with probability $1-o(1)$ that $\sup_{u\in\UU,i\in[n]}|\tilde t_{ui}-t_{ui}|\leq \delta_n^{1/2} = o(1)$, and so $|\Lambda(\tilde t_{u i}) - \Lambda(t_{u i})|\lesssim \Lambda'(t_{u i})|\tilde t_{u i} - t_{u i}|$ uniformly over $u\in\UU$ and $i=1,\dots,n$ as in \eqref{eq: l lipshitz derivation}. Hence, the inequality $|x(1-x) - y(1-y)|\leq |x-y|$, which holds for all $x,y\in[0,1]$, implies that with probability $1-o(1)$,
\begin{equation}\label{Multiplicative}
|\hat f_{ui}^2-f_{ui}^2| \leq  |\G(\tilde t_{ui})-\G(t_{ui})-r_{ui}| \lesssim \G'(t_{ui})|\tilde t_{ui}-t_{ui}|+ |r_{ui}| \leq f_{ui}^2/2
\end{equation}
since $|r_u|\leq f_u^2/4$ by Assumption \ref{ass: approximation error} and
\begin{align}
\G'(t_{ui}) & = f_{ui}^2+2\Lambda(t_{u i})r_{u i} - r_{u i} + r_{u i}^2 \notag \\
& \leq f_{u i}^2 + 3|r_{u i}| + r_{u i}^2 \leq f_{u i}^2 + 4|r_{u i}|\leq 2f_{u i}^2 \label{eq: l prime bound}
\end{align}
by the definition of $f_u^2$ and since $|r_u|\leq 1$.
Therefore, with probability $1-o(1)$,
$$
\|(\widehat f_u^2 - f_u^2)Z_u^j/\widehat f_u\|_{\Pn,2} \lesssim \|(\widehat f_u^2 - f_u^2)Z_u^j/f_u\|_{\Pn,2}
$$
uniformly over $u\in\UU$ and $j\in[\pp]$. Hence, it suffices to show that with probability $1-o(1)$,
$$
\sup_{u\in\UU,j\in[\pp]}\|(\widehat f_u^2 - f_u^2)Z_u^j/f_u\|_{\Pn,2}\lesssim (s_n\log a_n/n)^{1/2}.
$$

Next, as in (\ref{eq: gamma-lipshitz}), we have uniformly over $u,u'\in\UU$ that $\|\gamma_u^j - \gamma_{u'}^j\|_1\lesssim (p + \pp)^{1/2}|u - u'|$, and so, given that $Z_u^j-Z_{u'}^j = X^j(\gamma_{u'}^j-\gamma_{u}^j)$, we have by Assumption \ref{ass: covariates}(vii) that
\begin{align*}
\max_{1\leq i\leq n}|Z_{ui}^j-Z_{u'i}^j|
& \leq \max_{1\leq i\leq n}\|(D'_i,X'_i)'\|_\infty \|\gamma_u^j-\gamma_{u'}^j\|_1\\
& \lesssim_P n^{1/(2q)}M_{n,2}(p+\pp)^{1/2}|u-u'|.
\end{align*}
Moreover, as in \eqref{eq: f-lipshitz}, we have uniformly over $u,u'\in\UU$ that $|f_u^2 - f_{u'}^2|\lesssim |u - u'|$.

Further, observe that for $a>1$, the inequality $|x|\leq \log(\sqrt a - 1)$ implies that
$$
\Lambda'(x) = \frac{e^x}{(1 + e^x)^2} = \frac{1}{e^{-x} + 2 + e^{x}}\geq \frac{1}{2(1 + e^{|x|})} \geq \frac{1}{2\sqrt a}.
$$
Also, by Assumptions \ref{ass: parameters}, \ref{ass: sparsity}, and \ref{ass: covariates}(vii),
$$
|t_{u i}|\leq \|(D'_i,X'_i)'\|_\infty(\|\theta_u\|_1 + \|\beta\|_1)\leq n^{1/(2q)}M_{n,2}\sqrt{s_n\log n}
$$
with probability $1-o(1)$ uniformly over $u\in\UU$ and $i=1,\dots,n$. Thus, applying the inequality above with $\sqrt a - 1 = \exp(n^{1/(2q)}M_{n,2}\sqrt{s_n \log n})$ gives
$$
f_{u i}^2 \geq \Lambda'(t_{u i})/2\gtrsim \exp(-n^{1/(2q)}M_{n,2}\sqrt{s_n\log n})
$$
with probability $1-o(1)$ uniformly over $u\in\UU$ and $i=1,\dots,n$. So,
$$
\En[f_u^{-2}]\lesssim \exp(n^{1/(2q)}M_{n,2}\sqrt{s_n\log n})
$$
with probability $1-o(1)$ uniformly over $u\in\UU$.

In addition, let
$$
\epsilon = \epsilon_n = \Big(n^{1 + 1/q}(M^2_{n,1}\vee M^2_{n,2})(p+\pp)^{1/2}\exp(n^{1/(2q)}M_{n,2}\sqrt{s_n\log n})\Big)^{-1},
$$
and let $\UU^\epsilon$ be an $\epsilon$-net of $\UU$ with $|\UU^\epsilon|\leq 1/\epsilon$. For all $i=1,\dots,n$, let $U_i$ be a value of $u\in\RR$ such that $Y_i =  (1-u)\underline y + u\bar y$. Note that $\widehat f_u$ does not vary with $u$ on any interval $[\underline u,\bar u]\subset\UU$ as long as $U_i\notin [\underline u,\bar u]$ for all $i=1,\dots,n$. Also, since $\epsilon\lesssim n^{-3}$, with probability $1-o(1)$, each interval $[u-2\epsilon,u+2\epsilon]$ with $u\in\UU^\epsilon$ contains at most one value of $U_i$'s by Assumption \ref{ass: density}. Now,
 \begin{align*}
&\sup_{u\in\UU}\max_{j\in[\pp]}\En[(\hat f_u^2 - f_u^2)^2(Z_u^j/f_u)^2] \lesssim \sup_{u\in\UU^\epsilon}\max_{j\in[\pp]}\En[(\hat f_u^2 - f_u^2)^2(Z_u^j/f_u)^2] \\
&\qquad + \sup_{u\in\UU,j\in[\pp]}\inf_{u'\in\UU^\epsilon}\Big|\En[(\hat f_u^2 - f_u^2)^2(Z_u^j/f_u)^2-(\hat f_{u'}^2 - f_{u'}^2)^2(Z_{u'}^j/f_{u'})^2]\Big|,
 \end{align*}
and uniformly over $j\in[\pp]$ and $u, u'\in\UU$ such that $\hat f_{u} = \hat f_{u'}$, we have with probability $1-o(1)$ that
\begin{align*}
& \Big|\En[(\hat f_u^2 - f_u^2)^2(Z_u^j/f_u)^2-(\hat f_{u'}^2 - f_{u'}^2)^2(Z_{u'}^j/f_{u'})^2]\Big|  \\
&\quad \leq \Big|\En[\{(\hat f_u^2 - f_u^2)^2-(\hat f_{u'}^2 - f_{u'}^2)^2\}(Z_u^j/f_u)^2]\Big| \\
&\qquad +\Big|\En[(\hat f_{u'}^2 - f_{u'}^2)^2\{(Z_u^j/f_u)^2-(Z_{u'}^j/f_{u'})^2\}]\Big|\\
&\quad \lesssim \Big|\En[\{(f_{u'}^2 - f_u^2)(Z_u^j/f_u)^2]\Big|+\Big|\En[ f_{u'}^4\{(Z_u^j/f_u)^2-(Z_{u'}^j/f_{u'})^2\}]\Big|\\
& \quad\lesssim |u'-u|\En[(Z_u^j/f_u)^2]+\Big|\En[ (f_{u'}^4/f_u^2)|\{Z_u^j-Z_{u'}^j\}\{Z_u^j+Z_{u'}^j\}|]\Big| \\
&\qquad +\Big|\En[ (f_{u'}^2/f_u^2)(Z_{u'}^j)^2(f_{u'}^2-f_u^2)]\Big|\\
& \quad\lesssim_P |u'-u|\cdot\Big(n^{1/q}M_{n,1}^2+n^{1/(2q)}M_{n,2}(p+\pp)^{1/2}n^{1/(2q)}M_{n,1}\Big)\cdot\En[f_u^{-2}].
\end{align*}
Thus, by the choice of $\epsilon$, and since with probability $1-o(1)$ each interval $[u-2\epsilon,u+2\epsilon]$ with $u\in\UU^\epsilon$ contains at most one value of $U_i$'s, we have with probability $1 - o(1)$ that
 $$
 \sup_{u\in\UU,j\in[\pp]}\inf_{u'\in\UU^\epsilon}\Big|\En[(\hat f_u^2 - f_u^2)^2(Z_u^j/f_u)^2-(\hat f_{u'}^2 - f_{u'}^2)^2(Z_{u'}^j/f_{u'})^2]\Big| \lesssim s_n\log a_n/n.
$$
Further by (\ref{Multiplicative}) and \eqref{eq: l prime bound}, with probability $1-o(1)$,
  \begin{align*}
&\sup_{u\in\UU^\epsilon,j\in[\pp]}\En[(\hat f_u^2 - f_u^2)^2(Z_u^j/f_u)^2]  \\
&\quad \lesssim \sup_{u\in\UU^\epsilon, j\in[\pp]}\En[ \Lambda'(t_{ui})^2|\tilde t_{ui}-t_{ui}|^2(Z_u^j/f_u)^2]  + \sup_{u\in\UU^\epsilon, j\in[\pp]}\En[r_u^2(Z_u^j/f_u)^2]\\
& \quad \lesssim \max_{j\in [\pp]}\sup_{u\in\UU^\epsilon,\|\delta\|_0\leq Cs_n, \|\delta\|=1} \En[\{(D',X')\delta\}^2(Z^j_u)^2] s_n\log a_n/n +  s_n\log a_n/n
 \end{align*}
 for $C$ large enough, where we used that $\sup_{u\in\UU}\En[r_u^2(Z_u^j/f_u)^2] \lesssim s_n\log a_n/n$ with probability $1-o(1)$ by Assumption \ref{ass: approximation error}, and $\sup_{u\in\UU} (\|\widetilde \theta_u - \theta_u\| + \|\widetilde \beta_u- \beta_u\|) \lesssim (s_n\log a_n/n)^{1/2}$ and $\sup_{u\in \UU}(\|(\widetilde \theta'_u,\widetilde \beta'_u)'\|_0 + \|(\theta'_u,\beta'_u)'\|_0) \lesssim s_n$ with probability $1-o(1)$.
Therefore, since
$$
\Ep_P[\{(D',X')\delta\}^2(Z^j_u)^2]\leq \Ep_P[\{(D',X')\delta\}^4]^{1/2}\Ep_P[(Z^j_u)^4]^{1/2} \lesssim 1,
$$
to establish the statement of the lemma it suffices to show that with probability $1-o(1)$,
$$
\max_{j\in [\pp]}\sup_{u\in\UU^\epsilon,\|\delta\|_0\leq Cs_n, \|\delta\|=1} |(\En-\Ep_P)[\{(D',X')\delta\}^2(Z^j_u)^2]| \lesssim 1.
$$
To do so, we will apply Lemma \ref{thm:RV34} with $\UU$ replaced by $\UU^\epsilon \times [\pp]$ and $X_{u}$ replaced by $Z_u^j (D',X')'$. We have
\begin{align*}
K& =\Big(\Ep_P\Big[\max_{1\leq i\leq n,u\in\UU^\epsilon}\|Z_{u i}^j(D',X')'\|_\infty^2\Big]\Big)^{1/2} \\
& \leq n^{1/q} \Big(\Ep_P\Big[\max_{u\in\UU^\epsilon}\|Z_u^j (D',X')'\|_\infty^{q}\Big]\Big)^{1/q}\\
 &\leq n^{1/q} \Big(\Ep_P[\|(D',X')'\|_\infty^{2q}]\Ep_P[\|Z_u^j\|_\infty^{2q}]\Big)^{1/(2q)} \leq n^{1/q}M_{n,2}M_{n,1}
\end{align*}
by Assumption \ref{ass: covariates}(v,vii). Also,
$$
\sup_{\|\delta\|_0\leq Cs_n, \|\delta\|=1}\max_{u\in\UU^\epsilon,j\in[\pp]} \Ep_P[ (Z_u^j(D',X')\delta)^2] \lesssim 1
$$
 by Assumption \ref{ass: covariates}(iv). Then, by Lemma \ref{thm:RV34} we have for
$$
 \tilde \delta_n = n^{-1/2} K  s_n^{1/2} \Big( \log^{1/2} (\pp |\UU_0^\epsilon|) + (\log s_n)(\log^{1/2} n) (\log^{1/2} a_n)\Big)
 $$
 that
$$
\sup_{\|\delta\|_0\leq Cs_n, \|\delta\|=1}\max_{u\in\UU^\epsilon,j\in[\pp]}|(\En-\Ep_P)[(Z_u^j(D',X')\delta)^2]| \lesssim_P   \tilde \delta_n^2 + \tilde\delta_n.
$$
Now,
$$
\pp |\UU^\epsilon| \leq  \pp /\epsilon \leq n^{1 + 1/q}(M^2_{n,1}\vee M^2_{n,2})(p+\pp)^{3/2}\exp(n^{1/(2q)}M_{n,2}\sqrt{s_n\log n}),
$$
so that
$$
\log(\pp|\UU^\epsilon|)\lesssim \log a_n+n^{1/(2q)}M_{n,2}\sqrt{s_n\log n}.
$$
Using Assumption \ref{ass: covariates}(vi,viii,ix) and since $\delta_n^2\log a_n = o(1)$, we have
\begin{align*}
&\frac{(M_{n,1}\vee M_{n,2})^2s_n \log a_n}{n^{1/2-1/q}}\leq \delta_n \log^{1/2} a_n = o(1)\text{ and }\frac{M_{n,1}^2M_{n,2}^4 s_n}{n^{1-3/q}} = o(1),
\end{align*}
and so
\begin{align*}
\tilde \delta_n & \lesssim \frac{s_n^{1/2} n^{1/q}M_{n,2}M_{n,1}(\log s_n)(\log^{1/2} n)(\log^{1/2} a_n)}{n^{1/2}} \\
& \quad +\frac{s_n^{1/2} n^{1/q}M_{n,2}M_{n,1}n^{1/(4q)}M_{n,2}^{1/2}s_n^{1/4}\log^{1/4}n}{n^{1/2}}\\
& \lesssim \Big(\frac{M_{n,1}^2s_n\log a_n}{n^{1/2 - 1/q}}\Big)^{1/2}
\Big(\frac{M_{n,2}^2 s_n \log a_n}{n^{1/2-1/q}}\Big)^{1/2} \\
&\quad + \Big(\frac{M_{n,1}^2s_n\log n}{n^{1/2 - 1/q}}\Big)^{1/4}
\Big(\frac{M_{n,2}^2s_n}{n^{1/2 - 1/q}}\Big)^{1/4}\Big(\frac{M_{n,1}^{2}M_{n,2}^{4} s_n}{n^{1-3/q}}\Big)^{1/4} = o(1).
\end{align*}
This completes the proof.
\qed

\subsection*{Proof of Lemma \ref{lemma:FSE}}
Both results follow from Lemma \ref{thm:RV34}. We provide a proof only for the first result (the second result is simpler and  follows similarly).

Recall that by Assumption \ref{ass: covariates}(i),
$$
\inf_{\|\delta\|=1} \| f_u (D',X')\delta\|_{\Pr,2}\geq c_1.
$$
Also, observe that for any $x,y\in[0,1]$, we have
$$
\Big|\sqrt{x(1-x)} - \sqrt{y(1-y)}\Big|\leq \sqrt{|x-y|}.
$$
Therefore, since $f_u^2 = \Ep[Y_u\mid D,X](1 - \Ep[Y_u\mid D,X])$, by Assumption \ref{ass: density}, for any $u,u'\in\UU$, we have
$$
|f_{u'} - f_u| \leq \Big(|\Ep[Y_{u'} - Y_u\mid D,X]|\Big)^{1/2} \leq \Big(C_1|u' - u|\Big)^{1/2}.
$$
Hence, since $\ell_n\to\infty$, with probability $1 - o(1)$ uniformly over $u,u'\in\UU$ and $\delta\in\mathbb R^{p+\pp}$ with $\|\delta\| = 1$ and $\|\delta\|_0\leq \ell_n s_n$, we have
\begin{align*}
&\Big| \| f_{u'} (D',X')\delta\|_{\Pn,2}- \| f_u (D',X')\delta\|_{\Pn,2}  \Big| \\
&\qquad \leq \| (f_{u'}-f_{u}) (D',X')\delta\|_{\Pn,2} \leq \| f_{u'}-f_{u} \|_{\Pn,2} \max_{1\leq i\leq n} \|(D_i',X_i')'\|_\infty \|\delta\|_1\\
& \qquad \leq \| f_{u'}-f_{u} \|_{\Pn,2} n^{1/(2q)} M_{n,2} \ell_n \sqrt{s_n} \leq \Big(C_1|u' - u|\Big)^{1/2}n^{1/(2q)} M_{n,2} \ell_n \sqrt{s_n}
\end{align*}
by Assumption \ref{ass: covariates}(vii). Thus, for
$$
\epsilon = \epsilon_n = \frac{c_1^2}{C_1 n^{1/(2q)}M_{n,2}^2\ell_n^4 s_n},
$$
we have with probability $1-o(1)$ that
 $$
\sup_{|u-u'|\leq \epsilon, \|\delta\|_0\leq \ell_n s_n, \|\delta\|=1} \Big| \| f_{u'} (D',X')\delta\|_{\Pn,2}- \| f_u (D',X')\delta\|_{\Pn,2} \Big| \leq c_1/\ell_n.
$$

Now, let $\UU^\epsilon$ be an $\epsilon$-net of $\UU$ such that $|\UU^\epsilon|\leq 3/\epsilon$. We will apply Lemma \ref{thm:RV34} with $\UU$ replaced by $\UU^\epsilon$, $k=\ell_n s_n$, and $X_u$ replaced by $f_u(D',X')'$. Since $0\leq f_u \leq 1$, we have
\begin{align*}
K& =\Big(\Ep_P\Big[\max_{1\leq i\leq n}\max_{u\in \UU^\epsilon}f_{u i}^2\|(D_i',X_i')'\|_\infty^2\Big]\Big)^{1/2} \\
& \leq \Big(\Ep_P\Big[\max_{1\leq i\leq n}\|(D_i',X_i')'\|_\infty^2\Big]\Big)^{1/2}\leq  n^{1/(2q)}M_{n,2}
\end{align*}
by Assumption \ref{ass: covariates}(vii). Also,
\begin{align*}
&\sup_{\|\delta\|_0\leq \ell_n s_n, \|\delta\|=1} \max_{u\in \UU^\epsilon} \Ep_P[ f_u^2((D',X')\delta)^2] \\
&\qquad \leq \sup_{\|\delta\|_0\leq \ell_n s_n, \|\delta\|=1} \Ep_P[\{(D',X')\delta\}^2] \leq \sqrt{C_1}
\end{align*}
by Assumption \ref{ass: covariates}(iv). Thus, applying Lemma \ref{thm:RV34} gives
$$
\displaystyle \sup_{\|\delta\|_0\leq \ell_n s_n, \|\delta\|=1}\max_{u\in \UU^\epsilon}|(\En-\Ep_P)[f_u^2((D',X')\delta)^2]|  \lesssim_P   \tilde \delta_n^2 + \tilde \delta_n
$$
where
$$
\tilde \delta_n = n^{-1/2+1/(2q)}M_{n,2}\sqrt{\ell_n s_n}(\log^{1/2}a_n)(\log^{3/2}n).
$$
Finally, by Assumption \ref{ass: covariates}(viii),
\begin{align*}
\tilde \delta_n^2 &= n^{-1+1/q}M_{n,2}^2\ell_n s_n(\log a_n)(\log^3 n) \\
&\leq n^{-1/2}\ell_n \delta_n (\log^{1/2}a_n)(\log^3 n) = o(1)
\end{align*}
since $\ell_n \to\infty$ slowly enough and $\log^{1/2}a_n \lesssim \delta_n n^{1/6}$ by Assumption \ref{ass: covariates}(ii,iii). Combining presented bounds gives the asserted claim.
\qed

\section{Double Selection method for Logistic Regression with Functional Response Data}

In this section we discuss in details and provide formal results for the double selection estimator for logistic regression with functional response data.

\begin{algorithm}
(Based on double selection) For each $u\in \UU$ and $j\in[\pp]$: \\
\enspace \emph{Step 1'}. Run post-$\ell_1$-penalized logistic estimator (\ref{PostL1Logistic})  of $Y_{u}$ on  $D$ and $X$ to compute $(\widetilde\theta_u,\widetilde\beta_u)$. \\
\enspace \emph{Step 2'}. Define the weights $\hat f_{u}^2 = \hat f_u^2(D,X) = \G'(D_i'\widetilde\theta_u+X_i'\widetilde\beta_u)$.\\
\enspace \emph{Step 3'}. Run  the lasso estimator (\ref{EstLasso}) of $\hat f_u D_j$ on $\hat f_u X$ to compute $\hat\gamma_u^j$. \\
\enspace \emph{Step 4'}. Run logistic regression of $Y_{u}$ on  $D_j$ and all the selected variables in Steps 1' and 2' \\
\indent\indent\indent\indent to compute $\check \theta_{u j}$.
\end{algorithm}

The following result establishes the Bahadur representation for the double selection estimator (analog to Theorem 3.1 for score functions). 

\begin{theorem}[Uniform Bahadur representation, double selection]\label{theorem:inferenceAlg2}  Suppose that Assumptions \ref{ass: parameters} -- \ref{ass: approximation error} hold for all $P\in\mathcal P_n$. Then, the estimator  $(\check \theta_{uj})_{u\in\UU,j\in[\pp]}$, based on the double selection, obeys as $n \to \infty$
$$
\Sigma_{uj}^{-1} \sqrt{n} (\check \theta_{uj} -  \theta_{uj}) = \Gn \bar\psi_{uj} + O_\Pr(\delta_n)  \mbox{ in }\ell^\infty(\UU\times[\pp])
$$
uniformly over $u\in\UU$, where $\Sigma^2_{uj}:=  \Ep_P[f_u^2(D-X^j\gamma_u^j)^2]^{-1}$.
\end{theorem}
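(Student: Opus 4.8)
The plan is to reduce the double-selection estimator to the orthogonal-score estimator studied in Theorem~\ref{theorem:inferenceAlg1}, showing they are first-order equivalent, and then invoke that theorem. First I would characterize $\check\theta_{uj}$ via the first-order conditions of the logistic regression in Step~4$'$. Let $\widehat I_{uj}$ denote the union of the support of $(\widetilde\theta_u,\widetilde\beta_u)$ from Step~1$'$ and the support of $\widehat\gamma_u^j$ from Step~3$'$, and let $\widehat\beta_u^j$ be the refitted coefficients on the controls $X^j$ restricted to $\widehat I_{uj}$. The first-order condition for $D_j$ reads
\begin{equation*}
\En\Big[\big\{Y_u - \Lambda(D_j\check\theta_{uj} + X^j\widehat\beta_u^j)\big\}D_j\Big] = 0,
\end{equation*}
while the first-order conditions for the selected controls give $\En[\{Y_u - \Lambda(D_j\check\theta_{uj} + X^j\widehat\beta_u^j)\}X^j_{\widehat I_{uj}}] = 0$. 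Taking a linear combination with the post-Lasso coefficients $\widetilde\gamma_u^j$ (whose support lies in $\widehat I_{uj}$ by construction in Step~3$'$) yields
\begin{equation*}
\En\Big[\big\{Y_u - \Lambda(D_j\check\theta_{uj} + X^j\widehat\beta_u^j)\big\}(D_j - X^j\widetilde\gamma_u^j)\Big] = 0,
\end{equation*}
which is exactly the estimating equation \eqref{OS:Step3} solved (in fact solved exactly, with $\epsilon_n = 0$) by the orthogonal-score estimator from Algorithm~1, with nuisance estimate $\widehat\eta_{uj} = (\mathcal O, \widehat\beta_u^j, \widetilde\gamma_u^j)$. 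Note that here $\widetilde\gamma_u^j$ comes from the plain Lasso (Step~3$'$) rather than the post-Lasso of Algorithm~1 (Step~3), and $\widehat\beta_u^j$ uses the double-selection-refitted controls; both differences are handled below.

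The second step is to verify that this $\widehat\eta_{uj}$ satisfies Assumption~\ref{ass: AS}, so that Theorem~\ref{theorem:semiparametric} (via the already-completed verification in the proof of Theorem~\ref{theorem:inferenceAlg1}) applies. The membership $\widehat\eta_{uj}\in\mathcal T_{uj}$ with high probability requires: (a) the rate and sparsity bounds for $\widehat\beta_u^j$, which follow from Theorem~\ref{Thm:RateEstimatedLassoLogistic} for $(\widetilde\theta_u,\widetilde\beta_u)$ together with the fact that refitting on a superset of the true support that remains of size $O(s_n)$ preserves the $\ell_2$ and $\ell_1$ rates — this is a standard post-selection-OLS/MLE argument using the restricted eigenvalue and sparse-eigenvalue bounds of Lemma~\ref{lemma:FSE}; and (b) the rate and sparsity bounds for the Lasso estimator $\widehat\gamma_u^j$ from Step~3$'$, which are precisely the $\widehat\gamma_u^j$ (not $\widetilde\gamma_u^j$) conclusions of Theorem~\ref{Thm:RateEstimatedLassoLinear}, already proved uniformly over $u\in\UU$ and $j\in[\pp]$. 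Since Assumptions~\ref{ass: S1}, \ref{ass: AS} were verified in the proof of Theorem~\ref{theorem:inferenceAlg1} for nuisance realizations in $\mathcal T_{uj}$ with the function class $\mathcal F_1$ already controlled, and the estimating equation is solved exactly, Theorem~\ref{theorem:semiparametric} delivers
\begin{equation*}
\sqrt n\,\sigma_{uj}^{-1}(\check\theta_{uj} - \theta_{uj}) = \Gn\bar\psi_{uj} + O_P(\delta_n)\quad\text{in }\ell^\infty(\UU\times[\pp]),
\end{equation*}
uniformly over $P\in\mathcal P_n$, where $\sigma_{uj}^2 = J_{uj}^{-2}\Ep_P[\psi_{uj}^2(W,\theta_{uj},\eta_{uj})]$ with $\psi_{uj}$ the orthogonal score. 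The last step is to identify $\sigma_{uj}$ with $\Sigma_{uj}$: a direct computation using $Y_u - \Lambda(D'\theta_u + X'\beta_u) = Y_u - \Ep_P[Y_u\mid D,X] + r_u$, the orthogonality $\Ep_P[f_u^2 Z_u^j X^j] = 0$, the approximation-error bounds of Assumption~\ref{ass: approximation error}, and $J_{uj} = -\Ep_P[f_u^2|Z_u^j|^2] + o(1)$ (shown in the proof of Theorem~\ref{theorem:inferenceAlg1}) gives $\Ep_P[\psi_{uj}^2(W,\theta_{uj},\eta_{uj})] = \Ep_P[f_u^2(Z_u^j)^2] + o(\Ep_P[f_u^2(Z_u^j)^2])$, whence $\sigma_{uj}^2 = \Ep_P[f_u^2(D_j - X^j\gamma_u^j)^2]^{-1}(1 + o(1)) = \Sigma_{uj}^2(1+o(1))$; since $\Sigma_{uj}$ is bounded and bounded away from zero uniformly by Assumption~\ref{ass: covariates}, replacing $\sigma_{uj}$ by $\Sigma_{uj}$ only changes the remainder by $O_P(\delta_n)$.

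The main obstacle is step two, specifically the uniform-in-$(u,j)$ control of the refitted control coefficients $\widehat\beta_u^j$ obtained by logistic regression on the double-selection superset $\widehat I_{uj}$. One must argue that (i) $|\widehat I_{uj}| \lesssim s_n$ with high probability uniformly — which follows by adding the sparsity conclusions $\|\widehat\theta_u\|_0 + \|\widehat\beta_u\|_0\lesssim s_n$ and $\|\widehat\gamma_u^j\|_0\lesssim s_n$ from Theorems~\ref{Thm:RateEstimatedLassoLogistic}--\ref{Thm:RateEstimatedLassoLinear}; (ii) the logistic MLE restricted to this set, which contains the relevant true support up to approximation error, inherits the $\sqrt{s_n\log a_n/n}$ $\ell_2$-rate — this uses the restricted strong convexity of the logistic log-likelihood on sparse directions (Lemma~O.2 of \cite{BCFH2013program}-type arguments as in the proof of Theorem~\ref{Thm:RateEstimatedLassoLogistic}) plus the sparse-eigenvalue control of Lemma~\ref{lemma:FSE}; and (iii) all these bounds hold simultaneously over the uncountable index set $\UU\times[\pp]$, which is handled exactly as in the proofs of Theorems~\ref{theorem:inferenceAlg1}--\ref{Thm:RateEstimatedLassoLinear} via the entropy bounds for the relevant function classes ($\mathcal G_1$, $\mathcal G_4$, $\mathcal G_{5,j}$, etc.) and the maximal inequality Lemma~\ref{lemma:CCK}. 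Once uniform rates and sparsity for $\widehat\beta_u^j$ are in hand, everything else is a routine invocation of results already established in the paper.
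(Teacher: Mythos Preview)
Your proposal is correct and follows essentially the same route as the paper: exploit the first-order conditions of the Step~4$'$ logistic refit, take a linear combination with the (post-)Lasso coefficients for $\gamma_u^j$ to recover the orthogonal-score estimating equation exactly, verify the nuisance estimates land in $\mathcal T_{uj}$ with the required rates, and invoke Theorem~\ref{theorem:inferenceAlg1}. Your treatment is in fact more detailed than the paper's own proof sketch (which simply states the first-order condition, notes the post-selection logistic refit inherits the $\sqrt{s_n\log a_n/n}$ rate, and says ``the rest follows similarly''), and you correctly identify the additional step of matching $\sigma_{uj}$ with $\Sigma_{uj}$ up to $o(1)$; one small clarification on your notation: the linear-combination coefficients can be either the Lasso $\widehat\gamma_u^j$ or the post-Lasso $\widetilde\gamma_u^j$ (the paper's proof uses the latter), since both have support contained in the selected set and both satisfy the needed rate and sparsity bounds by Theorem~\ref{Thm:RateEstimatedLassoLinear}.
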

\subsection*{Proof of Theorem \ref{theorem:inferenceAlg2}}
The analysis is reduced to the proof of Theorem \ref{theorem:inferenceAlg1}.
Let $\hat T_{uj} = \supp(\hat\theta_u)\cup\supp(\hat \beta_u)\cup\supp(\hat\gamma_u^j)$ for which by Theorems \ref{Thm:RateEstimatedLassoLogistic} and \ref{Thm:RateEstimatedLassoLinear} satisfies $\sup_{u\in\UU,j\in[\pp]}|\hat T_{uj}|\lesssim s_n$ with probability $1-o(1)$. Therefore Step 3 is a post-selection logistic regression which yields an initial rate of convergence $|\check\theta_{uj}-\theta_{uj}|+\|\bar\theta_{u[\pp]\setminus j}-\theta_{u[\pp]\setminus j} \| +\|\bar\beta_u-\beta_u\| \lesssim (s_n\log a_n/n)^{1/2}$. Moreover, by the first order condition of Step 3 we have
\begin{equation}\label{OPTdouble}
 \En[ \{Y_{ui}-\G(D_j\check\theta_{uj} + D_{[\pp]\setminus j}'\bar\theta_{u[\pp]\setminus j}+X'\bar \beta_u)\} (D_j, X^j_{\hat T_{uj}})']=0\end{equation}
so that any linear combination yields zero. Setting the parameters $(\widetilde \theta_u',\widetilde\beta_u')=(\bar\theta_{u[\pp]\setminus j},\bar\theta_{u[\pp]\setminus j}',\bar\beta_u')$, and $\hat z_u^j = (D_j, X^j_{\hat T_{uj}})(1,-\widetilde \gamma_u^j )=D_j-X^j \widetilde \gamma_u^j$, we recover the setting in the proof of Theorem \ref{theorem:inferenceAlg1}. The rest of the proof follows similarly. \qed

The double selection procedure benefits from additional variables selected in Step 2. The (estimated) weights used in the equation ensure that selection will ensure a near-orthogonality condition that is required to remove first order bias. In contrast, (naive) Post-$\ell_1$-logistic regression does not select such variables which in turn translates in to first order bias in the estimation of $\theta_{uj}$. We stress that Step 2 is tailored to the estimation of each coefficient $\theta_{uj}$ which enables the additional adaptivity. 

The double selection achieves orthogonality conditions relative to all selected variables in finite samples. Although first-order equivalent to other estimator discussed here, this additional orthogonality could potentially lead to a better finite sample performance. To provide intuition why, consider the logistic regression case with $\pp = {\rm dim}(D) = 1$ and $\UU=\{0\}$ for simplicity. In this case we have
$$ E[ Y | D , X ] = \Lambda(D\theta_0 + X'\beta_0) \ \ \mbox{and let} \ \ f = \Lambda'(D\theta_0 + X'\beta_0). $$
Letting $\hat\gamma$ be the Lasso estimate of $fD$ on $fX$ we have that the one step estimator
$$ \bar \theta = \hat\theta - \En[(D-X'\hat \gamma)^2 ]^{-1}\En[\{Y - \Lambda(D\hat\theta + X'\hat\beta))\}(D-X'\hat \gamma)]$$
is an approximate solution for the moment condition
\begin{equation}\label{NewMoment}\En[\{Y - \Lambda(D\theta + X'\hat\beta)\}(D-X'\hat \gamma)] = 0.\end{equation}
Indeed, it is one Newton step from $\hat\theta$. Our proposed estimator based on estimated score functions 
defines $\check \theta$ as an exact solution for (\ref{NewMoment}), namely
$$\En[\{Y - \Lambda(D\check\theta + X'\hat\beta)\}(D-X'\hat \gamma)] = 0.$$
The double selection achieves that implicitly. Indeed, letting $\hat T = {\rm support}(\check\beta)\cup {\rm support}(\hat \gamma)$, the first order condition of running a logistic regression of $Y$ on $D$ and $X_{\hat T}$ yields
\begin{equation}\label{FOC}\En\left[\{Y - \Lambda(D\check\theta + X'\check\beta)\}\left(\begin{array}{c}D \\ X_{\hat T}\end{array}\right)\right] = 0\end{equation}
where $(\check\theta,\check\beta)$ is the solution of the logistic regression. By multiplying the vector $(1,-\hat\gamma')'$, the relation above implies
$$\En[\{Y - \Lambda(D\check\theta + X'\check\beta)\}(D-X'\hat \gamma)] = 0.$$
since ${\rm support}(\hat\gamma) \subset \hat T$ (the condition ${\rm support}(\hat\beta) \subset \hat T$ ensures that $\check\beta$ is a good approximation of $\beta_0$). Note that (\ref{FOC}) provides a more robust orthogonality condition and does not need to explicitly create the new score functions 
as the other two methods.

\section{Generic Finite Sample Bounds for $\ell_1$-Penalized M-Estimators: Nuisance Functions and Functional Data}\label{sec: generic results}
In this section, we establish a set of results for $\ell_1$-penalized M-estimators with functional data and high-dimensional parameters. These results are used in the proofs of Theorems \ref{Thm:RateEstimatedLassoLogistic} and \ref{Thm:RateEstimatedLassoLinear} and may be of independent interest.

We start with specifying the setting. Consider a data generating process with a functional response variable $(Y_{u})_{u\in \UU}$ and observable covariates $(X_u)_{u\in\UU}$ satisfying for each $u\in \mathcal{U}\subset \mathbb R^{d_u}$,
\begin{equation}\label{A:EqMainFunc}
\theta_u \in \arg\min_{\theta \in \RR^p}\Ep_P[M_u(Y_{u},X_u,\theta,a_u)],
\end{equation}
where $\theta_u$ is a $p$-dimensional vector of parameters, $a_u$ is a nuisance parameter that captures potential misspecification of the model, and $M_u$ is a known function. Here for all $u\in\UU$, $Y_u$ is a scalar random variable and $X_u$ is a $p_u$-dimensional random vector with $p_u\leq p$ for some $p$. 
We assume that the solution $\theta_u$ is sparse in the sense that the process $(\theta_u)_{u\in\UU}$ satisfies
$$\|\theta_u\|_0\leq s,  \quad \mbox{for all} \ u\in\UU.$$
Because the model (\ref{A:EqMainFunc}) allows for the nuisance parameter $a_u$, such sparsity assumption is very mild and formulation (\ref{A:EqMainFunc}) encompasses many cases of interest including approximately sparse models.

Throughout this section, we assume that we have $n$ i.i.d. observations, $\{ (Y_{ui},X_{ui})_{u\in\mathcal{U}}\}_{i=1}^n$, from the distribution of $(Y_u,X_u)_{u\in\UU}$ to estimate  $(\theta_u)_{u\in\mathcal{U}}$. In addition, we assume that an estimate $\widehat a_u$ of the nuisance parameter $a_u$ is available for all $u\in\UU$. Using the estimate $\widehat a_u$, we use the criterion function
$$
M_u(Y_{u},X_u,\theta):=M_u(Y_{u},X_u,\theta,\hat a_u)
$$
as a proxy for $M_u(Y_{u},X_u,\theta_u,a_u)$. We allow for the case where $p$ is much larger than $n$. 

Since $p$ is potentially larger $n$, and the parameters $\theta_u$ are assumed to be sparse, we consider an $\ell_1$-penalized $M_u$-estimator (Lasso)  of $\theta_u$:
 \begin{equation}\label{Adef:LassoFunc}
 \hat\theta_u \in \arg\min_{\theta} \left(\En[M_u(Y_{u},X_u,\theta)] + \frac{\lambda}{n}\|\hat\Psi_u\theta\|_1\right) \end{equation}
where $\lambda$ is a penalty level and $\hat\Psi_u$ a diagonal matrix of penalty loadings. Further, for each $u\in\mathcal{U}$, we also consider a post-regularized (Post-Lasso) estimator of $\theta_u$:
\begin{equation}\label{Adef:PostFunc}
\widetilde\theta_u \in \arg\min_{\theta} \En[M_u(Y_{u},X_u,\theta)] \ \ : \ \ \supp(\theta)\subseteq \widehat T_u
\end{equation}
where $\widehat T_u = \supp(\hat\theta_u)$.

We assume that for each $u\in\UU$, the matrix of penalty loadings $\hat\Psi_u$ is chosen as an appropriate estimator of the following ``ideal'' matrix of penalty loadings: $\widehat\Psi_{u0} = \diag(\{ l_{u0k}, k=1,\ldots,p\})$, where
\begin{equation}\label{IdealLoading}
l_{u0k} = \Big(\En\Big[(\partial_{\theta_k} M_u(Y_{u},X_u,\theta_u,a_u))^2\Big]\Big)^{1/2},
\end{equation}
where $\partial_{\theta_k}M_u(Y_u,X_u,\theta_u,a_u)$ denotes a sub-gradient of the function $\theta\mapsto M_u(Y_u,X_u,\theta,a_u)$ with respect to the $k$th coordinate of $\theta$ and evaluated at $\theta = \theta_u$. The properties of $\hat\Psi_u$ will be specified below in lemmas. Also, we assume that the penalty level $\lambda$ is chosen such that with high probability,
\begin{equation}\label{Eq:reg} \frac{\lambda}{n} \geq c \sup_{u\in \mathcal{U}}\left\|\hat \Psi^{-1}_{u0} \En\left[\partial_\theta M_u(Y_{u},X_u,\theta_u, a_u) \right] \right\|_\infty,
\end{equation}where $c>1$ is a fixed constant. When $\mathcal{U}$ is a singleton, the condition \eqref{Eq:reg}  is similar to that in \cite{BickelRitovTsybakov2009}, \cite{BC-PostLASSO}, and \cite{BCW-SqLASSO}. When $\UU$ is a continuum of indices, a similar condition was previously used in \cite{BC-SparseQR} in the context of $\ell_1$-penalized quantile regression.

For $u\in\UU$, denote $T_u=\supp(\theta_u)$. Let $\ell$ and $L$ be some constants satisfying $L\geq \ell>1/c$. Also, let
$$
\tilde c = \frac{Lc+1}{\ell c-1}\sup_{u\in \UU}\|\widehat \Psi_{u0}\|_\infty\|\widehat \Psi_{u0}^{-1}\|_\infty,
$$
where for any diagonal matrix $A = \diag(\{ a_k, k=1,\ldots,p\})$, we denote $\|A\|_\infty = \max_{1\leq k\leq p}|a_k|$.
Let $(\Delta_n)_{n\geq 1}$ be a sequence of positive constants converging to zero, and let $(C_n)_{n\geq 1}$ be a sequence of random variables. Also, let $w_u=w_u(X_u)$ be some weights satisfying $0\leq w_u\leq 1$ almost surely. Finally, let $A_u$ be some random subset of $\mathbb R^p$ and $\bar q_{A_u}$ be a random variable possibly depending on $A_u$, where both $A_u$ and $\bar q_{A_u}$ are specified in the lemmas below.
To state our results in this section, we need the following assumption:

\begin{assumption}[M-Estimation]\label{ass: M}
The function $\theta\mapsto M_u(Y_{u},X_{u},\theta)$ is convex almost surely, and with probability at least $1-\Delta_n$, the following inequalities hold for all $u\in\UU$:
\begin{itemize}
\item[(a)] $|\En[\partial_{\theta} M_u(Y_u,X_u,\theta_u)-\partial_{\theta} M_u(Y_u,X_u,\theta_u,a_u)]'\delta|\leq  C_n\|\sqrt{w_{u}} X_u'\delta\|_{\Pn,2}$ for all $\delta\in\mathbb R^p$;
\item[(b)] $\ell \widehat\Psi_{u0} \leq \widehat\Psi_u \leq L\widehat\Psi_{u0}$;
\item[(c)] for all $\delta \in A_u$,
{\small
\begin{align*}
&\En[M_u(Y_u,X_u,\theta_u + \delta)] - \En[M_u(Y_u,X_u,\theta_u)] -\En[\partial_{\theta} M_u(Y_u,X_u,\theta_u)]'\delta \\
&\qquad +2C_n\|\sqrt{w_{u}} X_u'\delta\|_{\Pn,2} \geq \left\{\|\sqrt{w_{u}} X_u'\delta\|_{\Pn,2}^2\right\} \wedge \left\{ \bar q_{A_u}\|\sqrt{w_{u}} X_u'\delta\|_{\Pn,2}\right\}.
\end{align*}}
\end{itemize}
\end{assumption}
In many applications one can take the weights to be $w_u = w_u(X_u) = 1$ but we allow for more general weights since it is useful for our results on the weighted Lasso with estimated weights. Also, in applications, we typically have $C_n \lesssim \{n^{-1}s \log (pn)\}^{1/2}$. Assumption \ref{ass: M}(a) bounds the impact of estimating the nuisance functions uniformly over $u\in\UU$. The loadings $\hat\Psi_u$ are assumed larger (but not too much larger) than the ideal choice $\hat\Psi_{u0}$ defined in (\ref{IdealLoading}). This is formalized in Assumption \ref{ass: M}(b). Assumption \ref{ass: M}(c) is an identification condition that will be imposed for particular choices of $A_u$ and $\bar q_{A_u}$. It relates to conditions in the literature derived for the case of a singleton $\UU$ and no nuisance functions, see  the restricted strong convexity\footnote{Assumption \ref{ass: M} (a) and (c) could have been stated with $\{C_n/\sqrt{s}\}\|\delta\|_1$ instead of $C_n\|\sqrt{w_{u}} X_u'\delta\|_{\Pn,2}$.} used in \cite{negahban2012unified} and the non-linear impact coefficients used  in \cite{BC-SparseQR} and \cite{BCK-SparseQRinference}.

Define the restricted eigenvalue
$$
\bar\kappa_{2\tilde c} = \inf_{u\in\UU} \inf_{\delta\in \Delta_{2\tilde c,u}} \|\sqrt{w_{u}}  X_u'\delta\|_{\Pn,2}/\|\delta_{T_u}\|
$$
where $\Delta_{2\tilde c,u} = \{ \delta : \|\delta_{T_u}^c\|_1\leq 2\tilde c \|\delta_{T_u}\|_1\}$. Also, define minimum and maximum spare eigenvalues
{\small
$$
\semin{m,u} = \min_{1\leq \|\delta\|_0\leq m}\frac{\|\sqrt{w_{u}}X_u'\delta\|_{\Pn,2}^2}{\|\delta\|^2} \ \ \mbox{and} \ \ \semax{m,u} = \max_{1\leq \|\delta\|_0\leq m}\frac{\|X_u'\delta\|_{\Pn,2}^2}{\|\delta\|^2}.
$$}\!The following results establish the rate of convergence and a sparsity bound for the $\ell_1$-penalized estimator $\widehat \theta_u$ defined in (\ref{Adef:LassoFunc}) as well as the rate of convergence for the post-regularized estimator $\widetilde\theta_u$ defined in (\ref{Adef:PostFunc}).

\begin{lemma}\label{Lemma:LassoMRateRaw}
Suppose that Assumption \ref{ass: M} holds with
{\small
$$
A_u = \{ \delta\colon \|\delta^c_{T_u}\|_1\leq 2\tilde c \|\delta_{T_u}\|_1\} \cup \{ \delta\colon \|\delta\|_1 \leq \frac{3n}{\lambda}\frac{c\|\widehat\Psi_{u0}^{-1}\|_\infty}{\ell c - 1}C_n\|\sqrt{w_{u}} X_u'\delta\|_{\Pn,2}\}
$$}\!and $\bar q_{A_u}>(L+\frac{1}{c})\|\widehat\Psi_{u0}\|_\infty\frac{\lambda\sqrt{s}}{n\bar\kappa_{2\tilde c}}+ 6\tilde c C_n$. In addition, suppose that $\lambda$ satisfies condition (\ref{Eq:reg}) with probability $1-\Delta_n$. Then, with probability at least $1-2\Delta_n$, we have for all $u\in\UU$ that
\begin{align*}
\|\sqrt{w_{u}}  X_u'(\hat \theta_u - \theta_u)\|_{\Pn,2} & \leq \Big(L+\frac{1}{c}\Big)\|\widehat\Psi_{u0}\|_\infty\frac{\lambda\sqrt{s}}{n\bar\kappa_{2\tilde c}}+ 6\tilde c C_n,\\
\|\hat \theta_u - \theta_u\|_1 & \leq  \Big( \frac{(1+2\tilde c)\sqrt{s}}{\bar\kappa_{2\tilde c}}+ \frac{3n}{\lambda}\frac{c\|\widehat\Psi_{u0}^{-1}\|_\infty}{\ell c - 1}C_n\Big) \\
&\quad \times\Big(\Big(L+\frac{1}{c}\Big)\|\widehat\Psi_{u0}\|_\infty\frac{\lambda\sqrt{s}}{n\bar\kappa_{2\tilde c}}+ 6\tilde c C_n\Big).
\end{align*}
\end{lemma}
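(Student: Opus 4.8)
\textbf{Proof plan for Lemma \ref{Lemma:LassoMRateRaw}.}

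The plan is to run the standard penalized-$M$-estimation argument, but carrying the nuisance-function error term $C_n\|\sqrt{w_u}X_u'\delta\|_{\Pn,2}$ through every step and keeping all constants uniform over $u\in\UU$. Work on the event (of probability at least $1-2\Delta_n$) on which condition \eqref{Eq:reg} holds and Assumption \ref{ass: M}(a)--(c) holds for all $u\in\UU$; fix $u$ and write $\delta=\hat\theta_u-\theta_u$, $\Psi=\hat\Psi_u$, $\Psi_0=\hat\Psi_{u0}$. First I would derive the basic inequality from optimality of $\hat\theta_u$ in \eqref{Adef:LassoFunc}: $\En[M_u(\cdot,\hat\theta_u)]-\En[M_u(\cdot,\theta_u)]\leq \frac{\lambda}{n}(\|\Psi\theta_u\|_1-\|\Psi\hat\theta_u\|_1)$. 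Combining this with convexity (so that the left side is bounded below by the linear term $\En[\partial_\theta M_u(\cdot,\theta_u)]'\delta$), with Assumption \ref{ass: M}(a) to replace $\partial_\theta M_u(\cdot,\theta_u)$ by $\partial_\theta M_u(\cdot,\theta_u,a_u)$ at the cost of $C_n\|\sqrt{w_u}X_u'\delta\|_{\Pn,2}$, with the penalty-level bound \eqref{Eq:reg} (giving $\|\Psi_0^{-1}\En[\partial_\theta M_u(\cdot,\theta_u,a_u)]\|_\infty\leq \lambda/(cn)$), and with Assumption \ref{ass: M}(b) relating $\Psi$ and $\Psi_0$, I get the two-sided control that forces $\delta$ into the cone $\Delta_{2\tilde c,u}$ \emph{provided} the term $\|\sqrt{w_u}X_u'\delta\|_{\Pn,2}$ is small; when it is not, $\delta$ instead lands in the second set defining $A_u$ (the one with $\|\delta\|_1\lesssim \frac{n}{\lambda}C_n\|\sqrt{w_u}X_u'\delta\|_{\Pn,2}$). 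Either way $\delta\in A_u$.

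Next I would feed $\delta\in A_u$ into the restricted-strong-convexity condition Assumption \ref{ass: M}(c). This yields
$$
\{\|\sqrt{w_u}X_u'\delta\|_{\Pn,2}^2\}\wedge\{\bar q_{A_u}\|\sqrt{w_u}X_u'\delta\|_{\Pn,2}\}\leq \En[M_u(\cdot,\hat\theta_u)]-\En[M_u(\cdot,\theta_u)]-\En[\partial_\theta M_u(\cdot,\theta_u)]'\delta + 2C_n\|\sqrt{w_u}X_u'\delta\|_{\Pn,2},
$$
and the right side is bounded, by the basic inequality plus \eqref{Eq:reg}, \eqref{ass: M}(a),(b) and the cone property, by a quantity of the form $\big((L+\tfrac1c)\|\Psi_0\|_\infty\frac{\lambda\sqrt s}{n\bar\kappa_{2\tilde c}}+6\tilde cC_n\big)\|\sqrt{w_u}X_u'\delta\|_{\Pn,2}$ — here the restricted eigenvalue $\bar\kappa_{2\tilde c}$ enters to convert $\|\delta_{T_u}\|_1\le\sqrt s\|\delta_{T_u}\|\le\sqrt s\|\sqrt{w_u}X_u'\delta\|_{\Pn,2}/\bar\kappa_{2\tilde c}$. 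Dividing by $\|\sqrt{w_u}X_u'\delta\|_{\Pn,2}$ gives that this norm is at most $\min\{\,\text{RHS bound}/1,\ \bar q_{A_u}\,\}$; by the hypothesis $\bar q_{A_u}>(L+\tfrac1c)\|\Psi_0\|_\infty\frac{\lambda\sqrt s}{n\bar\kappa_{2\tilde c}}+6\tilde cC_n$ the quadratic branch of the minimum is the active one, so we land on the first displayed bound of the lemma. Then the $\ell_1$ bound follows from $\|\delta\|_1\leq (1+2\tilde c)\|\delta_{T_u}\|_1$ on the cone (or the defining inequality on the second set of $A_u$), combined with $\|\delta_{T_u}\|_1\leq\sqrt s\|\delta_{T_u}\|\leq\sqrt s\|\sqrt{w_u}X_u'\delta\|_{\Pn,2}/\bar\kappa_{2\tilde c}$ and the just-proved prediction-norm bound; this produces exactly the stated product expression.

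The main obstacle is the bookkeeping around the ``small vs.\ large $\|\sqrt{w_u}X_u'\delta\|_{\Pn,2}$'' dichotomy: one must show that whichever regime occurs, $\delta$ lies in the set $A_u$ as defined (so that Assumption \ref{ass: M}(c) applies), and then that the resulting inequality is self-improving rather than circular — i.e.\ the term $\|\sqrt{w_u}X_u'\delta\|_{\Pn,2}$ appearing linearly on the right does not exceed what the left side controls, which is exactly where the strict inequality on $\bar q_{A_u}$ is used to rule out the linear branch of the $\min$-$\wedge$. A secondary but routine point is ensuring every constant ($\tilde c$, the factor $6\tilde c$, the $\frac{3n}{\lambda}\frac{c\|\Psi_0^{-1}\|_\infty}{\ell c-1}$ coefficient) comes out with the sup over $u\in\UU$ already baked in, which is automatic since $\tilde c$ and $\bar\kappa_{2\tilde c}$ were defined with $\sup_{u\in\UU}$ and $\inf_{u\in\UU}$ respectively; the probability $1-2\Delta_n$ is the union of the event in \eqref{Eq:reg} and the event in Assumption \ref{ass: M}, each of probability $1-\Delta_n$.
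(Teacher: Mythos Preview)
Your proposal is correct and follows essentially the same route as the paper's proof: optimality of $\hat\theta_u$ gives the upper bound \eqref{QQ:UpperBound}, convexity plus Assumption \ref{ass: M}(a) and \eqref{Eq:reg} give the lower bound \eqref{QQ:LowerBound}, combining these yields \eqref{Eq:92b}, from which a case split shows $\delta_u\in A_u$; then Assumption \ref{ass: M}(c) together with the bound $\|\delta_{u,T_u}\|_1\le I_u+II_u$ and the strict hypothesis on $\bar q_{A_u}$ delivers the prediction-norm bound, and the $\ell_1$-bound follows. One small remark: the paper frames the dichotomy directly as ``either $\delta_u\in\Delta_{2\tilde c,u}$ or not'' (and in the latter case derives membership in the second set of $A_u$ from \eqref{Eq:92b}), rather than as ``small vs.\ large prediction norm''; your phrasing is equivalent in effect but the cone-membership split is cleaner to execute, since the restricted-eigenvalue bound $\|\delta_{T_u}\|\le\|\sqrt{w_u}X_u'\delta\|_{\Pn,2}/\bar\kappa_{2\tilde c}$ is only available on $\Delta_{2\tilde c,u}$ and you need the bound $\|\delta_{u,T_u}\|_1\le I_u+II_u$ to hold in \emph{both} cases when you upper-bound the right side of Assumption \ref{ass: M}(c).
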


\begin{lemma}\label{Lemma:LassoMSparsity}
In addition to conditions of Lemma \ref{Lemma:LassoMRateRaw}, suppose that with probability $1-\Delta_n$, we have for some random variable $L_n$ and all $u\in\UU$ and $\delta \in \RR^p$ that
\begin{equation}\label{eq: def Ln}
\Big|\{\En[\partial_\theta M_u(Y_u,X_u,\hat\theta_u)-\partial_\theta M_u(Y_u,X_u,\theta_u)]\}'\delta\Big| \leq L_n\|X_u'\delta\|_{\Pn,2}.
\end{equation}
Further, for all $u\in\UU$, let $\widehat s_u = \supp(\widehat T_u)$. Then with probability at least $1-3\Delta_n$, we have for all $u\in\UU$ that
$$
\hat s_u \leq \min_{m\in \mathcal{M}_u} \semax{m,u} L_u^2,
$$
where $\mathcal{M}_u=\{ m \in \mathbb{N} \colon  m \geq 2 \semax{m,u} L_u^2 \}$ and $L_u =\frac{c\|\widehat\Psi_{u0}^{-1}\|_\infty }{c\ell-1} \frac{n}{\lambda}\left\{ C_n + L_n \right\}$.
\end{lemma}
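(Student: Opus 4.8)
\textbf{Proof proposal for Lemma \ref{Lemma:LassoMSparsity}.}

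The plan is to follow the standard sparsity-of-Lasso argument (as in \cite{BC-SparseQR} or \cite{BelloniChernozhukovHansen2011}), adapted to the presence of the nuisance-function error term and made uniform over $u\in\UU$. First I would work on the event of probability at least $1-3\Delta_n$ on which (i) the rate bounds of Lemma \ref{Lemma:LassoMRateRaw} hold for all $u\in\UU$, (ii) condition \eqref{Eq:reg} holds, and (iii) the bound \eqref{eq: def Ln} on the change in the empirical gradient holds for all $u\in\UU$ and all $\delta\in\RR^p$; a union bound over the three events (each failing with probability at most $\Delta_n$) gives $1-3\Delta_n$. Fix $u\in\UU$ and write $\widehat T_u=\supp(\widehat\theta_u)$, $\widehat s_u=|\widehat T_u|$. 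The optimality conditions for \eqref{Adef:LassoFunc} state that for every $k\in\widehat T_u$, the $k$th component of $\En[\partial_\theta M_u(Y_u,X_u,\widehat\theta_u)]$ equals $-\tfrac{\lambda}{n}\widehat\Psi_{u,kk}\,\sign(\widehat\theta_{u,k})$ in absolute value, i.e.\ its magnitude is exactly $\tfrac{\lambda}{n}\widehat\Psi_{u,kk}\geq \tfrac{\lambda}{n}\ell\,l_{u0k}$ by Assumption \ref{ass: M}(b). Hence, letting $\delta=\widehat\theta_u-\theta_u$ and decomposing the empirical gradient at $\widehat\theta_u$ as the gradient at $\theta_u$ plus the increment,
$$
\sqrt{\widehat s_u}\,\frac{\lambda}{n}\,\ell\,\min_{k}\l_{u0k}\;\leq\;\Big\|\big(\En[\partial_\theta M_u(Y_u,X_u,\widehat\theta_u)]\big)_{\widehat T_u}\Big\|
\;\leq\;\Big\|\big(\En[\partial_\theta M_u(Y_u,X_u,\theta_u)]\big)_{\widehat T_u}\Big\| + \sup_{\|\xi\|\le1,\ \supp(\xi)\subseteq\widehat T_u}\big|\{\cdot\}'\xi\big|,
$$
where the last supremum is bounded by $L_n\sup_{\|\xi\|\le1,\supp(\xi)\subseteq\widehat T_u}\|X_u'\xi\|_{\Pn,2}\leq L_n\sqrt{\semax{\widehat s_u,u}}$ using \eqref{eq: def Ln} and the definition of the maximum sparse eigenvalue.

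For the first term on the right I would split once more: $\En[\partial_\theta M_u(Y_u,X_u,\theta_u)] = \En[\partial_\theta M_u(Y_u,X_u,\theta_u,a_u)] + \big(\En[\partial_\theta M_u(Y_u,X_u,\theta_u)]-\En[\partial_\theta M_u(Y_u,X_u,\theta_u,a_u)]\big)$. The first piece is controlled by \eqref{Eq:reg}: its $\widehat\Psi_{u0}^{-1}$-weighted sup-norm is at most $\lambda/(cn)$, so the restriction to $\widehat T_u$ has Euclidean norm at most $\sqrt{\widehat s_u}\,\|\widehat\Psi_{u0}\|_\infty\,\lambda/(cn)$. The second piece is bounded by Assumption \ref{ass: M}(a): $|\{\cdot\}'\xi|\leq C_n\|\sqrt{w_u}X_u'\xi\|_{\Pn,2}$; restricting to $\supp(\xi)\subseteq\widehat T_u$, $\|\xi\|\le1$, and using $w_u\le1$ bounds this by $C_n\sqrt{\semax{\widehat s_u,u}}$. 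Collecting all three contributions, dividing through by $\tfrac{\lambda}{n}\ell\min_k l_{u0k}$, and absorbing the $\|\widehat\Psi_{u0}\|_\infty/\min_k l_{u0k}=\|\widehat\Psi_{u0}\|_\infty\|\widehat\Psi_{u0}^{-1}\|_\infty$ factor and the constant $1/c$ into the definition of $\tilde c$ and $L_u$, I get an inequality of the form
$$
\sqrt{\widehat s_u}\;\leq\;\text{const}\cdot\sqrt{\widehat s_u}\,\cdot\big(\text{small}\big) \;+\; L_u\sqrt{\semax{\widehat s_u,u}},
\qquad
L_u=\frac{c\|\widehat\Psi_{u0}^{-1}\|_\infty}{c\ell-1}\,\frac{n}{\lambda}\big\{C_n+L_n\big\}.
$$
Here the bookkeeping must be done carefully so that the coefficient multiplying $\sqrt{\widehat s_u}$ on the right is exactly the $1/(c\ell-1)$-type factor that, combined with the $\ell c>1$ assumption, can be moved to the left-hand side; this reorganization is what produces the clean $L_u$ in the statement. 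The upshot is $\widehat s_u\leq \semax{\widehat s_u,u}\,L_u^2$.

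To conclude I would invoke the sub-linearity of $m\mapsto \semax{m,u}$ (a standard fact: $\semax{\lceil t m\rceil,u}\leq\lceil t\rceil\semax{m,u}$), exactly as in Lemma 10 of \cite{BelloniChernozhukovHansen2011} or the sparsity lemma in \cite{BC-PostLASSO}: the inequality $\widehat s_u\leq \semax{\widehat s_u,u}L_u^2$ together with sub-linearity forces $\widehat s_u\leq m$ for any $m$ with $m\geq 2\semax{m,u}L_u^2$, hence $\widehat s_u\leq\min_{m\in\mathcal M_u}\semax{m,u}L_u^2$ with $\mathcal M_u=\{m\in\mathbb N: m\geq 2\semax{m,u}L_u^2\}$; if $\mathcal M_u=\emptyset$ the bound is vacuous. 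Since every step above is uniform in $u\in\UU$ on the prescribed event, the claim holds for all $u\in\UU$ simultaneously with probability at least $1-3\Delta_n$. The main obstacle, as usual in these arguments, is the self-referential nature of the final inequality — $\widehat s_u$ appears on both sides through $\semax{\widehat s_u,u}$ — which is resolved only by the sub-linearity trick; the rest is careful constant-tracking to ensure the various weighting matrices and the $C_n$, $L_n$ terms assemble into precisely the $L_u$ appearing in the statement.
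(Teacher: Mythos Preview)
Your approach mirrors the paper's: KKT conditions on the support, triangle-inequality decomposition of the empirical gradient at $\hat\theta_u$ into the score $S_{u,n}=\En[\partial_\theta M_u(Y_u,X_u,\theta_u,a_u)]$, the nuisance-correction term (controlled by $C_n$ via Assumption~\ref{ass: M}(a)), and the gradient-increment term (controlled by $L_n$ via \eqref{eq: def Ln}), followed by the sublinearity trick for $\semax{\cdot,u}$. However, your chosen normalization does not deliver the stated constant. Lower-bounding the \emph{unweighted} gradient on $\widehat T_u$ by $\tfrac{\lambda}{n}\,\ell\,\min_k l_{u0k}\,\sqrt{\hat s_u}$ while upper-bounding the $S_{u,n}$ piece by $\tfrac{\lambda}{cn}\,\|\widehat\Psi_{u0}\|_\infty\,\sqrt{\hat s_u}$ produces, after rearrangement, a denominator $c\ell-\|\widehat\Psi_{u0}\|_\infty\|\widehat\Psi_{u0}^{-1}\|_\infty$ instead of $c\ell-1$. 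This neither reproduces the stated $L_u$ nor is guaranteed positive under the standing assumption $c\ell>1$, and you cannot ``absorb'' the discrepancy into $L_u$ or $\tilde c$ since those are fixed in the lemma's statement.

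The paper avoids this by premultiplying by $\widehat\Psi_u^{-1}$ \emph{before} taking the Euclidean norm on $\widehat T_u$. Then each active component of $\widehat\Psi_u^{-1}\En[\partial_\theta M_u(Y_u,X_u,\hat\theta_u)]$ has magnitude exactly $\lambda/n$, so the left side is $\tfrac{\lambda}{n}\sqrt{\hat s_u}$; the score term is bounded by $\|\widehat\Psi_u^{-1}\widehat\Psi_{u0}\|_\infty\,\|\widehat\Psi_{u0}^{-1}S_{u,n}\|_\infty\,\sqrt{\hat s_u}\leq \tfrac{\lambda}{c\ell n}\sqrt{\hat s_u}$; and the $C_n$ and $L_n$ terms each pick up a factor $\|\widehat\Psi_u^{-1}\|_\infty\leq \|\widehat\Psi_{u0}^{-1}\|_\infty/\ell$. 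Rearranging yields exactly $\sqrt{\hat s_u}\leq L_u\sqrt{\semax{\hat s_u,u}}$ with the claimed $L_u$. With this single correction to the weighting, the remainder of your argument (including the sublinearity step, which the paper attributes to Lemma~3 of \cite{BC-PostLASSO}) coincides with the paper's proof.
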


\begin{lemma}\label{Lemma:PostLassoMRateRaw}
Suppose that Assumption \ref{ass: M} holds with $A_u=\{ \delta : \|\delta\|_0 \leq \widehat s_u + s_u\}$ and
\begin{align}
\bar q_{A_u}
>2\max&\bigg\{\Big(\En[M_u(Y_u,X_u,\widetilde \theta_u)] - \En[M_u(Y_u,X_u,\theta_u)]\Big)_+^{1/2},\notag\\
&\quad \Big( \frac{\sqrt{\widehat s_u+s_u}\|\En[\partial_\theta M_u(Y_u,X_u,\theta_u,a_u)]\|_\infty}{\sqrt{\semin{\widehat s_u+s_u,u}}} + 3C_n\Big)\bigg\}.\label{qreq: PostSelection}
\end{align}
Then with probability at least $1 - \Delta_n$, we have for all $u\in\UU$ that
\begin{align*} \|\sqrt{w_{u}} X_u'(\widetilde \theta_u-\theta_u)\|_{\Pn,2} & \leq \Big(\En[M_u(Y_u,X_u,\widetilde \theta_u)] - \En[M_u(Y_u,X_u,\theta_u)]\Big)_+^{1/2} \\
&\quad +  \frac{\sqrt{\widehat s_u+s_u}\|\En[\partial_\theta M_u(Y_u,X_u,\theta_u,a_u)]\|_\infty}{\sqrt{\semin{\widehat s_u+s_u,u}}} + 3C_n.
\end{align*}
In addition, with probability at least $1-\Delta_n$, we have for all $u\in\UU$ that
\begin{equation}
\En[M_u(Y_u,X_u,\widetilde \theta_u)] - \En[M_u(Y_u,X_u,\theta_u)]  \leq \frac{\lambda L}{n}  \|\hat\theta_u - \theta_u\|_1\sup_{u\in\UU}\|\widehat \Psi_{u0}\|_\infty\label{AuxMuUpper}.
\end{equation}
\end{lemma}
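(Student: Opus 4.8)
The plan is to prove Lemma~\ref{Lemma:PostLassoMRateRaw} by the usual ``basic inequality'' argument for post-selection M-estimators, adapted to carry the nuisance-function error term $C_n$ and the functional index $u$ uniformly. First I would work on the event of probability at least $1-\Delta_n$ on which Assumption~\ref{ass: M}(a,c) hold with the specified $A_u$ and $\bar q_{A_u}$, and fix $u\in\UU$. Write $\delta = \widetilde\theta_u - \theta_u$; since $\supp(\widetilde\theta_u)\subseteq\widehat T_u$ and $\supp(\theta_u)=T_u$, we have $\|\delta\|_0\leq \widehat s_u + s_u$, so $\delta\in A_u$ and Assumption~\ref{ass: M}(c) applies. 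By optimality of $\widetilde\theta_u$ in \eqref{Adef:PostFunc}, $\En[M_u(Y_u,X_u,\widetilde\theta_u)]\leq \En[M_u(Y_u,X_u,\theta_u)]$ once we note that $\theta_u$ restricted to $\widehat T_u$ is a feasible competitor up to the term $\En[M_u(Y_u,X_u,\theta_u)]-\En[M_u(Y_u,X_u,\theta_{u\widehat T_u})]$; more cleanly, I would simply keep the nonnegative part $(\En[M_u(Y_u,X_u,\widetilde\theta_u)]-\En[M_u(Y_u,X_u,\theta_u)])_+$ as an explicit term in the bound, which is exactly how it appears in the statement, so no feasibility subtlety is actually needed.

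The core step is then: plug $\delta$ into the inequality in Assumption~\ref{ass: M}(c) to get
\[
\{\|\sqrt{w_u}X_u'\delta\|_{\Pn,2}^2\}\wedge\{\bar q_{A_u}\|\sqrt{w_u}X_u'\delta\|_{\Pn,2}\}
\leq \big(\En[M_u(Y_u,X_u,\widetilde\theta_u)]-\En[M_u(Y_u,X_u,\theta_u)]\big)_+ + |\En[\partial_\theta M_u(Y_u,X_u,\theta_u)]'\delta| + 2C_n\|\sqrt{w_u}X_u'\delta\|_{\Pn,2}.
\]
Next I bound $|\En[\partial_\theta M_u(Y_u,X_u,\theta_u)]'\delta|$ by splitting it via the triangle inequality into $|\En[\partial_\theta M_u(Y_u,X_u,\theta_u,a_u)]'\delta|$ plus the nuisance term, which is at most $C_n\|\sqrt{w_u}X_u'\delta\|_{\Pn,2}$ by Assumption~\ref{ass: M}(a). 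The first piece is $\leq \|\En[\partial_\theta M_u(Y_u,X_u,\theta_u,a_u)]\|_\infty\|\delta\|_1$, and since $\delta$ is $(\widehat s_u+s_u)$-sparse, $\|\delta\|_1\leq\sqrt{\widehat s_u+s_u}\,\|\delta\|\leq \sqrt{\widehat s_u+s_u}\,\|\sqrt{w_u}X_u'\delta\|_{\Pn,2}/\sqrt{\semin{\widehat s_u+s_u,u}}$. Writing $t = \|\sqrt{w_u}X_u'\delta\|_{\Pn,2}$ and collecting, the displayed inequality reads $t^2\wedge(\bar q_{A_u}t)\leq B\,t + A$, where $A=(\En[M_u(\cdot,\widetilde\theta_u)]-\En[M_u(\cdot,\theta_u)])_+$ and $B = \sqrt{\widehat s_u+s_u}\|\En[\partial_\theta M_u(Y_u,X_u,\theta_u,a_u)]\|_\infty/\sqrt{\semin{\widehat s_u+s_u,u}} + 3C_n$. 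The hypothesis $\bar q_{A_u} > 2\max\{A^{1/2},B\}$ rules out the regime where the min is attained at $\bar q_{A_u}t$ (if $t\geq \bar q_{A_u}$ then $\bar q_{A_u}t\leq Bt+A$ forces $t\leq B + A/\bar q_{A_u} < B + A/(2A^{1/2}) = B + A^{1/2}/2 \leq \bar q_{A_u}$, a contradiction once constants are tracked), so we are in the quadratic regime $t^2\leq Bt+A$, whence $t\leq B + \sqrt{A}$, i.e. $t\leq A^{1/2}+B$. Substituting back the definitions of $A$ and $B$ gives precisely the first asserted bound. Since every estimate used (Assumption~\ref{ass: M}(a,c), sparsity of $\delta$, the bound on $\bar q_{A_u}$) was taken on a single event of probability $\geq 1-\Delta_n$ and holds simultaneously for all $u\in\UU$ there, the bound is uniform in $u$ on that event.

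For the auxiliary bound \eqref{AuxMuUpper}, I would use that $\widehat\theta_u$ is feasible for the post-Lasso program \eqref{Adef:PostFunc} on $\widehat T_u = \supp(\widehat\theta_u)$, so $\En[M_u(Y_u,X_u,\widetilde\theta_u)]\leq \En[M_u(Y_u,X_u,\widehat\theta_u)]$; then compare $\En[M_u(Y_u,X_u,\widehat\theta_u)]$ with $\En[M_u(Y_u,X_u,\theta_u)]$ using the optimality of $\widehat\theta_u$ for the $\ell_1$-penalized problem \eqref{Adef:LassoFunc}, which gives $\En[M_u(Y_u,X_u,\widehat\theta_u)] + \tfrac{\lambda}{n}\|\widehat\Psi_u\widehat\theta_u\|_1 \leq \En[M_u(Y_u,X_u,\theta_u)] + \tfrac{\lambda}{n}\|\widehat\Psi_u\theta_u\|_1$, and hence $\En[M_u(Y_u,X_u,\widehat\theta_u)] - \En[M_u(Y_u,X_u,\theta_u)] \leq \tfrac{\lambda}{n}\|\widehat\Psi_u(\theta_u - \widehat\theta_u)\|_1 \leq \tfrac{\lambda L}{n}\|\widehat\Psi_{u0}\|_\infty\|\widehat\theta_u - \theta_u\|_1$ by Assumption~\ref{ass: M}(b). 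Taking the sup over $u$ on the appropriate loading gives \eqref{AuxMuUpper}. The main obstacle is the bookkeeping around the min-vs-quadratic case split: I must make sure the constant $2$ in the hypothesis $\bar q_{A_u}>2\max\{\cdot\}$ is exactly what is needed to both (i) exclude the linear regime and (ii) absorb the $2C_n t$ term from Assumption~\ref{ass: M}(c) together with the $C_n t$ term from Assumption~\ref{ass: M}(a) into the stated $3C_n$, and that the $(\cdot)_+$ convention handles the one place where post-Lasso could in principle have larger objective than $\theta_u$ — everything else is routine.
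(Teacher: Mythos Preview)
Your proposal is correct and follows essentially the same route as the paper's proof: apply Assumption~\ref{ass: M}(c) to the sparse vector $\delta=\widetilde\theta_u-\theta_u$, split the gradient term via Assumption~\ref{ass: M}(a) and H\"older plus $\|\delta\|_1\le\sqrt{\widehat s_u+s_u}\,t/\sqrt{\semin{\widehat s_u+s_u,u}}$, then resolve the resulting inequality $t^2\wedge(\bar q_{A_u}t)\le A+Bt$; for \eqref{AuxMuUpper} you use exactly the paper's chain $\En[M_u(\widetilde\theta_u)]\le\En[M_u(\widehat\theta_u)]$ followed by Lasso optimality and Assumption~\ref{ass: M}(b). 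The only cosmetic difference is that you dispose of the linear regime by contradiction whereas the paper bounds $t$ directly in that case---note your intermediate step ``$\bar q_{A_u}t\le Bt+A$ forces $t\le B+A/\bar q_{A_u}$'' is not the right algebra (you get $t\le A/(\bar q_{A_u}-B)<2A/\bar q_{A_u}<A^{1/2}$ instead), but this still yields the contradiction you want.
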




A key requirement in Lemmas \ref{Lemma:LassoMRateRaw} and \ref{Lemma:LassoMSparsity} is that $\lambda$ satisfies (\ref{Eq:reg}) with high probability. Therefore, below we provide a choice of $\lambda$ and a set of conditions under which the proposed choice of $\lambda$ satisfies this requirement. Let $d_{\mathcal U}\colon\mathcal U\times\mathcal U\to \mathbb R_+$ denote a metric on $\mathcal U$. Also, let
$$
S_{u} = \partial_\theta M_u(Y_u,X_u,\theta_u,a_u),\quad u\in\UU.
$$
Moreover, let $\underline C$ and $\bar C$ be some strictly positive constants.  Finally, $(\epsilon_n)_{n\geq 1}$, $(\varphi_n)_{n\geq 1}$, and $(N_n)_{n\geq1}$ be some sequences of positive constants, where $\varphi_n = o(1)$.

\medskip
\noindent
{\bf Condition WL.} {\em The constants $\epsilon_n$ and $N_n$ satisfy the inequality $N_n \geq N(\epsilon_n,\mathcal{U},d_\mathcal{U})$ and the following conditions hold:\\
 (i)  ${\displaystyle\sup_{u\in\UU} \max_{k\in [p]} } (\Ep_P[|S_{uk}|^3])^{1/3}\Phi^{-1}(1-\gamma/\{2p N_n\}) \leq \varphi_n n^{1/6}$;\\
 (ii) $\underline C \leq \Ep_P[|S_{uk}|^2] \leq \bar C$, for all $u\in \UU$ and $k\in[p]$;\\
(iii)  with probability at least $1-\Delta_n$,
\begin{align*}
&{\displaystyle\sup_{d_\mathcal{U}(u,u')\leq \epsilon_n}} \| \En[ S_{u}-S_{u'} ]\|_\infty\leq \varphi_n n^{-1/2},\text{ and }\\
&{\displaystyle \sup_{d_\mathcal{U}(u,u')\leq \epsilon_n}  \max_{k\in [p]}} \Big||\Ep_P[S_{uk}^2-S_{u'k}^2]| +|(\En-\Ep_P)[S_{uk}^2]|\Big|\leq \varphi_n.
\end{align*}}

Let
\begin{equation}\label{Eq:Def-lambda}
\lambda =   c' \sqrt{n} \Phi^{-1}(1-\gamma/\{2p N_n \}),
 \end{equation}
where $1-\gamma$ (with $\gamma = \gamma_n = o(1)$) is a confidence level associated with the probability of event (\ref{Eq:reg}), and $c'>c$ is a slack constant. The following lemma shows that this choice of $\lambda$ satisfies \eqref{Eq:reg} with high probability under Condition WL.

\begin{lemma}\label{Thm:ChoiceLambda}
Suppose that Condition WL holds. In addition, suppose that $\lambda$ satisfies \eqref{Eq:Def-lambda} for some $c'>c$ and $\gamma = \gamma_n \in [1/n,1/\log n]$. Then
$$
\Pr_P\left ( \lambda/n \geq c \sup_{u\in\mathcal{U}}\|\hat  \Psi^{-1}_{u0}\En[S_u]\|_\infty \right ) \geq 1-\gamma-o(\gamma)-\Delta_n.
$$
\end{lemma}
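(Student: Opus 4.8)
\textbf{Proof plan for Lemma~\ref{Thm:ChoiceLambda}.}
The plan is to reduce the claim to a high-dimensional moderate-deviation bound for a self-normalized sum, applied at a single index $u$, and then to pass from a single index to the supremum over $\UU$ via the $\epsilon_n$-net and the continuity estimates in Condition WL(iii). First I would fix a minimal $\epsilon_n$-net $\UU_n\subset\UU$ with $|\UU_n|=N(\epsilon_n,\UU,d_\UU)\le N_n$. For each fixed $u\in\UU_n$ and each coordinate $k\in[p]$, note that $\widehat\Psi_{u0}^{-1}\En[S_u]$ has $k$th entry equal to $n^{-1/2}$ times the self-normalized sum $\sqrt n\,\En[S_{uk}]/(\En[S_{uk}^2])^{1/2}$. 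The classical moderate-deviation inequality for self-normalized sums (e.g.\ Jing–Shao–Wang, as used in \cite{BCK-LAD,BC-SparseQR}) gives, under Condition WL(i,ii) which controls the ratio $(\Ep_P[|S_{uk}|^3])^{1/3}/(\Ep_P[|S_{uk}|^2])^{1/2}$ relative to $n^{1/6}/\Phi^{-1}(1-\gamma/(2pN_n))$, that
$$
\Pr_P\Big(\big|\sqrt n\,\En[S_{uk}]/(\En[S_{uk}^2])^{1/2}\big|> \Phi^{-1}(1-\gamma/(2pN_n))\Big)\le \frac{\gamma}{pN_n}\,(1+o(1)),
$$
uniformly in $u\in\UU$ and $k\in[p]$, where the $o(1)$ is driven by $\varphi_n=o(1)$.

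Next I would take a union bound over the $p$ coordinates and the $N_n$ net points: the probability that $\sup_{u\in\UU_n}\|\widehat\Psi_{u0}^{-1}\En[S_u]\|_\infty$ exceeds $n^{-1/2}\Phi^{-1}(1-\gamma/(2pN_n))$ is at most $p\cdot N_n\cdot \frac{\gamma}{pN_n}(1+o(1)) = \gamma + o(\gamma)$. Since $\lambda/n = c'n^{-1/2}\Phi^{-1}(1-\gamma/(2pN_n))$ with $c'>c$, on the complement of this event we have $\lambda/(cn)\ge c'/c\cdot n^{-1/2}\Phi^{-1}(1-\gamma/(2pN_n)) > \sup_{u\in\UU_n}\|\widehat\Psi_{u0}^{-1}\En[S_u]\|_\infty$, with room to spare.

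Then I would upgrade from the net $\UU_n$ to all of $\UU$. For arbitrary $u\in\UU$ pick $u'\in\UU_n$ with $d_\UU(u,u')\le\epsilon_n$; write $\widehat\Psi_{u0}^{-1}\En[S_u] = \widehat\Psi_{u'0}^{-1}\En[S_{u'}] + (\widehat\Psi_{u0}^{-1}-\widehat\Psi_{u'0}^{-1})\En[S_{u'}] + \widehat\Psi_{u0}^{-1}\En[S_u - S_{u'}]$. The last term is bounded in sup-norm by $\|\widehat\Psi_{u0}^{-1}\|_\infty\cdot \varphi_n n^{-1/2}$ on the event in Condition WL(iii); since the diagonal entries $l_{u0k}^2=\En[S_{uk}^2]$ are bounded below by $\underline C/2$ on the event $|(\En-\Ep_P)[S_{uk}^2]|\le\varphi_n$ (Condition WL(iii)) together with Condition WL(ii), we have $\|\widehat\Psi_{u0}^{-1}\|_\infty\lesssim 1$, so this term is $o(n^{-1/2})$. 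The middle term is handled the same way, using that $|l_{u0k}^{-1}-l_{u'0k}^{-1}|$ is controlled by $|l_{u0k}^2-l_{u'0k}^2|\le|\Ep_P[S_{uk}^2-S_{u'k}^2]|+2\varphi_n\lesssim\varphi_n$ from Condition WL(iii), and $\|\En[S_{u'}]\|_\infty$ is $O(n^{-1/2}\sqrt{\log(pN_n)}) = o(1)$ on the first event; hence the middle term is also $o(n^{-1/2})$. Absorbing these $o(n^{-1/2})$ slacks into the strict inequality coming from $c'>c$ (which produces a genuine multiplicative gap, since $\Phi^{-1}(1-\gamma/(2pN_n))\to\infty$), we conclude $\lambda/(cn)\ge\sup_{u\in\UU}\|\widehat\Psi_{u0}^{-1}\En[S_u]\|_\infty$ on an event of probability at least $1-\gamma-o(\gamma)-\Delta_n$, using $\gamma\in[1/n,1/\log n]$ to guarantee that $o(\gamma)$ terms from the moderate-deviation bound remain of the stated order.

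\textbf{Main obstacle.} The delicate point is the passage from the single-index moderate-deviation bound to the uniform-over-$\UU$ statement while keeping the loss on the right-hand side exactly $o(\gamma)$ rather than a fixed constant times $\gamma$: one must verify that the discretization error from replacing $\UU$ by $\UU_n$ (term sizes $\varphi_n n^{-1/2}$) is negligible \emph{relative to} the multiplicative gap $c'/c-1$ multiplied by $n^{-1/2}\Phi^{-1}(1-\gamma/(2pN_n))$, which is where $\varphi_n=o(1)$ and the lower bound $\gamma\ge1/n$ (so $\Phi^{-1}(1-\gamma/(2pN_n))\gtrsim\sqrt{\log(pN_n)}$, bounded away from zero) are both essential; and that the moderate-deviation inequality's own relative error, governed by Condition WL(i), does not inflate $\gamma/(pN_n)$ beyond $(1+o(1))\gamma/(pN_n)$ after the union bound.
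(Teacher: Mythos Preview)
Your proposal is correct and follows essentially the same route as the paper: apply the self-normalized moderate-deviation inequality (Lemma~\ref{Lemma: MDSN}) coordinatewise at each point of an $\epsilon_n$-net, union-bound over $p\cdot N_n$ indices to get the $\gamma+o(\gamma)$ term, and then control the discretization error via Condition WL(iii) so that it is absorbed by the multiplicative slack $c'/c>1$. The only cosmetic difference is in the decomposition of the discretization error: the paper writes the cross term as $(\widehat\Psi_{u0}^{-1}-\widehat\Psi_{u'0}^{-1})\widehat\Psi_{u0}\cdot\widehat\Psi_{u0}^{-1}\En[S_u]$ (so that the already-bounded self-normalized quantity appears directly), whereas you use $(\widehat\Psi_{u0}^{-1}-\widehat\Psi_{u'0}^{-1})\En[S_{u'}]$; both work once $l_{u0k}$ is shown to be bounded below on $\mathcal A_n$. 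One small point: your statement that the cross term is ``$o(n^{-1/2})$'' is not quite the right order (it is $\varphi_n$ or $\varphi_n^{1/2}$ times $n^{-1/2}\Phi^{-1}(1-\gamma/(2pN_n))$, which may exceed $n^{-1/2}$), but as you yourself note in the ``Main obstacle'' paragraph, what matters is that it is $o(1)$ times the threshold $n^{-1/2}\Phi^{-1}(1-\gamma/(2pN_n))$, and that is exactly what $\varphi_n=o(1)$ delivers.
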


Condition WL(iii) is of high level. Therefore, to conclude this section, we present a lemma that gives easy to verify conditions that imply Condition WL(iii).

\begin{lemma}\label{PrimitiveWL}
Suppose that for all $u\in\UU$, $X_u = X$ and $Y_u = H(Y,u)$ where $Y$ is a random variable and $\{H(\cdot,u)\colon u\in\UU\}$ is a VC-subgraph class of functions bounded by one with index $C_Y$ for some constant $C_Y\geq 1$. In addition, suppose that for all $u\in\UU$, we have $S_u = (Y_u - \Ep_P[Y_u\mid X])\cdot X$. Moreover, suppose that $\max_{k\in [p]}\Ep_P[X_k^4] \leq \bar C$, $\underline C\leq \sup_{u\in \mathcal{U},k\in[p]} \Ep_P[S_{uk}^2] \leq \bar C$, and $\Ep_P[\|X\|_\infty^q]^{1/q} \leq K_n$, for some constants $\underline C,\bar C>0$ and $q\geq 4$ and a sequence of constants $(K_n)_{n\geq 1}$. Finally, suppose that $\Ep_P[|Y_{u}-Y_{u'}|^4]\leq C_u|u-u'|^\nu$ for any $u, u' \in \mathcal{U}$ and some constants $\nu$ and $C$. Then we have with probability at least $1-(\log n)^{-1}$ that
\begin{align}
&  \sup_{d_\UU(u,u')\leq 1/n} \|\En[ S_u-S_{u'}]\|_\infty \lesssim \Big(\frac{\log (n p K_n)}{n^{1 + \nu/2}}\Big)^{1/2} + \frac{K_n\log (n p K_n)}{n^{1-1/q}}\label{eq: prim ver 1}\\
&   \sup_{u\in\mathcal{U}}\max_{k\in p}|(\En-\Ep_P)[S_{u k}^2]|
 \lesssim \Big(\frac{ \log (n p K_n)}{n}\Big)^{1/2}  + \frac{K_n^2\log(n p K_n)}{n^{1-2/q}}\label{eq: prim ver 2}\\
&  \max_{k\in[p]}|\Ep_P[S_{u k}^2-S_{u' k}^2]|  \lesssim d_\UU(u,u')^{\nu/4}\label{eq: prim ver 3}
\end{align}
up-to constants that depend only on $\underline C,\bar C,C_Y,C_u$, $q$, and $\nu$.
\end{lemma}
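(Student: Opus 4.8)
\textbf{Proof proposal for Lemma \ref{PrimitiveWL}.}

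The plan is to verify the three bounds \eqref{eq: prim ver 1}--\eqref{eq: prim ver 3} by applying the maximal inequality of Lemma \ref{lemma:CCK} (and/or Lemma \ref{thm:RV34}) to suitably chosen function classes, and to control uniform entropy numbers via the standard stability lemmas (Lemma \ref{lemma: andrews} and Lemma \ref{lem: bounded lipschitz classes}) together with the VC-subgraph hypothesis on $\{H(\cdot,u)\colon u\in\UU\}$. First I would fix notation: write $\bar Y_u = \Ep_P[Y_u\mid X]$, so that $S_{u k} = (Y_u - \bar Y_u)X_k$ and $S_{u k} - S_{u' k} = (Y_u - Y_{u'})X_k - (\bar Y_u - \bar Y_{u'})X_k$. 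The class $\{H(\cdot,u)\}$ is VC-subgraph with index $C_Y$ and envelope $1$, so by Lemma \ref{Lemma:PartialOutCovering} (or a direct conditional-expectation covering argument) the class $\{X\mapsto \bar Y_u\colon u\in\UU\}$ has uniform entropy numbers $\log N(\epsilon,\cdot,\|\cdot\|_{Q,2})\lesssim \log(1/\epsilon)$ with envelope $1$ as well. Hence the classes
$$
\mathcal H_1=\{(Y,X)\mapsto (Y_u-Y_{u'})X_k - (\bar Y_u-\bar Y_{u'})X_k\colon d_\UU(u,u')\leq 1/n,\ k\in[p]\}
$$
and $\mathcal H_2=\{(Y,X)\mapsto S_{u k}^2 - S_{u' k}^2\colon \cdots\}$, $\mathcal H_3=\{(Y,X)\mapsto S_{u k}^2\colon u\in\UU,k\in[p]\}$ all have uniform entropy numbers bounded by $C\log(np/\epsilon)$ with envelopes dominated by (a constant times) $\|X\|_\infty^2$, using Lemma \ref{lemma: andrews} to handle sums, products of two bounded-by-$1$ factors with one $\|X\|_\infty$ factor, and the union over $k\in[p]$ and over a net of $\UU$.

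For \eqref{eq: prim ver 1}, I would apply Lemma \ref{lemma:CCK} to $\mathcal H_1$. The key input is the weak variance bound: since $Y_u,Y_{u'},\bar Y_u,\bar Y_{u'}\in[0,1]$,
$$
\Ep_P[(S_{uk}-S_{u'k})^2]\lesssim \Ep_P[|Y_u-Y_{u'}|^2 X_k^2] + \Ep_P[(\bar Y_u-\bar Y_{u'})^2 X_k^2]\lesssim \big(\Ep_P[|Y_u-Y_{u'}|^4]\big)^{1/2}\big(\Ep_P[X_k^4]\big)^{1/2}\lesssim |u-u'|^{\nu/2},
$$
by Cauchy--Schwarz, Jensen for the conditional-expectation term, $\Ep_P[X_k^4]\leq\bar C$, and the modulus-of-continuity hypothesis $\Ep_P[|Y_u-Y_{u'}|^4]\leq C_u|u-u'|^\nu$; so over the index set $d_\UU(u,u')\leq 1/n$ the variance proxy is $\sigma^2\lesssim n^{-\nu/2}$. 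The envelope satisfies $\|\max_i \|X_i\|_\infty^2\|_{P,q/2}\lesssim n^{2/q}K_n^2$ by $\Ep_P[\|X\|_\infty^q]^{1/q}\leq K_n$ and Lemma \ref{lemma:maximal ineq} (or the standard $L^{q/2}$-maximal bound). Plugging $\sigma^2\lesssim n^{-\nu/2}$, entropy coefficient $v_n\lesssim\log(npK_n)$, and envelope $K_n^2$ into Lemma \ref{lemma:CCK} gives exactly the right-hand side of \eqref{eq: prim ver 1} after dividing by $\sqrt n$. For \eqref{eq: prim ver 2}, I would apply Lemma \ref{lemma:CCK} to $\mathcal H_3$: here $\Ep_P[S_{uk}^2]\leq\bar C$ uniformly (hypothesis), so the variance proxy is $O(1)$, the envelope is again $\|X\|_\infty^2$ with $\|\cdot\|_{P,q/2}\lesssim n^{2/q}K_n^2$, and Lemma \ref{lemma:CCK} yields $\sqrt{n^{-1}\log(npK_n)} + n^{-1+2/q}K_n^2\log(npK_n)$, i.e. \eqref{eq: prim ver 2}. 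Finally \eqref{eq: prim ver 3} is purely a population statement: $|\Ep_P[S_{uk}^2-S_{u'k}^2]|\leq (\Ep_P[(S_{uk}-S_{u'k})^2])^{1/2}(\Ep_P[(S_{uk}+S_{u'k})^2])^{1/2}$; the first factor is $\lesssim |u-u'|^{\nu/4}$ by the variance computation above (now without restricting to $d_\UU(u,u')\leq 1/n$), and the second factor is $O(1)$ since $S_{uk}^2,S_{u'k}^2$ have bounded first moments; this gives \eqref{eq: prim ver 3}.

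I expect the main obstacle to be bookkeeping rather than a deep difficulty: one must be careful that all entropy bounds are genuinely uniform in $k\in[p]$ and over the $\epsilon_n$-net of $\UU$ (handled cleanly by Lemma \ref{lemma: andrews} applied to the union of at most $p N_n$ classes, which only costs an additive $\log(pN_n)\lesssim\log(npK_n)$), and that the conditional-expectation class $\{\bar Y_u\}$ inherits the VC-type entropy bound from $\{Y_u\}$ (this is where Lemma \ref{Lemma:PartialOutCovering} does the work). A secondary point requiring care is that for the "$\sup_{d_\UU\leq 1/n}$" bound \eqref{eq: prim ver 1} the relevant function class is indexed by \emph{pairs} $(u,u')$ with $d_\UU(u,u')\leq 1/n$; covering this pair-index set and controlling its entropy by $\log(npK_n)$ follows from the product/difference stability of uniform entropy (Lemma \ref{lemma: andrews}), since each of $Y_u$, $Y_{u'}$, $\bar Y_u$, $\bar Y_{u'}$ ranges over a VC-type class. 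The probability budget $1-(\log n)^{-1}$ is obtained by choosing the confidence level in each application of Lemma \ref{lemma:CCK} to be $1-(3\log n)^{-1}$ and taking a union bound over the three events.
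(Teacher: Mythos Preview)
Your approach is essentially the same as the paper's: build the relevant function classes from the VC-subgraph class $\{H(\cdot,u)\}$, use Lemma \ref{Lemma:PartialOutCovering} to transfer the entropy bound to the conditional-expectation class $\{\bar Y_u\}$, combine via Lemma \ref{lemma: andrews}, and feed everything into Lemma \ref{lemma:CCK}. The variance calculation for \eqref{eq: prim ver 1} and the Cauchy--Schwarz argument for \eqref{eq: prim ver 3} also match the paper.

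There is one concrete slip. For \eqref{eq: prim ver 1} you take the envelope of $\mathcal H_1$ to be $\|X\|_\infty^2$ and invoke $\|\max_i\|X_i\|_\infty^2\|_{P,q/2}\lesssim n^{2/q}K_n^2$. With that choice, the second term coming out of Lemma \ref{lemma:CCK} (after dividing by $\sqrt n$) is
\[
\frac{n^{2/q}K_n^2\,\log(npK_n)}{n}=\frac{K_n^2\log(npK_n)}{n^{1-2/q}},
\]
which is strictly weaker than the $K_n\log(npK_n)/n^{1-1/q}$ you need. The fix is immediate: since $|Y_u-Y_{u'}|\leq 2$ and $|\bar Y_u-\bar Y_{u'}|\leq 2$, an envelope for $\mathcal H_1$ is $4\|X\|_\infty$ (not its square). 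Then $\|M\|_{P,q}\leq n^{1/q}\|\,\|X\|_\infty\,\|_{P,q}\leq n^{1/q}K_n$, and Lemma \ref{lemma:CCK} delivers exactly the right-hand side of \eqref{eq: prim ver 1}. This is precisely what the paper does (it works with the classes $\mathcal F_j-\mathcal F_j'$ with envelope $2\|X\|_\infty$). Your envelope $\|X\|_\infty^2$ is appropriate for $\mathcal H_3$ and gives the correct \eqref{eq: prim ver 2}; it is simply too loose for $\mathcal H_1$.

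A minor terminological point: in Lemma \ref{lemma:CCK} the parameter $v$ is the constant in front of $\log(a/\epsilon)$, not the entropy itself; here $v\lesssim 1$ (depending only on $C_Y$) and $a\lesssim p$, with the extra $\log(nK_n)$ entering through the ratio $\|F\|_{P,2}/\sigma$. This does not affect your final bounds but is worth stating correctly.
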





\section{Proofs for Appendix L}

\subsection*{Proof of Lemma \ref{Lemma:LassoMRateRaw}}
For $u\in\UU$, let $\delta_u = \hat\theta_u - \theta_u$ and $S_{u,n} = \En[\partial_\theta M_u(Y_u,X_u,\theta_u,a_u)]$. Throughout the proof, we will assume that the events (a), (b), and (c) in Assumption \ref{ass: M} as well as the event \eqref{Eq:reg} hold. These events hold with probability at least $1-2\Delta_n$. We will show that the inequalities in the statement of Lemma \ref{Lemma:LassoMRateRaw} hold under these events.

By definition of $\hat \theta_u$, we have
$$
\En[M_u(Y_u,X_u,\hat \theta_u)] + \frac{\lambda}{n}\|\widehat\Psi_u\hat\theta_u\|_1  \leq \En[M_u(Y_u,X_u, \theta_u)] + \frac{\lambda}{n}\|\widehat\Psi_u\theta_u\|_1.$$
Thus,
\begin{align}
&\En[M_u(Y_u,X_u,\hat \theta_u)] - \En[M_u(Y_u,X_u,\theta_u)] \notag \\
& \qquad \leq \frac{\lambda}{n}\|\widehat\Psi_u\theta_u\|_1 - \frac{\lambda}{n}\|\widehat\Psi_u\hat\theta_u\|_1 \leq \frac{\lambda}{n}\|\widehat\Psi_u\delta_{u,T_u}\|_1 - \frac{\lambda}{n} \|\widehat\Psi_u\delta_{u,T_u^c}\|_1 \notag\\
&\qquad \leq \frac{\lambda L}{n}\|\widehat\Psi_{u0}\delta_{u,T_u}\|_1 - \frac{\lambda \ell}{n} \|\widehat\Psi_{u0}\delta_{u,T_u^c}\|_1.\label{QQ:UpperBound}
\end{align}
Moreover, by convexity of $\theta \mapsto M_u(Y_{u},X_{u},\theta)$, we have
\begin{align}
& \En[M_u(Y_u,X_u,\hat \theta_u)] - \En[M_u(Y_u,X_u,\theta_u)]  \notag\\
& \geq \En[\partial_{\theta} M_u(Y_u,X_u,\theta_u)]'\delta_u \geq -\frac{\lambda}{n}\frac{1}{c}\|\widehat\Psi_{u0}\delta_{u}\|_1 -C_n\|\sqrt{w_{u}} X_u'\delta_u\|_{\Pn,2}\label{QQ:LowerBound}
\end{align}
where the second inequality holds by Assumption \ref{ass: M}(a) and
$$
\lambda/n\geq c\sup_{u\in\mathcal{U}}\|\widehat\Psi_{u0}^{-1}S_{u,n} \|_\infty
$$
since
\begin{align}
| \En[\partial_{\theta} M_u(Y_u,X_u,\theta_u)]'\delta_u|
& =  |S_{u,n}'\delta_u + \{\En[\partial_{\theta} M_u(Y_u,X_u,\theta_u)]-S_{u,n}\}'\delta_u| \nonumber \\
 & \leq    |S_{u,n}'\delta_u| + |\{\En[\partial_{\theta} M_u(Y_u,X_u,\theta_u)]-S_{u,n}\}'\delta_u|  \nonumber \\
& \leq   \|\widehat\Psi_{u0}^{-1}S_{u,n} \|_\infty \|\widehat\Psi_{u0}\delta_u\|_1+ C_n\|\sqrt{w_{u}}  X_u'\delta_u\|_{\Pn,2} \nonumber \\
& \leq  \frac{\lambda}{n}\frac{1}{c}\|\widehat\Psi_{u0}\delta_{u}\|_1 + C_n\|\sqrt{w_{u}} X_u'\delta_u\|_{\Pn,2}.
\label{useful}
\end{align}

Combining (\ref{QQ:UpperBound}) and (\ref{QQ:LowerBound}), we have
\begin{equation}\label{Eq:92b}\frac{\lambda}{n}\frac{c\ell-1}{c}\| \widehat\Psi_{u0} \delta_{u,T_u^c}\|_1 \leq \frac{\lambda}{n}\frac{Lc+1}{c}\| \widehat\Psi_{u0}\delta_{u,T_u}\|_1+C_n\|\sqrt{w_{u}} X_u'\delta_u\|_{\Pn,2}, \end{equation}
and for $\tilde c = \frac{Lc+1}{\ell c -1}\sup_{u\in\mathcal{U}}\|\widehat\Psi_{u0}\|_\infty\|\widehat\Psi_{u0}^{-1}\|_\infty$, we have
$$
\|\delta_{u,T_u^c}\|_1 \leq \tilde c \|\delta_{u,T_u}\|_1 + \frac{n}{\lambda}\frac{c\|\widehat\Psi_{u0}^{-1}\|_\infty}{\ell c - 1}C_n\|\sqrt{w_{u}} X_u'\delta_u\|_{\Pn,2}.
$$
Now, suppose that $\delta_u \not\in \Delta_{2\tilde c,u}$, namely $\|\delta_{u,T_u^c}\|_1> 2\tilde c\|  \delta_{u,T_u}\|_1$. Then
$$
2\tilde c \|\delta_{u,T_u}\|_1\leq  \tilde c \|\delta_{u,T_u}\|_1 + \frac{n}{\lambda}\frac{c\|\widehat\Psi_{u0}^{-1}\|_\infty}{\ell c - 1}C_n\|\sqrt{w_{u}} X_u'\delta_u\|_{\Pn,2},
$$
and so
$$
\|\delta_{u,T_u}\|_1\leq \frac{n}{\lambda}\frac{c\|\widehat\Psi_{u0}^{-1}\|_\infty}{\ell c - 1}C_n\|\sqrt{w_{u}} X_u'\delta_u\|_{\Pn,2}
$$
since $\tilde c\geq 1$. Also,
$$
\|\delta_{u,T_u^c}\|_1 \leq \frac{1}{2}\|\delta_{u,T_u^c}\|_1 +  \frac{n}{\lambda}\frac{c\|\widehat\Psi_{u0}^{-1}\|_\infty}{\ell c - 1}C_n\|\sqrt{w_{u}} X_u'\delta_u\|_{\Pn,2},
$$
and so
$$
\|\delta_{u,T_u^c}\|_1 \leq  \frac{2n}{\lambda}\frac{c\|\widehat\Psi_{u0}^{-1}\|_\infty}{\ell c - 1}C_n\|\sqrt{w_{u}} X_u'\delta_u\|_{\Pn,2}.
$$
Therefore,
\begin{equation}\label{BoundL1Zero}
\|\delta_u\|_1 \leq  \frac{3n}{\lambda}\frac{c\|\widehat\Psi_{u0}^{-1}\|_\infty}{\ell c - 1}C_n\|\sqrt{w_{u}} X_u'\delta_u\|_{\Pn,2} =: I_u,
\end{equation}
as long as $\delta_u \not\in \Delta_{2\tilde c,u}$.
In addition, if $\delta_u\in\Delta_{2\tilde c,u}$, then
$$
\|\delta_{u,T_u}\|_1 \leq \sqrt s \|\delta_{u,T_u}\|\leq \frac{\sqrt{s}}{\bar\kappa_{2\tilde c}} \|\sqrt{w_{u}} X_u'\delta_u\|_{\Pn,2} =: II_u.
$$
Hence, in both cases, we have
\begin{equation}\label{BoundFirstL1}
\|\delta_{u,T_u}\|_1 \leq I_u + II_u.
\end{equation}

Next, for every $u\in\mathcal{U}$, since
$$
A_u = \Delta_{2\tilde c,u} \cup \{ \delta : \|\delta\|_1 \leq \frac{3 n}{\lambda} \frac{c\|\widehat\Psi_{u0}^{-1}\|_\infty}{\ell c - 1}C_n\|\sqrt{w_{u}} X_u'\delta\|_{\Pn,2}\},
$$
it follows that $\delta_u\in A_u$, and we have
\begin{align*}
&\|\sqrt{w_{u}} X_u'\delta_u\|_{\Pn,2}^2 \wedge \left\{ \bar q_{A_u}\|\sqrt{w_{u}} X_u'\delta_u\|_{\Pn,2}\right\} \\
&\quad \leq \En[M_u(Y_u,X_u,\hat \theta_u)] - \En[M_u(Y_u,X_u,\theta_u)] \\
&\qquad -\En[\partial_{\theta} M_u(Y_u,X_u,\theta_u)]'\delta_u + 2C_n\|\sqrt{w_{u}} X_u'\delta_u\|_{\Pn,2} \\
& \quad \leq \Big(L+\frac{1}{c}\Big)\frac{\lambda}{n}\|\widehat\Psi_{u0}\delta_{u,T_u}\|_1  +3C_n\|\sqrt{w_{u}} X_u'\delta_u\|_{\Pn,2} \\
&\quad  \leq \Big(L+\frac{1}{c}\Big)\frac{\lambda}{n}\|\widehat\Psi_{u0}\|_\infty\left\{I_u + II_u \right\}+3C_n\|\sqrt{w_{u}} X_u'\delta_u\|_{\Pn,2} \\
&\quad \leq \Big\{\Big(L+\frac{1}{c}\Big)\|\widehat\Psi_{u0}\|_\infty\frac{\lambda\sqrt{s}}{n\bar\kappa_{2\tilde c}}+ 6\tilde c C_n\Big\}\|\sqrt{w_{u}} X_u'\delta_u\|_{\Pn,2},
\end{align*}
where the first inequality follows from Assumption \ref{ass: M}(c), the second from  (\ref{QQ:UpperBound}),  (\ref{useful}), and $\ell\geq 1/c$, the third from $\|\widehat\Psi_{u0}\delta_{u,T_u}\|_1\leq \|\widehat\Psi_{u0}\|_\infty \|\delta_{u,T_u}\|_1$ and  (\ref{BoundFirstL1}),  and the fourth from definitions of $I_u$, $II_u$, and $\tilde c$. Thus, as long as
$$
\bar q_{A_u} > \Big(L+\frac{1}{c}\Big)\|\widehat\Psi_{u0}\|_\infty\frac{\lambda\sqrt{s}}{n\bar\kappa_{2\tilde c}}+ 6\tilde c C_n,\quad\text{for all }u\in\UU,
$$
which is assumed, we have
$$
 \|\sqrt{w_{u}} X_u'\delta_u\|_{\Pn,2} \leq \Big(L+\frac{1}{c}\Big)\|\widehat\Psi_{u0}\|_\infty\frac{\lambda\sqrt{s}}{n\bar\kappa_{2\tilde c}}+ 6\tilde c C_n =: III_u, \quad \text{for all } u\in \mathcal{U}.
$$
This gives the first asserted claim.
The second asserted claim follows from
\begin{align*}
\|\delta_u\|_1 & \leq 1\{ \delta_u \in \Delta_{2\tilde c,u} \}\|\delta_u\|_1 + 1\{ \delta_u \notin \Delta_{2\tilde c,u} \}\|\delta_u\|_1 \\
 & \leq (1+2\tilde c) II_u + I_u \leq   \Big( \frac{(1+2\tilde c)\sqrt{s}}{\bar\kappa_{2\tilde c}}+ \frac{3n}{\lambda}\frac{c\|\widehat\Psi_{u0}^{-1}\|_\infty}{\ell c - 1}C_n\Big) III_u.
\end{align*}
This completes the proof. \qed

\subsection*{Proof of Lemma \ref{Lemma:LassoMSparsity}}
For $u\in\UU$, let $S_{u,n} = \En[\partial_\theta M_u(Y_u,X_u,\theta_u,a_u)]$. Throughout the proof, we will assume that the events (a), (b), and (c) in Assumption \ref{ass: M} as well as the events \eqref{Eq:reg} and \eqref{eq: def Ln} hold. These events hold with probability at least $1-3\Delta_n$. We will show that the inequalities in the statement of Lemma \ref{Lemma:LassoMSparsity} hold under these events.

By definition of $\widehat \theta_u$, there exists a subgradient $\partial_\theta \En[M_u(Y_u,X_u,\hat\theta_u)]$ of $\En[M_u(Y_u,X_u,\hat\theta_u)]$, such that for every $j$ with $|\hat\theta_{uj}|>0$,
$$ |(\widehat\Psi_{u}^{-1}\partial_\theta \En[M_u(Y_u,X_u,\hat\theta_u)])_j| = \lambda/n.$$
Therefore, we have
\begin{align*}
&\frac{\lambda}{n}\sqrt{\hat s_u}  = \|(\widehat\Psi_{u}^{-1}\partial_\theta \En[M_u(Y_u,X_u,\hat\theta_u)])_{\hat T_u}\|\\
 & \leq  \|(\widehat\Psi_{u}^{-1}S_{u,n})_{\hat T_u}\| + \|(\widehat\Psi_{u}^{-1}\{ \En[\partial_\theta M_u(Y_u,X_u,\theta_u)]-S_{u,n}\})_{\hat T_u}\|\\
& \quad+\|(\widehat\Psi_{u}^{-1}\{ \En[\partial_\theta M_u(Y_u,X_u,\hat\theta_u)-\partial_\theta M_u(Y_u,X_u,\theta_u)]\})_{\hat T_u}\| \\
& \leq \|\widehat\Psi_{u}^{-1}\widehat\Psi_{u0}\|_\infty \|\widehat\Psi_{u0}^{-1} \En[S_{u,n}]\|_\infty \sqrt{\hat s_u} \\
&\quad +\|\widehat\Psi_u^{-1}\|_\infty C_n \sup_{\|\delta\|=1,\|\delta\|_0\leq \hat s_u}\|\sqrt{w_u}X_u'\delta\|_{\Pn,2} \\
& \quad+ \|\widehat\Psi_{u}^{-1}\|_\infty \sup_{\|\delta\|=1,\|\delta\|_0\leq \hat s_u}|\{ \En[\partial_\theta M_u(Y_u,X_u,\hat\theta_u)-\partial_\theta M_u(Y_u,X_u,\theta_u)]\}'\delta|\\
& \leq  \frac{\lambda}{c\ell n}\sqrt{\hat s_u} + \frac{\|\widehat\Psi_{u0}^{-1}\|_\infty}{\ell} \{ C_n  + L_n\}\sup_{\|\delta\|=1,\|\delta\|_0\leq \hat s_u}\|X_u'\delta\|_{\Pn,2}
\end{align*}
where the first inequality follows from the triangle inequality, the second from Assumption \ref{ass: M}(a), and the third from Assumption \ref{ass: M}(b) and inequalities \eqref{Eq:reg} and \eqref{eq: def Ln}.

Now, recall that $L_u = \frac{n}{\lambda}\frac{c\|\widehat\Psi_{u0}^{-1}\|_\infty }{c\ell-1} \left\{ C_n + L_n \right\}$. In addition, note that
$$
\sup_{\|\delta\|=1,\|\delta\|_0\leq \hat s_u}\|X_u'\delta\|_{\Pn,2}^2 = \semax{\hat s_u,u}.
$$
Thus,
 we have \begin{equation}\label{Eq:SparsityL}\hat s_u \leq \semax{\hat s_u,u}L_u^2.\end{equation}
 Consider any $M \in \mathcal{M}_u=\{ m \in \mathbb{N}: m >  2\semax{m,u}L_u^2\} $, and suppose that $\widehat s_u > M$. By the sublinearity of the maximum sparse eigenvalue (Lemma 3 in \cite{BC-PostLASSO}),  for any integer $k \geq 0$ and constant $\ell \geq 1$, we have $\semax{\ell k,u}  \leq  \lceil \ell \rceil \semax{k,u}, $
where $\lceil \ell \rceil$ denotes the  ceiling of $\ell$. Therefore,
\begin{align*}
\hat s_u  &\leq \semax{\hat s_u,u}L_u^2= \semax{M\hat s_u/M,u}L_u^2 \\
& \leq  \ceil{\frac{\hat s_u}{M}}\semax{M,u}L_u^2\leq \frac{2\widehat s_n}{M}\phi_{\max}(M,u)L_u^2
\end{align*}
since $\ceil{k}\leq 2k$ for any $k\geq 1$. Therefore, we have
$ M \leq   2\semax{M,u}L_u^2$
which violates the condition that $M \in \mathcal{M}_u$. Therefore, we have $\widehat s_u \leq M$.
In turn, applying (\ref{Eq:SparsityL}) once more with $\widehat s_u \leq M$ we obtain
 $ \hat s_u \leq   \semax{M,u}L_u^2.$ The result follows by minimizing the bound over $M \in \mathcal{M}_u$. \qed

\subsection*{Proof of Lemma \ref{Lemma:PostLassoMRateRaw}}
The second asserted claim, inequality \eqref{AuxMuUpper}, follows from the observation that with probability at least $1-\Delta_n$, for all $u\in\UU$, we have that
\begin{align*}
&\En[M_u(Y_u,X_u,\widetilde \theta_u)] - \En[M_u(Y_u,X_u,\theta_u)] \\
&\qquad \leq \En[M_u(Y_u,X_u,\hat \theta_u)] - \En[M_u(Y_u,X_u,\theta_u)] \\
&\qquad \leq \frac{\lambda}{n}\|\widehat\Psi_u \theta_u\|_1 - \frac{\lambda}{n}\|\widehat\Psi_u\widehat\theta_u\|_1
\leq \frac{\lambda}{n}  \|\hat\theta_u - \theta_u\|_1 \cdot \sup_{u\in\UU}\|\widehat \Psi_{u}\|_\infty
\\
& \qquad \leq \frac{\lambda L}{n}  \|\hat\theta_u - \theta_u\|_1 \cdot \sup_{u\in\UU}\|\widehat \Psi_{u0}\|_\infty,
\end{align*}
where the first inequality holds by the definition of $\widetilde \theta_u$, the second by the definition of $\widehat\theta_u$, the third by the triangle inequality, and the fourth by Assumption \ref{ass: M}(ii).

To prove the first asserted claim, assume that the events (a), (b), and (c) in Assumption \ref{ass: M} hold. These events hold with probability at least $1-\Delta_n$. We will show that the asserted claim holds under these events.

For $u\in\UU$, let $\tilde \delta_u=\widetilde \theta_u - \theta_u$, $S_{u,n} =  \En[\partial_\theta M_u(Y_u,X_u,\theta_u,a_u)]$, and  $\tilde t_{u} = \|\sqrt{w_{u}} X_u'\tilde \delta_u\|_{\Pn,2}$. By the inequality in Assumption \ref{ass: M}(c), we have
\begin{align*}
\tilde t_u^2 \wedge \left\{ \bar q_{A_u}\tilde t_u\right\} & \leq \En[M_u(Y_u,X_u,\widetilde \theta_u)] - \En[M_u(Y_u,X_u,\theta_u)] \\
&\quad -\En[\partial_{\theta} M_u(Y_u,X_u,\theta_u)]'\tilde\delta_u + 2C_n\tilde t_u \\
& \leq \En[M_u(Y_u,X_u,\widetilde \theta_u)] - \En[M_u(Y_u,X_u,\theta_u)] \\
&\quad + \|S_{u,n}\|_\infty \|\tilde\delta_u\|_1 + 3C_n\tilde t_u\\
& \leq \En[M_u(Y_u,X_u,\widetilde \theta_u)] - \En[M_u(Y_u,X_u,\theta_u)]  \\
&\quad + \tilde t_u\Big( \frac{\sqrt{\widehat s_u+s_u}\|S_{u,n}\|_\infty}{\sqrt{\semin{\widehat s_u+s_u,u}}} + 3C_n\Big)
\end{align*}
where the second inequality holds by  calculations as in (\ref{useful}), and the third inequality follows from
$$
\|\tilde\delta_u\|_1\leq \sqrt{\widehat s_u+s_u}\|\tilde\delta_u\|_2\leq \frac{\sqrt{\widehat s_u+s_u}}{\sqrt{\semin{\widehat s_u+s_u,u}}}\|\sqrt{w_u}X_u'\tilde\delta_u\|_{\Pn,2}.
$$
Next, if $\tilde t_u^2 > \bar q_{A_u}\tilde t_u$, then
$$ \bar q_{A_u}\tilde t_u \leq \frac{\bar q_{A_u}}{2} \{\En[M_u(Y_u,X_u,\widetilde \theta_u)] - \En[M_u(Y_u,X_u,\theta_u)]\}_+^{1/2} +\frac{\bar q_{A_u}}{2}\tilde t_u, $$
so that $\tilde t_u\leq  \{\En[M_u(Y_u,X_u,\widetilde \theta_u)] - \En[M_u(Y_u,X_u,\theta_u)]\}_+^{1/2}$. On the other hand, if $\tilde t_u^2\leq \bar q_{A_u}\tilde t_u$, then
{\small
$$
\tilde t_u^2 \leq \{\En[M_u(Y_u,X_u,\widetilde \theta_u)] - \En[M_u(Y_u,X_u,\theta_u)]\} + \tilde t_u\left( \frac{\sqrt{\widehat s_u+s_u}\|S_{u,n}\|_\infty}{\sqrt{\semin{\widehat s_u+s_u}}} + 3C_n\right).
$$}\!Since for positive numbers  $a$, $b$, $c$, inequality $a^2 \leq b + ac$ implies $a\leq \sqrt{b} + c$, we have
{\small
$$
\tilde t_u \leq \{\En[M_u(Y_u,X_u,\widetilde \theta_u)] - \En[M_u(Y_u,X_u,\theta_u)]\}_+^{1/2} + \left( \frac{\sqrt{\widehat s_u+s_u}\|S_{u,n}\|_\infty}{\sqrt{\semin{\widehat s_u+s_u}}} + 3C_n\right).
$$}\!In both cases, the inequality in the asserted claim holds. This completes the proof. \qed

\subsection*{Proof of Lemma \ref{Thm:ChoiceLambda}}
For brevity of notation, denote $\epsilon = \epsilon_n$. Also, let $\mathcal A_n$ denote the event that the inequalities in Condition WL(iii) hold. Then $\Pr_P(\mathcal A_n)\geq 1 - \Delta_n$. Further, by the triangle inequality,
\begin{align}
\sup_{u\in\mathcal{U}} \| \hat \Psi^{-1}_{u 0} \En[ S_u ]\|_\infty & \leq \max_{u\in\mathcal{U}^\epsilon} \| \hat \Psi^{-1}_{u 0} \En[S_u ]\|_\infty\label{eq: lemma lambda choice first}\\
&  \quad +\sup_{u\in\mathcal{U}^\epsilon,u'\in\mathcal{U},d_\mathcal{U}(u,u')\leq \epsilon} \| \hat \Psi^{-1}_{u 0} \En[S_u ] - \hat \Psi^{-1}_{u'0} \En[ S_{u'} ]\|_\infty\notag
\end{align}
where $\mathcal{U}^\epsilon$ is a minimal $\epsilon$-net of $\mathcal{U}$ so that $|\mathcal{U}^\epsilon|\leq N_n$.

For each $k=1,\ldots,p$ and $u\in \mathcal{U}^\epsilon$, we apply Lemma \ref{Lemma: MDSN} with $Z_i := S_{u k i}$, $\mu = 1$, and $\ell_n = c''\varphi_n^{-1}$, where $c''$ is a small enough constant that can be chosen to depend only on $\underline C$ and $\bar C$. Then Condition WL(i,ii) implies that
$$
0\leq \Phi^{-1}\left(1 - \frac{\gamma}{2p N_n}\right)\leq \frac{n^{1/6}M_n}{\ell_n} - 1
$$
where $M_n = (\Ep_P[Z_1^2])^{1/2}/(\Ep_P[Z_1^3])^{1/3}$, and so applying Lemma \ref{Lemma: MDSN}, the union bound, and the inequality $|\UU^{\epsilon}|\leq N_n$ gives
\begin{align}
&\displaystyle\Pr_P \left( {\displaystyle\sup_{u\in\mathcal{U}^\epsilon}\max_{k\in[p]}}
\frac{|\sqrt{n}\En[ S_{uk} ]|}{\sqrt{\En[S_{uk}^2]}} > \Phi^{-1}\Big(1-\mbox{$\frac{\gamma}{2p N_n}$}\Big) \right) \notag \\
&\qquad  \leq 2pN_n \cdot \frac{\gamma}{2p N_n}\cdot\Big( 1 + O(\varphi_n^{1/3})\Big) \leq \gamma+ o(\gamma) \label{SNcontrol}
\end{align}
since $\varphi_n = o(1)$. Also, observe that
$$
\max_{u\in\mathcal{U}^\epsilon} \| \hat \Psi^{-1}_{u 0} \En[S_u ]\|_\infty
 =  \sup_{u\in\mathcal{U}^\epsilon}\max_{k\in[p]} \frac{|\En[ S_{uk} ]|}{\sqrt{\En[S_{uk}^2]}}.
$$
Therefore, \eqref{SNcontrol} implies that with probability at least $1 - \gamma - o(\gamma)$,
\begin{equation}\label{eq: main event SNMD}
\max_{u\in\mathcal{U}^\epsilon} \| \hat \Psi^{-1}_{u 0} \En[S_u ]\|_\infty \leq n^{-1/2}\Phi^{-1}\left(1 - \frac{\gamma}{2p N_n}\right)
\end{equation}

Further, by the triangle inequality,
 \begin{align}
&{\displaystyle \sup_{u\in\mathcal{U}^\epsilon,u'\in\mathcal{U},d_\mathcal{U}(u,u')\leq \epsilon}} \| \hat \Psi^{-1}_{u 0} \En[ S_{u} ] - \hat \Psi^{-1}_{u'0} \En[  S_{u'} ]\|_\infty  \notag\\
&\qquad \qquad \leq {\displaystyle  \sup_{u\in\mathcal{U}^\epsilon,u'\in\mathcal{U},d_\mathcal{U}(u,u')\leq \epsilon}} \|( \hat \Psi^{-1}_{u 0} - \hat \Psi^{-1}_{u'0}) \hat \Psi_{u 0} \|_\infty \| \hat \Psi^{-1}_{u 0}\En[  S_{u}]\|_\infty \label{EqTri2}\\
&\qquad \qquad \qquad +  {\displaystyle \sup_{u,u'\in\mathcal{U},d_\mathcal{U}(u,u')\leq \epsilon}} \| \En[  S_{u}-  S_{u'} ]\|_\infty\|\hat \Psi^{-1}_{u'0} \|_\infty.\label{EqTri8}
\end{align}
To control the expression in (\ref{EqTri2}), note that by Condition WL(ii), on the event $\mathcal A_n$, $\hat \Psi_{u 0kk}$ is bounded away from zero  uniformly over $u\in\mathcal{U}$ and $k\in[p]$. Thus, we have uniformly over $u\in\mathcal{U}$ and $k\in[p]$ that
{\small
\begin{equation}\label{eqaux37a}|(\hat \Psi^{-1}_{u 0kk}-\hat \Psi^{-1}_{u'0kk})\hat \Psi_{u 0kk}|=|\hat \Psi_{u 0kk}-\hat \Psi_{u' 0kk}|\hat \Psi_{u' 0kk}^{-1} \lesssim |\hat \Psi_{u 0kk}-\hat \Psi_{u' 0kk}|
\end{equation}}\!on the event $\mathcal A_n$. Moreover, we have
{\small
\begin{align*}
& \sup_{u,u'\in\mathcal{U},d_\mathcal{U}(u,u')\leq \epsilon} \max_{k\in[p]}\Big| \{\En[S_{uk}^2]\}^{1/2}-\{\En[S_{u'k}^2]\}^{1/2}\Big| \\
&\qquad\qquad\leq \sup_{u,u'\in\mathcal{U},d_\mathcal{U}(u,u')\leq \epsilon}\max_{k\in[p]}
\Big(|\En[S_{uk}^2]-\En[S_{u'k}^2]|\Big)^{1/2} \\
&\qquad\qquad  \leq\sup_{u,u'\in\mathcal{U},d_\mathcal{U}(u,u')\leq \epsilon}\max_{k\in[p]}
\Big(2|(\En-\Ep_P)[S_{uk}^2]|+|\Ep_P[S_{uk}^2-S_{u'k}^2]|\Big)^{1/2} \lesssim \varphi_n^{1/2}
\end{align*}}\!on the event $\mathcal A_n$. Combining this bound with (\ref{eqaux37a}) implies that
$$
{\displaystyle  \sup_{u,u'\in\mathcal{U},d_\mathcal{U}(u,u')\leq \epsilon}} \| (\hat \Psi^{-1}_{u 0} - \hat \Psi^{-1}_{u'0})\hat\Psi_{u0} \|_\infty \lesssim \varphi_n^{1/2}
$$
on the event $\mathcal A_n$. Also, using standard bounds for the tails of Gaussian random variables gives
$$
\Phi^{-1}\left(1 - \frac{\gamma}{2p N_n}\right)\lesssim \sqrt{\log(2 p N_n/\gamma)}.
$$
Thus, on the intersection of events $\mathcal A_n$ and \eqref{eq: main event SNMD}, we have
\begin{align*}
&\sup_{u\in \mathcal{U}^\epsilon,u'\in\mathcal{U},d_\mathcal{U}(u,u')\leq \epsilon} \|  (\hat\Psi^{-1}_{u 0} - \hat \Psi^{-1}_{u'0})\hat \Psi_{u0} \|_\infty \|\widehat\Psi^{-1}_{u0}\En[ S_u ]\|_\infty \\
&\qquad  \qquad \lesssim (\varphi_n/n)^{1/2}\sqrt{\log(p N_n/\gamma)}.
\end{align*}
Finally, on the event $\mathcal A_n$, we have that the expression in \eqref{EqTri8} satisfies
$$
\sup_{u,u'\in\mathcal{U},d_\mathcal{U}(u,u')\leq \epsilon} \| \En[  S_{u}-  S_{u'} ]\|_\infty\|\hat \Psi^{-1}_{u'0} \|_\infty \leq \varphi_n n^{-1/2}.
$$

It follows that on the intersection of events $\mathcal A_n$ and \eqref{eq: main event SNMD}, for $n$ large enough, we have
$$
\sup_{u\in\mathcal{U}^\epsilon,u'\in\mathcal{U},d_\mathcal{U}(u,u')\leq \epsilon} \| \hat \Psi^{-1}_{u 0} \En[ S_{u} ] - \hat \Psi^{-1}_{u'0} \En[  S_{u'} ]\|_\infty\leq \frac{c' - c}{c}\cdot \Phi^{-1}\left(1 - \frac{\gamma}{2p N_n}\right),
$$
where we again used standard tail bounds for the tails of Gaussian random variables. The asserted claim now follows by recalling the inequality \eqref{eq: lemma lambda choice first} and noting that $\Pr_P(\mathcal A_n)\geq 1 - \Delta_n$ and that \eqref{eq: main event SNMD} holds with probability at least $1 - \gamma - o(\gamma)$. \qed

\subsection*{Proof of Lemma \ref{PrimitiveWL}}
For $j\in[p]$, let
\begin{align*}
&\mathcal{F}_j=\Big\{(Y,X)\mapsto Y_u X_j\colon u \in \mathcal{U}\Big\},\ \mathcal{F}'_j=\Big\{(Y,X)\mapsto X_j\Ep_P[Y_u\mid X]\colon u \in \mathcal{U}\Big\},\\
&\mathcal{G}_j = \Big\{(Y,X)\mapsto X_j^2\zeta_u^2\colon u \in \mathcal{U} \Big\}
\end{align*}
where $\zeta_u=Y_u - \Ep_P[Y_u\mid X]$. Note that the function $F(Y,X) = \|X\|_\infty$ is an envelope both for $\mathcal F_j$ and for $\mathcal F_j'$ for all $j\in[\pp]$. By assumption, $F$ can be chosen to satisfy $\|F\|_{P,q} \leq K_n$.

Because $\mathcal{F}_j$ is a product of a VC-subgraph class of functions with index bounded by $C_Y$ and a single function, Lemma \ref{lemma: andrews}(1) implies that its uniform entropy numbers obey
\begin{equation}\label{CNvc}
\log N( \epsilon \|F\|_{Q,2}, \mathcal{F}_j, \|\cdot\|_{Q,2}) \lesssim \log (e/\epsilon),\quad 0<\epsilon\leq 1.
\end{equation}
Also, Lemma \ref{Lemma:PartialOutCovering} implies that the uniform entropy numbers of $\mathcal{F}'_j$ obey
{\small
\begin{equation*}
\log \sup_Q N( \epsilon \|F\|_{Q,2}, \mathcal{F}'_j, \|\cdot\|_{Q,2})  \leq \log \sup_Q N\Big( \frac{\epsilon}{2} \|F\|_{Q,2}, \mathcal{F}_j, \|\cdot\|_{Q,2}\Big),\quad 0<\epsilon\leq 1.
\end{equation*}}\!Further, since $\mathcal{G}_j \subset (\mathcal{F}_j-\mathcal{F}'_j)^2$, $G=4F^2$ is an envelope for $\mathcal{G}_j$, and the uniform entropy numbers of $\mathcal G_j$ obey for all $\epsilon\in(0,1]$,
\begin{align}\label{Chain1G}
\log  N( \epsilon \|G\|_{Q,2}, \mathcal{G}_j, \|\cdot\|_{Q,2}) & \leq 2 \log N\Big( \frac{\epsilon}{2} \|2F\|_{Q,2}, \mathcal{F}_j-\mathcal{F}'_j, \|\cdot\|_{Q,2}\Big)\\
& \leq 2 \log  N\Big( \frac{\epsilon}{4} \|F\|_{Q,2}, \mathcal{F}_j, \|\cdot\|_{Q,2}\Big) \notag \\
&\quad + 2 \log  N\Big( \frac{\epsilon}{4} \|F\|_{Q,2}, \mathcal{F}'_j, \|\cdot\|_{Q,2}\Big)\notag \\
&  \leq 4 \log \sup_Q N\Big( \frac{\epsilon}{8} \|F\|_{Q,2}, \mathcal{F}_j, \|\cdot\|_{Q,2}\Big),\notag
\end{align}
where the first and the second inequalities follow from Lemma \ref{lemma: andrews}(2), and the third from the bound on uniform entropy numbers of $\mathcal F_j'$ above.
Hence, Lemma \ref{lemma: andrews}(2) implies that the uniform entropy numbers of $\mathcal{G}=\cup_{j\in[p]}\mathcal{G}_j$ obey
$$
\log N( \epsilon \|G\|_{Q,2}, \mathcal{G}, \|\cdot\|_{Q,2})  \lesssim \log(p/\epsilon),\quad 0<\epsilon\leq 1,
$$
where $G = 4F^2$ is its envelope. Therefore, since $|S_{u j}|\leq 2|X_j|$ and also $\max_{j\leq p}\Ep_P[X_{j}^4]\leq \bar C$ by assumption, Lemma \ref{lemma:CCK} implies that with probability at least $1-(\log n)^{-1}$,
$$
\sup_{u\in\mathcal{U}}\max_{j\leq p}|(\En-\Ep_P)[S_{uj}^2]|  \lesssim \sqrt{\frac{\log (n p K_n)}{n}} + \frac{n^{2/q}K_n^2}{n}\log(n p K_n),
$$
which gives \eqref{eq: prim ver 2}.

To verify \eqref{eq: prim ver 1}, note that
\begin{align*}
&\sup_{d_{\mathcal{U}}(u,u')\leq 1/n} \max_{j\in [p]}\Ep_P[ X_j^2(\zeta_u-\zeta_{u'})^2] \\
&\qquad \displaystyle \leq \sup_{d_{\mathcal{U}}(u,u')\leq 1/n} \max_{j\in [p]}\Ep_P[ X_j^2(Y_u-Y_{u'})^2] \\
& \qquad \leq \sup_{d_{\mathcal{U}}(u,u')\leq 1/n} \max_{j\leq p}\{\Ep_P[ X_j^4]\}^{1/2}\{\Ep_P[(Y_u-Y_{u'})^4]\}^{1/2} \\
 & \qquad \lesssim  \sup_{d_{\mathcal{U}}(u,u')\leq 1/n}|u-u'|^{\nu/2} \lesssim n^{-\nu/2}.
\end{align*}
Therefore, Lemma \ref{lemma:CCK} implies that we have with probability at least $1-(\log n)^{-1}$,
\begin{align*}
\sup_{d_{\mathcal{U}}(u,u')\leq 1/n}\|\En[S_{u}-S_{u'}]\|_\infty
& =\frac{1}{\sqrt{n}}  \sup_{d_{\mathcal{U}}(u,u')\leq 1/n} \max_{j\in p}|\Gn( X_j(\zeta_u-\zeta_{u'}))|\\
 & \lesssim \sqrt{\frac{\log(n p K_n)}{n^{1 + \nu/2}}} + \frac{n^{1/q}K_n\log(n p K_n)}{n},
\end{align*}
which gives \eqref{eq: prim ver 1}.

Finally, to verify \eqref{eq: prim ver 3} note that uniformly over $u,u'\in\UU$ and $j\in[p]$, we have
\begin{align*}
|\Ep_P[ S_{uj}^2 - S_{u'j}^2]| & = |\Ep_P[ (S_{uj} - S_{u'j})(S_{uj}+S_{u'j})]| \\
& \lesssim  \Big(\Ep_P[ (S_{uj} - S_{u'j})^2] \Big)^{1/2}\cdot\Big(\Ep[S_{u j}^2]+\Ep_P[S_{u' j}^2]\Big)^{1/2}\\
& \lesssim  \Big(\Ep_P[ X_j^2(Y_u - Y_{u'})^2]\Big)^{1/2} \\
& \lesssim \Big(\Ep_P[ X_j^4]\Big)^{1/4}\cdot\Big(\Ep_P[(Y_u - Y_{u'})^4]\Big)^{1/4} \lesssim d_\UU(u,u')^{\nu/4}.
\end{align*}
This completes the proof. \qed

\section{Bounds on Covering entropy}

Let $ ( W_i)_{i=1}^n$ be a sequence of independent copies of a random element $ W$  taking values in a measurable space $({\mathcal{W}}, \mathcal{A}_{{\mathcal{W}}})$ according to a probability law $P$. Let $\mathcal{F}$ be a set  of suitably measurable functions $f\colon {\mathcal{W}} \to \mathbb{R}$, equipped with a measurable envelope $F\colon \mathcal{W} \to \mathbb{R}$.

\begin{lemma}[Algebra for Covering Entropies] \text{  } \label{lemma: andrews}\\
(1) Let $\F$ be a VC subgraph class with a finite VC index $k$ or any
other class whose entropy is bounded above by that of such a VC subgraph class, then
the uniform entropy numbers of $\mF$ obey
\begin{equation*}
 \sup_{Q} \log  N(\epsilon \|F\|_{Q,2}, \F,  \| \cdot \|_{Q,2}) \lesssim 1+ k \log (1/\epsilon)\vee 0
\newline
\end{equation*}
(2) For any measurable classes of functions $\F$ and $\F^{\prime
} $ mapping $\mathcal{W}$ to $\Bbb{R}$,
\begin{align*}
&\log N(\epsilon \Vert F+F^{\prime }\Vert _{Q,2},\F+\F^{\prime
}, \| \cdot \|_{Q,2})\\
&\qquad \leq \log   N\left(\mbox{$ \frac{\epsilon }{2}$}\Vert F\Vert _{Q,2},\F, \| \cdot \|_{Q,2}\right)
+ \log N\left( \mbox{$ \frac{\epsilon }{2}$}\Vert F^{\prime }\Vert _{Q,2},\F^{\prime
}, \| \cdot \|_{Q,2}\right), \\
&\log  N(\epsilon \Vert F\cdot F^{\prime }\Vert _{Q,2},\F\cdot \F^{\prime
}, \| \cdot \|_{Q,2})\\
&\qquad \leq \log   N\left( \mbox{$ \frac{\epsilon }{2}$}\Vert F\Vert _{Q,2},\F, \| \cdot \|_{Q,2}\right)
+ \log N\left( \mbox{$ \frac{\epsilon }{2}$}\Vert F^{\prime }\Vert _{Q,2},\F^{\prime
}, \| \cdot \|_{Q,2}\right), \\
& N(\epsilon \Vert F\vee F^{\prime }\Vert _{Q,2},\F\cup \F^{\prime
}, \| \cdot \|_{Q,2})\\
&\qquad \leq   N\left(\epsilon\Vert F\Vert _{Q,2},\F, \| \cdot \|_{Q,2}\right)
+ N\left( \epsilon\Vert F^{\prime }\Vert _{Q,2},\F^{\prime
}, \| \cdot \|_{Q,2}\right).
\end{align*}
(3)  For any measurable class of functions $%
\mathcal{F}$ and a fixed function $f$ mapping $\mathcal{W}$ to $\Bbb{R}$,
\begin{equation*}
 \log \sup_{Q} N(\epsilon \Vert |f|\cdot F\Vert _{Q,2},f\cdot\F, \| \cdot \|_{Q,2})\leq \log \sup_{Q} N\left(
\epsilon /2\Vert F\Vert _{Q,2},\F, \| \cdot \|_{Q,2}\right)
\end{equation*}
(4)  Given measurable classes $\F_j$ and envelopes $F_j$, $j=1,\ldots,k$, mapping $\mathcal{W}$ to $\Bbb{R}$, a function $\phi\colon\Bbb{R}^k\to\Bbb{R}$ such that for $f_j,g_j\in\F_j$,
$ |\phi(f_1,\ldots,f_k) - \phi(g_1,\ldots,g_k) | \leq \sum_{j=1}^k L_j(x)|f_j(x)-g_j(x)|$, $L_j(x)\geq 0$, and fixed functions $\bar f_j \in \F_j$,  the class of functions $\mathcal{L}=\{\phi(f_1,\ldots,f_k)-\phi(\bar f_1,\ldots,\bar f_k)\colon f_j \in\mathcal{F}_j, j=1,\ldots,k\}$ satisfies
\begin{align*}
&\log \sup_Q N\left(\epsilon\Big\|\sum_{j=1}^kL_jF_j\Big\|_{Q,2},\mathcal{L}, \| \cdot \|_{Q,2}\right)\\
&\qquad \leq \sum_{j=1}^k\log  \sup_Q  N\left(
\mbox{$\frac{\epsilon}{k}$}\|F_j\|_{Q,2},\F_j, \| \cdot \|_{Q,2}\right).
\end{align*}
\end{lemma}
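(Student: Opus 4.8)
\textbf{Proof plan for Lemma~\ref{lemma: andrews}.}

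The plan is to prove each of the four parts by chaining together the definitions of covering numbers and the standard bracketing/packing arguments, treating the parts in the order (1), (2), (3), (4), since each later part essentially reduces to (1) and (2).

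For part (1), I would simply invoke the classical Vapnik--Chervonenkis bound for uniform entropy numbers of VC-subgraph classes: if $\mathcal F$ has a finite VC index $k$ and envelope $F$, then $\sup_Q N(\epsilon\|F\|_{Q,2},\mathcal F,\|\cdot\|_{Q,2})\leq C k (16e)^k (1/\epsilon)^{2(k-1)}$ for all $0<\epsilon\leq 1$; see Theorem~2.6.7 in \cite{vdV-W}. Taking logarithms and absorbing constants gives $\log N(\epsilon\|F\|_{Q,2},\mathcal F,\|\cdot\|_{Q,2})\lesssim 1+k\log(1/\epsilon)\vee 0$, and the same conclusion holds for any class whose entropy is dominated by that of such a VC-subgraph class by monotonicity of covering numbers.

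For part (2), the key observation is that an $\epsilon_1$-net of $\mathcal F$ in $L^2(Q)$ and an $\epsilon_2$-net of $\mathcal F'$ in $L^2(Q)$ produce, by pairwise sums, an $(\epsilon_1+\epsilon_2)$-net of $\mathcal F+\mathcal F'$. Choosing $\epsilon_1=\tfrac{\epsilon}{2}\|F\|_{Q,2}$ and $\epsilon_2=\tfrac{\epsilon}{2}\|F'\|_{Q,2}$ and using the triangle inequality $\|f+f'-g-g'\|_{Q,2}\leq \|f-g\|_{Q,2}+\|f'-g'\|_{Q,2}\leq \tfrac{\epsilon}{2}(\|F\|_{Q,2}+\|F'\|_{Q,2})\leq \epsilon\|F+F'\|_{Q,2}$ (here one needs $\|F\|_{Q,2}+\|F'\|_{Q,2}\leq\|F+F'\|_{Q,2}$, which holds when $F,F'\geq 0$ — the standing convention for envelopes) gives the sum bound, and the cardinality of the product net is the product of the two cardinalities, so logarithms add. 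For the product class, the analogous statement uses $\|fg-f'g'\|_{Q,2}\leq \|(f-f')g\|_{Q,2}+\|f'(g-g')\|_{Q,2}$; bounding $|g|\leq G$ and $|f'|\leq F$ and passing to the measure $dR \propto (F^2 G^2)\,dQ$ (or more simply working with the $L^2(Q)$ norm weighted by the envelopes) reduces to nets of $\mathcal F$ and $\mathcal F'$ under rescaled radii, again giving the additive log bound. For the union, an $\epsilon\|F\vee F'\|_{Q,2}$-net of $\mathcal F\cup\mathcal F'$ is just the union of an $\epsilon\|F\|_{Q,2}$-net of $\mathcal F$ and an $\epsilon\|F'\|_{Q,2}$-net of $\mathcal F'$ (using $\|F\|_{Q,2}\leq\|F\vee F'\|_{Q,2}$), so cardinalities add. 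Part (3) is the special case of the product bound with $\mathcal F'=\{f\}$ a singleton (whose covering number is $1$), after rescaling radii.

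For part (4), the Lipschitz hypothesis on $\phi$ gives, for $f_j,g_j\in\mathcal F_j$,
\begin{equation*}
\bigl\|\phi(f_1,\dots,f_k)-\phi(g_1,\dots,g_k)\bigr\|_{Q,2}\leq \Bigl\|\sum_{j=1}^k L_j|f_j-g_j|\Bigr\|_{Q,2}\leq \sum_{j=1}^k \|L_j(f_j-g_j)\|_{Q,2}.
\end{equation*}
If $\{g_j^{(1)},\dots\}$ is an $(\epsilon/k)\|F_j\|_{Q,2}$-net of $\mathcal F_j$ with respect to the norm $\|L_j\cdot\|_{Q,2}$ (equivalently, an $(\epsilon/k)\|F_j\|_{Q,2}$-net of $\mathcal F_j$ in $L^2$ of the measure $L_j^2\,dQ$, whose cardinality is bounded by $\sup_{Q'} N((\epsilon/k)\|F_j\|_{Q',2},\mathcal F_j,\|\cdot\|_{Q',2})$), then the images $\phi(g_1^{(i_1)},\dots,g_k^{(i_k)})-\phi(\bar f_1,\dots,\bar f_k)$ form an $\epsilon\|\sum_j L_j F_j\|_{Q,2}$-net of $\mathcal L$, since $\sum_j (\epsilon/k)\|L_j F_j\|_{Q,2}\leq \epsilon\|\sum_j L_j F_j\|_{Q,2}$. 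The cardinality of this product net is $\prod_j N_j$, so taking logs gives the stated sum of entropies. The main obstacle in all of this is purely bookkeeping: one must be careful that every rescaling of the radius stays inside $(0,1]$ so that the entropy bounds one is citing actually apply, and that the envelope inequalities ($F,F'\geq 0$, $|g|\leq G$, etc.) are invoked correctly when passing between the $L^2(Q)$ norm and the reweighted norms; none of the steps is deep, but the constants and the supremum over $Q$ need to be tracked consistently.
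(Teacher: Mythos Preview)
The paper does not give its own argument here; its ``proof'' is simply a pointer to Lemma~L.1 of \cite{BCFH2013program}. Your direct verification via nets, change of measure, and the VC bound of Theorem~2.6.7 in \cite{vdV-W} is exactly the standard route and is what one finds upon unpacking that reference, so in substance your approach and the paper's coincide.

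One small correction in part~(2): your parenthetical ``here one needs $\|F\|_{Q,2}+\|F'\|_{Q,2}\leq\|F+F'\|_{Q,2}$'' is false in general (take $F=1_{A}$, $F'=1_{A^c}$ on a set of measure one). What your displayed chain actually requires is only the weaker inequality $\tfrac12\bigl(\|F\|_{Q,2}+\|F'\|_{Q,2}\bigr)\leq\|F+F'\|_{Q,2}$, and this does hold: since $F,F'\geq 0$ we have $F+F'\geq F$ and $F+F'\geq F'$ pointwise, hence $\|F+F'\|_{Q,2}\geq\max\bigl(\|F\|_{Q,2},\|F'\|_{Q,2}\bigr)\geq\tfrac12\bigl(\|F\|_{Q,2}+\|F'\|_{Q,2}\bigr)$. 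The same monotonicity observation is what makes your part~(4) inequality $\tfrac1k\sum_j\|L_jF_j\|_{Q,2}\leq\bigl\|\sum_jL_jF_j\bigr\|_{Q,2}$ work. With that fix, your outline is correct.
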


\begin{proof}
See Lemma L.1 in \cite{BCFH2013program}.
\end{proof}

\begin{lemma}[Covering Entropy for Classes obtained as Conditional Expectations]\label{Lemma:PartialOutCovering}
 Let $\mathcal F$ denote a class of measurable functions $f\colon \mathcal{W}\times \mathcal{Y} \to \Bbb{R}$ with a measurable envelope $F$. For a given $f \in \mathcal{F}$, let $\bar f\colon \mathcal{W} \to \Bbb{R}$ be the function $\bar f (w) := \int f(w,y) d\mu_{w}(y)$ where $\mu_{w}$ is a regular conditional probability distribution over $y \in \mathcal{Y}$ conditional on $w\in\mathcal{W}$. Set $\bar{\mathcal{F}} = \{ \bar f \colon f \in \mathcal{F}\}$ and let $\bar F(w):=\int F(w,y) d\mu_w(y)$ be an envelope for $\bar{\mathcal{F}}$. Then, for $r, s \geq 1$,
     $$
    \log \sup_{Q} N(\epsilon \| \bar F\Vert _{Q,r}, \bar{\mathcal{F}}, \| \cdot \|_{Q,r}) \leq \log \sup_{\widetilde Q} N((\epsilon/4)^r \| F\Vert _{\widetilde Q,s},  \mathcal \F , \| \cdot \|_{\widetilde Q,s}),
    $$ where $Q$ belongs to the set of finitely-discrete probability measures over $\mathcal{W}$ such that  $0<\| \bar F\Vert _{Q,r}< \infty$, and $\widetilde Q$ belongs to the set of finitely-discrete probability measures over $\mathcal{W}\times \mathcal{Y}$ such that $0<\|  F\Vert _{\widetilde Q,s}< \infty$. In particular, for every $\epsilon > 0$ and any $k\geq 1$,
        $$
    \log \sup_{Q} N(\epsilon, \bar{\mathcal{F}}, \| \cdot \|_{Q,k}) \leq \log \sup_{\widetilde Q} N(\epsilon/2,  \mathcal \F , \| \cdot \|_{\widetilde Q,k} ).
    $$
\end{lemma}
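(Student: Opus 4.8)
\textbf{Proof proposal for Lemma \ref{Lemma:PartialOutCovering}.}

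The plan is to prove the covering number bound for the class $\bar{\mathcal F}$ of conditional expectations by transferring a covering of $\mathcal F$ through the averaging operation $f \mapsto \bar f$. First I would fix a finitely-discrete probability measure $Q$ over $\mathcal W$ with $0 < \|\bar F\|_{Q,r} < \infty$. The key observation is that the map $f \mapsto \bar f$ is, up to constants, a contraction from $L^s(\widetilde Q)$ to $L^r(Q)$ for a suitably chosen measure $\widetilde Q$ on $\mathcal W \times \mathcal Y$: namely, define $\widetilde Q$ as the pushforward of $Q$ under the disintegration, i.e. $\widetilde Q(A) = \int \mu_w(A_w)\, dQ(w)$ where $A_w = \{y : (w,y)\in A\}$. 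Then for any $f, g \in \mathcal F$, by Jensen's inequality (applied to the convex function $t\mapsto |t|^r$) and then to the $\mu_w$-average,
\begin{equation*}
\|\bar f - \bar g\|_{Q,r}^r = \int \Bigl| \int (f(w,y)-g(w,y))\, d\mu_w(y)\Bigr|^r dQ(w) \leq \int \int |f(w,y)-g(w,y)|^r d\mu_w(y)\, dQ(w),
\end{equation*}
which equals $\|f-g\|_{\widetilde Q,r}^r$. By the $L^r \subset L^s$ inequality for probability measures (valid since $s\geq r$), $\|f-g\|_{\widetilde Q,r} \leq \|f-g\|_{\widetilde Q,s}$, and similarly $\|\bar F\|_{Q,r} \leq \|F\|_{\widetilde Q,r} \leq \|F\|_{\widetilde Q,s}$.

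The second step is to feed this contraction into a covering argument. Given an $\eta$-cover $\{g_1,\dots,g_N\}$ of $\mathcal F$ in $\|\cdot\|_{\widetilde Q,s}$ with $\eta = (\epsilon/4)^r \|F\|_{\widetilde Q,s}$ (more precisely, with the radius adjusted to account for the exponent discrepancy — this is where the power $(\epsilon/4)^r$ and the factor $4$ enter), the images $\{\bar g_1,\dots,\bar g_N\}$ form a cover of $\bar{\mathcal F}$ in $\|\cdot\|_{Q,r}$ of the desired radius $\epsilon\|\bar F\|_{Q,r}$. The bookkeeping: if $\|f - g_i\|_{\widetilde Q,s} \leq \eta$ then $\|\bar f - \bar g_i\|_{Q,r} \leq \|f-g_i\|_{\widetilde Q,r} \leq \|f-g_i\|_{\widetilde Q,s}^{\min(1,\dots)}\cdots$ — one has to be a little careful because when $r > 1$ the $L^r$-to-$L^1$ comparison goes the wrong way, which is exactly why the statement has $(\epsilon/4)^r$ rather than $\epsilon/4$ on the right; I would handle this by using $\|h\|_{\widetilde Q, r} \le \|h\|_{\widetilde Q,s}^{1/r}\|h\|_{\widetilde Q,\infty}^{1-1/r}$-type bounds, or more simply by bounding $\|\bar f - \bar g_i\|_{Q,r}^r \le \|f - g_i\|_{\widetilde Q,r}^r \le \|f-g_i\|_{\widetilde Q,s} \cdot (2\|F\|_{\widetilde Q,s})^{r-1}$ after normalizing so $\|F\|_{\widetilde Q,s} = 1$, which yields the claimed radius up to the constant $4$. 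Taking the supremum over $Q$ on the left and noting that each such $Q$ induces a valid finitely-discrete $\widetilde Q$ (over which we take the supremum on the right) gives the first displayed inequality.

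The third step, the ``in particular'' clause with $r = s = k$, follows by specializing: when $r = s$ the exponent issue disappears entirely, $\|\bar f - \bar g_i\|_{Q,k} \leq \|f - g_i\|_{\widetilde Q,k}$ directly, so an $(\epsilon/2)$-cover of $\mathcal F$ (in absolute, not envelope-relative, radius) maps to an $\epsilon$-cover of $\bar{\mathcal F}$ — actually here even an $\epsilon$-cover would do, but stating $\epsilon/2$ gives room and matches how the lemma is invoked. I expect the main obstacle to be precisely the exponent mismatch between $L^r$ and $L^s$ in the envelope-normalized statement: getting the constant $4$ and the power $(\epsilon/4)^r$ right requires choosing the normalization carefully and invoking the interpolation/monotonicity of $L^p$-norms in the correct direction. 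Everything else — the disintegration of $Q$, Jensen's inequality, and the passage to the supremum over measures — is routine. I would assume the existence of regular conditional probabilities $\mu_w$ (guaranteed by the measurability hypotheses and standardness of the spaces) without further comment, as the statement already does.
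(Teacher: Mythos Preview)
Your contraction step via Jensen's inequality is the right core idea, and the paper's own treatment simply cites Lemma L.2 in \cite{BCFH2013program} rather than writing out a proof, so there is no detailed argument in the paper to compare against. However, there is a genuine gap in your proposal that you should fix.

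You write that ``each such $Q$ induces a valid finitely-discrete $\widetilde Q$,'' but this is false: your measure $\widetilde Q(A)=\int \mu_w(A_w)\,dQ(w)$ has finitely many $w$-atoms, yet each conditional $\mu_{w}$ is an arbitrary probability measure on $\mathcal Y$, so $\widetilde Q$ is in general \emph{not} finitely discrete on $\mathcal W\times\mathcal Y$. Since the right-hand supremum in the lemma runs only over finitely-discrete $\widetilde Q$, you cannot plug your $\widetilde Q$ in directly. What is actually needed is an intermediate approximation: for the finite support $\{w_1,\dots,w_m\}$ of $Q$, replace each $\mu_{w_j}$ by a finitely-discrete measure that reproduces the relevant $L^s$-norms of $F$ and of the differences $f-g$ well enough (e.g.\ via the standard fact that uniform covering numbers over all finitely-discrete measures dominate covering numbers under any fixed probability measure, up to a factor of $2$ in the radius). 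This approximation step is where the slack constants $4$ (in the first display) and $2$ (in the second) are actually spent---not, as you suggest, merely ``room'' or an artifact of the $r$-versus-$s$ exponent bookkeeping. Once you insert this step, the rest of your outline (Jensen, monotonicity of $L^p$-norms in $p$, pushing the cover through $f\mapsto\bar f$) goes through.
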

\begin{proof}
See Lemma L.2 in \cite{BCFH2013program}.
\end{proof}

\begin{lemma}\label{lem: linear classes}
Consider a mapping $\tilde u\mapsto \xi_{\tilde u}$ from $\widetilde{\mathcal{U}}=[0,1]^k$ into $\mathbb{R}^p$ and the class of functions $\mathcal{F} = \{x\mapsto \mathcal{M}(x'\xi_{\tilde u})\colon \tilde u\in\widetilde{\mathcal{U}}\}$ mapping $\mathbb{R}^p$ into $\mathbb R$ where $\mathcal{M}\colon\mathbb R\to \mathbb R$ is $L$-Lipschitz. Assume that $\|\xi_{\tilde u_2} - \xi_{\tilde u_1}\|_1\leq C\|\tilde u_2 - \tilde u_1\|$ for all $\tilde u_1,\tilde u_2\in\widetilde{\mathcal{U}}$ for some constant $C>0$. Then, for any $M>0$ the uniform entropy numbers of $\mathcal{F}$ satisfy
$$
\sup_Q\log N(\epsilon\|F\|_{Q,2},\mathcal{F},\|\cdot\|_{Q,2})\leq k\log(3LCMk/\varepsilon),\quad \text{for all }0<\epsilon\leq 1,
$$
where $F(x) = \sup_{\tilde u\in\widetilde{\UU}}|\mathcal{M}(x'\xi_{\tilde u})| + M^{-1}\|x\|_\infty$, $x\in\mathbb R^p$, is its envelope.
\end{lemma}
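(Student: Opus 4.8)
\textbf{Proof plan for Lemma \ref{lem: linear classes}.} The plan is to bound the covering numbers of $\mathcal F$ by transferring the problem to a covering-number bound on the parameter set $\widetilde{\mathcal U}=[0,1]^k$ and then using the Lipschitz structure in two stages: first the $L$-Lipschitz continuity of $\mathcal M$, then the $C$-Lipschitz continuity (in the $\ell_1$-to-Euclidean sense) of the map $\tilde u\mapsto \xi_{\tilde u}$.

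First I would fix an arbitrary finitely-discrete probability measure $Q$ on $\mathbb R^p$. For $\tilde u_1,\tilde u_2\in\widetilde{\mathcal U}$ and any $x\in\mathbb R^p$, the $L$-Lipschitz property of $\mathcal M$ gives
$$
|\mathcal M(x'\xi_{\tilde u_2}) - \mathcal M(x'\xi_{\tilde u_1})| \leq L\,|x'(\xi_{\tilde u_2}-\xi_{\tilde u_1})| \leq L\,\|x\|_\infty\,\|\xi_{\tilde u_2}-\xi_{\tilde u_1}\|_1 \leq LC\,\|x\|_\infty\,\|\tilde u_2 - \tilde u_1\|.
$$
Since $\|x\|_\infty \leq M\,F(x)$ by definition of the envelope $F$, integrating against $Q$ yields
$$
\|\mathcal M(\cdot'\xi_{\tilde u_2}) - \mathcal M(\cdot'\xi_{\tilde u_1})\|_{Q,2} \leq LCM\,\|F\|_{Q,2}\,\|\tilde u_2 - \tilde u_1\|.
$$
Thus the map $\tilde u\mapsto \mathcal M(\cdot'\xi_{\tilde u})\in L^2(Q)$ is Lipschitz from $(\widetilde{\mathcal U},\|\cdot\|)$ into $L^2(Q)$ with constant $LCM\|F\|_{Q,2}$, so any $\delta$-net of $\widetilde{\mathcal U}$ in the Euclidean norm induces an $(LCM\|F\|_{Q,2}\delta)$-net of $\mathcal F$ in $\|\cdot\|_{Q,2}$. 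Hence, for any target radius $\epsilon\|F\|_{Q,2}$ with $0<\epsilon\leq 1$, taking $\delta = \epsilon/(LCM)$ gives
$$
N(\epsilon\|F\|_{Q,2},\mathcal F,\|\cdot\|_{Q,2}) \leq N\!\left(\tfrac{\epsilon}{LCM},\widetilde{\mathcal U},\|\cdot\|\right).
$$

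Next I would bound the covering number of $\widetilde{\mathcal U}=[0,1]^k$. A standard volumetric argument (or covering by a grid) gives $N(\delta,[0,1]^k,\|\cdot\|)\leq (3\sqrt k/\delta)^k$ for $0<\delta\leq \sqrt k$, and one can crudely absorb the $\sqrt k$ into a factor $k$ to get $N(\delta,[0,1]^k,\|\cdot\|)\leq (3k/\delta)^k$ in the relevant range. Substituting $\delta = \epsilon/(LCM)$ yields
$$
N(\epsilon\|F\|_{Q,2},\mathcal F,\|\cdot\|_{Q,2}) \leq \left(\frac{3LCMk}{\epsilon}\right)^k,
$$
and taking logarithms and then the supremum over $Q$ gives the claimed bound $\sup_Q\log N(\epsilon\|F\|_{Q,2},\mathcal F,\|\cdot\|_{Q,2})\leq k\log(3LCMk/\epsilon)$ for all $0<\epsilon\leq 1$. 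It remains only to check that $F$ is indeed a valid measurable envelope: $|\mathcal M(x'\xi_{\tilde u})|\leq \sup_{\tilde u'\in\widetilde{\mathcal U}}|\mathcal M(x'\xi_{\tilde u'})|\leq F(x)$ for every $\tilde u$, and measurability follows since $\widetilde{\mathcal U}$ is separable and the map is continuous in $\tilde u$, so the supremum can be taken over a countable dense subset.

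The only mild subtlety — the ``hard part,'' such as it is — is bookkeeping the constant so that it matches the stated $3LCMk$: one must be slightly careful that the elementary covering bound for the Euclidean cube is stated with the right constant and that replacing $\sqrt k$ by $k$ (valid since $k\geq 1$) does not break the inequality in the regime $0<\epsilon\leq 1$, where $\epsilon/(LCM)$ may exceed $\sqrt k$; in that regime the covering number is $1$ and the bound is trivially true since $3LCMk/\epsilon\geq 1$. Everything else is routine.
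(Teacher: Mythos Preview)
Your proposal is correct and follows essentially the same approach as the paper: bound $|\mathcal M(x'\xi_{\tilde u_2})-\mathcal M(x'\xi_{\tilde u_1})|$ by $LCM\,\|\tilde u_2-\tilde u_1\|\,F(x)$ via the two Lipschitz properties and the inequality $\|x\|_\infty\leq M F(x)$, integrate to get the $L^2(Q)$-Lipschitz bound, and then pull back to a Euclidean covering of $[0,1]^k$ with the standard volumetric estimate $(1+2\sqrt{k}/\epsilon)^k\leq(3k/\epsilon)^k$. The paper handles the constant in the same way, noting $B_\infty\subset\sqrt{k}\,B_2$ and $1+2\sqrt{k}/\epsilon\leq 3k/\epsilon$ for $\epsilon\leq\sqrt{k}$.
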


\begin{proof}
Consider any $f_1,f_2\in\mathcal{F}$. There exist $\tilde u_1,\tilde u_2\in\widetilde{\mathcal{U}}$ such that $f_1(x) = \mathcal{M}(x'\xi_{\tilde u_1})$ and $f_2(x) = \mathcal{M}(x'\xi_{\tilde u_2})$ for all $x\in\mathbb R^p$. Therefore, since $\mathcal{M}$ is $L$-Lipschitz, we have
\begin{align*}
|\mathcal{M}(x'\xi_{\tilde u_2}) -\mathcal{M}(x'\xi_{\tilde u_1})| & \leq L\|x\|_\infty \|\xi_{u_2} - \xi_{u_1}\|_1 \leq L \|x\|_\infty C\|\tilde u_2 - \tilde u_1\|  \\
& \leq LC M \|\tilde u_2 - \tilde u_1\| \{ M^{-1}\|x\|_\infty+\sup\nolimits_{\tilde u\in\widetilde{\mathcal{U}}}|\mathcal{M}(x'\xi_{\tilde u})| \}  \\
& \leq  L CM\|\tilde u_2 - \tilde u_1\|F(x)
\end{align*}
by definition of the envelope $F(x) = M^{-1}\|x\|_\infty+\sup_{\tilde u\in\widetilde{\mathcal{U}}}|\mathcal{M}(x'\xi_{\tilde u})|$. Thus, for any finitely discrete probability measure $Q$ on $\mathbb R^p$,
$$
\|f_2 - f_1\|_{Q,2} \leq LCM\|\tilde u_2 - \tilde u_1\|\cdot\|F\|_{Q,2}.
$$
Recall that  since $B_\infty \subset B_2\sqrt{k}$ we have $N(B_\infty,\|\cdot\|,\epsilon) \leq N(B_2\sqrt{k},\|\cdot\|,\epsilon) \leq (1 + 2\sqrt{k}/\epsilon)^k$ where the last inequality follows from standard volume arguments. Furthermore, for any $\epsilon \leq \sqrt{k}$ we have $1 + 2\sqrt{k}/\epsilon \leq 3 k /\epsilon$. Therefore $\log N(\epsilon\|F\|_{Q,2},\mathcal{F},\|\cdot\|_{Q,2}) \leq k\log(3LCMk/\epsilon)$.
\end{proof}

\begin{lemma}\label{lem: bounded lipschitz classes}
Let $\mF$ be a class of functions with an envelope $F$. Also, let $\mathcal{M}\colon \mathbb R \to \mathbb R$ be an $L$-Lipschitz function bounded in absolute value by a constant $M$. Assume that for some positive constants $C_1$ and $C_2$, the uniform entropy numbers of $\mF$ obey
$$
\sup_Q\log N(\epsilon\|F\|_{Q,2},\mF,\|\cdot\|_{Q,2})\leq C_1\log(C_2/\epsilon),\quad \text{for all }0<\epsilon\leq 1.
$$
Then for any constant $K>0$, the uniform entropy numbers of the class of functions $\mathcal{M}(\mF) = \{\mathcal{M}(f)\colon f\in\mF\}$ obey
$$
\sup_Q \log N(\epsilon\|F_{\mathcal{M}(\mF)}\|_{Q,2},\mathcal{M}(\mF),\|\cdot\|_{Q,2})\leq C_1 \log(C_2 K/\epsilon),\quad \text{for all }0<\epsilon\leq 1,
$$
where $F_{\mathcal{M}(\mF)} = M+LF/K$ is its envelope.
\end{lemma}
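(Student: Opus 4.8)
The plan is to bound the covering numbers of $\mathcal{M}(\mathcal{F})$ by transferring a cover of $\mathcal{F}$ through the Lipschitz map $\mathcal{M}$, after first checking that $F_{\mathcal{M}(\mathcal{F})} = M + LF/K$ is a legitimate envelope. First I would verify the envelope claim: for any $f \in \mathcal{F}$, since $\mathcal{M}$ is bounded by $M$ we trivially have $|\mathcal{M}(f(x))| \leq M \leq M + LF(x)/K$, so $F_{\mathcal{M}(\mathcal{F})}$ dominates every element of $\mathcal{M}(\mathcal{F})$ pointwise. (The extra term $LF/K$ is carried along so that the $\|\cdot\|_{Q,2}$-scale of the cover of $\mathcal{F}$ can be matched against the scale of $F_{\mathcal{M}(\mathcal{F})}$; the role of the free parameter $K$ is exactly to rescale between the two envelopes.)

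Next I would fix a finitely-discrete probability measure $Q$ and $\varepsilon \in (0,1]$, and let $\{f_1,\dots,f_N\}$ be a minimal $\delta\|F\|_{Q,2}$-cover of $\mathcal{F}$ in $\|\cdot\|_{Q,2}$ for a value $\delta \in (0,1]$ to be chosen, so $\log N \leq C_1\log(C_2/\delta)$ by hypothesis. Since $\mathcal{M}$ is $L$-Lipschitz, for any $f \in \mathcal{F}$ with $\|f - f_i\|_{Q,2} \leq \delta\|F\|_{Q,2}$ we get
$$
\|\mathcal{M}(f) - \mathcal{M}(f_i)\|_{Q,2} \leq L\|f - f_i\|_{Q,2} \leq L\delta\|F\|_{Q,2}.
$$
Now $\|F\|_{Q,2} \leq (K/L)\|LF/K\|_{Q,2} \leq (K/L)\|F_{\mathcal{M}(\mathcal{F})}\|_{Q,2}$ because $LF/K \leq F_{\mathcal{M}(\mathcal{F})}$ pointwise, so the right-hand side is at most $K\delta\|F_{\mathcal{M}(\mathcal{F})}\|_{Q,2}$. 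Hence $\{\mathcal{M}(f_1),\dots,\mathcal{M}(f_N)\}$ is a $K\delta\|F_{\mathcal{M}(\mathcal{F})}\|_{Q,2}$-cover of $\mathcal{M}(\mathcal{F})$, giving
$$
\log N(K\delta\|F_{\mathcal{M}(\mathcal{F})}\|_{Q,2}, \mathcal{M}(\mathcal{F}), \|\cdot\|_{Q,2}) \leq C_1\log(C_2/\delta).
$$
Setting $\delta = \varepsilon/K$ (which lies in $(0,1]$ when $\varepsilon \leq K$; for $\varepsilon > K$ the cover is trivial since then $\delta > 1$ forces a single-point cover, and the bound $C_1\log(C_2K/\varepsilon)$ is still valid as long as $C_2K/\varepsilon \geq$ the relevant constant, which one arranges by the usual convention) yields
$$
\log N(\varepsilon\|F_{\mathcal{M}(\mathcal{F})}\|_{Q,2}, \mathcal{M}(\mathcal{F}), \|\cdot\|_{Q,2}) \leq C_1\log(C_2K/\varepsilon),
$$
and taking the supremum over $Q$ gives the claim.

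The only mildly delicate point — the step I would treat most carefully — is the bookkeeping around the endpoint $\varepsilon$ close to or exceeding $K$ and the comparison $\|F\|_{Q,2} \leq (K/L)\|F_{\mathcal{M}(\mathcal{F})}\|_{Q,2}$; everything else is a one-line Lipschitz estimate. In particular one should note that if $\|F\|_{Q,2} = 0$ then $\mathcal{F}$ is $Q$-a.s.\ a single function and the statement is vacuous, so we may assume $\|F\|_{Q,2} > 0$ and likewise $\|F_{\mathcal{M}(\mathcal{F})}\|_{Q,2} > 0$ since $F_{\mathcal{M}(\mathcal{F})} \geq M > 0$ whenever $M > 0$ (and if $M = 0$ then $\mathcal{M} \equiv 0$ and again the statement is trivial). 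No results beyond the hypothesis of the lemma itself are needed; this is a self-contained change-of-envelope argument.
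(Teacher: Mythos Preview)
Your proof is correct and follows essentially the same route as the paper's: the paper's proof is a two-sentence version of exactly your Lipschitz transfer argument, noting that $\|\mathcal{M}(f')-\mathcal{M}(f'')\|_{Q,2}\leq L\|f'-f''\|_{Q,2}$ so that an $\epsilon\|F\|_{Q,2}$-cover of $\mF$ induces an $\epsilon L\|F\|_{Q,2}$-cover of $\mathcal{M}(\mF)$, after which the change of envelope $L\|F\|_{Q,2}\leq K\|F_{\mathcal{M}(\mF)}\|_{Q,2}$ and the substitution $\delta=\epsilon/K$ are implicit. Your version simply spells out the envelope verification and the substitution more carefully; the edge case you flag (small $K$ with $\epsilon>K$) is not addressed in the paper either and is immaterial for its application, where $K$ is taken large.
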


\begin{proof}
The result follows from the observation that for any $f',f''\in\mF$ and any finitely-discrete probability measure $Q$,
$$
\|\mathcal{M}(f') - \mathcal{M}(f'')\|_{Q,2}\leq L\|f_1 - f_2\|_{Q,2},
$$
so that if $\mF$ can be covered by $k$ balls of radius $\epsilon\|F\|_{Q,2}$ (in the $\|\cdot\|_{Q,2}$ norm), then $\mathcal{M}(\mF)$ can be covered by $k$ balls of radius $\epsilon L\|F\|_{Q,2}$ (in the same norm).
\end{proof}

\section{Some Probabilistic Inequalities}\label{sec: probab inequalities}

\begin{lemma}[Moderate deviations for self-normalized sums, \cite{jing:etal}]\label{Lemma: MDSN} Let $Z_{1}$,$\ldots$, $Z_{n}$ be independent, zero-mean random variables and $\mu \in (0,1]$. Let
$S_{n,n} = \sum_{i=1}^n Z_i,  \ \ V^2_{n,n} = \sum_{i=1}^nZ^2_i,$ $$M_n= \left \{\frac{1}{n} \sum_{i=1}^n \Ep [ Z_i^2 ] \right \}^{1/2} \Big / \left \{\frac{1}{n} \sum_{i=1}^n \Ep[|Z_i|^{2+\mu}] \right\}^{1/\{2+\mu\}}>0$$
and $0< \ell_n \leq n^{\frac{\mu}{2(2+\mu)}}M_n$. Then for some absolute constant $A$,
$$
\left |\frac{\Pr(|S_{n,n}/V_{n,n}|  \geq x) }{ 2 (1-\Phi(x))} - 1 \right |  \leq
\frac{A}{ \ell_n^{2+\mu}},  \ \ 0 \leq  x \leq  n^{\frac{\mu}{2(2+\mu)}}\frac{M_n}{\ell_n}-1.
$$
\end{lemma}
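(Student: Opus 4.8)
This is the classical moderate deviation theorem of Jing, Shao and Wang for self-normalized sums, so my plan is to reduce the statement to the form proved there rather than to reprove it from scratch. The plan is as follows. First, I would recall the normalized quantities: set $b_{n}^{2}=\sum_{i=1}^{n}\Ep[Z_{i}^{2}]$ and $L_{n,\mu}=\sum_{i=1}^{n}\Ep[|Z_{i}|^{2+\mu}]$, so that $M_{n}=(b_{n}^{2}/n)^{1/2}/(L_{n,\mu}/n)^{1/(2+\mu)}$, and rescale $Z_{i}\mapsto Z_{i}/b_{n}$; this makes $\sum_{i}\Ep[Z_{i}^{2}]=1$ and leaves the ratio $S_{n,n}/V_{n,n}$ unchanged, so without loss of generality we work with unit total variance. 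In this normalization the quantity $d_{n}:=(\sum_{i=1}^{n}\Ep[|Z_{i}|^{2+\mu}])^{1/(2+\mu)}$ satisfies $d_{n}=n^{-\mu/(2(2+\mu))}M_{n}^{-1}$ after a short computation, so the hypothesis $0<\ell_{n}\le n^{\mu/(2(2+\mu))}M_{n}$ is exactly $0<\ell_{n}\le 1/d_{n}$, i.e. $\ell_{n}d_{n}\le 1$, and the claimed range of validity $0\le x\le n^{\mu/(2(2+\mu))}M_{n}/\ell_{n}-1=1/(\ell_{n}d_{n})-1$ becomes $0\le x\le (\ell_{n}d_{n})^{-1}-1$.

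Next I would invoke the main theorem of \cite{jing:etal}, which in this normalization states precisely that there is an absolute constant $A$ such that
$$
\Bigl|\frac{\Pr(|S_{n,n}/V_{n,n}|\ge x)}{2(1-\Phi(x))}-1\Bigr|\le A\,\bigl(\ell_{n} d_{n}\bigr)^{2+\mu}
\qquad\text{for }0\le x\le (\ell_{n}d_{n})^{-1}-1,
$$
valid whenever $\ell_{n}d_{n}\le 1$. Substituting back $\ell_{n}d_{n}=\ell_{n}/(n^{\mu/(2(2+\mu))}M_{n})\le 1/\ell_{n}$... more carefully, $(\ell_{n}d_{n})^{2+\mu}\le \ell_{n}^{-(2+\mu)}$ is exactly what we want only after noting $d_{n}\le \ell_{n}^{-1}$ is equivalent to the hypothesis; then $(\ell_{n}d_{n})^{2+\mu}=\ell_{n}^{2+\mu}d_{n}^{2+\mu}$ and using $d_{n}=n^{-\mu/(2(2+\mu))}M_{n}^{-1}$ together with $\ell_{n}\le n^{\mu/(2(2+\mu))}M_{n}$ gives $\ell_{n}d_{n}\le 1$, hence $(\ell_{n}d_{n})^{2+\mu}\le (\ell_{n}d_{n})^{2+\mu}$; the bound $A/\ell_{n}^{2+\mu}$ in the statement then follows because $(\ell_{n}d_{n})^{2+\mu}=\ell_{n}^{2+\mu}d_{n}^{2+\mu}$ and $d_{n}^{2+\mu}=n^{-\mu/2}M_{n}^{-(2+\mu)}$, while the hypothesis bounds $d_{n}\le \ell_{n}^{-1}$, i.e. $d_{n}^{2+\mu}\le \ell_{n}^{-(2+\mu)}$, so $(\ell_{n}d_{n})^{2+\mu}\le \ell_{n}^{2+\mu}\cdot\ell_{n}^{-(2+\mu)}\cdot(\ell_{n}d_{n})^{2+\mu}$— the clean route is simply $(\ell_{n}d_{n})^{2+\mu}\le \ell_{n}^{2+\mu}d_{n}^{2+\mu}$ and $d_{n}\le 1/\ell_{n}$ gives nothing; rather one writes the error as $A d_{n}^{2+\mu}\cdot(\text{something})$. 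I would carry out this bookkeeping carefully in the write-up: the essential identity is $d_{n}^{2+\mu}= n^{-\mu/2}M_{n}^{-(2+\mu)}$, and since the cited theorem's error term is $A\bigl(x\vee 1\bigr)^{2+\mu}d_{n}^{2+\mu}$ on the stated range, and $x+1\le (\ell_{n}d_{n})^{-1}$ there, the error is at most $A(\ell_{n}d_{n})^{-(2+\mu)}d_{n}^{2+\mu}=A\ell_{n}^{-(2+\mu)}$, which is exactly the claimed $A/\ell_{n}^{2+\mu}$.

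The only genuine content is therefore the translation between the two parametrizations, so \textbf{the main obstacle} is purely notational: matching the normalization and the definition of $M_{n}$ used here with the Lyapunov-ratio parametrization in \cite{jing:etal}, and being careful that the range of $x$ and the form of the error term line up after the substitution $\ell_{n}d_{n}\le 1$. There is no probabilistic difficulty beyond citing the theorem; I would present the rescaling, state the cited inequality in its original form, perform the substitution, and conclude. If one prefers a self-contained argument, the alternative is to reproduce the Jing--Shao--Wang proof — a conjugate (exponential tilting) argument combined with a Cramér-type expansion and careful control of the normalizing term $V_{n,n}$ — but given that the result is quoted verbatim with attribution, the citation-and-reparametrization route is the intended one.
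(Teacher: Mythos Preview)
Your approach is correct and matches the paper's treatment: the paper states this lemma purely as a citation of Jing--Shao--Wang without giving any proof, so the appropriate ``proof'' is exactly the reparametrization you outline. Your final computation is the clean one: after normalizing to $\sum_i \Ep[Z_i^2]=1$ and setting $d_n=(\sum_i \Ep[|Z_i|^{2+\mu}])^{1/(2+\mu)}$, the identity $d_n=n^{-\mu/(2(2+\mu))}M_n^{-1}$ holds, the cited theorem gives an error bound $A(1+x)^{2+\mu}d_n^{2+\mu}$ on the stated range, and since $1+x\le (\ell_n d_n)^{-1}$ there, this is at most $A(\ell_n d_n)^{-(2+\mu)}d_n^{2+\mu}=A\ell_n^{-(2+\mu)}$. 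The meandering paragraph in the middle of your proposal (where you write ``gives nothing'' and then restart) should simply be deleted in favor of this last substitution; the rest is fine.
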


Let $ ( W_i)_{i=1}^n$ be a sequence of independent copies of a random element $ W$  taking values in a measurable space $({\mathcal{W}}, \mathcal{A}_{{\mathcal{W}}})$ according to a probability law $P$. Let $\mathcal{F}$ be a set  of suitably measurable functions $f\colon {\mathcal{W}} \to \mathbb{R}$, equipped with a measurable envelope $F\colon \mathcal{W} \to \mathbb{R}$.

  \begin{lemma}[Maximal Inequality I, \cite{chernozhukov2012gaussian}]
\label{lemma:CCK}  Work with the setup above.  Suppose that $F\geq \sup_{f \in \mathcal{F}}|f|$ is a measurable envelope for $\mF$
with $\| F\|_{P,q} < \infty$ for some $q \geq 2$.  Let $M = \max_{i\leq n} F(W_i)$ and $\sigma^{2} > 0$ be any positive constant such that $\sup_{f \in \mF}  \| f \|_{P,2}^{2} \leq \sigma^{2} \leq \| F \|_{P,2}^{2}$. Suppose that there exist constants $a \geq e$ and $v \geq 1$ such that
\begin{equation*}
\log \sup_{Q} N(\epsilon \| F \|_{Q,2}, \mF,  \| \cdot \|_{Q,2}) \leq  v \log (a/\epsilon), \ 0 <  \epsilon \leq 1.
\end{equation*}
Then
\begin{equation*}
\Ep_P [ \| \bG_{n} \|_{\mF} ] \leq K  \left( \sqrt{v\sigma^{2} \log \left ( \frac{a \| F \|_{P,2}}{\sigma} \right ) } + \frac{v\| M \|_{P, 2}}{\sqrt{n}} \log \left ( \frac{a \| F \|_{P,2}}{\sigma} \right ) \right),
\end{equation*}
where $K$ is an absolute constant.  Moreover, for every $t \geq 1$, with probability $> 1-t^{-q/2}$,
\begin{multline*}
\| \bG_{n} \|_{\mF} \leq (1+\alpha) \Ep_P [ \| \bG_{n} \|_{\mF} ] \\
+ K(q) \Big [ (\sigma + n^{-1/2} \| M \|_{P,q}) \sqrt{t}
+  \alpha^{-1}  n^{-1/2} \| M \|_{P,2}t \Big ], \ \forall \alpha > 0,
\end{multline*}
where $K(q) > 0$ is a constant depending only on $q$.  In particular, setting $a \geq n$ and $t = \log n$,
with probability $> 1- c(\log n)^{-1}$,
\begin{equation} \label{simple bound}
\| \bG_{n} \|_{\mF} \leq K(q,c) \left ( \sigma \sqrt{v \log \left ( \frac{a \| F \|_{P,2}}{\sigma} \right ) } + \frac{v
 \| M \|_{P,q} } {\sqrt{n}}\log \left ( \frac{a \| F \|_{P,2}}{\sigma} \right ) \right),
\end{equation}
where $  \| M \|_{P,q}  \leq n^{1/q} \| F\|_{P,q}$ and  $K(q,c) > 0$ is a constant depending only on $q$ and $c$.

\end{lemma}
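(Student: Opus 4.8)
\textbf{Proof sketch of Lemma \ref{lemma:CCK}.} The plan is to establish the bound on $\Ep_P[\|\bG_n\|_{\mF}]$ by symmetrization and chaining, upgrade it to the deviation inequality via Talagrand's concentration inequality, and then specialize to get \eqref{simple bound}. Write $\bG_n^\circ f = n^{-1/2}\sum_{i=1}^n \varepsilon_i f(W_i)$ for the symmetrized process, where $(\varepsilon_i)$ are i.i.d.\ Rademacher signs independent of $(W_i)$, let $\Ep_\varepsilon$ denote expectation over the $\varepsilon_i$ only, and set $\hat\sigma_n^2 := \sup_{f\in\mF}\Pn[f^2]\le \|F\|_{\Pn,2}^2$. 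By the symmetrization inequality $\Ep_P[\|\bG_n\|_{\mF}]\le 2\,\Ep_P[\|\bG_n^\circ\|_{\mF}]$. Conditionally on $(W_i)$, $f\mapsto \bG_n^\circ f$ is sub-Gaussian with respect to the $\|\cdot\|_{\Pn,2}$ metric, so Dudley's entropy bound gives
\begin{equation*}
\Ep_\varepsilon[\|\bG_n^\circ\|_{\mF}] \leq K\int_0^{\hat\sigma_n}\sqrt{\log N(\epsilon,\mF,\|\cdot\|_{\Pn,2})}\,d\epsilon .
\end{equation*}
Substituting $\epsilon=\delta\|F\|_{\Pn,2}$, invoking the uniform entropy hypothesis $\log N(\delta\|F\|_{\Pn,2},\mF,\|\cdot\|_{\Pn,2})\le v\log(a/\delta)$, and using $\int_0^s\sqrt{\log(a/\delta)}\,d\delta\lesssim s\sqrt{\log(a/s)}$ for $s\le 1$ together with $\hat\sigma_n\le\|F\|_{\Pn,2}$, the integral is bounded by $K'\,\hat\sigma_n\sqrt{v\log(a\|F\|_{\Pn,2}/\hat\sigma_n)}$.

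\emph{Self-bounding for the random radius.} It remains to control $\Ep_P[\hat\sigma_n^2]=\Ep_P[\sup_f\Pn f^2]$. By the triangle inequality $\Ep_P[\sup_f\Pn f^2]\le \sigma^2 + n^{-1/2}\Ep_P[\|\bG_n\|_{\mF^2}]$ with $\mF^2=\{f^2:f\in\mF\}$; a second symmetrization gives $\Ep_P[\|\bG_n\|_{\mF^2}]\le 2\,\Ep_P[\|\bG_n^\circ\|_{\mF^2}]$, and since conditionally on $(W_i)$ every $f\in\mF$ satisfies $|f(W_i)|\le M=\max_i F(W_i)$, the map $x\mapsto x^2$ is $2M$-Lipschitz on the relevant range, so the Rademacher contraction principle yields $\Ep_\varepsilon[\|\bG_n^\circ\|_{\mF^2}]\le 4M\,\Ep_\varepsilon[\|\bG_n^\circ\|_{\mF}]$. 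Taking expectations and one Cauchy–Schwarz split gives $\Ep_P[\hat\sigma_n^2]\le \sigma^2 + C\,n^{-1/2}\|M\|_{P,2}\,\Ep_P[\|\bG_n\|_{\mF}]$. Feeding this back into the chaining bound (moving the expectation inside via the standard monotonicity/concavity argument for $t\mapsto t\sqrt{\log(b/t)}$, cf.\ \cite{vdV-W}, Sec.~2.14, and replacing the random logarithmic factor by $L:=\log(a\|F\|_{P,2}/\sigma)$ by monotonicity) produces, with $z:=\Ep_P[\|\bG_n\|_{\mF}]$, an inequality of the form
\begin{equation*}
z \leq C\sqrt{vL}\,\big(\sigma^2 + C\,n^{-1/2}\|M\|_{P,2}\,z\big)^{1/2}
\quad\Longrightarrow\quad
z \leq K\big(\sqrt{v\sigma^2 L}+ v\,n^{-1/2}\|M\|_{P,2}\,L\big),
\end{equation*}
which is the first displayed bound of the lemma.

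\emph{Deviation bound and specialization.} Apply Talagrand's concentration inequality for empirical processes (Bousquet's form) to $\|\bG_n\|_{\mF}$; because the envelope is unbounded one uses the variant with a data-dependent constant or, equivalently, truncates the class at level $M$, obtaining, for every $\alpha>0$, with probability at least $1-e^{-s}$,
\begin{equation*}
\|\bG_n\|_{\mF} \leq (1+\alpha)\,\Ep_P[\|\bG_n\|_{\mF}] + \big(\sigma + n^{-1/2}\|M\|_{P,2}\big)\sqrt{K s} + \alpha^{-1}\,n^{-1/2}\|M\|_{P,2}\,K s,
\end{equation*}
where the cross term has been absorbed via $2\sqrt{ab}\le\alpha a+\alpha^{-1}b$. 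To convert the exponential tail into the stated $t^{-q/2}$ tail, set $s=\tfrac{q}{2}\log t$ and separately bound $M$ by Markov: $\Pr_P(M>x)\le n\|F\|_{P,q}^q x^{-q}$, which is $\le\tfrac12 t^{-q/2}$ for $x$ a suitable multiple of $n^{1/q}t^{1/2}\|F\|_{P,q}$; on the complementary event the $\|M\|$ in the bound is replaced by $\|M\|_{P,q}$ plus lower-order $t$-dependent terms, yielding the second display. Finally, for \eqref{simple bound}, take $a\ge n$, $t=\log n$, use $\|M\|_{P,q}\le n^{1/q}\|F\|_{P,q}$ (since $M^q\le\sum_i F(W_i)^q$ gives $\Ep_P[M^q]\le n\,\Ep_P[F^q]$), and absorb the now subdominant $\Ep_P[\|\bG_n\|_{\mF}]$ and $n^{-1/2}\|M\|_{P,2}\sqrt{t}$ contributions into the two displayed terms.

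\emph{Main obstacle.} The crux is the self-bounding step: closing the quadratic inequality while keeping the envelope dependence at the sharp level $\|M\|_{P,2}$ (rather than $\|F\|_{P,q}$ or $\|F\|_\infty$) forces one to carry the \emph{random} truncation level $M=\max_i F(W_i)$ through the contraction argument and to control the logarithmic factor $\log(a\|F\|_{\Pn,2}/\hat\sigma_n)$, which itself depends on the random radius $\hat\sigma_n$; the monotonicity/concavity manoeuvre that moves expectation inside and the bookkeeping keeping $\hat\sigma_n\le\|F\|_{\Pn,2}$ throughout are the delicate points. The passage from sub-exponential to polynomial tails in the last step (again caused by the unbounded envelope) is the second, more routine, technicality.
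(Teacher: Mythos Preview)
The paper does not prove this lemma; it is stated in the appendix on probabilistic inequalities as a direct citation of \cite{chernozhukov2012gaussian} (``Maximal Inequality I'') with no accompanying argument. So there is no in-paper proof to compare against.

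Your sketch is nonetheless a faithful outline of how the cited result is established in \cite{chernozhukov2012gaussian}: symmetrization followed by a conditional Dudley entropy bound, the self-bounding step via the Rademacher contraction principle applied to $x\mapsto x^2$ (Lipschitz with constant $2M$ on $[-M,M]$), and closure of the resulting quadratic inequality in $z=\Ep_P[\|\bG_n\|_{\mF}]$. One point worth tightening: your treatment of the deviation bound glosses over a genuine issue. Bousquet's form of Talagrand's inequality requires a uniformly bounded class, and truncating at the \emph{random} level $M=\max_i F(W_i)$ is not directly covered by that statement. In \cite{chernozhukov2012gaussian} the polynomial tail $t^{-q/2}$ is obtained via a Fuk--Nagaev/Adamczak-type inequality for empirical processes with envelopes in $L^q$, which delivers the mixture of $\sqrt{t}$ and $t$ terms with the correct $\|M\|_{P,q}$ dependence in one stroke, rather than by splicing an exponential Talagrand bound with a separate Markov bound on $M$ as you propose. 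Your two-event decomposition can be made to work, but as written the step ``on the complementary event the $\|M\|$ in the bound is replaced by $\|M\|_{P,q}$'' is not quite right: conditioning on $\{M\le x\}$ does not turn $\|M\|_{P,2}$ into $\|M\|_{P,q}$, and you would need to rerun the concentration argument on the truncated class with the deterministic bound $x$ and then optimize. The expectation bound and the final specialization to \eqref{simple bound} are handled correctly.
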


\begin{lemma}[Maximal Inequality II, \cite{chernozhukov2012gaussian}]\label{lem: maximal inequality 2}
Work with the setup above. Suppose that the conditions of Lemma \ref{lemma:CCK} are satisfied. Then
\begin{align*}
&\Ep_P\Big[\|\En[f^2(W)]\|_\mF\Big] - \sup_{f\in\mF}\Ep_P[f^2(W)]\\
&\qquad \leq \frac{K \|M\|_{P,2}}{\sqrt n}\Big(\sigma\sqrt{v\log\left(\frac{a\|F\|_{P,2}}{\sigma}\right)} + \frac{v\|M\|_{P,2}\log\left(\frac{a\|F\|_{P,2}}{\sigma}\right)}{\sqrt{n}} \Big),
\end{align*}
where $K$ is an absolute constant.
\end{lemma}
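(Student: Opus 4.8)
The plan is to reduce the statement to a bound on the empirical process indexed by the ``squared'' class $\{f^2\colon f\in\mF\}$, then to strip off the square by symmetrization and contraction, paying a price proportional to the random envelope $M=\max_{i\le n}F(W_i)$, and finally to plug in the Dudley-type estimate already established in Lemma \ref{lemma:CCK}. First I would use the identity $\En[f^2(W)]=\Ep_P[f^2(W)]+n^{-1/2}\mathbb{G}_n(f^2(W))$; since $f^2\ge 0$, one has $\|\En[f^2(W)]\|_{\mF}=\sup_{f\in\mF}\En[f^2(W)]\le \sup_{f\in\mF}\Ep_P[f^2(W)]+n^{-1/2}\|\mathbb{G}_n(f^2(W))\|_{\mF}$, so taking $\Ep_P$ and subtracting $\sup_{f\in\mF}\Ep_P[f^2(W)]$ gives
\begin{equation*}
\Ep_P\big[\|\En[f^2(W)]\|_{\mF}\big]-\sup_{f\in\mF}\Ep_P[f^2(W)]\ \le\ n^{-1/2}\,\Ep_P\big[\|\mathbb{G}_n(f^2(W))\|_{\mF}\big].
\end{equation*}
It remains to bound $\Ep_P[\|\mathbb{G}_n(f^2(W))\|_{\mF}]$ by $K\|M\|_{P,2}\big(\sigma\sqrt{v\log(a\|F\|_{P,2}/\sigma)}+n^{-1/2}v\|M\|_{P,2}\log(a\|F\|_{P,2}/\sigma)\big)$, after which dividing by $n^{1/2}$ finishes the proof.

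Next I would introduce i.i.d.\ Rademacher variables $(\epsilon_i)_{i=1}^n$ independent of the data, apply the symmetrization inequality, and then apply the Ledoux--Talagrand contraction inequality conditionally on $(W_i)_{i=1}^n$. The point is that $t\mapsto t^2$ vanishes at $0$ and is $2M$-Lipschitz on the interval $[-M,M]$, which contains $\{f(W_i)\colon f\in\mF,\ i\le n\}$; this is all the contraction principle needs. This yields
\begin{equation*}
\Ep_P\big[\|\mathbb{G}_n(f^2(W))\|_{\mF}\big]\ \le\ 2\,\Ep_P\Ep_\epsilon\Big[\big\|n^{-1/2}\textstyle\sum_{i=1}^n\epsilon_i f^2(W_i)\big\|_{\mF}\Big]\ \le\ 8\,\Ep_P\Big[M\,\Ep_\epsilon\big\|n^{-1/2}\textstyle\sum_{i=1}^n\epsilon_i f(W_i)\big\|_{\mF}\Big].
\end{equation*}

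Now I would split the last expectation by Cauchy--Schwarz in the sample, writing $\mathbb{G}_n^{o}(g):=n^{-1/2}\sum_{i=1}^n\epsilon_i g(W_i)$, to get $\Ep_P\big[M\,\Ep_\epsilon\|\mathbb{G}_n^{o}\|_{\mF}\big]\le \|M\|_{P,2}\big(\Ep_P[(\Ep_\epsilon\|\mathbb{G}_n^{o}\|_{\mF})^2]\big)^{1/2}$. By Jensen, $(\Ep_\epsilon\|\mathbb{G}_n^{o}\|_{\mF})^2\le \Ep_\epsilon\|\mathbb{G}_n^{o}\|_{\mF}^2$, and by the Hoffmann--Jørgensen inequality applied to the i.i.d.\ symmetric summands $n^{-1/2}\epsilon_i f(W_i)$ in $\ell^\infty(\mF)$, whose norms are at most $n^{-1/2}F(W_i)$, one obtains $\Ep_P\Ep_\epsilon\|\mathbb{G}_n^{o}\|_{\mF}^2\lesssim \big(\Ep_P\Ep_\epsilon\|\mathbb{G}_n^{o}\|_{\mF}\big)^2+n^{-1}\|M\|_{P,2}^2$. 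The first-moment term $\Ep_P\Ep_\epsilon\|\mathbb{G}_n^{o}\|_{\mF}$ obeys exactly the same upper bound as $\Ep_P[\|\mathbb{G}_n\|_{\mF}]$ in Lemma \ref{lemma:CCK}, since that bound is itself derived by first passing to the symmetrized process via symmetrization and then applying a Dudley entropy integral estimate with the stated uniform entropy bound; hence it is $\lesssim \sigma\sqrt{v\log(a\|F\|_{P,2}/\sigma)}+n^{-1/2}v\|M\|_{P,2}\log(a\|F\|_{P,2}/\sigma)$. Because $v\ge 1$ and $\log(a\|F\|_{P,2}/\sigma)\ge 1$ (as $a\ge e$ and $\sigma\le\|F\|_{P,2}$), the extra term $n^{-1/2}\|M\|_{P,2}$ is absorbed, and assembling the displays gives the desired bound on $\Ep_P[\|\mathbb{G}_n(f^2(W))\|_{\mF}]$.

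This argument is essentially an assembly of standard ingredients, so I do not expect a deep obstacle; the two points that require care are the contraction step — one must keep track that $t\mapsto t^2$ is only used as a $2M$-Lipschitz map on an interval containing the sampled values, not globally, so that the random factor is exactly $M$ — and the moment bookkeeping in the final step, ensuring that only the $L^2(P)$ norm $\|M\|_{P,2}$ of the envelope (and not a higher moment, nor $\|F\|_{P,4}$ as a naive application of Lemma \ref{lemma:CCK} to the squared class would produce) enters the bound, which is precisely what the Cauchy--Schwarz split together with Hoffmann--Jørgensen delivers.
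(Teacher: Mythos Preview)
Your proof is correct and follows the standard route—symmetrization, then the Ledoux–Talagrand contraction inequality to peel off the square at the cost of the random factor $M$, then Cauchy–Schwarz paired with Hoffmann–Jørgensen to control the second moment of the symmetrized supremum by its first moment plus $n^{-1}\|M\|_{P,2}^2$, and finally the entropy-integral bound from Lemma~\ref{lemma:CCK}. This is essentially the argument the paper defers to in \cite{chernozhukov2012gaussian}; the only step the paper makes explicit is the evaluation of the entropy integral $J(\delta)\lesssim\delta\sqrt{v\log(a/\delta)}$ under the assumed uniform entropy bound, which you absorb by invoking Lemma~\ref{lemma:CCK} directly.
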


\begin{proof}
The proof of the asserted claim coincides one-by-one with that given for the corresponding inequality in Lemma 2.2 of \cite{chernozhukov2012gaussian}, with the constant 3 replaced everywhere by the constant 2. At the end of the proof, the entropy integral
$$
J(\delta) = \int_0^\delta \sup_Q\sqrt{1+\log N(\epsilon \|F\|_{Q,2},\mF,\|\cdot\|_{Q,2})}d\epsilon
$$
is bounded by $\delta(v\log(a/\delta))^{1/2}$ under our condition on the uniform entropy numbers of $\mF$.
\end{proof}

\section{A Bound on Sparse Eigenvalues for Many Random Matrices}\label{sec: many random matrices}
The following lemma is a generalization of the main result in \cite{RudelsonVershynin2008} to many matrices.
\begin{lemma}\label{thm:RV34}
Let $\UU$ denote a finite set and $(X_{ui})_{u\in \UU}$, $i=1,\ldots, n$, be independent (across i) random vectors such that $X_{ui} \in \RR^p$ with $p\geq 2$ and $(\Ep[ \max_{1\leq i\leq n}\max_{u\in \UU}\|X_{ui}\|_\infty^2])^{1/2} \leq K$. Furthermore, for $k\geq 1$, define
$$
\delta_n:= \frac{K \sqrt{k}}{\sqrt n}\left(\log^{1/2} |\UU| + \log^{1/2} p + (\log k) (\log^{1/2}p) (\log^{1/2} n) \right),
$$
Then
\begin{align*}
&\Ep\left[ \sup_{\|\theta\|_0\leq k, \|\theta\| =1}\max_{u\in\UU} \left| \En\[ (\theta'X_{u})^2 - \Ep[(\theta'X_{u})^2] \]\right|\right]\\
&\qquad  \lesssim \delta_n^2 + \delta_n \sup_{\|\theta\|_0\leq k, \|\theta\| =1, u\in\UU} \sqrt{\En\Ep[(\theta'X_{u})^2]}
\end{align*}
up to a universal constant.
\end{lemma}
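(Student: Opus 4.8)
\textbf{Proof proposal for Lemma~\ref{thm:RV34}.} The plan is to follow the symmetrization/chaining strategy of Rudelson and Vershynin, then take a union bound over $\UU$, being careful that the logarithmic cost of the union does not dominate. First I would symmetrize: introduce Rademacher variables $(\varepsilon_i)_{i=1}^n$ independent of the data and bound the left-hand side by
$$
2\,\Ep\left[\sup_{\|\theta\|_0\leq k,\|\theta\|=1}\max_{u\in\UU}\Big|\tfrac1n\sum_{i=1}^n\varepsilon_i(\theta'X_{ui})^2\Big|\right].
$$
For a \emph{fixed} $u$, the main result of \cite{RudelsonVershynin2008} (in the form that tracks the envelope $K$) controls this Rademacher average by $C\big(\phi_n(u)\,\rho_n(u) + \rho_n(u)^2\big)$, where $\rho_n(u)=\tfrac{K\sqrt k}{\sqrt n}\big(\log^{1/2}p + (\log k)(\log^{1/2}p)(\log^{1/2}n)\big)$ and $\phi_n(u)=\sup_{\|\theta\|_0\leq k,\|\theta\|=1}\sqrt{\En\Ep[(\theta'X_u)^2]}$. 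The point of carrying $K$ through the chaining is that the bound is in terms of the \emph{almost sure} envelope, so the only place $u$ enters the randomness is through the maximum over $|\UU|$ copies of a sum.

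Second, to upgrade from fixed $u$ to $\max_{u\in\UU}$, I would re-run the chaining argument but apply it to the process indexed jointly by $(\theta,u)$ with $\theta$ ranging over the union of $k$-sparse unit spheres and $u$ over the finite set $\UU$. The metric entropy of the $\theta$-part is the usual $k\log(p/\epsilon)$ bound (volumetric estimate on each of the $\binom pk$ coordinate subspaces), and the $u$-part contributes an additive $\log|\UU|$ to the entropy at every scale. Chaining then produces the extra additive term $\tfrac{K\sqrt k}{\sqrt n}\log^{1/2}|\UU|$ inside $\delta_n$, matching the statement; the quadratic-in-$\delta_n$ term comes, as in the single-matrix case, from the ``diagonal'' contribution $\Ep[\max_i\max_u\|X_{ui}\|_\infty^2]k/n$ which is $\leq K^2k/n\leq\delta_n^2$. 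Alternatively — and this is the cleaner route to write up — I would simply invoke the single-matrix bound for each $u$ and take a maximum, using a high-probability version: a union bound over $|\UU|$ of the deviation inequality for $\sup_\theta|\En[(\theta'X_u)^2-\Ep(\theta'X_u)^2]|$ costs $\log^{1/2}|\UU|$ in the sub-Gaussian regime and $\log|\UU|$ in the sub-exponential (diagonal) regime, which is exactly the asymmetry already present in $\delta_n$ versus $\delta_n^2$. Integrating the tail then yields the claimed bound on the expectation.

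The main obstacle I anticipate is bookkeeping the two regimes correctly when taking the union over $\UU$: the Rudelson--Vershynin bound is a sum of a ``sub-Gaussian'' piece (scaling like $1/\sqrt n$ with a $\log^{1/2}$ dependence on dimension) and a ``Bernstein/diagonal'' piece (scaling like $1/n$ with a full $\log$). The union over $|\UU|$ must add $\log^{1/2}|\UU|$ to the first and $\log|\UU|$ to the second; getting $\delta_n$ to absorb both — i.e.\ verifying that the $\log|\UU|$ cost on the diagonal term is still $O(\delta_n^2 + \delta_n\phi_n)$ given the precise form of $\delta_n$ — requires a short but genuine computation using $K^2k/n$ and $(K\sqrt k/\sqrt n)\log^{1/2}|\UU|$ bounds. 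A secondary technical point is ensuring measurability of the suprema (the paper's ``suitably measurable'' convention and finiteness of $\UU$ make this routine) and that the moment hypothesis $(\Ep[\max_i\max_u\|X_{ui}\|_\infty^2])^{1/2}\leq K$ is exactly what the symmetrization and the diagonal term need — no higher moments are required because the envelope only ever appears squared.
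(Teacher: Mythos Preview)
Your route (a) --- symmetrize, then chain over the joint index set $\mathcal T=\big(\cup_{|T|=k}B_T\big)\times\UU$ so that the entropy picks up an additive $\log|\UU|$ --- is exactly what the paper does (after first Gaussianizing the Rademacher sum via Ledoux--Talagrand so that Dudley applies directly). The dual entropy estimates on the $\theta$-part (volumetric at small scales, Maurey-type at large scales from \cite{RudelsonVershynin2008}) are what produce the $(\log k)(\log^{1/2}p)(\log^{1/2}n)$ factor; you do not spell this out, but you invoke RV08 for it, which is fine.

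Where your proposal has a real gap is the origin of the $\delta_n^2$ term. You attribute it to a ``diagonal'' contribution $\Ep[\max_{i,u}\|X_{ui}\|_\infty^2]\,k/n$, and your whole ``two-regime / sub-Gaussian vs.\ Bernstein'' discussion is built on that picture. That is \emph{not} the mechanism here. The chaining output is a single bound of the form
\[
I_2:=\Ep\Big[\sup_{\theta,u}\big|\En[(\theta'X_u)^2]-\Ep[(\theta'X_u)^2]\big|\Big]\ \lesssim\ \frac{\delta_n}{K}\,\Ep[MR]\ \leq\ \delta_n\,\big(\Ep[R^2/n]\big)^{1/2},
\]
where $M=\max_{i,u}\|X_{ui}\|_\infty$ and $R^2=\sup_{\theta,u}\sum_i(\theta'X_{ui})^2$ is \emph{random}. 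The $\delta_n^2$ then appears only after the self-bounding step: one uses $\Ep[R^2/n]\leq I_2+\sup_{\theta,u}\En\Ep[(\theta'X_u)^2]$ to get the recursion $I_2\lesssim\delta_n(I_2+\phi_n^2)^{1/2}$, and solving this quadratic gives $I_2\lesssim\delta_n^2+\delta_n\phi_n$. Your write-up never mentions $R$ or this recursion, and without it the chaining leaves you with a bound involving a random radius rather than the claimed deterministic one.

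As a consequence, route (b) is on shaky ground: the high-probability-then-union-bound plan presupposes a two-term Bernstein-type tail for the single-matrix deviation, but RV08 delivers an expectation bound via the self-bounding trick, not a tail with separable sub-Gaussian and sub-exponential pieces. The ``short but genuine computation'' you anticipate for merging the $\log|\UU|$ cost into $\delta_n^2$ is aimed at a decomposition that is not actually present in the argument.
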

\begin{proof}
For $T\subset\{1,\dots,p\}$, let $B_T = \{\theta\in\mathbb R^p\colon \|\theta\| = 1, \supp(\theta) \subseteq T\}$.
Also, for $\mathcal{T} = \cup_{|T| = k} B_T \times \UU$, let $R:= \sup_{(\theta,u)\in \mathcal{T}}(\sum_{i=1}^n(\theta'X_{ui})^2)^{1/2}$ and $M:=\max_{1\leq i\leq n, u\in \UU}\|X_{ui}\|_\infty$. By symmetrization inequality, Lemma 6.3 in \cite{LedouxTalagrandBook}, we have
\begin{align*}
&n\Ep\left[ \sup_{\|\theta\|_0\leq k, \|\theta\| =1}\max_{u\in\UU} \left| \En\[ (\theta'X_{u})^2 - \Ep[(\theta'X_{u})^2] \]\right|\right] \\
&\qquad \leq 2 \Ep \left[ \Ep\left[ \sup_{(\theta,u)\in \mathcal{T}} \left| \sum_{i=1}^n \varepsilon_i(\theta'X_{ui})^2\right| \mid X\right]\right],
\end{align*}
where $X = (X_{u i})_{u\in\UU,1\leq i\leq n}$ and $(\varepsilon_i)_{i=1}^n$ is a sequence of independent Rademacher random variables that are independent of $X$. A consequence of Lemma 4.5 in \cite{LedouxTalagrandBook} (see equation (4.8)) gives
 $$
\Ep\left[ \sup_{(\theta,u)\in \mathcal{T}} \left| \sum_{i=1}^n \varepsilon_i(\theta'X_{ui})^2\right| \mid X\right] \leq (\pi/2)^{1/2} \Ep\left[ \sup_{(\theta,u)\in \mathcal{T}} \left| \sum_{i=1}^n g_i(\theta'X_{ui})^2\right| \mid X\right]
$$
where $(g_i)_{i=1}^n$ is a sequence of independent standard normal random variables that are independent of $X$. In turn, an application of Dudley's integral gives
$$
I_1:= \Ep\left[ \sup_{(\theta,u)\in \mathcal{T}} \left| \sum_{i=1}^n g_i(\theta'X_{ui})^2\right| \mid X\right] \leq 8 \int_0^{\rm diam(\mathcal{T})}\log^{1/2}N(\mathcal{T},d,\epsilon)d\epsilon
$$
where
$$
{\rm diam}(\mathcal{T})\leq 2 \sup_{(\theta,u)\in \mathcal{T}}(\sum_{i=1}^n (\theta' X_{ui})^4)^{1/2} \leq 2\sqrt{k} M R
$$
using that $|\theta' X_{ui}|\leq \|X_{ui}\|_\infty\|\theta\|_1\leq M\sqrt{k}$, and $d$ is the corresponding Gaussian semi-metric. Furthermore, we have
$$
\log N(\mathcal T,d,\epsilon) \leq \log |\UU| + \max_{u\in \UU} \log N(\cup_{|T| = k}B_T \times \{u\},d,\epsilon),
$$
so that
$$
I_1 \leq 16\sqrt{k} M R\log^{1/2}|\UU| +8 \int_0^{\rm diam(\mathcal{T})}\max_{u\in\UU}\log^{1/2}N(\cup_{|T|= k}B_T \times \{u\},d,\epsilon)d\epsilon.
$$
Now, for any $(\theta,u)$ and $(\bar \theta,u)$ in $D_u^k:=\cup_{|T| = k}B_T \times \{u\}$, we have
\begin{align*}
d((\theta,u),(\bar\theta,u))& = \Big( \sum_{i=1}^n\{(\theta' X_{ui})^2-(\bar\theta' X_{ui})^2\}^2\Big)^{1/2}\\
& \leq \Big( \sum_{i=1}^n\{(\theta' X_{ui})+(\bar\theta' X_{ui})\}^2\Big)^{1/2}\max_{1\leq i\leq n}|(\theta-\bar\theta)'X_{u i}|\\
& \leq 2 \sup_{(\theta,u)\in \mathcal{T}}\Big(\sum_{i=1}^n(\theta' X_{ui})^2\Big)^{1/2} \max_{1\leq i\leq n}|(\theta-\bar\theta)'X_{u i}|= 2R \|\theta-\bar\theta\|_{X_u}
\end{align*}
where we let $\|\delta\|_{X_u} := \max_{1\leq i\leq n}|\delta' X_{ui}|$. This implies that
$$
N(D_u^k,d,\epsilon) \leq N\left(D_u^k/\sqrt{k},\|\cdot\|_{X_u}, \epsilon/\{2\sqrt{k}R\}\right).
$$
Therefore, since ${\rm diam(\mathcal{T})}\leq 2\sqrt{k} M R$, we have
\begin{align*}
& \int_0^{\rm diam(\mathcal{T})}\max_{u\in\UU}\log^{1/2}N(D_u^k,d,\epsilon)d\epsilon \\
&\qquad \leq \int_0^{2\sqrt{k}M R}\max_{u\in\UU}\log^{1/2}N(D_u^k/\sqrt{k},\|\cdot\|_{X_u},\epsilon/\{2\sqrt{k}R\})d\epsilon\\
& \qquad =2\sqrt{k}R\int_0^{M}\max_{u\in\UU}\log^{1/2} N(D_u^k/\sqrt{k},\|\cdot\|_{X_u},\epsilon)d\epsilon.
\end{align*}
Note that $B_T/\sqrt{k} \subset B^1_T$ and $D_u^k/\sqrt{k}\subset B^1\times \{u\} $ where $B^1:= \{ \theta \in \RR^p\colon \|\theta\|_1 \leq 1\}$ and $B^1_T=\{ \theta \in B^1\colon \supp(\theta)\subseteq T\}$. It follows from Lemma 3.9 in \cite{RudelsonVershynin2008} that $N(B^1,\|\cdot\|_{X_u},\epsilon) \leq (2p)^{A\epsilon^{-2}M^2\log n}$ for all $\epsilon>0$ and some universal constant $A$. Moreover, as in the discussion after Lemma 3.9 in \cite{RudelsonVershynin2008}, we have $N(B^1_T,\|\cdot\|_{X_u},\epsilon) \leq (1+2M/\epsilon)^k$ for all $\epsilon>0$ and all $T\subset\{1,\dots,p\}$ with $|T| = k$, so that $N(D_u^k/\sqrt{k},\|\cdot\|_{X_u},\epsilon) \leq \binom{p}{k}(1+2M/\epsilon)^k$ for all $\epsilon>0$. Therefore,
\begin{align*}
& \int_0^{M}\max_{u\in\UU}\log^{1/2} N(D_u^k/\sqrt{k},\|\cdot\|_{X_u},\epsilon)d\epsilon  \\
 &\quad\leq \int_0^{M/\sqrt{k}} \log^{1/2}\left( \binom{p}{k}(1+2M/\epsilon)^k \right) d\epsilon  \\
 &\qquad+ \int_{M/\sqrt{k}}^{M}  \log^{1/2}\left( (2p)^{A\epsilon^{-2}M^2\log n} \right)  d\epsilon\\
& \quad\leq \frac{M}{\sqrt{k}}\log^{1/2}\binom{p}{k} + \sqrt{k}\int_0^{M/\sqrt{k}}\log^{1/2}(1+2M/\epsilon)d\epsilon  \\
&\qquad+ A^{1/2}M(\log^{1/2}n) (\log^{1/2}(2p)) \int_{M/\sqrt{k}}^{M} \frac{d\epsilon}{\epsilon}\\
& \quad\leq M\log^{1/2}p + M\Big(1/2 + \log^{1/2}(1 + 2\sqrt{k})\Big) \\
&\qquad + A^{1/2}M(\log^{1/2}n) (\log^{1/2}(2p)) \Big(\log M-\log(M/\sqrt{k})\Big)\\
& \quad\lesssim  M \Big(\log^{1/2}p + (\log k)(\log^{1/2}n)(\log^{1/2}p)\Big)
\end{align*}
up to a universal constant where in the third inequality, we used the fact that integrating by parts gives
$$
\sqrt k\int_0^{M/\sqrt k}\log^{1/2}(1 + 2M/\epsilon)d\epsilon \leq M(1/2 + \log^{1/2}(1+ 2\sqrt k)).
$$
Collecting the terms, we obtain
$$
I_1\lesssim \sqrt k M R\Big(\log^{1/2}|\UU|+\log^{1/2}p+(\log k)(\log^{1/2} n)(\log^{1/2} p)\Big).
$$

Therefore, since $K \geq (\Ep[M^2])^{1/2}$, setting
$$
\delta_n = \frac{K\sqrt k}{\sqrt n}\Big(\log^{1/2}|\UU|+\log^{1/2}p+(\log k)(\log^{1/2} n)(\log^{1/2} p)\Big)
$$
gives
\begin{align*}
 I_2 & = \Ep\left[ \sup_{\|\theta\|_0\leq k, \|\theta\| =1}\max_{u\in\UU} \left| \En\[ (\theta'X_{u})^2 - \Ep[(\theta'X_{u})^2] \]\right|\right]\\
  & \lesssim \frac{\delta_n\Ep[M R]}{K \sqrt n} \leq  (\delta_n/K) (\Ep[M^2])^{1/2} (\Ep[R^2/n])^{1/2}\leq \delta_n (E[R^2/n])^{1/2}\\
& \lesssim  \delta_n \Big(I_2 + \sup_{\|\theta\|_0\leq k, \|\theta\| =1, u\in \UU} \En\Ep[(\theta'X_{u})^2]\Big)^{1/2}.
\end{align*}
Thus, because $a \leq \delta_n(a+b)^{1/2}$ implies $a \leq \delta_n^2 + \delta_n b^{1/2}$, we have
$$
I_2 \lesssim \delta_n^2 + \delta_n \sup_{\|\theta\|_0\leq k, \|\theta\| =1, u\in \UU} \sqrt{\En\Ep[(\theta'X_{u})^2]}
$$
up to an absolute constant. This completes the proof.
\end{proof}

\bibliographystyle{plain}

\end{document}